\pgfplotsset{compat=1.18}
\DeclareMathOperator{\tmin}{\dot{\otimes}}
\DeclareMathOperator{\tmax}{\hat{\otimes}}
\DeclareMathOperator{\Pos}{PSD_\textit{d}}
\DeclareMathOperator{\Posa}{PSD_{\textit{d}_A}}
\DeclareMathOperator{\Posb}{PSD_{\textit{d}_B}}
\DeclareMathOperator{\one}{\mathds{1}}
\DeclareMathOperator{\id}{id}
\DeclareMathOperator{\vp}{\varphi}
\DeclareMathOperator{\Rnum}{\mathds{R}}
\DeclareMathOperator{\Cnum}{\mathds{C}}
\newcommand{\acone}{V_A^+}
\newcommand{\cone}{V^+}
\newcommand{\aconed}{(V_A^*)^+}
\newcommand{\bcone}{V_B^+}
\newcommand{\bconed}{(V_B^*)^+}
\newcommand{\CS}{CS_{k,g}^1}
\newcommand{\CSmn}{CS_{k,g}}
\newcommand{\CSC}{CS_{k,g}^+}
\newcommand{\CSCd}{(CS_{k,g}^*)^+}
\newcommand{\CStwo}{CS_{2,2}^+}
\newcommand{\CStwodu}{(CS_{2,2}^*)^+}
\newcommand{\red}[1]{\textcolor{red}{#1}}
\newcommand{\blue}[1]{\textcolor{blue}{#1}}
\newcommand{\green}[1]{\textcolor{green}{#1}}
\newcommand{\nat}{\mathds{N}}
\renewcommand{\epsilon}{\varepsilon}
\numberwithin{equation}{section}
\newtheorem{thm}{Theorem}[section]
\newtheorem{lem}[thm]{Lemma}
\newtheorem{cor}[thm]{Corollary}
\newtheorem{ex}[thm]{Example}
\newtheorem{defi}[thm]{Definition}
\newtheorem{prop}[thm]{Proposition}
\newtheorem{remark}[thm]{Remark}
\newtheorem{conjecture}[thm]{Conjecture}
\newtheorem*{rep@theorem}{\rep@title}
\newcommand{\newreptheorem}[2]{
\newenvironment{rep#1}[1]{
 \def\rep@title{#2 \ref{##1}}
 \begin{rep@theorem}}
 {\end{rep@theorem}}}
\newcommand*\pFq[6][8]{
  \begingroup
  \pFqmuskip=#1mu\relax
  \mathchardef\normalcomma=\mathcode`,
  \mathcode`\,=\string"8000
  \begingroup\lccode`\~=`\,
  \lowercase{\endgroup\let~}\pFqcomma
  {}_{#2}F_{#3}{\left[\genfrac..{0pt}{}{#4}{#5};#6\right]}
  \endgroup
}
\newcommand{\pFqcomma}{{\normalcomma}\mskip\pFqmuskip}
\newcommand{\nocontentsline}[3]{}
\newcommand{\tocless}[2]{\bgroup\let\addcontentsline=\nocontentsline#1{#2}\egroup}
\begin{document}

\title{Factorization of multimeters: a unified view on nonclassical quantum phenomena}

\author{Tim Achenbach}
\email[Tim Achenbach]{tim.n.achenbach@jyu.fi}
\address[Tim Achenbach and Leevi Lepp\"{a}j\"{a}rvi]{Faculty of Information Technology, University of Jyv\"{a}skyl\"{a}, 40100 Jyv\"{a}skyl\"{a}, Finland}
\address[Tim Achenbach]{Naturwissenschaftlich-Technische Fakult\"{a}t, Universit\"{a}t Siegen, Walter-Flex-Stra\ss e 3, 57068 Siegen, Germany}

\author{Andreas Bluhm}
\email[Andreas Bluhm]{andreas.bluhm@univ-grenoble-alpes.fr}
\address[Andreas Bluhm]{Univ.\ Grenoble Alpes, CNRS, Grenoble INP, LIG, 38000 Grenoble, France}

\author{Leevi Lepp\"{a}j\"{a}rvi}
\email[Leevi Lepp\"{a}j\"{a}rvi]{leevi.i.leppajarvi@jyu.fi}

\author{Ion Nechita}
\email[Ion Nechita]{nechita@irsamc.ups-tlse.fr}
\address[Ion Nechita]{Laboratoire de Physique Th\'eorique, Universit\'e de Toulouse, CNRS, UPS, France}

\author{Martin Pl\'{a}vala}
\email[Martin Pl\'{a}vala]{martin.plavala@itp.uni-hannover.de}
\address[Martin Pl\'{a}vala]{Institut f\"{u}r Theoretische Physik, Leibniz Universit\"{a}t Hannover, Hannover, Germany}

\begin{abstract}
Quantum theory exhibits various nonclassical features, such as measurement incompatibility, contextuality, steering, and Bell nonlocality, which distinguish it from classical physics. These phenomena are often studied separately, but they possess deep interconnections. This work introduces a unified mathematical framework based on commuting diagrams that unifies them. By representing collections of measurements (multimeters) as maps to the set of column-stochastic matrices, we show that measurement compatibility and simulability correspond to specific factorizations of these maps through intermediate systems. We apply this framework to put forward connections between different nonclassical notions and provide factorization-based characterizations for steering assemblages and Bell correlations, including a new perspective on the CHSH inequality witnessing measurement incompatibility. We also investigate the symmetric $n$-extensions of multimeters and no-signaling behaviors and connect these extensions to a notion of $n$-wise compatibility and to the existence of $n$-wise LHV models, respectively. Furthermore, we investigate robustness to noise of nonlocal features by examining factorization conditions for maps involving noisy state spaces, providing geometric criteria for when noisy multimeters can be simulated by simpler measurement settings.
\end{abstract}

\maketitle
\vspace{-0.5cm}
\tableofcontents

\section{Introduction}
There are several nonclassical features of quantum theory that are of foundational importance since they illustrate the departure from classical physics and classical probability theory. These nonclassical features can be classified as local, meaning that they involve only preparations, transformations, and measurements involving a single quantum system, and as nonlocal, meaning that they require at least two or more spatially separated quantum systems. Examples of local nonclassical features of quantum theory are the uncertainty relations \cite{heisenberg1927anschaulichen,kennard1927quantenmechanik,robertson1929uncertainty,schrodinger1935gegenwartige}, no-cloning and no-broadcasting theorems \cite{wootters1982single,barnum1996noncommuting,barnum2007generalized}, incompatibility of measurements \cite{Heinosaari2016,guhne2023colloquium}, and contextuality as defined by Kochen and Specker \cite{kochen1990problem,budroni2022kochen} or as defined by Spekkens \cite{spekkens2005contextuality}. Nonlocal nonclassical features of quantum theory are entanglement \cite{guhne2009entanglement}, steering \cite{Uola2020}, Bell nonlocality \cite{Brunner2014}, and nonlocality without inputs in quantum networks \cite{tavakoli2022bell}.

The various nonclassical features also serve as the backbone of its applications. One can trace contextuality as the source of the quantum advantage present in quantum computers \cite{howard2014contextuality} and no-broadcasting is necessary for quantum cryptography. Steering and Bell nonlocality are necessary resources for semi-device-independent and device-independent quantum cryptography \cite{wolf2021quantum}, the latter of which was recently demonstrated experimentally \cite{zhang2022device,nadlinger2022experimental}. In nonlocal phenomena as well as in communication tasks making use of quantum protocols, e.g., quantum random access codes, incompatibility is a necessary condition for observation of the phenomena or usefulness of the protocol.

Some of these concepts gave rise to further generalizations with their own operational interpretations. Spekkens contextuality was generalized to test whether a system is quantum \cite{muller2023testing}. Measurement incompatibility, which asks whether several measurements can be replaced by a single measurement, was generalized to simulability of measurements which concerns the question of whether one can obtain measurement statistics of a given set of measurements by performing a different set of measurement \cite{oszmaniec2017simulating,guerini2017operational,Oszmaniec2019,filippov2018simulability,IoannouSimulability2022,Jones2023,veeren2024semi,Jokinen2024,bluhm2025simulation}. Moreover, many of these concepts are interconnected: Both measurement incompatibility and entanglement are necessary both for steering \cite{Uola2014} and Bell nonlocality \cite{Fine2}, while contextuality is a precondition for entanglement \cite{plavala2024contextuality} and Bell nonlocality \cite{wright2023invertible}. No-broadcasting is equivalent to contextuality \cite{jokinen2024no}, while measurement incompatibility and steering are equivalent to a weaker version of contextuality \cite{tavakoli2020measurement,plavala2022incompatibility}.

In this work we show that most of these concepts are instances of the same mathematical problem: factorizability. We say that a map $f: X \to Y$ factorizes through $Z$ if there are maps $g: X \to Z$ and $h: Z \to Y$ such that
\begin{equation}
    f = h \circ g,
\end{equation}
or, equivalently, if the following diagram commutes:
\begin{equation}
\begin{tikzcd}
X \arrow[rd, "g"] \arrow[rr, "f"] & & Y \\
& Z \arrow[ru, "h"] &
\end{tikzcd}  
\end{equation}
It was previously observed that Spekkens contextuality is equivalent to simplex-embeddability \cite{schmid2021characterization}, which itself is an instance of factorizability. Since existence of a broadcasting map was recently shown to be equivalent to contextuality \cite{jokinen2024no}, we immediately get that this problem reduces to factorizability as well.

Firstly, factorizability allows us to approach the problems of certification and quantification of the nonclassical features of quantum theory within a unified and consistent mathematical framework. Secondly, this allows us to use existing mathematical methods to offer solutions to the main problem in quantum information: finding certificates, the so-called witnesses, that yield computable criteria for the presence of the nonclassical features. Thirdly, this will allow us to find streamlined proofs of known results. The main mathematical tool that we will be using is the reformulation of all underlying concepts within the language of \emph{general probabilistic theories} (GPTs) \cite{lami2018non,muller2021probabilistic,leppajarvi2021measurement,plavala2023general} and convex cones. While GPTs will play an important role in our proofs and constructions, we will now formulate all of our results within the language of quantum theory in order to showcase the power of our approach.

This paper is organized as follows. We start with a short section gathering some of our main results (\Cref{sec:main-results}). In \cref{sec:preliminaries} we introduce the necessary mathematical background on general probabilistic theories, convex cones and their tensor products, and quantum measurements. In \cref{sec:compatibility} we develop our unified framework based on factorization of maps and show how it captures measurement incompatibility and simulability. \cref{sec:steering} applies this framework to quantum steering, providing new characterizations of steering assemblages and their local hidden state models. In \cref{sec:bell} we extend our analysis to Bell nonlocality, demonstrating how factorization conditions naturally describe local hidden variable models and providing a novel perspective on CHSH inequalities. In \cref{sec:extendability}, we explore the symmetric extensions of multimeters and no-signaling behaviors, relating them to a constrained form of compatibility and a restricted local hidden variable (LHV) model, respectively. Finally, in \cref{sec:robustness} we investigate how noise affects these nonclassical features by examining factorization conditions for noisy state spaces and deriving geometric criteria for simulability of noisy measurements.

\section{Main results}\label{sec:main-results}
One of the main ideas in our framework is the interpretation of a collection of POVMs as maps from quantum states to outcome statistics which are arranged as matrices. A \emph{multimeter} $M = \{M_{\cdot|x}\}_{x \in [g]}$ is a collection of POVMs that consists of $g$ POVMs $M_{\cdot|x}$ each with $k$ effects $\{M_{a|x}\}_{a \in [k]}$ and thus $k$ outcomes. We will consider every multimeter as a map
\begin{equation}
\label{POVMs_are_maps}
M:  \varrho \mapsto M(\varrho) = (\Tr[M_{a|x} \, \varrho])_{a \in [k], x \in [g]}, 
\end{equation}
from the set of density matrices $D(\Cnum^d)$ to the set of column-stochastic matrices, denoted by $CS^1_{k,g}$. In particular, we can think of this map as a channel (i.e., a positive map) between two GPT state spaces, namely from the quantum state space $D(\Cnum^d)$ to the state space $CS^1_{k,g}$. Then we can look into the properties of the POVMs just by looking into the properties of this map. In particular, we look into what it means if this map can be factorized into a composition of two different maps. In the special case when $g=1$, i.e., when we have just one POVM, we consider it as map to the probability simplex $S_k$ which is isomorphic to $CS^1_{k,1}$.

Since GPT state spaces define cones, another way to study positive maps between state spaces is to consider them as maps between cones, in which case we can write the map as an element of a particular tensor product of the cones. Thus, since the set of density matrices $D(\Cnum^d)$ generates the cone of $d \times d$ positive semi-definite matrices on $\Cnum^d$, denoted by $\Pos$, and $CS^1_{k,g}$ generates the cone of nonnegative matrices with equal column sums, denoted by $CS^+_{k,g}$, we have that
\begin{equation}
    M \in \Pos^* \tmax CS^+_{k,g}  \simeq \Pos \tmax CS^+_{k,g},
\end{equation}
where $\tmax$ denotes the \emph{maximal tensor product} of the cones, see \cref{appendix:tensor-cones}. Thus, we can also look into the properties of the POVMs by looking into the properties of this tensor. In particular, we look into what it means for this tensor to be separable, that is, to be an element of the \emph{minimal tensor product} $\tmin$.

Our motivating starting point are the different characterizations of measurement incompatibility. In contrast to classical physics, it is in general not possible to have a simultaneous readout of all measurement outcomes in quantum theory. To be more exact, a multimeter $M = \{M_{\cdot|x}\}_{x \in [g]}$  of $g$ POVMs with $k$ outcomes is called \emph{compatible} if there exists a single POVM $C$ with $\Lambda$ outcomes and a conditional probability distribution $p=(p_{\cdot|x, \lambda})$ on $[k]$ such that
\begin{equation}
    M_{a|x} = \sum_{\lambda=1}^\Lambda p_{a|x, \lambda} \, C_\lambda
\end{equation}
for all $a \in [k]$ and $x \in [g]$. Our guiding example, which already appears in \cite{jencova2018incompatible} and from which we take inspiration, is the following known result that one can capture compatibility of multimeters either as a form of generalized separability, or via factorizability: for any multimeter $M =\{M_{\cdot|x}\}_{x \in [g]}$ of $g$ POVMs with $k$ outcomes on $\Cnum^d$ the following are equivalent:
\begin{enumerate}
    \item $M$ consists of compatible measurements,
    \item $M \in \Pos \tmax CS^+_{k,g}$ as a tensor is separable, i.e., $M \in \Pos \tmin CS^+_{k,g}$, where $\tmin$ denotes the minimal tensor product of cones,
    \item $M: D(\Cnum^d) \to CS^1_{k,g}$ as a channel can be factorized as $M = \Phi \circ N$ for some  single POVM $N: D(\Cnum^d) \to S_\Lambda$ and some channel $\Phi: S_\Lambda \to CS^1_{k,g}$ for some $\Lambda \in \nat$, i.e., the following diagram commutes:
\begin{equation}
    \begin{tikzcd}
    D(\Cnum^d) \arrow[rd, "N"] \arrow[rr, "M"] & & CS^1_{k,g} \\
    & S_{\Lambda} \arrow[ru, "\Phi"] &
    \end{tikzcd}
\end{equation}
\end{enumerate}

One can also formulate simulability of measurements in the same fashion: here given a multimeter $M: D(\Cnum^{d_A}) \to CS^1_{k,g}$ on a $d_A$-dimensional Hilbert space we can ask whether $M$ can be simulated (or compressed) to a multimeter on a $d_B$-dimensional Hilbert space. That is, we ask whether there are channels $\Phi_\lambda: D(\Cnum^{d_A}) \to D(\Cnum^{d_B})$ and multimeters $N_\lambda: D(\Cnum^{d_B}) \to CS^1_{k,g}$, where $\lambda$ is a label for classical information that can be used in the simulation scheme as well, such that the following diagram commutes:
\begin{equation}
    \begin{tikzcd}
    D(\Cnum^{d_A}) \arrow[rd, "\Phi"] \arrow[rr, "M"] & & CS^1_{k,g} \\
    & D(\Cnum^{d_B}) \dot{\otimes} S_\Lambda \arrow[ru, "N"] &
    \end{tikzcd}  
\end{equation}
Here we have replaced all of the channels $\Phi_\lambda: D(\Cnum^{d_A}) \to D(\Cnum^{d_B})$ by a single channel $\Phi: D(\Cnum^{d_A}) \to D(\Cnum^{d_B}) \otimes S_\Lambda$, where $S_\Lambda$ is a simplex, that is a classical state space, which stores the classical information $\lambda$, and similarly for $N_\lambda$. That is, $\Phi$ is an \emph{instrument}. Given a multimeter $M: D(\Cnum^{d}) \to CS^1_{k,g}$ we can also ask whether $M$ can be performed using less measurements or outcomes, that is, whether there is a multimeter $N: D(\Cnum^{d}) \to CS^1_{l,r}$ and a classical post-processing channel $\Phi: CS^1_{l,r} \to CS^1_{k,g}$ such that:
\begin{equation}
    \begin{tikzcd}
        D(\Cnum^d) \arrow[rd, "N"] \arrow[rr, "M"] & & CS^1_{k,g} \\
        & CS^1_{l,r} \arrow[ru, "\Phi"] &
    \end{tikzcd}
\end{equation}

Steering is a quantum phenomenon that may occur when one of the two parties sharing a bipartite state performs one out of several possible measurements. That is, in the steering scenario we have a bipartite state $\varrho \in D(\Cnum^{d_A d_B})$ and a multimeter $M: D(\Cnum^{d_A}) \to CS^1_{k,g}$ and we investigate the steering assemblage $\sigma = (M \otimes \id)(\varrho)$. In general $\sigma \in CS^1_{k,g} \tmax D(\Cnum^{d_B})$ and we say that $\sigma$ has local hidden state model if and only if $\sigma$ is separable, $\sigma \in CS^1_{k,g} \tmin D(\Cnum^{d_B})$. Due to the known relation between steering and measurement incompatibility \cite{uola2015one} we can immediately transport all results about incompatibility to steering with the steering assemblage $\sigma$ replacing the multimeter $M$. In quantum theory any steering assemblage can be obtained by measuring a bipartite state \cite{steering_g, steering_lrw}. While the same is known to not hold in GPTs \cite{barnum2013ensemble, stevens2014steering}, we do provide an explicit counterexample.

Similarly to steering, Bell nonlocality may occur when both parties sharing a bipartite state perform one out of several possible measurements. In this case we are working with a bipartite state $\varrho \in D(\Cnum^{d_A d_B})$ and multimeters $M: D(\Cnum^{d_A}) \to CS^1_{k,g}$, $N: D(\Cnum^{d_B}) \to CS^1_{l,r}$ and we investigate the so-called behavior $(M \otimes N)(\varrho)$. In general we have $(M \otimes N)(\varrho) \in CS^1_{k,g} \tmax CS^1_{l,r}$ and $(M \otimes N)(\varrho)$ has local hidden variable model if and only if it is separable, 
$(M \otimes N)(\varrho) \in CS^1_{k,g} \tmin CS^1_{l,r}$, which was recently exploited to find connection between measurement incompatibility and Bell nonlocality \cite{plavala2024all}. We will then use the formalism of multimeters to provide streamlined proofs of known results, such as that block-positive operators do not provide post-quantum behaviors \cite{Barnum_2010}, that in quantum theory any pair of dichotomic measurements violates the CHSH inequality if and only if it is incompatible \cite{Wolf_2009}, and that if one party measures a pair of dichotomic measurements while the other applies an arbitrary finite multimeter, then all Bell inequalities are just post-processings of the CHSH inequality \cite{Pironio_2014}. To prove the latter two results we will use another key idea also observed in \cite[Example 14]{jencova2018incompatible}: the CHSH inequalities can be identified with isomorphisms of a square via isomorphisms between tensors and linear maps.

We also investigate symmetric $n$-extensions of multimeters $M \in \Pos \tmax CS^+_{k,g}$ and no-signaling behaviors $P \in CS^1_{k,g} \tmax CS^1_{k,g}$. These extensions are now tensors in $\Pos \tmax (CS^+_{k,g})^{\tmax n}$ and in $CS^1_{k,g} \tmax (CS^1_{k,g})^{\tmax n}$, respectively, such that by applying a symmetric reduction map $\gamma^\Phi_n:  (CS^+_{k,g})^{\tmax n} \to CS^+_{k,g}$ the extension reduces to the original multimeter $M$ and the original no-signaling behavior $P$, respectively. We find that for both $M$ and $P$ the existence of an $g$-extension is equivalent to the tensor being separable, in which case $M$ is compatible and $P$ has an LHV model. On the other hand, for $n <g$, the existence of an $n$-extension for $M$ is equivalent to every $n$-subset of measurements in $M$ being compatible with the joint measurements satisfying particular no-signaling constraints. Similarly for $P$ having an $n$-extension is equivalent to $P$ having restricted LHV models, i.e., LHV models for all the cases where the second party is restricted to only performing some $n$ measurements. Again these restricted LHV models must satisfy some particular no-signaling constraints.

Finally we investigate the robustness of factorizability to noise: it is known that if a certain amount of white noise is added to the system, incompatible measurements become compatible, steering assemblages start having local hidden state models, and behaviors start having hidden variable models. We inspect factorization conditions for maps mixed with noise to get geometric criteria for when noisy multimeters can be simulated by multimeters with simpler measurement settings.

\section{Preliminaries}\label{sec:preliminaries}
In this section, we will review the formalism of GPTs and their transformations. We will start with a short review of the formalism in \Cref{sec:GPT-formalism} and continue by describing positive maps and channels between such theories in \Cref{sec:maps-between-GPTs}. In \Cref{sec:prelim-meters}, we focus on special classes of channels, namely measurements and instruments. Then, we will concentrate on the GPT of column stochastic matrices in  \Cref{sec:CS-GPT}, which we will need to formalize multimeters as maps from state spaces to column stochastic matrices in \Cref{sec:multimeters-in-GPTs}.

\subsection{The formalism of general probabilistic theories} \label{sec:GPT-formalism}
In this section, we will review a general framework for describing operational theories. For a review of cones and the terminology used to describe them, see \Cref{appendix:cones}. A GPT is described by a triple $(V, V^+, \mathds{1}_V)$, where $V$ is a real finite-dimensional vector space with a proper cone $V^+$, and where $\mathds{1}_V$ is an order unit in the interior of the dual cone $A^+ := (V^+)^\ast \subset V^\ast =: A$. Here, we have written $V^*$ for the dual vector space of $V$, i.e., the vector space of linear functionals on $V$, and $(V^+)^\ast$ for the cone dual to $V^+$. Then the \emph{state space} of the theory is described by the compact convex subset $K \subset V$ defined as
\begin{equation}\label{eq:state-space}
K := \{ v \in V^+  :  \mathds{1}_V(v) = 1\}\, ,
\end{equation}
which also forms a base for $V^+$. On the other hand, if we have a compact convex set $K$, then if we set $A = A(K)$ (the set of affine functionals $f: K \to \Rnum$), $A^+ = A(K)^+$ (the set of positive affine functionals on $K$) and take $\mathds{1}_V$ to be the constant function $\mathds{1}_{K}$ giving value $1$ on ${K}$ (which is an order unit in $A$). Then we can choose $V=A^*$ and $V^+ = (A^+)^*$ and it follows that ${K}$ is isomorphic to a compact convex base of the cone $V^+$ determined by $\one_{K}$ as in \cref{eq:state-space}. Thus, sometimes we might also write $(V({K}),V({K})^+, \one_{K})$ for a GPT which is determined by its state space ${K}$. We refer the reader to \cite{lami2018non, plavala2023general} for more background on GPTs. Before we continue, we will briefly describe what classical and quantum mechanics look like in the formalism of GPTs.

\begin{ex}\label{ex:CM} Any \emph{classical system} is described by the triple $\mathrm{CM}_d := (\Rnum^d,\Rnum^d_+,1_d)$, $d \in \mathds N$, where $\Rnum^d_+$
denotes the set of elements with nonnegative coordinates and $1_d=(1,1,\dots,1)$. The classical state space is the probability simplex with $d$ vertices $S_d$.
\end{ex}

\begin{ex}\label{ex:QM} \emph{Quantum theory} corresponds to the triple $\mathrm{QM}_d:=(\mathcal M(\Cnum)_d^{\mathrm{sa}}, \mathrm{PSD}_d ,\operatorname{Tr})$, $d \in \mathds N$, where  $\mathrm{PSD}_d$ is the cone of $d \times d$ positive semi-definite complex matrices in the real vector space of self-adjoint matrices $\mathcal{M}(\Cnum)_d^{\mathrm{sa}}$. The quantum state space is the set of density matrices $D(\Cnum^d)$, i.e., the elements in $\mathrm{PSD}_d$ with trace equal to one.
\end{ex}

Until now, we have only considered single systems. However, as in quantum mechanics, it makes sense to have systems that are multipartite, where each subsystem is a GPT. Let $(V_A, V_A^+, \one_{V_A})$ and $(V_B, V_B^+, \one_{V_B})$ be two GPTs. To construct a bipartite GPT out of them, we first need a joint vector space: For this, we consider simply the tensor product of the individual vector spaces, i.e., $V_{AB} := V_A \otimes V_B$. Second, we need an order unit, for which we can simply take the tensor product of order units, i.e., $\one_{V_{AB}} := \one_{V_A} \otimes \one_{V_B}$. Thus, it remains to choose a cone for the GPT on the bipartite system. Here, we are free to choose any tensor cone, i.e., any cone $V_{AB}^+$ such that 
\begin{equation}
    V_A^+\tmin V_B^+ \subseteq V_{AB}^+ \subseteq V_A^+\tmax V_B^+ \, .
\end{equation}
We refer the reader to \Cref{appendix:tensor-cones} for a recap of tensor products of cones. In fact, we can also define the \emph{minimal tensor product of state spaces}
\begin{equation}
    K_A \tmin K_B := \operatorname{conv} \{x_A \otimes x_B : x_A \in K_A, \, x_B \in K_B \} 
\end{equation}
and the \emph{maximal tensor product of state spaces}
\begin{equation}
    K_A \tmax K_B := \{y \in V_{AB} : \langle \one_{K_A} \otimes \one_{K_B}, y \rangle = 1 , \, \langle f_A \otimes f_B, y \rangle \geq 0 ~ \forall f_A \in A(K_A)^+,\,\forall f_B \in A(K_B)^+ \} \, . 
\end{equation}
Here, we have written for $x \in V$ an element in a vector space and $\alpha \in V^\ast$ a functional the expression $\langle \alpha, x \rangle$ to mean $\alpha(x)$. As $D(\mathds C^{d_A}) \tmin D(\mathds C^{d_B})$ is the set of separable states, elements in $K_A \tmin K_B$ are called \emph{separable}. However, the set $D(\mathds C^{d_A}) \tmax D(\mathds C^{d_B})$ encompasses more than $D(\mathds C^{d_A d_B})$. The set is in fact equal to the set of block-positive matrices of unit trace and can therefore be interpreted as the set of entanglement witnesses \cite{chruscinski2014entanglement}. This can be seen from the fact that $\acone \tmax \bcone = (\aconed \tmin \bconed)^*$. With these definitions, it follows that
\begin{align}
    V( K_A \tmin K_B)^+ &= V(K_A)^+ \tmin V(K_B)^+ \, , \\
    V( K_A \tmax K_B)^+ &= V(K_A)^+ \tmax V(K_B)^+ \, .
\end{align}

\subsection{Positive maps and channels between GPTs} \label{sec:maps-between-GPTs} 
Let $(V_A, V_A^+, \mathds{1}_{V_A})$ and $(V_B, V_B^+, \mathds{1}_{V_B})$ be two GPTs, transformations between these two GPTs are described by \emph{channels}, i.e., linear mappings $\Phi: V_A \to V_B$ that are positive, i.e., $\Phi(V_A^+) \subseteq V_B^+$, and normalization-preserving, i.e., $\mathds{1}_{V_B}(\Phi(v)) = \mathds{1}_{V_A}(v)$ for all $v \in V_A^+$. The latter property means that channels map elements of the state space $K_{A}$ of $(V_A, V_A^+, \mathds{1}_{V_A})$ to elements of the state space $K_B$ of $(V_B, V_B^+, \mathds{1}_{V_B})$. Regarding notation, we may also denote the channel $\Phi: V_A \to V_B$ simply as a map $\Phi: {K}_A \to {K}_B$ between the state spaces of the GPTs. If the map is merely positive but not a channel, we will often write $\Phi: V_A^+ \to V_B^+$ to emphasize that it is a map between cones.  For the dual map $\Phi^\ast: V_B^\ast \to V_A^\ast$, it holds that $\Phi$ is positive if and only if $\Phi^\ast$ is. Additionally, $\Phi$ is a channel if and only if $\Phi^\ast(\one_{V_B})=\one_{V_A}$, i.e., $\Phi^\ast$ is \emph{unital}.

Now, we can consider tensors that play the role of Choi matrices for GPTs. Let us consider two GPTs $(V(K_A), V(K_A)^+, \mathds 1_{K_A})$ and $(V(K_B), V(K_B)^+, \mathds 1_{K_B})$. Let $\{e_i\}_{i\in [d]}$ be a basis of $V(K_A)$ and let $\{a_j\}_{j\in [d]}$ be the corresponding dual basis of $A(K_A)$ for some $d \in \mathds N$. Then, we can define a special tensor $\chi_{V(K_A)} \in A(K_A) \otimes V(K_A)$ as
\begin{equation}  \label{eq:max-ent-tensor}
    \chi_{V(K_A)} = \sum_{i=1}^d a_i \otimes e_i \, .
\end{equation}
It holds that $\chi_{V(K_A)} \in A(K_A) \tmax V(K_A)$, an explicit proof can be found in \cite[Lemma A.1]{jencova2018incompatible}. Also by \cite[Lemma A.1]{jencova2018incompatible} (see also \Cref{appendix:positive-maps}), there is a one-to-one correspondence between $\xi_\Phi \in V(K_B)^+ \tmax V(K_A)^+$ and $\Phi: A(K_A)^+ \to V(K_B)^+$ as $\xi_\Phi = (\Phi \otimes \id)(\chi_{V(K_A)})$ and
\begin{equation}\label{eq:tensor-to-map}
    \langle \Phi(f_A), f_B \rangle = \langle \xi_\Phi, f_B \otimes f_A \rangle, \qquad \forall f_A \in A(K_A), \forall f_B \in A(K_B).
\end{equation}
In fact, the correspondence between maps and tensors is unique:
\begin{lem} \label{lem:xi-to-phi}
    Let $\xi \in K_A \tmax K_B$. Then, there is a  unique map $\Phi: A(K_A)^+ \to V(K_B)^+$ such that $\xi = (\Phi \otimes \id)(\chi_{V(K_A)})$ and $\Phi(\mathds 1_{K_A}) \in K_B$.
\end{lem}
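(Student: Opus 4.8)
The plan is to obtain the statement almost directly from the one-to-one correspondence between tensors in $V(K_B)^+ \tmax V(K_A)^+$ and positive maps $A(K_A)^+ \to V(K_B)^+$ recalled just above (from \cite[Lemma A.1]{jencova2018incompatible}, see also \Cref{appendix:positive-maps}); the only genuinely new content is translating the scalar normalization carried by $\xi$ into the condition $\Phi(\mathds 1_{K_A}) \in K_B$.

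For existence, I would first note that $K_A \tmax K_B$ is the base of the cone $V(K_A)^+ \tmax V(K_B)^+$ cut out by the order unit $\mathds 1_{K_A} \otimes \mathds 1_{K_B}$, and that, up to the canonical flip of the two tensor legs, this is the cone $V(K_B)^+ \tmax V(K_A)^+$ appearing in the cited correspondence. Hence $\xi$ determines a positive map $\Phi : A(K_A)^+ \to V(K_B)^+$ with $\xi = (\Phi \otimes \id)(\chi_{V(K_A)})$, and it only remains to verify the normalization. For that I would plug $f_A = \mathds 1_{K_A}$ and $f_B = \mathds 1_{K_B}$ into \eqref{eq:tensor-to-map} and use the defining constraint $\langle \mathds 1_{K_A} \otimes \mathds 1_{K_B}, \xi \rangle = 1$ of the maximal tensor product of state spaces, obtaining
\[
    \mathds 1_{K_B}\bigl(\Phi(\mathds 1_{K_A})\bigr) \;=\; \langle \Phi(\mathds 1_{K_A}), \mathds 1_{K_B} \rangle \;=\; \langle \xi, \mathds 1_{K_B} \otimes \mathds 1_{K_A} \rangle \;=\; 1 .
\]
Since $\Phi$ is positive we have $\Phi(\mathds 1_{K_A}) \in V(K_B)^+$, and combined with the display this is exactly the statement $\Phi(\mathds 1_{K_A}) \in K_B$ in the sense of \eqref{eq:state-space}.

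For uniqueness, I would argue that if $\Phi_1, \Phi_2 : A(K_A)^+ \to V(K_B)^+$ both satisfy $\xi = (\Phi_i \otimes \id)(\chi_{V(K_A)})$, then for every $f_A \in A(K_A)^+$ and $f_B \in A(K_B)$ equation \eqref{eq:tensor-to-map} gives $\langle \Phi_1(f_A), f_B \rangle = \langle \xi, f_B \otimes f_A \rangle = \langle \Phi_2(f_A), f_B \rangle$; since $A(K_B)$ is the full dual of $V(K_B)$ and hence separates its points, $\Phi_1(f_A) = \Phi_2(f_A)$, so $\Phi_1 = \Phi_2$ (alternatively, one simply invokes that the cited correspondence is a bijection). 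I do not expect a real obstacle here: the only points needing care are bookkeeping ones — tracking which tensor leg carries $K_A$ versus $K_B$ in \eqref{eq:max-ent-tensor} and \eqref{eq:tensor-to-map}, using the symmetry of $\tmax$ to regard $\xi$ as an element of $V(K_B)^+ \tmax V(K_A)^+$, and the implicit fact that a positive map on the proper generating cone $A(K_A)^+$ extends uniquely to a linear map on $A(K_A)$, which is what makes \eqref{eq:tensor-to-map} and the uniqueness clause meaningful.
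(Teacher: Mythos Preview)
Your proposal is correct and matches the paper's treatment: the paper does not give an explicit proof of this lemma but presents it as an immediate consequence of the one-to-one correspondence cited from \cite[Lemma A.1]{jencova2018incompatible}, which is precisely the reduction you carry out. Your added verification of the normalization $\Phi(\mathds 1_{K_A}) \in K_B$ via \eqref{eq:tensor-to-map} and your remark on the tensor-leg bookkeeping are exactly the details one needs to spell out.
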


As $\Pos$ is self-dual, one can embed $D(\mathds C^d)$ into both $V(D(\mathds C^d))^+$ and $A(D(\mathds C^d))^+$. However, GPTs are in general not self-dual and one cannot embed $K_A$ into $A(K_A)^+$. Sometimes, it is nonetheless helpful to also define a state space in $A(K_A)^+$. Let $\varrho \in K_A$ and define
\begin{equation}\label{eq:dual-state-space}
    (K_A)^\ast_{\varrho}:=\{f \in A(K_A)^+ : f(\varrho) = 1\} \, .
\end{equation}
If $ \varrho$ is in the relative interior of $K$, then $(K_A)^\ast_{\varrho}$ is compact and $V((K_A)^\ast_{\varrho})^+= A(K_A)^+$, $A((K_A)^\ast_{\varrho})^+= V(K_A)^+$, see again \cite[Lemma A.1]{jencova2018incompatible} for an explicit proof. Thus, $(K_A)^\ast_{\varrho}$ is a state space for the GPT $(A(K_A), A(K_A)^+, \varrho)$. If $\varrho$ is not in the relative interior, there is $f \in A(K_A)^+$ such that $f \neq 0$ and $f(\varrho) = 0$, and $(K_A)_\varrho^\ast$ is not bounded.

To illustrate this construction, let us again consider classical and quantum mechanics:
\begin{ex} \label{ex:dual-state-space-CM}
    Let us consider classical system as a GPT $\mathrm{CM}_d = (\Rnum^d,\Rnum^d_+,1_d)$, $d \in \mathds N$ with some classical state $q >0$ (i.e., a probability distribution with positive definite entries). Then, we can readily verify that 
    \begin{equation}
        (S_d)_q^\ast = \{ p \in \Rnum^d_+ : \sum_{i=1}^d p_i q_i = 1\} \, .
    \end{equation}
    Thus, we can identify $(S_d)_q^\ast \simeq S_d$, where the vertices of the new simplex are 
    \begin{equation}
        \delta_i = (0, \ldots, 0, \underbrace{q_i^{-1}}_{i-\text{th~entry}}, 0, \ldots, 0) \, .
    \end{equation}
\end{ex}

\begin{ex} \label{ex:dual-state-space-QM}
    Let us consider quantum theory as a GPT $\mathrm{QM}_d=(\mathcal M(\Cnum)_d^{\mathrm{sa}}, \mathrm{PSD}_d ,\operatorname{Tr})$, $d \in \mathds N$ with some quantum state $\varrho >0$. Then, we can readily verify that 
    \begin{equation}
        (D(\Cnum^d))_\varrho^\ast = \{ \varrho^{-\frac{1}{2}} \sigma \varrho^{-\frac{1}{2}} : \sigma \in D(\Cnum^d)\} \, .
    \end{equation}
\end{ex}

To conclude this subsection, we will characterize maps corresponding to separable tensors (also called \emph{entanglement-breaking} in analogy to the case of quantum mechanics) as having a certain factorization:
\begin{prop} \label{prop:min-factoring-and-tensor}
    Let $\xi_\Phi \in K_A \tmax K_B$ and $\Phi: A(K_A)^+ \to V(K_B)^+$ such that $\Phi(\one_{K_A}) \in K_B$ is the unique associated map in \Cref{lem:xi-to-phi}. Then, $\xi_\Phi \in K_A \tmin K_B$ if and only there exist finitely many outcomes $\Lambda$, a positive map $\Psi_1: A(K_A)^+ \to S_\Lambda^+$ such that $\Psi_1(\one_{K_A}) \in S_{\Lambda}$, and a channel $\Psi_2: S_\Lambda \to K_B$ such that the following diagram commutes:
    \begin{equation}
        \begin{tikzcd}
        A(K_A)^+ \arrow[rd, "\Psi_1"] \arrow[rr, "\Phi"] & & V(K_B)^+ \\
        & S_{\Lambda}^+ \arrow[ru, "\Psi_2"] &
        \end{tikzcd}  
    \end{equation}
If in addition $\Phi$ is a channel, then $\Psi_1$ can be taken to be a channel as well.
\end{prop}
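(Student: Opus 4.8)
The plan is to prove both implications by translating directly between a finite separable decomposition of the Choi tensor $\xi_\Phi$ and an explicit two-step factorization of $\Phi$. The argument rests on two elementary identifications for simplices: a positive map $\Psi_1 \colon A(K_A)^+ \to S_\Lambda^+ = \Rnum^\Lambda_+$ is nothing but a tuple $(f_1,\dots,f_\Lambda)$ of positive functionals, i.e.\ of elements of $(A(K_A)^+)^\ast = V(K_A)^+$, via $\Psi_1(a) = (f_1(a),\dots,f_\Lambda(a))$; and a channel $\Psi_2 \colon S_\Lambda \to K_B$ is nothing but a tuple $(\tau_1,\dots,\tau_\Lambda)$ of states of $K_B$, namely the images $\tau_\ell := \Psi_2(\delta_\ell)$ of the vertices, with $\Psi_2$ being their affine extension $(q_\ell)_\ell \mapsto \sum_\ell q_\ell \tau_\ell$. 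Beyond these, the only inputs are the map--tensor correspondence \eqref{eq:tensor-to-map} and the uniqueness statement of \Cref{lem:xi-to-phi}.

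For the direction from factorization to separability, I would take $\Psi_1 \leftrightarrow (f_\ell)_\ell$ and $\Psi_2 \leftrightarrow (\tau_\ell)_\ell$ as above with $\Phi = \Psi_2 \circ \Psi_1$, and simply compute the associated tensor: pushing $\chi_{V(K_A)} = \sum_i a_i \otimes e_i$ through $\Phi \otimes \id$ and using $\sum_i \langle a_i, f_\ell \rangle e_i = f_\ell$ gives $\xi_\Phi = \sum_{\ell=1}^\Lambda \tau_\ell \otimes f_\ell$, a sum of product elements of $V(K_B)^+$ and $V(K_A)^+$; the hypothesis $\Psi_1(\one_{K_A}) \in S_\Lambda$ is exactly what makes this tensor normalized, so $\xi_\Phi \in K_A \tmin K_B$.

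For the converse, I would start from a finite decomposition $\xi_\Phi = \sum_\ell \tilde\tau_\ell \otimes f_\ell$ with $\tilde\tau_\ell \in V(K_B)^+ \setminus \{0\}$ and $f_\ell \in V(K_A)^+$, which exists by Carath\'{e}odory once $\xi_\Phi \in K_A \tmin K_B$, and rescale each summand so that $\tilde\tau_\ell$ becomes a state $\tau_\ell \in K_B$. Reading off $\Phi$ from \eqref{eq:tensor-to-map} gives $\Phi(a) = \sum_\ell \langle a, f_\ell\rangle\,\tau_\ell$, whence $\Phi(\one_{K_A}) = \sum_\ell p_\ell\,\tau_\ell$ with $p_\ell := \langle \one_{K_A}, f_\ell\rangle \ge 0$; since $\Phi(\one_{K_A}) \in K_B$ (which \Cref{lem:xi-to-phi} bundles with $\Phi$), the $p_\ell$ sum to $1$. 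Then $\Psi_1 := (f_1(\cdot),\dots,f_\Lambda(\cdot))$ and the channel $\Psi_2 \colon \delta_\ell \mapsto \tau_\ell$ meet all requirements: $\Psi_1$ is positive with $\Psi_1(\one_{K_A}) = (p_\ell)_\ell \in S_\Lambda$, $\Psi_2$ is a channel, and $\Psi_2 \circ \Psi_1 = \Phi$ by construction. For the addendum I would observe that $\langle \Phi^\ast(\one_{K_B}), a\rangle = \langle \one_{K_B}, \Phi(a)\rangle = \sum_\ell f_\ell(a)$ identifies $\sum_\ell f_\ell = \Phi^\ast(\one_{K_B})$; if $\Phi$ is a channel this is by definition the order unit of the domain GPT on $A(K_A)$, so $\one_{S_\Lambda} \circ \Psi_1 = \sum_\ell f_\ell(\cdot)$ is normalization-preserving and $\Psi_1$ is itself a channel.

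\textbf{Main obstacle.}
I do not expect a genuine obstruction; the only points requiring care are the normalization bookkeeping in the converse --- ensuring that after rescaling the product legs $\tilde\tau_\ell$ to states the weights $(p_\ell)_\ell$ genuinely form a probability vector, which is precisely the content of the constraint $\Phi(\one_{K_A}) \in K_B$ --- and pinning down the correct order unit on $A(K_A)$ in the channel addendum. One must also fix the slot ordering in the tensor product (the tensor $(\Phi\otimes\id)(\chi_{V(K_A)})$ naturally sits with the $V(K_B)$ leg first), but since separability is symmetric under the tensor flip this has no effect on the statement.
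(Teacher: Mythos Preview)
Your proposal is correct and follows essentially the same approach as the paper's proof: both extract an explicit factorization from a finite separable decomposition of $\xi_\Phi$ (you write $f_\ell = p_\lambda x_\lambda$ and $\tau_\ell = y_\lambda$ in the paper's notation), and verify the normalization conditions directly. Your treatment is somewhat more self-contained---the paper defers the basic positivity equivalence and the channel addendum to external references, whereas you spell out the $\Phi^\ast(\one_{K_B}) = \sum_\ell f_\ell$ computation---but the underlying construction is the same.
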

\begin{proof}
    The factorization into positive $\Psi_1$ and $\Psi_2$ follows from \cite[Lemma 3.7]{bluhm2022incompatibility} and \cite[Proposition A.1]{jencova2018incompatible}. Thus, it remains to prove that indeed we can assume $\Psi_1(\one_{K_A}) \in S_{\Lambda}$ and $\Psi_2$ to be a channel. Decomposing 
    \begin{equation}
        \xi_\Phi = \sum_{\lambda = 1}^\Lambda p_\lambda x_\lambda \otimes y_\lambda 
    \end{equation}
    with $x_\lambda \in K_A$ and $y_\lambda \in K_B$ for all $\lambda \in [\Lambda]$ and a probability distribution $(p_\lambda)_{\lambda \in [\Lambda]}$, we can explicitly choose
    \begin{align}
        \Psi_1&: \alpha \mapsto \sum_{\lambda = 1}^\Lambda p_\lambda \alpha(x_\lambda) \delta_\lambda,  & \forall \alpha \in A(K_A) \, , \\
        \Psi_2&: \delta_\lambda \mapsto y_\lambda,  & \forall \lambda \in [\Lambda] \, .
    \end{align}
    We can thus see directly that $\Psi_2$ is a channel and that $\Psi_1(\one_{K_A}) \in K_B$, since $\one_{K_A}(x_\lambda)=1$ for all $\lambda \in [\Lambda]$. The fact that $\Psi_1$ can also be chosen to be a channel in case $\Phi$ is a channel follows from \cite[Proposition A.1]{jencova2018incompatible}.
\end{proof}

\subsection{Meters and instruments in GPTs} \label{sec:prelim-meters}
Measurements are a special form of channels between GPTs, as we will see in this section. An \emph{effect} corresponds to the most elementary 'yes-no' -measurement and it is described by an element $e \in A^+$ such that $0 \leq e(\varrho) \leq 1$ for all $\varrho \in {K}$. The set of effects on a state space ${K}$ is denoted by $E({K})$ and it is called the effect space of ${K}$. The number $e(\varrho)$ is interpreted as the probability that the event corresponding to the effect $e$ is detected in a measurement of a system in a state $\varrho$. We say that a nonzero effect $e$ is \emph{indecomposable} if any decomposition of $e$ into a sum of two other nonzero effects $e_1, e_2$ as $e=e_1+e_2$ implies that $e = \alpha_1 e_1 = \alpha_2 e_2$ for some $\alpha_1, \alpha_2 >0$ \cite{KIMURA2010175}. It can be shown that an effect is indecomposable if and only if it lies on an extreme ray of the dual cone $A(K)^+$.

\begin{defi}
Let $(V, V^+, \mathds{1}_V)$ be a GPT with a state space ${K} $. A measurement (or \emph{meter}) on ${K}$ with $k < \infty$ outcomes is described by a channel $f: {K} \to S_k$. 
\end{defi}

Let $f$ be a $k$-outcome measurement on $K$. Then for all $\varrho \in {K}$ we have that $f(\varrho) \in S_k$ such that if  we write $\{\delta_i\}_{i=1}^k$ for the vertices of the simplex $S_k$, then $f(\varrho)$ has a unique convex decomposition $f(\varrho) = \sum_{i=1}^k \alpha^{(f, \varrho)}_i \delta_i$ for some unique  coefficients $ \alpha^{(f, \varrho)}_i \geq 0$ for all $i \in [k]$ such that $\sum_{i=1}^k \alpha^{(f, \varrho)}_i =1$. For all $i \in [k]$ we define a mapping $f_i: {K} \to [0,1]$ by setting $f_i(\varrho) = \alpha^{(f, \varrho)}_i$. For all $i \in [k]$ we see that $f_i$ is clearly well-defined and it is straightforward to see that $f_i$ is affine. Furthermore, each $f_i$ can be uniquely extended to an element (for which we use the same notation) $f_i$ in $A(K)$. It is then clear that $f_i \in E({K})$ for all $i \in [k]$ and that $\sum_{i=1}^k f_i(\varrho) = 1$ for all $\varrho \in {K}$; hence $\sum_{i=1}^k f_i = \mathds{1}_V$. Thus, each $k$-outcome measurement $f$ on ${K}$ is characterized by some $k$ effects $f_1, \ldots, f_k$ that satisfy $\sum_{i=1}^k f_i = \mathds{1}_V$; hence
\begin{equation}
f(\varrho) = \sum_{i=1}^k f_i(\varrho) \delta_i
\end{equation} 
for all $\varrho \in {K}$. On the other hand, given $k$ effects $f_1, \ldots, f_k$ that satisfy $\sum_{i=1}^k f_i = \mathds{1}_V$ one can use the above equation to define a channel from $K$ to $S_k$. The interpretation is that when we measure a system which is in a state $\varrho$ with a measurement/meter $f$, then $f_i(\varrho)$ is the probability that an outcome $i$ is obtained. 

Measurement devices which not only produce a classical measurement outcome but also output a post-measurement state are called \emph{instruments}. Formally, we can define instruments as specific types of channels:
\begin{defi}
Let $(V_A, V_A^+, \mathds{1}_{V_A})$ and $(V_B, V_B^+, \mathds{1}_{V_B})$ be two GPTs with state spaces ${K_A} $ and $K_B$, respectively. An instrument from $K_A$ to $K_B$ with $k$ outcomes is described by a channel $\Phi: K_A \to K_B\dot{\otimes} S_k$.
\end{defi}
Let $\Phi: K_A \to K_B\dot{\otimes} S_k$ be an instrument between state spaces $K_A$ and $K_B$. Thus, we can write it as $\Phi(\varrho) = \sum_{i=1}^k \lambda^{\varrho}_i \xi^{\varrho}_i \otimes \delta_i$ for some $\xi^{\varrho}_i \in K_B$ and $\lambda^{\varrho}_i \geq 0$ for all $i \in [k]$ and $\varrho \in K_A$ such that $\sum_{i=1}^k \lambda^{\varrho}_i =1$. Since $\{\delta_i\}_{i=1}^k$ is a basis of $\Rnum^k$, this decomposition is unique. Thus, for all $i \in [k]$ we can define a map $\Phi_i: V_A \to V_B$ by setting $\Phi_i( \varrho) = \lambda^{\varrho}_i \xi^{\varrho}_i$ for all $\varrho \in K_A$, and it is straightforward to see that each $\Phi_i$ is positive, i.e., $\Phi_i(V_A^+) \subseteq V_B^+$ and that it is normalization-non-increasing, i.e., $\one_{K_B}(\Phi_i(\varrho)) \leq \one_{K_A}(\varrho)$ for all $\varrho \in V_A^+$. We refer to maps satisfying these two properties as \emph{operations}. Furthermore, we see that $\sum_{i=1}^k \Phi_i$ is normalization-preserving, i.e., it is a channel between $K_A$ and $K_B$. Thus, each $k$-outcome instrument $\Phi$ between $K_A$ and $K_B$ is characterized by a collection of operations $\Phi_1, \ldots, \Phi_k$ such that $\sum_{i=1}^k \Phi_i$ is a channel from $K_A$ to $K_B$ which maps
\begin{equation}
\Phi(\varrho) = \sum_{i=1}^k \Phi_i(\varrho) \otimes \delta_i
\end{equation}
for all $\varrho \in K_A$. On the other hand, given a collection of such maps $\Phi_1, \ldots, \Phi_k$ one can construct an instrument $\Phi: K_A\to K_B\dot{\otimes} S_k$ by just setting $\Phi(\varrho) = \sum_{i=1}^k \Phi_i(\varrho) \otimes \delta_i$ for all $\varrho \in K_A$. The operational interpretation of an instrument is the following: when the system, which is in a state $\varrho$, is operated on by an instrument  $\Phi$, then a measurement outcome $i$ is detected with probability $\one_{K_B}(\Phi_i(\varrho))$ after which the system can be found in the conditional output state $\Phi_i(\varrho)/\one_{K_B}(\Phi_i(\varrho))$. Note that we can without loss of generality assume that $\one_{K_B}(\Phi_i(\varrho)) \neq 0$, because otherwise outcome $i$ never occurs and the post-measurement state is irrelevant.

\subsection{State spaces of column stochastic matrices} \label{sec:CS-GPT}
It is now time to consider a GPT that will be essential for the definition of multimeters. Let us consider the GPT $(CS_{k,g},CS^+_{k,g},\one_{CS^1_{k,g}})$, where $CS_{k,g}$ is the real vector space of real $k \times g$ -matrices whose column sums are equal, $CS^+_{k,g}$ is the cone of nonnegative $k \times g$ -matrices whose column sums are equal and $\one_{CS^1_{k,g}}: CS_{k,g} \to \Rnum $ is a functional defined as $\one_{CS^1_{k,g}}(M) := \frac{1}{g} \left\langle  J_{k,g},M\right\rangle$ for all $M \in CS_{k,g}$, where $J_{k,g}$ is the $k\times g$ -matrix of all ones and the inner product is the Hilbert-Schmidt inner product. The state space of this GPT is the set of column stochastic $k\times g$ -matrices (nonnegative $k\times g$ -matrices whose columns sum to one), which we denote by $CS^1_{k,g}$. This GPT is a special case of the polysimplex considered in \cite{jencova2018incompatible}, where the number of vertices of all simplices involved is chosen to be equal for simplicity.

The extreme points of $CS^1_{k,g}$ are matrices whose columns have an entry 1 on all columns on some positions, i.e., matrices of the form 
\begin{equation}
    s_{i_1, \ldots, i_g} = \sum_{j=1}^g E_{i_j,j}
\end{equation}
for all $i_j \in [k]$ and $j \in [g]$, where $E_{i,j}$ is a matrix with entry 1 on the position $(i,j)$ and zeros everywhere else. We can define measurements $m^{(j)}$ which project onto the $j$-th column and have effects $m^{(j)}_i$ defined as
\begin{equation}\label{eq:sim-irr-measurements}
    m^{(j)}_i(s_{i_1, \ldots,i_g}) := \left\langle E_{i,j},  s_{i_1, \ldots,i_g} \right\rangle = \begin{cases}1 & i_j=i \\0 & \mathrm{otherwise}\end{cases}.
\end{equation}
We note that $\sum_{i=1}^k m^{(j)}_i(M) = \one_{CS^1_{k,g}}(M)$ for all $M \in CS_{k,g}$ so that as functionals from $CS_{k,g}$ to $\Rnum$ they are equal; $\sum_{i=1}^k m^{(j)}_i = \one_{CS^1_{k,g}}$ for all $j \in [g]$. Moreover, we define
\begin{equation}
    e^{(j)}_i := E_{i,j} - E_{k,j} \,.
\end{equation}
These definitions give us convenient bases for $CS_{k,g}$ and $CS^*_{k,g}$, respectively. \cite[Lemma 1]{jencova2018incompatible} states that the $m^{(j)}_i$, $i \in [k]$, $j \in [g]$ generate the extreme rays of $(CS^+_{k,g})^*$. Furthermore,
\begin{equation}\label{eq:effect-basis}
    \one_{CS^1_{k,g}}, m^{(1)}_1, \ldots, m^{(1)}_{k-1}, m^{(2)}_{1}, \ldots, m^{(g)}_{k-1}
\end{equation}
form a basis of $CS^*_{k,g}$ and 
\begin{equation}\label{eq:state-basis}
    s_{k, \ldots, k}, e^{(1)}_1, \ldots, e^{(1)}_{k-1}, e^{(2)}_{1}, \ldots, e^{(g)}_{k-1}
\end{equation}
the corresponding dual basis of $CS_{k,g}$.

\subsection{Multimeters as channels}\label{sec:multimeters-in-GPTs}
As we saw earlier in \Cref{sec:prelim-meters}, a measurement or meter is characterized by a collection of effects. Next we focus on measurement devices that can implement the measurement of multiple meters; we call this a \emph{multimeter}.

\begin{defi}
Let $(V, V^+, \mathds{1}_V)$ be a GPT with a state space ${K}$. A multimeter on ${K}$ with $g$ measurements each with $k$ outcomes is described by a channel $f: {K} \to  CS^1_{k,g}$.
\end{defi}

Measurements will be from now on seen as just multimeters with one measurement setting. For the same reason, in everything that follows we will simply talk about multimeters even if we are considering single measurements unless we want to specifically emphasize that we are talking about measurements. 

Let $f$ be a multimeter with $g$ measurements and $k$ outcomes on ${K}$. Then for all $\varrho \in {K}$ we have $f(\varrho) \in CS^1_{k,g}$. Let us define maps $M_{a |x}: {K} \to S_k$ by setting $M_{a | x} = m^{(x)}_{a} \circ f$ for all  $a \in [k]$ and $x \in [g]$, where $m^{(x)}: CS_{k,g} \to S_k$ are the measurements defined in  \cref{eq:sim-irr-measurements} which project onto the $x$-th column of the outcome probability matrix. As compositions of a channel and a measurement, it follows that $M_{a|x}$ are effects such that $\sum_{a=1}^k M_{a|x} = \one_K$ so that the collection $\{M_{a|x}\}_{a =1}^k$ defines a measurement $M_{\cdot|x}$ on $K$ for all $x \in [g]$. 

In the basis of \cref{eq:state-basis} we can write the multimeter $f: K \to CS^1_{k,g}$ now as
\begin{equation}
    f(\varrho) = \one_{K}(\varrho) s_{k, \ldots,k} + \sum_{x=1}^g \sum_{a =1}^{k-1} M_{a|x}(\varrho) e^{(x)}_a 
\end{equation}
for all $\varrho \in K$. One can readily check that indeed $m_a^{(x)}(f(\varrho)) = M_{a|x}(\varrho)$ for all $a \in [k]$, $x \in [g]$ and $\varrho \in K$ as defined before. Conversely, given a collection of $g$ measurements $\{M_{\cdot|x}\}_{x \in [g]}$ on $K$ each with $k$ outcomes one can use the above equation to define a unique channel $f: K \to CS^1_{k,g}$.

Alternatively, we can also represent $f$ more naturally as
\begin{equation}
f(\varrho) = \sum_{x=1}^g \sum_{a=1}^k M_{a|x}(\varrho) E_{a,x} =  (M_{\cdot|1}(\varrho), \cdots , M_{\cdot|g}(\varrho))    
\end{equation}
for all $\varrho \in K$. In particular, then we see that a  $k$-outcome multimeter $f$ on ${K}$ is characterized by $g$ measurements $M_{\cdot | 1}, \ldots, M_{\cdot | g}$ with $k$ outcomes respectively such that $f(\varrho) = (M_{\cdot|1}(\varrho)| \cdots | M_{\cdot|g}(\varrho))$ where the measurements define the columns of the outcome probability matrix. Thus, we may write $f=f_M$ for some collection of $k$-outcome measurements $M=\{M_{\cdot|x}\}_{x \in [g]}$. In this representation, $f$ manifestly maps states to states. For multimeters the interpretation is that when we measure a system which is in a state $\varrho$ with a multimeter $M$ by using a measurement setting $x \in [g]$, i.e., by using a measurement $M_{\cdot|x}$, then $M_{a|x}(\varrho)$ is the probability that an outcome $a \in [k]$ is obtained. 

An important property of measurements is compatibility.

\begin{defi}
A multimeter $M=\{M_{\cdot|x}\}_{x\in[g]}$ of $g$ measurements with $k$ outcomes on $K$ is compatible if there exists a single $l$-outcome measurement $N$ on $K$ and some conditional probability distribution $\nu = ( \nu_{\cdot|b,x})_{b \in [l], x \in [g]}$ on $[k]$ such that
\begin{equation}
    M_{a|x} = \sum_{b=1}^l \nu_{a|b,x} N_b
\end{equation}
for all $a \in [k]$ and $x \in [g]$.
\end{defi}
We refer the reader to \cite{Heinosaari2016,guhne2023colloquium} for an introduction to incompatible measurements in quantum mechanics and other operational theories. In the next section, we will characterize compatibility in terms of the corresponding map $f: K \to CS^1_{k,g}$.

\section{Compatibility and simulation of multimeters}\label{sec:compatibility}

\subsection{Simulating multimeters on one state space by multimeters on another state space}
We continue by generalizing to GPTs the notion of simulability of quantum measurements considered in the recent works \cite{IoannouSimulability2022,Jones2023,Jokinen2024}.
\begin{defi}
    Let $M=\{M_{\cdot|x}\}_{x \in [g]}$ be a multimeter of $g$ measurements with $k$ outcomes on a state space $K_A$. It is \emph{$K_B$-simulable} if there exists a finite number of outcomes $\Lambda$, an instrument $\Phi: K_A \to K_B \dot{\otimes} S_\Lambda$ with operations $\Phi_\lambda: V(K_A)^+ \to V(K_B)^+$, and a multimeter $N = \{N_{\cdot|x,\lambda}\}_{x \in [g], \lambda \in [\Lambda]}$ of $g \cdot \Lambda$  measurements with $k$ outcomes on $K_B$ such that 
\begin{equation}
    M_{a|x} = \sum_{\lambda =1}^\Lambda\Phi^\ast_\lambda(N_{a|x,\lambda})
\end{equation}
for all $a \in [k]$ and $x \in [g]$.
\end{defi}

Here the recipe of simulation is as follows:  a measurement with a label $x$ is obtained first by measuring the input state $\varrho \in K_A$ by the instrument $\Phi$, obtaining an outcome $\lambda$ and resulting the system in a (unnormalized) conditional output state $\Phi_\lambda(\varrho)$, which is then measured by the POVM $N_{\cdot|x,\lambda}$ from which an outcome $a $ is obtained and then reported as the final outcome of the transformed measurement with a label $x$. 

 We will show now that $K_B$-simulability is equivalent to a factorization of the associated multimeter, seen as a channel into column stochastic matrices. 

\begin{thm}\label{thm:compression-as-factorization}
 Let $M=\{M_{\cdot|x}\}_{x \in [g]}$ be a multimeter of $g$ measurements with $k$ outcomes on a state space $K_A$. Then $M$ is $K_B$-simulable if and only if there exists a finite number of outcomes $\Lambda$, an instrument $\Phi: K_A \to K_B \dot{\otimes} S_\Lambda$ and a multimeter $\tilde N: K_B \dot \otimes S_\Lambda \to CS^1_{k,g}$ such that the following diagram commutes:
\begin{equation}
    \begin{tikzcd}
    K_A \arrow[rd, "\Phi"] \arrow[rr, "M"] & & CS^1_{k,g} \\
    & K_B \dot{\otimes} S_\Lambda \arrow[ru, "\tilde N"] &
    \end{tikzcd}  
\end{equation}
\end{thm}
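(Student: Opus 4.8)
The plan is to prove both implications directly by unpacking the definitions and identifying the multimeter $\tilde N: K_B \dot\otimes S_\Lambda \to CS^1_{k,g}$ with the family $N = \{N_{\cdot|x,\lambda}\}$ via the correspondence between multimeters and collections of measurements established in \Cref{sec:multimeters-in-GPTs}. The key observation is that a multimeter on the classical-quantum hybrid state space $K_B \dot\otimes S_\Lambda$ decomposes, by restricting to the $\lambda$-th classical branch, into a family of multimeters $N_{\cdot|\cdot,\lambda}$ on $K_B$ indexed by $\lambda \in [\Lambda]$; conversely any such family assembles into a single multimeter on $K_B \dot\otimes S_\Lambda$. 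Once this dictionary is in place, the content of the theorem is just the statement that the simulation equation $M_{a|x} = \sum_{\lambda}\Phi_\lambda^\ast(N_{a|x,\lambda})$ is the componentwise form of the operator identity $M = \tilde N \circ \Phi$.

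First I would set up notation: given the instrument $\Phi: K_A \to K_B \dot\otimes S_\Lambda$ with operations $\Phi_\lambda: V(K_A)^+ \to V(K_B)^+$, recall from \Cref{sec:prelim-meters} that $\Phi(\varrho) = \sum_{\lambda=1}^\Lambda \Phi_\lambda(\varrho) \otimes \delta_\lambda$. Given a multimeter $\tilde N: K_B \dot\otimes S_\Lambda \to CS^1_{k,g}$, I would define $N_{a|x,\lambda}: K_B \to [0,1]$ by $N_{a|x,\lambda}(\tau) := m^{(x)}_a\big(\tilde N(\tau \otimes \delta_\lambda)\big)$ for $\tau \in K_B$, using the projection measurements $m^{(x)}_a$ of \cref{eq:sim-irr-measurements}; one checks these are effects on $K_B$ summing over $a$ to $\one_{K_B}$, so $\{N_{\cdot|x,\lambda}\}_{x \in [g]}$ is a genuine $k$-outcome, $g$-setting multimeter on $K_B$ for each fixed $\lambda$, i.e., $N = \{N_{\cdot|x,\lambda}\}_{x \in [g], \lambda \in [\Lambda]}$ is a multimeter of $g\Lambda$ measurements on $K_B$ as in the definition of $K_B$-simulability. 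Conversely, given such a family $N$, the formula $\tilde N(\tau \otimes \delta_\lambda) := \sum_{x=1}^g \sum_{a=1}^k N_{a|x,\lambda}(\tau) E_{a,x}$, extended by linearity, defines a channel $\tilde N: K_B \dot\otimes S_\Lambda \to CS^1_{k,g}$ since each $\tau \otimes \delta_\lambda$ spans the cone $V(K_B \dot\otimes S_\Lambda)^+$ and the column-stochasticity condition is exactly $\sum_a N_{a|x,\lambda} = \one_{K_B}$.

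Then I would compute, for arbitrary $\varrho \in K_A$, the composition
\begin{equation}
(\tilde N \circ \Phi)(\varrho) = \tilde N\Big(\sum_{\lambda=1}^\Lambda \Phi_\lambda(\varrho) \otimes \delta_\lambda\Big) = \sum_{\lambda=1}^\Lambda \sum_{x=1}^g \sum_{a=1}^k N_{a|x,\lambda}\big(\Phi_\lambda(\varrho)\big) E_{a,x},
\end{equation}
where I have used linearity of $\tilde N$ together with the branchwise formula above, noting that $\Phi_\lambda(\varrho)$ need not be normalized but $N_{a|x,\lambda}$ extends to a linear functional so this is still meaningful. Applying $m^{(x)}_a$ and using the adjoint relation $N_{a|x,\lambda}(\Phi_\lambda(\varrho)) = \Phi_\lambda^\ast(N_{a|x,\lambda})(\varrho)$, the equation $M = \tilde N \circ \Phi$ holds if and only if $m^{(x)}_a(M(\varrho)) = \sum_\lambda \Phi_\lambda^\ast(N_{a|x,\lambda})(\varrho)$ for all $\varrho, a, x$, and since $m^{(x)}_a \circ M = M_{a|x}$ by construction of the multimeter-as-channel, this is precisely $M_{a|x} = \sum_\lambda \Phi_\lambda^\ast(N_{a|x,\lambda})$, i.e., the defining equation of $K_B$-simulability. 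Reading the chain of equivalences left to right gives the ``only if'' direction and right to left the ``if'' direction.

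I do not expect a serious obstacle here; the proof is essentially bookkeeping. The one point requiring mild care is the well-definedness of $\tilde N$ as a bona fide channel (not merely a positive map): one must verify normalization-preservation, which reduces to $\one_{CS^1_{k,g}}\big(\tilde N(\tau \otimes \delta_\lambda)\big) = \frac{1}{g}\sum_{x,a} N_{a|x,\lambda}(\tau) = \one_{K_B}(\tau) = \one_{K_B \dot\otimes S_\Lambda}(\tau \otimes \delta_\lambda)$, and the analogous check that $N_{a|x,\lambda}$ defined from a given $\tilde N$ really lands in the effect space. A second minor subtlety is that the operations $\Phi_\lambda$ of an instrument are only normalization-non-increasing, so one should phrase all identities at the level of linear maps on the vector spaces (or on the cones) rather than on the state spaces, which is why the factorization diagram in the theorem is stated for the instrument $\Phi$ and the channel $\tilde N$ directly rather than for the individual branches.
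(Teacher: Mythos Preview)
Your proposal is correct and follows essentially the same approach as the paper: both establish the dictionary between multimeters $\tilde N: K_B \dot\otimes S_\Lambda \to CS^1_{k,g}$ and families $\{N_{\cdot|x,\lambda}\}$ via $N_{a|x,\lambda}(\tau) = m^{(x)}_a(\tilde N(\tau \otimes \delta_\lambda))$, then verify that $M = \tilde N \circ \Phi$ is equivalent componentwise to $M_{a|x} = \sum_\lambda \Phi_\lambda^\ast(N_{a|x,\lambda})$. The only organizational difference is that the paper isolates the dictionary as a separate lemma (\Cref{lemma:1}) before the main proof, whereas you integrate it directly into the argument.
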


We start by proving the following lemma.
\begin{lem}\label{lemma:1}
    There is a one-to-one correspondence between multimeters $N: K_B \to CS^1_{k, g \cdot \Lambda}$ and channels $\tilde{N}: K_B \dot \otimes S_\Lambda \to CS^1_{k,g}$.
\end{lem}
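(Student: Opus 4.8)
The plan is to unpack both sides of the claimed correspondence in terms of collections of measurements and show they encode the same data. A multimeter $N : K_B \to CS^1_{k, g \cdot \Lambda}$ is, by the discussion in \Cref{sec:multimeters-in-GPTs}, the same thing as a collection of $g \cdot \Lambda$ measurements $\{N_{\cdot|x,\lambda}\}_{x \in [g], \lambda \in [\Lambda]}$ on $K_B$, each with $k$ outcomes, where I have chosen to index the ``measurement setting'' coordinate of $CS^1_{k, g\Lambda}$ by the pair $(x,\lambda) \in [g] \times [\Lambda]$. On the other side, a channel $\tilde N : K_B \dot\otimes S_\Lambda \to CS^1_{k,g}$ is a multimeter on the state space $K_B \dot\otimes S_\Lambda$ with $g$ settings and $k$ outcomes, hence is characterized by $g$ effects-tuples $\{\tilde N_{a|x}\}_{a \in [k]}$ with $\sum_a \tilde N_{a|x} = \one_{K_B \dot\otimes S_\Lambda}$ for each $x$. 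Since $K_B \dot\otimes S_\Lambda$ is a minimal tensor product with a simplex, its extreme points are exactly $\{y \otimes \delta_\lambda : y \in \mathrm{ext}(K_B), \lambda \in [\Lambda]\}$, and an affine functional on it is determined by its values on pure tensors $y \otimes \delta_\lambda$; equivalently, $A(K_B \dot\otimes S_\Lambda) \cong A(K_B) \otimes A(S_\Lambda) \cong \bigoplus_{\lambda=1}^\Lambda A(K_B)$. This direct-sum decomposition is the crux of the argument.

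The concrete bijection I would write down is: given $N$ with components $N_{a|x,\lambda}$, define $\tilde N$ by
\begin{equation}
    \tilde N(y \otimes \delta_\lambda) := N_{\cdot|\,\cdot\,,\lambda}(y), \qquad \text{i.e.} \qquad \tilde N_{a|x}(y \otimes \delta_\lambda) := N_{a|x,\lambda}(y),
\end{equation}
extended affinely to all of $K_B \dot\otimes S_\Lambda$; conversely, given $\tilde N$, set $N_{a|x,\lambda}(y) := \tilde N_{a|x}(y \otimes \delta_\lambda)$. I would then check the three things that need checking: (i) this $\tilde N$ is well-defined and affine, which follows because any element of $K_B \dot\otimes S_\Lambda$ is a convex combination of the $y \otimes \delta_\lambda$ and the assignment is consistent on overlaps (here one uses that $S_\Lambda$ is a simplex, so the $\delta_\lambda$-components of a convex decomposition are unique); (ii) $\tilde N$ is a genuine channel into $CS^1_{k,g}$, i.e. it is positive and each column of $\tilde N(\cdot)$ sums to $1$ — positivity because each $N_{a|x,\lambda}$ is a positive functional and positive functionals on $K_B \dot\otimes S_\Lambda$ are generated (via the direct-sum decomposition) by those of the form $e \otimes m^{(\lambda)}$ with $e \in A(K_B)^+$; and the column-stochastic condition because $\sum_{a=1}^k \tilde N_{a|x}(y\otimes\delta_\lambda) = \sum_a N_{a|x,\lambda}(y) = \one_{K_B}(y) = 1$ for $y \in K_B$, and this extends affinely; (iii) the two assignments are mutually inverse, which is immediate from evaluating on the spanning set $\{y\otimes\delta_\lambda\}$.

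The only mildly delicate point — and the one I would single out as the main obstacle — is verifying that an arbitrary positive, normalization-preserving map on $K_B \dot\otimes S_\Lambda$ is captured by this recipe, i.e. that there is no loss of generality in specifying $\tilde N$ only on product states $y \otimes \delta_\lambda$. This is exactly the statement that $V(K_B \dot\otimes S_\Lambda)^+ = V(K_B)^+ \tmin S_\Lambda^+$ and that the extreme rays of the latter are the pure tensors, together with the dual fact that $A(K_B \dot\otimes S_\Lambda)^+ = A(K_B)^+ \tmax A(S_\Lambda)^+$, so that every effect on $K_B \dot\otimes S_\Lambda$ restricts, on the $\lambda$-th summand, to an effect on $K_B$; this is precisely the content recalled in \Cref{sec:GPT-formalism} about tensor products of state spaces (with a simplex both tensor products coincide), so no new work is required beyond invoking it. Everything else is bookkeeping: matching the $g\Lambda$ index pairs $(x,\lambda)$ on the $N$-side with the $g$ settings of $\tilde N$ together with the $\Lambda$ branches of the simplex factor.
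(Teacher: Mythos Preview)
Your proposal is correct and follows essentially the same approach as the paper: both directions of the bijection are defined by the relation $\tilde N_{a|x}(y\otimes\delta_\lambda)=N_{a|x,\lambda}(y)$, and the verification that this yields a channel rests on the simplex factor making the $\delta_\lambda$-decomposition unique. The only stylistic difference is that the paper carries out the computations in the explicit basis $\{s_{k,\ldots,k},\,e_a^{(x)}\}$ of $CS_{k,g}$ (writing $\tilde N(X)=\one_{K_B\dot\otimes S_\Lambda}(X)\,s_{k,\ldots,k}+\sum_{\lambda,x,a}(N_{a|x,\lambda}\otimes b_\lambda)(X)\,e_a^{(x)}$ and reading off positivity from the extreme rays $m_a^{(x)}$ of $(CS_{k,g}^+)^\ast$), whereas you phrase the same checks abstractly via the direct-sum decomposition $A(K_B\dot\otimes S_\Lambda)\cong\bigoplus_\lambda A(K_B)$; the content is identical.
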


\begin{proof}
Informally, the multimeter $N$ maps states in $K_B$ to $g\cdot \Lambda$ tuples of probability vectors of size $k$. If we see the range of this map as $\Lambda$-tuples of elements of $CS^1_{k,g}$, the data of the channel $N$ is equivalent to mapping $K_B \dot \otimes S_\Lambda$ to $CS^1_{k,g}$. Next, we make this intuition formal.

First, given the measurements $\{N_{\cdot|x,\lambda}\}_{x \in [g], \lambda \in [\Lambda]}$ of the multimeter $N: K_B \to CS^1_{k, g \cdot \Lambda}$ we can define a linear map $\tilde{N}: V({K_B}) \otimes \Rnum^\Lambda \to CS_{k,g}$ by setting 
\begin{equation}
    \tilde{N}(X) = \one_{K_B \dot\otimes S_\Lambda}(X) s_{k, \ldots, k} + \sum_{\lambda=1}^\Lambda \sum_{x=1}^g \sum_{a=1}^{k-1}  (N_{a|x,\lambda} \otimes b_\lambda)(X) e_{a}^{(x)},
\end{equation}
for all $X = \sum_{\lambda=1}^\Lambda X_\lambda \otimes \delta_\lambda \in V({K_B}) \otimes \Rnum^\Lambda$, where $\{b_\lambda\}_{\lambda \in [\Lambda]}$ is the dual basis of $\{\delta_\lambda\}_{\lambda \in [\Lambda]}$. Clearly now 
\begin{equation}\label{eq:compression-mm-channel}
    \tilde{N}(\varrho \otimes \delta_\lambda) = \one_{K_B}(\varrho)s_{k, \ldots,k} + \sum_{x=1}^g \sum_{a=1}^{k-1} N_{a|x,\lambda}(\varrho) e^{(x)}_a = \sum_{x=1}^g \sum_{a=1}^k N_{a|x,\lambda}(\varrho) E_{a,x}
\end{equation}
for all $\varrho \in V(K_B)^+$ and $\lambda \in [\Lambda]$. Thus, by linearity $\tilde{N}$ maps states to states so that it is a channel.

On the other hand, if we have a channel $\tilde{N}: K_B \dot \otimes S_\Lambda \to CS^1_{k,g}$, then it defines a multimeter $N: K_B \to CS^1_{k,g \cdot \Lambda}$ as follows. Namely, as $\tilde N$ is linear and maps states to states, it can be written as
\begin{equation}
    \tilde N(X) = \one_{K_B \dot \otimes S_\Lambda}(X)s_{k, \ldots, k} + \sum_{x=1}^g \sum_{a=1}^{k-1} A_{a|x}(X) e_{a}^{(x)},
\end{equation}
where $A_{a|x}: V(K_B) \otimes \Rnum^\Lambda \to \Rnum$ are some linear functionals. We see immediately that 
\begin{equation}
     m_a^{(x)}(\tilde N(X)) =  A_{a|x}(X) \quad \forall a\in[k-1], \, x \in [g],
\end{equation}
which is positive for all $X \in V(K_B)^+ \dot \otimes S_\Lambda^+$, because $\tilde N$ is positive and the $m_a^{(x)}$ generate the extreme rays of $(CS^+_{k,g})^*$. Thus, $A_{a|x} \in (V(K_B)^+ \dot \otimes S_\Lambda^+)^*$ for all $a \in [k - 1]$, $x \in [g]$ and also $A_{k|x} := \one_{K_B \dot \otimes S_\Lambda} - \sum_{a=1}^{k-1} A_{a|x} \in (V(K_B)^+ \dot \otimes S_\Lambda^+)^*$ since
\begin{equation}
m_{k}^{(x)}(\tilde N(X)) =  \one_{K_B \dot \otimes S_\Lambda}(X) - \sum_{a=1}^{k-1} A_{a|x}(X).    
\end{equation}
Hence, $\{A_{\cdot|x}\}_{x \in [g]}$ is a set of $g$ measurements with $k$ outcomes. Defining $N_{a|x,\lambda}(\varrho) := A_{a|x}(\varrho \otimes \delta_\lambda)$ for all $\varrho \in K_B$, $a \in [k]$, $x \in [g]$ and $\lambda \in [\Lambda]$, we find for $X = \sum_{\lambda=1}^\Lambda X_\lambda \otimes \delta_\lambda$ that
\begin{equation}\label{eq:compression-mm-channel-2}
    \tilde N(X) = \one_{K_B \dot \otimes S_\Lambda}(X)s_{k, \ldots, k} + \sum_{\lambda=1}^\Lambda \sum_{x=1}^g \sum_{a=1}^{k-1} N_{a|x,\lambda}(X_\lambda) e_{a}^{(x)} \, .
\end{equation}
Clearly $N = \{N_{\cdot|x,\lambda}\}_{x \in [g], \lambda \in [\Lambda]}$ now defines a multimeter $N: K_B \to CS^1_{k,g \cdot \Lambda}$.
\end{proof}

By using the previous lemma we prove \Cref{thm:compression-as-factorization}. 
\begin{proof}[Proof of \Cref{thm:compression-as-factorization}]
Let us now first assume that the multimeter $M$ on $K_A$ is $K_B$-simulable. Thus, there exists some finite number of outcomes $\Lambda$, an instrument $\Phi: K_A \to K_B \dot{\otimes} S_\Lambda$ and a multimeter $N: K_B \to CS^1_{k, g \cdot \Lambda}$  such that $ M_{a|x} = \sum_{\lambda =1}^\Lambda\Phi^\ast_\lambda(N_{a|x,\lambda})$ for all $a \in [k]$ and $x \in [g]$. As was explained in \Cref{lemma:1}, we can use the multimeter $N: K_B \to CS^1_{k, g \cdot \Lambda}$ to define a multimeter $\tilde N: K_B \dot \otimes S_\Lambda \to CS^1_{k,g}$. What remains to show is just that $M = \tilde N \circ \Phi$. First, as we explained in \Cref{sec:multimeters-in-GPTs}, we have that
\begin{equation}
    \mathds 1_{CS^1_{k,g}}(M(\varrho))=\one_{K_A}(\varrho), \qquad m_a^{(x)}(M(\varrho)) = M_{a|x}(\varrho) \quad \forall a \in [k-1],~x\in [g], \, \varrho \in K_A \,.
\end{equation}
By using \cref{eq:compression-mm-channel} it is easy to see that $\one_{CS^1_{k,g}}(\tilde N(\Phi(\varrho)))=\one_{K_A}(\varrho)$. Finally, for all  $a \in [k-1]$, $x\in [g]$, $\varrho \in K_A$,
\begin{align}
    m_a^{(x)}(\tilde N(\Phi(\varrho))) &= \sum_{\lambda=1}^\Lambda \tilde N_{a|x,\lambda}(\Phi_\lambda(\varrho))= \sum_{\lambda=1}^\Lambda (\Phi^*_\lambda(N_{a|x,\lambda}))(\varrho) = \left(\sum_{\lambda=1}^\Lambda \Phi^*_\lambda(\tilde N_{a|x,\lambda}) \right)(\varrho)  = M_{a|x}(\varrho),
\end{align}
which shows that indeed $M= \tilde N \circ \Phi$.

On the other hand, if $M = \tilde N \circ \Phi$ with the channels $\tilde N$ and $\Phi$ described in the diagram, then we can use \Cref{lemma:1} again to see that $M$ is $K_B$-simulable because  by using \cref{eq:compression-mm-channel-2} we have that 
\begin{equation}
       M_{a|x}(\varrho)=m_a^{(x)}(M(\varrho)) = m_a^{(x)}(\tilde N(\Phi(\varrho))) = \sum_{\lambda=1}^\Lambda N_{a|x,\lambda}(\Phi_\lambda(\varrho))= \left(\sum_{\lambda=1}^\Lambda \Phi^*_\lambda(N_{a|x,\lambda}) \right)(\varrho)
\end{equation}
for all $a \in [k]$, $x \in [g]$ and $\varrho \in K_A$.
\end{proof}

\subsection{Classical simulation of multimeters}
In this section, we consider a different notion of simulation of measurements, purely in terms of classical mixing and post-processing of measurements \cite{oszmaniec2017simulating,guerini2017operational, filippov2018simulability}. 

\begin{defi}
   Let $M= \{M_{\cdot|x}\}_{x \in g}$ be a multimeter of $g$ measurements with $k$ outcomes on a state space $K$. We say that $M$ can be \emph{classically simulated} (or is \emph{classically simulable}) with a multimeter $N = \{N_{\cdot|y}\}_{y \in [r]}$ of $r$ measurements with $l$ outcomes on $K$ if there exist conditional probability distributions $\pi = (\pi_{\cdot|x})_{x \in [g]}$ on $[r]$ and $\nu= (\nu_{\cdot|b,x,y})_{b \in [l], x \in [g], y\in [r]}$ on $[k]$ such that 
\begin{equation}\label{eq:cl-simulation}
    M_{a|x} = \sum_{y=1}^r \pi_{y|x} \sum_{b=1}^l \nu_{a|b,x,y} N_{b|y} 
\end{equation}
for all $a \in [k]$ and $x \in [g]$.
\end{defi}

The operational interpretation of classical simulation is the following: given a measurement label $x$, we measure the input state $\varrho \in K$ with the measurement $N_{\cdot | y}$ with probability $\pi_{y|x}$ from which we obtain a measurement outcome $b$. Instead of registering this outcome we report an outcome $a$ with probability $\nu_{a|b,x,y}$ as the final outcome of the simulated measurement with the label $x$. 

 It turns out that we can also express classical simulability as factorization of the multimeter through a $CS$ of different size. For this we need to first characterize the channels between two sets of column stochastic matrices of different size. 

\begin{thm}\label{thm:channel-polysimplices-characterization}
A linear map $\Phi: CS_{l,r} \to CS_{k,g}$ is a channel if and only if there exist conditional probability distributions $\pi = (\pi_{\cdot|x})_{x \in [g]}$ on $[r]$ and $\nu= (\nu_{\cdot|b,x,y})_{b \in [l], x \in [g], y\in [r]}$ on $[k]$ such that 
\begin{equation}\label{eq:cl-sim-channel}
    \Phi(X) = \one_{CS^1_{l,r}}(X) s_{k, \ldots,k} + \sum_{x=1}^g \sum_{a =1}^{k-1} \left[ \sum_{y=1}^r \pi_{y|x} \sum_{b=1}^l \nu_{a|b,x,y} m^{(y)}_b(X)\right] e^{(x)}_a
\end{equation}
for all $X \in CS_{l,r}$.
\end{thm}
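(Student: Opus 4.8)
The plan is to prove both directions of the equivalence by working in the convenient bases for $CS_{l,r}$ and $CS_{k,g}$ established in \eqref{eq:effect-basis} and \eqref{eq:state-basis}, and by dualizing so that the positivity condition becomes a statement about how $\Phi^*$ acts on the extreme rays of $(CS^+_{l,r})^*$, namely the effects $m^{(y)}_b$ by \cite[Lemma 1]{jencova2018incompatible}.

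**For the ``if'' direction**, suppose $\Phi$ is given by the formula \eqref{eq:cl-sim-channel}. I would first check normalization: applying $\one_{CS^1_{k,g}}$ to $\Phi(X)$ and using that $\one_{CS^1_{k,g}}(s_{k,\dots,k}) = 1$ while $\one_{CS^1_{k,g}}(e^{(x)}_a) = 0$ (these are dual bases), we get $\one_{CS^1_{k,g}}(\Phi(X)) = \one_{CS^1_{l,r}}(X)$, so $\Phi$ is normalization-preserving. For positivity it suffices to test on extreme points $s_{i_1,\dots,i_r}$ of $CS^1_{l,r}$: one computes $m^{(y)}_b(s_{i_1,\dots,i_r}) = \delta_{b,i_y}$ and $\one_{CS^1_{l,r}}(s_{i_1,\dots,i_r}) = 1$, and then checks that the matrix $\Phi(s_{i_1,\dots,i_r})$ has entries $m^{(x)}_a(\Phi(s_{i_1,\dots,i_r})) = \sum_y \pi_{y|x}\,\nu_{a|b=i_y,x,y} \geq 0$ with each column summing to $1$ (using $\sum_a \nu_{a|b,x,y} = 1$ and $\sum_y \pi_{y|x} = 1$). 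Hence $\Phi$ maps extreme points of $CS^1_{l,r}$ into $CS^1_{k,g}$, and by convexity and linearity $\Phi$ is a channel.

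**For the ``only if'' direction**, suppose $\Phi: CS_{l,r} \to CS_{k,g}$ is a channel. Since it is linear and maps the state space into the state space, expanding $\Phi(X)$ in the basis \eqref{eq:state-basis} of $CS_{k,g}$ forces the $s_{k,\dots,k}$-coefficient to be $\one_{CS^1_{l,r}}(X)$ and gives $\Phi(X) = \one_{CS^1_{l,r}}(X)\, s_{k,\dots,k} + \sum_{x=1}^g\sum_{a=1}^{k-1} B_{a|x}(X)\, e^{(x)}_a$ for linear functionals $B_{a|x} = m^{(x)}_a \circ \Phi$. Positivity of $\Phi$ together with the fact that the $m^{(x)}_a$ generate the extreme rays of $(CS^+_{k,g})^*$ shows $B_{a|x} \in (CS^+_{l,r})^*$; also $B_{k|x} := \one_{CS^1_{l,r}} - \sum_{a=1}^{k-1} B_{a|x} = m^{(x)}_k \circ \Phi \in (CS^+_{l,r})^*$. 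So for each $x$, $\{B_{\cdot|x}\}$ is a $k$-outcome measurement on $CS^1_{l,r}$. The remaining task is to decompose each such measurement as a convex combination over $y\in[r]$ of post-processings of the ``column measurements'' $m^{(y)}$: this is exactly the statement that the measurements on the GPT $CS^1_{l,r}$ are precisely the classical mixtures-and-post-processings of the $m^{(y)}$ — equivalently, that the extreme rays of $(CS^+_{l,r})^*$ are the $m^{(y)}_b$ and $(CS^+_{l,r})^*$ together with $\one_{CS^1_{l,r}}$ has the structure of a product of simplices dual cone. Concretely, writing $B_{a|x}$ in the basis \eqref{eq:effect-basis} and imposing $\sum_a B_{a|x} = \one_{CS^1_{l,r}}$ plus positivity on the extreme points $s_{i_1,\dots,i_r}$, one reads off nonnegative numbers $q^{(x)}_{a,y,b} := B_{a|x}$ evaluated appropriately with $\sum_{y,b\,:\,\text{per column}} \dots$ summing correctly; normalizing produces $\pi_{y|x} := \sum_b q^{(x)}_{\cdot,y,b}/(\text{const})$ and $\nu_{a|b,x,y} := q^{(x)}_{a,y,b}/\pi_{y|x}$, which one then verifies are genuine conditional probability distributions yielding \eqref{eq:cl-sim-channel}.

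**The main obstacle** is the last step of the ``only if'' direction: showing that an arbitrary $k$-outcome measurement on the column-stochastic state space $CS^1_{l,r}$ must be a convex combination over the choice of column $y$ of a post-processing of $m^{(y)}$, i.e. that it cannot ``mix information across columns'' in a way not captured by \eqref{eq:cl-sim-channel}. This is really a statement about the facial structure of $(CS^+_{l,r})^*$ — that effects on $CS^1_{l,r}$ lie in the convex hull of the rescaled $m^{(y)}_b$ and $\one$ — and should follow from \cite[Lemma 1]{jencova2018incompatible} identifying the extreme rays, combined with a careful bookkeeping argument expanding a general effect $e \leq \one_{CS^1_{l,r}}$ in terms of those extreme rays while tracking the per-column normalization constraint $\sum_{b} m^{(y)}_b = \one_{CS^1_{l,r}}$ for every $y$; I expect this to reduce, after unwinding, to the elementary fact that a substochastic-to-stochastic completion of a nonnegative matrix with a common column-sum constraint decomposes as claimed.
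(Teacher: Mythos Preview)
Your approach is essentially the same as the paper's first proof: expand $\Phi$ in the basis \eqref{eq:state-basis}, identify the coefficients $B_{a|x} = m^{(x)}_a \circ \Phi$ as a $k$-outcome measurement on $CS^1_{l,r}$, and then decompose each effect $B_{a|x}$ along the extreme rays $m^{(y)}_b$ of $(CS^+_{l,r})^*$. The ``if'' direction is fine and matches the paper.

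The gap is exactly where you flag it, and your sketch of the resolution is too vague to count as a proof. The issue is that the decomposition $B_{a|x} = \sum_{y,b}\gamma_{a,b,x,y}\,m^{(y)}_b$ is \emph{not unique} (the $m^{(y)}_b$ satisfy $\sum_b m^{(y)}_b = \one_{CS^1_{l,r}}$ for every $y$), so you cannot simply ``read off'' the coefficients; and it is not a priori obvious that some choice of nonnegative $\gamma$'s makes $\pi_{y|x} := \sum_a \gamma_{a,b,x,y}$ independent of $b$, which is precisely what is needed for $\pi$ to be well-defined and for $\nu_{a|b,x,y} := \gamma_{a,b,x,y}/\pi_{y|x}$ to normalize correctly. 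The paper resolves this by a concrete trick: fix $y_0$, evaluate $\sum_a B_{a|x}(s_{b_1,\ldots,b_r}) = 1$ on the extreme points with $b_y = l$ for all $y \neq y_0$, and observe that the resulting equation shows $\sum_a \gamma_{a,b,x,y_0}$ is independent of $b$ (the right-hand side does not see $b_{y_0}$). This single observation is what makes the bookkeeping go through; without it your ``normalizing produces $\pi_{y|x}$'' step is not justified.

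The paper also gives a second, more conceptual proof via simulation-irreducible measurements (\Cref{prop:sim-irr-reduction}, \Cref{prop:sim-irr-polysimplices}): every measurement on $CS^1_{l,r}$ is classically simulable by the multimeter $\{m^{(1)},\ldots,m^{(r)}\}$ because these are exactly the extremal simulation-irreducible measurements. This is the structural reason behind the decomposition you are after, and it may match better the intuition you express in your last paragraph.
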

\begin{proof}
As was explained in \Cref{sec:multimeters-in-GPTs}, for a state space $K$, a linear map $\Phi: V(K) \to CS_{k,g}$ is a channel, i.e., a multimeter, if and only if there exist $g$ measurements $\{A_{\cdot|x}\}_{x \in [g]}$ with $k$ outcomes such that 
\begin{equation}\label{eq:CS-channel}
\Phi(\varrho) = \one_{K}(\varrho) s_{k, \ldots,k} + \sum_{x=1}^g \sum_{a =1}^{k-1} A_{a|x}(\varrho) e^{(x)}_a  
\end{equation}
for all $\varrho \in V(K)$.

Let us now take $K = CS^1_{l,r}$. Since now $A_{\cdot|x}$ is a measurement on $CS^1_{l,r}$ for each $x \in [g]$ we can write it as
\begin{equation}
A_{\cdot|x}(\varrho) = \sum_{a=1}^{k} A_{a|x}(\varrho) \delta_{a}
\end{equation}
for all $\varrho \in CS^1_{l,r}$, where $A_{1|x}, \ldots, A_{k|x}$ are the effects of $A_{\cdot|x}$. Since $A_{a|x}$ is an effect, we know that in particular $A_{a|x} \in (CS_{l,r}^+)^*$. Recall that the $m_b^{(y)}$ generate the extreme rays of $(CS_{l,r}^+)^*$. Therefore, since $(CS_{l,r}^+)^*$ is a convex cone, it follows that we can write $A_{a|x}$ as 
\begin{equation}
 \forall x \in [g], \, \forall a \in [k] \qquad A_{a|x} = \sum_{y=1}^r \sum_{b=1}^{l} \gamma_{a,b,x,y} m^{(y)}_{b}
\end{equation}
 for some $\gamma_{a,b,x,y} \geq 0$ . Positivity of $\Phi$ imposes the following constraints:
\begin{equation}
\forall x \in [g], \, \forall a \in [k],\, \forall \vec{b} \in [l]^y \qquad A_{a|x}(s_{b_1, \ldots, b_r}) = \sum_{y=1}^r \gamma_{a,b_y,x,y} \geq 0
\end{equation}
and the fact that $\Phi$ maps states to states:
\begin{equation}\label{eq:gamma_normalization}
\forall x \in [g],\, \forall \vec{b} \in [l]^y \qquad \sum_{a=1}^{k} A_{a|x}(s_{b_1, \ldots, b_r}) = \sum_{a=1}^{k} \sum_{y=1}^r  \gamma_{a,b_y,x,y} = 1.
\end{equation}

Now fix some $y_0 \in [r]$ and let us take $b_y=l$ for all $y \neq y_0$ in the above equation. Then 
\begin{equation}
\forall x \in [g],\, \forall b \in [l] \qquad \sum_{a=1}^{k}  \gamma_{a,b,x,y_0} = 1 -  \sum_{a=1}^{k} \sum_{y \in [r]\setminus \{y_0\}} \gamma_{a,l,x,y} .
\end{equation}
As the right hand side is independent of $b$, it follows that we can define
\begin{equation}
    \pi_{y_0|x} := \sum_{a=1}^{k}  \gamma_{a,b,x,y_0}
\end{equation}
for some $b \in [l]$ and for all $x \in [g]$ and $y_0 \in [r]$. Since $y_0 \in [r]$ was arbitrary, it follows from \cref{eq:gamma_normalization}  that $\pi := (\pi_{\cdot|x})_{x \in [g]}$ is a conditional probability distribution on $[r]$. Setting
\begin{equation}
 \forall x \in [g],\, \forall a \in  [k],\, \forall y \in [r],\, \forall b \in [l] \qquad   \nu_{a|b,x,y} := \begin{cases}
     \gamma_{a,b,x,y}/\pi_{y|x}, & \pi_{y|x} \neq 0 \\
     1/k, & \pi_{y|x} = 0,
 \end{cases}
\end{equation}
 it follows that $\nu= (\nu_{\cdot|b,x,y})_{b \in [l], x \in [g], y \in [r]}$ is a conditional probability distribution on $[k]$. Hence, we may rewrite $A_{a|x}$ as
\begin{equation}
    A_{a|x} = \sum_{y=1}^r \sum_{b=1}^{l} \gamma_{a,b,x,y} m^{(y)}_{b} =   \sum_{y=1}^r \pi_{y|x} \sum_{b=1}^l \nu_{a|b,x,y} m^{(y)}_b
\end{equation}
for all $a \in [k]$ and $x \in [g]$ so that by \cref{eq:CS-channel} we have that
\begin{align} 
    \Phi(X) &= \one_{CS^1_{l,r}}(X) s_{k, \ldots,k} + \sum_{x=1}^g \sum_{a =1}^{k-1} \left[ \sum_{y=1}^r \pi_{y|x} \sum_{b=1}^l \nu_{a|b,x,y} m^{(y)}_b(X)\right] e^{(x)}_a 
\end{align}
for all $X \in CS_{l,r}$. For the converse, it is straightforward to verify that any map as in \cref{eq:cl-sim-channel} is a channel.
\end{proof}

We can use the previous result to conclude that every factorization through a state space of column stochastic matrices is given by some simulation scheme.

\begin{cor}\label{cor:simulation-factorization}
A multimeter $M =\{M_{\cdot|x}\}_{x \in [g]}$ of $g$ measurements with $k$ outcomes on a state space $K$ can be classically simulated with a multimeter $N = \{N_{\cdot |y}\}_{y \in [r]}$ of $r$ measurements with $l$ outcomes on $K$ if and only if there exists a channel $\Phi: CS^1_{l,r} \to CS^1_{k,g}$ such that the following diagram commutes:
\begin{equation}
    \begin{tikzcd}
    K \arrow[rd, "N"] \arrow[rr, "M"] & & CS^1_{k,g} \\
    & CS^1_{l,r} \arrow[ru, "\Phi"] &
    \end{tikzcd}  
\end{equation}
\end{cor}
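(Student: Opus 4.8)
The plan is to obtain the classical simulation data directly from the channel characterization in \Cref{thm:channel-polysimplices-characterization}, combined with the identification of multimeters with channels into column stochastic matrices from \Cref{sec:multimeters-in-GPTs}; the two implications are then the same computation read in opposite directions. The underlying reason this should work is that a multimeter $N : K \to CS^1_{l,r}$ composed with a channel $\Phi : CS^1_{l,r} \to CS^1_{k,g}$ is again a multimeter $K \to CS^1_{k,g}$, and \Cref{thm:channel-polysimplices-characterization} tells us precisely which post-processings of the effects of $N$ such a $\Phi$ realizes.

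For the ``only if'' direction I would start from conditional probability distributions $\pi = (\pi_{\cdot|x})_{x \in [g]}$ and $\nu = (\nu_{\cdot|b,x,y})$ witnessing that $M$ is classically simulated by $N$, i.e., satisfying \cref{eq:cl-simulation}. By \Cref{thm:channel-polysimplices-characterization} these same distributions define a channel $\Phi : CS^1_{l,r} \to CS^1_{k,g}$ via \cref{eq:cl-sim-channel}, so it only remains to check $M = \Phi \circ N$. Since, by the basis \cref{eq:effect-basis} of $CS^*_{k,g}$, two channels into $CS^1_{k,g}$ coincide as soon as they agree after composing with $\one_{CS^1_{k,g}}$ and with every $m^{(x)}_a$, it suffices to verify $m^{(x)}_a(M(\varrho)) = m^{(x)}_a(\Phi(N(\varrho)))$ for all $\varrho \in K$. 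The left-hand side is $M_{a|x}(\varrho)$; substituting $N(\varrho)$ into \cref{eq:cl-sim-channel} and using $m^{(y)}_b(N(\varrho)) = N_{b|y}(\varrho)$, the right-hand side becomes $\sum_{y=1}^r \pi_{y|x} \sum_{b=1}^l \nu_{a|b,x,y} N_{b|y}(\varrho)$, which equals $M_{a|x}(\varrho)$ by \cref{eq:cl-simulation}; the normalization component matches automatically since both sides are channels.

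For the ``if'' direction, given a channel $\Phi : CS^1_{l,r} \to CS^1_{k,g}$ with $M = \Phi \circ N$, \Cref{thm:channel-polysimplices-characterization} produces conditional probability distributions $\pi$ and $\nu$ putting $\Phi$ in the form \cref{eq:cl-sim-channel}. Running the same chain of equalities in reverse, $M_{a|x}(\varrho) = m^{(x)}_a(M(\varrho)) = m^{(x)}_a(\Phi(N(\varrho))) = \sum_{y=1}^r \pi_{y|x} \sum_{b=1}^l \nu_{a|b,x,y} N_{b|y}(\varrho)$ for every $\varrho \in K$, so $M_{a|x} = \sum_y \pi_{y|x} \sum_b \nu_{a|b,x,y} N_{b|y}$ as effects, which is exactly \cref{eq:cl-simulation}. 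Hence $M$ is classically simulated by $N$ with data $(\pi, \nu)$.

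I do not anticipate a real obstacle: the substance is entirely in \Cref{thm:channel-polysimplices-characterization}, and this corollary is essentially a change of notation. The only points demanding a little care are that one is entitled to test equality of the two channels on the functionals $\one_{CS^1_{k,g}}$ and $m^{(x)}_a$ only (which is why \cref{eq:effect-basis} is invoked), and keeping the four-index bookkeeping $(a,b,x,y)$ straight when plugging $N$ into the formula for $\Phi$.
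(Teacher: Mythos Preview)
Your proposal is correct and follows essentially the same approach as the paper's proof: both directions rely on \Cref{thm:channel-polysimplices-characterization} to translate between the channel $\Phi$ and the simulation data $(\pi,\nu)$, and then verify $M=\Phi\circ N$ by expanding in the basis of $CS_{k,g}$ (the paper writes out the coefficients with respect to $\{s_{k,\ldots,k},e^{(x)}_a\}$, you equivalently pair with the dual functionals $\{\one_{CS^1_{k,g}},m^{(x)}_a\}$). There is no substantive difference.
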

\begin{proof}
    If $M$ can be classically simulated with $N$, then the conditional probability distributions $\pi$ and $\nu$ in \cref{eq:cl-simulation} can be used to define a channel $\Phi: CS^1_{l,r} \to CS^1_{k,g}$ by using \cref{eq:cl-sim-channel} of \Cref{thm:channel-polysimplices-characterization}. Then
    \begin{align}
        \Phi(N(\varrho)) &=  \one_{CS^1_{l,r}}(N(\varrho)) s_{k, \ldots,k} + \sum_{x=1}^g \sum_{a =1}^{k-1} \left[ \sum_{y=1}^r \pi_{y|x} \sum_{b=1}^l \nu_{a|b,x,y} m^{(y)}_b(N(\varrho))\right] e^{(x)}_a \label{eq:cl-sim-factorization-1}\\ 
        &=  \one_K(\varrho) s_{k, \ldots,k} + \sum_{x=1}^g \sum_{a =1}^{k-1} \left[ \sum_{y=1}^r \pi_{y|x} \sum_{b=1}^l \nu_{a|b,x,y} N_{y|b}(\varrho)\right] e^{(x)}_a \label{eq:cl-sim-factorization-2} \\
        &=  \one_K(\varrho) s_{k, \ldots,k} + \sum_{x=1}^g \sum_{a =1}^{k-1} M_{a|x}(\varrho) e^{(x)}_a \label{eq:cl-sim-factorization-3} \\
        &= M(\varrho) \label{eq:cl-sim-factorization-4}
    \end{align}
for all $\varrho \in V(K)$.

On the other hand, if there exists a channel $\Phi: CS^1_{l,r} \to CS^1_{k,g}$ such that $M=\Phi \circ N$, then there exists conditional probability distributions $\pi$ and $\nu$ as in \Cref{thm:channel-polysimplices-characterization} such that \cref{eq:cl-sim-factorization-1}-\cref{eq:cl-sim-factorization-4} hold. In particular, comparing the basis decompositions in \cref{eq:cl-sim-factorization-2} and \cref{eq:cl-sim-factorization-3} one immediately sees that $M_{a|x} = \sum_{y=1}^r \pi_{y|x} \sum_{b=1}^l \nu_{a|b,x,y} N_{b|y} $ for all $a \in [k]$ and $x \in [g]$ so that $M$ can be classically simulated with $N$.
\end{proof}

As a special case of \Cref{cor:simulation-factorization} we get the following known characterization of measurement incompatibility \cite{jencova2018incompatible}.
\begin{cor} \label{cor:sep-com}
Let $M =\{M_{\cdot|x}\}_{x \in [g]}$ be a multimeter of $g$ measurements with $k$ outcomes on a state space $K$. Then, the following are equivalent:
\begin{enumerate}
    \item $(\id \otimes M)(\chi_{V(K)}) \in K_\varrho^\ast \tmin CS^1_{k,g}$ for some $\varrho$ in the relative interior of $K$
    \item $M$ consists of compatible measurements
    \item There exist a finite number of outcomes $\Lambda$, a measurement $N: K \to S_\Lambda$, and a channel $\Phi: S_\Lambda \to CS^1_{k,g}$ such that the following diagram commutes:
\begin{equation}
  \begin{tikzcd}
    K \arrow[rd, "N"] \arrow[rr, "M"] & & CS^1_{k,g} \\
    & S_{\Lambda} \arrow[ru, "\Phi"] &
    \end{tikzcd}  
\end{equation}
\end{enumerate}
\end{cor}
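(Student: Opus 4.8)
The plan is to assemble this corollary from two results already established: the equivalence $(2)\Leftrightarrow(3)$ will come from \Cref{cor:simulation-factorization} and $(1)\Leftrightarrow(3)$ from \Cref{prop:min-factoring-and-tensor}, so no new idea is needed, only a careful matching of hypotheses. For $(2)\Leftrightarrow(3)$ I would specialize \Cref{cor:simulation-factorization} to the case $r=1$, i.e.\ to classical simulation of $M$ by a multimeter consisting of a single $l$-outcome measurement $N$ (and set $\Lambda:=l$). Then the conditional distribution $\pi=(\pi_{\cdot|x})_{x\in[g]}$ is supported on the one-point set and is trivial, the target state space $CS^1_{l,1}$ is isomorphic to the simplex $S_\Lambda$, and the simulation identity \eqref{eq:cl-simulation} collapses to $M_{a|x}=\sum_{b}\nu_{a|b,x}N_b$, which is verbatim the defining relation of $M$ being compatible with parent measurement $N$. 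Thus ``$M$ is compatible'' and ``$M$ is classically simulable by a single measurement'' are literally the same statement, and \Cref{cor:simulation-factorization} rewrites the latter as the factorization in (3). This step is pure bookkeeping.

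The equivalence $(1)\Leftrightarrow(3)$ is where the dual state space enters. Since $\varrho$ lies in the relative interior of $K$, the dual state space $K_\varrho^\ast$ satisfies $V(K_\varrho^\ast)^+=A(K)^+$ and $A(K_\varrho^\ast)^+=V(K)^+$, with order unit $\varrho\in A(K_\varrho^\ast)=V(K)$. Hence $M$, restricted to cones, is a positive map $A(K_\varrho^\ast)^+=V(K)^+\to V(CS^1_{k,g})^+$ with $M(\varrho)=M(\one_{K_\varrho^\ast})\in CS^1_{k,g}$, so by \Cref{lem:xi-to-phi} it is the unique map associated to the tensor $\xi_M:=(M\otimes\id)(\chi_{V(K_\varrho^\ast)})$; choosing in $\chi_{V(K_\varrho^\ast)}$ the basis of $A(K)$ dual to the basis of $V(K)$ used in $\chi_{V(K)}$, one checks $\xi_M$ is just $(\id\otimes M)(\chi_{V(K)})$ with its two tensor legs interchanged, so both sit in the respective minimal tensor product simultaneously. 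Applying \Cref{prop:min-factoring-and-tensor} with $K_A=K_\varrho^\ast$, $K_B=CS^1_{k,g}$, $\Phi=M$ then gives: this tensor lies in $K_\varrho^\ast\tmin CS^1_{k,g}$ iff $M$ factors as a positive map $\Psi_1\colon V(K)^+\to S_\Lambda^+$ with $\Psi_1(\varrho)\in S_\Lambda$ followed by a channel $\Psi_2\colon S_\Lambda\to CS^1_{k,g}$. One then upgrades $\Psi_1$ to a genuine measurement $N\colon K\to S_\Lambda$ — either via the ``in addition'' clause of \Cref{prop:min-factoring-and-tensor} (since $M$ is a channel), or directly from $\one_{S_\Lambda}\circ\Psi_1=\one_{CS^1_{k,g}}\circ\Psi_2\circ\Psi_1=\one_{CS^1_{k,g}}\circ M=\one_K$, which shows $\Psi_1$ is normalization-preserving — yielding (3); conversely, any factorization as in (3) supplies exactly such $\Psi_1=N$ (with $N(\varrho)\in S_\Lambda$ since $N$ is a measurement and $\varrho\in K$) and $\Psi_2=\Phi$, so \Cref{prop:min-factoring-and-tensor} gives (1).

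Nothing here is deep; the one point I would handle with care is the passage through $K_\varrho^\ast$ in the second paragraph — correctly identifying $M$ with the map that \Cref{lem:xi-to-phi} and \Cref{prop:min-factoring-and-tensor} are phrased for, performing the leg-swap, and verifying the normalization that promotes the a priori only-at-$\varrho$-normalized $\Psi_1$ to a measurement. Everything else is a routine unwinding of definitions.
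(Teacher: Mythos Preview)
Your proposal is correct and follows essentially the same approach as the paper's proof, which simply cites \Cref{prop:min-factoring-and-tensor} for $(1)\Leftrightarrow(3)$ and \Cref{cor:simulation-factorization} with $r=1$ for $(2)\Leftrightarrow(3)$. You have filled in the details the paper leaves implicit---the identification of $M$ with the map of \Cref{lem:xi-to-phi} via the dual state space $K_\varrho^\ast$, the tensor leg-swap, and the normalization argument upgrading $\Psi_1$ to a measurement---and all of these are handled correctly.
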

\begin{proof}
    The equivalence between $(1)$ and $(3)$ follows from \Cref{prop:min-factoring-and-tensor}.  The equivalence between $(2)$ and $(3)$ follows directly from \Cref{cor:simulation-factorization} with $r=1$, as the factorization is equivalent by the corollary to
    \begin{equation}
        M_{a|x} = \sum_{\lambda = 1}^\Lambda \nu_{a|\lambda, x} N_{\lambda} \qquad \forall x \in [g], \, \forall a \in [k] \, .
    \end{equation}
\end{proof}

If we add an additional classical system $S_\Lambda$ into the factorization of \Cref{cor:simulation-factorization} then we get the following characterization:

\begin{cor}
    For a multimeter $M =\{M_{\cdot|x}\}_{x \in [g]}$ of $g$ measurements with $k$ outcomes on a state space $K$ there exist channels $N: K \to CS^1_{l,r} \dot \otimes S_\Lambda$ and $\tilde \Phi: CS^1_{l,r} \dot \otimes S_\Lambda \to CS^1_{k,g}$ such that the following diagram commutes
    \begin{equation}
        \begin{tikzcd}
            K \arrow[rd, "N"] \arrow[rr, "M"] & & CS^1_{k,g} \\
            & CS^1_{l,r} \dot \otimes S_\Lambda \arrow[ru, "\tilde \Phi"] &
        \end{tikzcd}
    \end{equation}
    if and only if there exist conditional probability distributions $\pi = (\pi_{\cdot|x,\lambda})_{x \in [g],\lambda\in[\Lambda]}$ on $[r]$ and $\nu= (\nu_{\cdot|b,x,\lambda,y})_{b \in [l], x \in [g], \lambda \in [\Lambda], y\in [r]}$ on $[k]$ and a multimeter $N = \{N_{\cdot, \cdot|y}\}_{y \in [r]}$ consisting of $r$ measurements with $l \cdot \Lambda$ outcomes on $K$ satisfying
    \begin{equation}
        \sum_{b=1}^l N_{b,\lambda|y} = \sum_{b=1}^l N_{b,\lambda|y'}    
    \end{equation}
    for all $\lambda \in [\Lambda]$ and $y,y' \in [r]$ such that 
    \begin{equation}\label{eq:cl-sim-with-cl}
        M_{a|x} = \sum_{\lambda=1}^\Lambda \sum_{y=1}^r \pi_{y|x,\lambda} \sum_{b=1}^l \nu_{a|b,x,y,\lambda} N_{b,\lambda|y} 
    \end{equation}
    for all $a \in [k]$ and $x \in [g]$.
\end{cor}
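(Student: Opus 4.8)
The plan is to combine three ingredients already in place---the instrument picture of \Cref{sec:prelim-meters}, the change-of-factor \Cref{lemma:1}, and the structural description of channels between column-stochastic spaces in \Cref{thm:channel-polysimplices-characterization}---so that the statement becomes essentially \Cref{cor:simulation-factorization} with an extra classical register $S_\Lambda$ carried along both legs of the factorization, the rest being index bookkeeping.

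First I would unpack the channel $N\colon K \to CS^1_{l,r}\dot\otimes S_\Lambda$. By definition this is an instrument from $K$ to $CS^1_{l,r}$ with $\Lambda$ outcomes, hence is given by operations $N_\lambda\colon V(K)^+ \to CS^+_{l,r}$ with $\sum_\lambda N_\lambda$ a channel, via $N(\varrho)=\sum_\lambda N_\lambda(\varrho)\otimes\delta_\lambda$. Setting $N_{b,\lambda|y}:=m^{(y)}_b\circ N_\lambda$, positivity of $N_\lambda$ puts each $N_{b,\lambda|y}$ in $A(K)^+$; the requirement that $N_\lambda(\varrho)$ have equal column sums---that is, lie in $CS^+_{l,r}$ rather than in the cone of all nonnegative $l\times r$ matrices---is exactly the constraint $\sum_b N_{b,\lambda|y}=\sum_b N_{b,\lambda|y'}$; and $\sum_\lambda N_\lambda$ being a channel forces $\sum_{b,\lambda}N_{b,\lambda|y}=\one_K$ for every $y$. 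Thus $\{N_{\cdot,\cdot|y}\}_{y\in[r]}$ is a multimeter of $r$ measurements with $l\cdot\Lambda$ outcomes on $K$ obeying the stated no-signaling constraint, and conversely any such multimeter rebuilds the instrument $N$ by declaring $N_\lambda(\varrho)$ entrywise. I expect this correspondence---and in particular matching the normalization of the instrument with that of the multimeter---to be the only genuinely non-routine step.

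Next I would translate the channel $\tilde\Phi\colon CS^1_{l,r}\dot\otimes S_\Lambda \to CS^1_{k,g}$. By \Cref{lemma:1}, applied with $K_B=CS^1_{l,r}$, it corresponds bijectively to a multimeter $\Phi^\flat\colon CS^1_{l,r}\to CS^1_{k,g\cdot\Lambda}$ with $g\Lambda$ measurements $\{\Phi^\flat_{\cdot|x,\lambda}\}$, and \cref{eq:compression-mm-channel} gives $\tilde\Phi(Y\otimes\delta_\lambda)=\sum_{x,a}\Phi^\flat_{a|x,\lambda}(Y)\,E_{a,x}$. Running the effect-decomposition argument from the proof of \Cref{thm:channel-polysimplices-characterization} on $\Phi^\flat$ (reading its index ``$g$'' as the set $[g]\times[\Lambda]$) yields conditional probability distributions $\pi=(\pi_{\cdot|x,\lambda})$ on $[r]$ and $\nu=(\nu_{\cdot|b,x,\lambda,y})$ on $[k]$ with $\Phi^\flat_{a|x,\lambda}=\sum_y\pi_{y|x,\lambda}\sum_b\nu_{a|b,x,\lambda,y}\,m^{(y)}_b$; conversely, given such $\pi$ and $\nu$, formula \cref{eq:cl-sim-channel} produces a valid channel $\Phi^\flat$ and hence, via \Cref{lemma:1}, a valid $\tilde\Phi$.

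Finally I would compose the two legs. Using that $m^{(x)}_a$ extracts the $(a,x)$-entry on $CS_{k,g}$ and that $\tilde\Phi(N_\lambda(\varrho)\otimes\delta_\lambda)\in CS_{k,g}$ by \cref{eq:compression-mm-channel} (its column sums all equal $\one_{CS^1_{l,r}}(N_\lambda(\varrho))$), one computes
\begin{align*}
 m^{(x)}_a\!\big(\tilde\Phi(N(\varrho))\big)
 &=\sum_\lambda \Phi^\flat_{a|x,\lambda}\big(N_\lambda(\varrho)\big)
 =\sum_{\lambda,y}\pi_{y|x,\lambda}\sum_b \nu_{a|b,x,\lambda,y}\,m^{(y)}_b\big(N_\lambda(\varrho)\big)\\
 &=\sum_{\lambda,y}\pi_{y|x,\lambda}\sum_b \nu_{a|b,x,\lambda,y}\,N_{b,\lambda|y}(\varrho),
\end{align*}
which is precisely \cref{eq:cl-sim-with-cl} for $a\in[k-1]$; the case $a=k$ follows because $M(\varrho)$ and $\tilde\Phi(N(\varrho))$ both lie in $CS^1_{k,g}$, so $M=\tilde\Phi\circ N$. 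Reading the displayed chain from left to right gives the ``only if'' direction; reconstructing $N$ and $\tilde\Phi$ from the data $(\pi,\nu,\{N_{\cdot,\cdot|y}\})$ as in the two legs and reading it from right to left gives the ``if'' direction, completing the proof.
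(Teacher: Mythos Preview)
Your proposal is correct and follows essentially the same route as the paper: both proofs decompose the instrument $N$ into operations $N_\lambda$ to extract the effects $N_{b,\lambda|y}=m^{(y)}_b\circ N_\lambda$ (with the equal-column-sum property of $CS^+_{l,r}$ giving the constraint $\sum_b N_{b,\lambda|y}=\sum_b N_{b,\lambda|y'}$), use \Cref{lemma:1} to pass from $\tilde\Phi$ to a multimeter $CS^1_{l,r}\to CS^1_{k,g\cdot\Lambda}$, and apply \Cref{thm:channel-polysimplices-characterization} to that multimeter to obtain $\pi$ and $\nu$. The only cosmetic difference is that the paper packages the composition step by invoking \Cref{thm:compression-as-factorization} to write $M_{a|x}=\sum_\lambda \Phi_{a|x,\lambda}\circ N_\lambda$, whereas you carry out that same computation by hand via $m^{(x)}_a(\tilde\Phi(N(\varrho)))$; the content is identical.
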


\begin{proof}
    Let us first assume that the diagram commutes. Per \Cref{lemma:1} we can use $\tilde \Phi: CS^1_{l,r} \dot \otimes S_\Lambda \to CS^1_{k,g}$ to define a channel $ 
 \Phi: CS^1_{l,r} \to CS^1_{k,g \cdot \Lambda}$ so that by \Cref{thm:channel-polysimplices-characterization} we can write it as 
    \begin{equation}\label{eq:class-sim-with-class0}
        \Phi(X) = \one_{CS^1_{l,r}}(X) s_{k, \ldots,k} + \sum_{\lambda=1}^\Lambda \sum_{x=1}^g \sum_{a =1}^{k-1} \left[\sum_{y=1}^r \pi_{y|x, \lambda} \sum_{b=1}^l \nu_{a|b,x,\lambda,y} m^{(y)}_b(X)\right] e^{(x, \lambda)}_a
    \end{equation}
    for all $X \in CS_{l,r}$. By \Cref{thm:compression-as-factorization} together with the previous expression we have that 
    \begin{equation}\label{eq:class-sim-with-class1}
        M_{a|x} = \sum_{\lambda=1}^\Lambda N^*_\lambda(\Phi_{a|x,\lambda}) = \sum_{\lambda=1}^\Lambda \Phi_{a|x,\lambda} \circ N_\lambda = \sum_{\lambda=1}^\Lambda \left[  \sum_{y=1}^r \pi_{y|x, \lambda} \sum_{b=1}^l \nu_{a|b,x,\lambda,y} m^{(y)}_b \right] \circ N_\lambda,
    \end{equation}
    where $N_\lambda: V(K)^+ \to CS^+_{l,r}$ are the operations of the instrument $N$, which we can write as
    \begin{equation}\label{eq:class-sim-with-class2}
        N_\lambda(X) = \one_{CS^1_{l,r}}(N_\lambda(X)) s_{l,\ldots,l} + \sum_{y=1}^r \sum_{b=1}^{l-1} N_{b, \lambda|y}(X) e^{(y)}_b
    \end{equation}
    for all $\lambda \in [\Lambda]$ and $X \in V(K)$. Here, $N_{b,\lambda|y} \in (V(K)^+)^*$ are positive functionals on $K$ for all $b \in [l-1]$, $\lambda \in [\Lambda]$ and $y \in [r]$. Let us define $N_{l,\lambda|y} := \one_{CS^1_{l,r}} \circ N_\lambda - \sum_{b=1}^{l-1} N_{b,\lambda|y} \in (V(K)^+)^*$ for all $\lambda \in [\Lambda]$ and $y \in [r]$. Now we have that 
    \begin{align}
        \forall y \in[r], \forall \lambda \in [\Lambda]&: \quad \sum_{b=1}^l N_{b, \lambda|y} = \one_{CS^1_{l,r}} \circ N_\lambda, \\
         \forall y \in[r]&: \quad \sum_{\lambda=1}^\Lambda \sum_{b=1}^l N_{b, \lambda|y} = \one_{K},
    \end{align}
    where the second equation follows from the fact the the operations $N_\lambda$ form an instrument. Thus, the effects $N_{b, \lambda|y}$ form a multimeter $N: K \to CS^1_{l \cdot \Lambda,r}$ such that $\sum_{b\in[l]} N_{b,\lambda|y} = \sum_{b\in[l]} N_{b,\lambda|y'}$ for all $\lambda \in [\Lambda]$ and $y,y' \in [r]$. It follows from \cref{eq:class-sim-with-class1} and \cref{eq:class-sim-with-class2} that
    \begin{equation}
        M_{a|x} = \sum_{\lambda=1}^\Lambda \sum_{y=1}^r \pi_{y|x,\lambda} \sum_{b=1}^l \nu_{a|b,x,\lambda,y} N_{b,\lambda|y}
    \end{equation}
    for all $a \in [k]$ and $x \in [g]$.

    On the other hand if \cref{eq:cl-sim-with-cl} holds for some conditional probability distributions $\pi$ and $\nu$ and some multimeter $N: K \to CS^1_{l \cdot \Lambda,r}$ such that $\sum_{b\in[l]} N_{b,\lambda|y} = \sum_{b\in[l]} N_{b,\lambda|y'}$ for all $\lambda \in [\Lambda]$ and $y,y' \in [r]$, then we can use \cref{eq:class-sim-with-class0} and \cref{eq:class-sim-with-class2} to define a channel $\Phi: CS^1_{l,r} \to CS^1_{k,l \cdot \Lambda}$ and an instrument $N: K \to CS^1_{l,r} \dot \otimes S_\Lambda$, respectively. By \Cref{lemma:1} we can use $\Phi$ to define a channel $\tilde \Phi: CS^1_{l,r} \dot \otimes S_\Lambda \to CS^1_{k,g}$ so that then by \cref{eq:cl-sim-with-cl} we can see that \cref{eq:class-sim-with-class1} must hold so that by \Cref{thm:compression-as-factorization} the diagram commutes.
\end{proof}

We will next explore classical simulability a bit more with the aim of providing more operational intuition behind \Cref{thm:channel-polysimplices-characterization}. In particular, we will give an operational proof for that result solely based on classical simulation showing that it is no coincidence that a channel between two spaces of column stochastic matrices is characterized by classical simulation. First we need to explore some concepts related to classical simulability.

Naturally every measurement can be used to trivially classically simulate itself. Following \cite{filippov2018simulability} we next look into measurements for which this is the only way to simulate them (up to postprocessing equivalence).

\begin{defi}\label{def:sim-irr}
A measurement $A$ is \emph{simulation irreducible} if for any multimeter $M$ that can be used to simulate $A$ there exists some measurement in $M$ that is postprocessing equivalent with $A$.
\end{defi}

\begin{remark}
Here by postprocessing equivalent measurements we mean the case when two measurements $A$ and $B$ can be postprocessed from each other via some conditional probability distributions $\nu$ and $\mu$ so that $A_i = \sum_j \nu_{i|j} B_j$ and $B_j = \sum_i \mu_{j|i} A_i$. In fact, then it follows that the set of measurements can be partitioned into equivalence classes of measurements and in many cases it is convenient to consider properties of measurements only between different postprocessing equivalence classes. 
\end{remark}

As already the name suggests, it turns out that every measurement can be reduced to a classical simulation of some set of simulation irreducible measurements \cite{filippov2018simulability}. 

\begin{prop}[\cite{filippov2018simulability}]\label{prop:sim-irr-reduction}
For every multimeter $M$ there exists a multimeter consisting only of simulation irreducible measurements which can classically simulate $M$.
\end{prop}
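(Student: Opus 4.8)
The plan is to reduce the statement, via three formal steps, to a structural fact about extremal measurements, and then to establish that fact by a Minkowski--Carath\'eodory argument.

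First I would reduce to a single measurement. By \eqref{eq:cl-simulation}, a multimeter $M=\{M_{\cdot|x}\}_{x\in[g]}$ is classically simulable by a multimeter $N$ exactly when every $M_{\cdot|x}$ is classically simulable by that same $N$; so, after taking for each $x$ a multimeter of simulation irreducible measurements simulating $M_{\cdot|x}$ and forming their union, it suffices to show that every single measurement $A$ on $K$ is classically simulable by a multimeter of simulation irreducible measurements. Unwinding \eqref{eq:cl-simulation} for a single $A$, the statement ``$A$ is classically simulated by a multimeter $\{N^{(y)}\}_y$'' says precisely that $A$ is a convex combination of post-processings of the $N^{(y)}$. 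Consequently the family of measurements that are classically simulable by simulation irreducible measurements is closed under post-processing and under convex mixing. Since $\dim V(K)<\infty$, the set $\mathcal O_k(K)$ of $k$-outcome measurements on $K$ is compact and convex, so every $k$-outcome measurement is a convex combination of extremal measurements; together with the two closure properties, it therefore suffices to prove that \emph{every extremal measurement is a post-processing of a simulation irreducible measurement}.

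Second, I would exhibit a stock of simulation irreducible measurements: an extremal measurement $A$ whose effects (none of which we need be zero, by discarding zero effects) are all indecomposable --- i.e.\ lie on extreme rays of $A(K)^+$ --- is simulation irreducible. Such an $A$ necessarily has linearly independent effects, since a linear relation among effects all lying on extreme rays would give a normalization-preserving perturbation of $A$ along those rays, contradicting extremality; in particular no two of its effects are proportional. Given any classical simulation $A_a=\sum_y\pi_y\sum_b\nu_{a|b,y}N^{(y)}_b$, for each $a$ and each $y$ with $\pi_y>0$ the inequality $0\le\pi_y\sum_b\nu_{a|b,y}N^{(y)}_b\le A_a$ forces, by indecomposability of $A_a$, the term $\sum_b\nu_{a|b,y}N^{(y)}_b$ to be a nonnegative multiple of $A_a$; summing over $a$, comparing with $\sum_aA_a=\one=\sum_bN^{(y)}_b$, and using linear independence, that multiple equals $1$, so $\bigl\{\sum_b\nu_{a|b,y}N^{(y)}_b\bigr\}_a=A$. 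Applying the same indecomposability argument termwise shows each $N^{(y)}_b$ is a nonnegative multiple of a single effect $A_a$, so $N^{(y)}$ is in turn a post-processing of $A$; thus $N^{(y)}$ is postprocessing equivalent to $A$ for every $y$ with $\pi_y>0$, and \Cref{def:sim-irr} gives that $A$ is simulation irreducible.

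Third --- and this is the step I expect to be the main obstacle --- I would show that every extremal measurement $E$ is a coarse-graining, hence a post-processing, of an \emph{atomic} extremal measurement (one with only indecomposable effects). Since $\dim V(K)<\infty$ and $A(K)^+$ is a pointed closed cone, each $E_a$ is a finite nonnegative combination of indecomposable effects; fixing such decompositions gives an atomic measurement $E_0$ with $N$ outcomes of which $E$ is a coarse-graining via some partition $\sigma$ of those outcomes. Consider then the set $\mathcal H\subseteq\mathcal O_N(K)$ of measurements whose effects are nonnegative scalar multiples of the indecomposable effects fixed in $E_0$ and of which $E$ is a coarse-graining via $\sigma$; in the scalar coordinates $\mathcal H$ is a nonempty ($E_0\in\mathcal H$) compact polytope, so it has an extreme point $E_0^\ast$. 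Now if $E_0^\ast=\tfrac12(G'+G'')$ with $G',G''\in\mathcal O_N(K)$, then termwise each $G'_b+G''_b$ is a nonnegative multiple of an indecomposable effect, so $G'_b$ and $G''_b$ are too; and the $\sigma$-coarse-grainings of $G'$ and $G''$ are measurements averaging to $E$, hence both equal $E$ by extremality of $E$, so $G',G''\in\mathcal H$; extremality of $E_0^\ast$ in $\mathcal H$ then yields $G'=G''=E_0^\ast$. Therefore $E_0^\ast$ is extremal in $\mathcal O_N(K)$, and (discarding zero effects) it is atomic, so it is simulation irreducible by the second step, while $E$ is a coarse-graining of it. This construction of an atomic, still-extremal refinement of $E$ is the delicate point, since it rests on the extreme-ray structure of $A(K)^+$ in finite dimension and on the Minkowski argument for the auxiliary polytope $\mathcal H$; the first two steps are routine once the definitions are unpacked. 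Chaining the three steps then gives the proposition: every $k$-outcome measurement is a convex combination of extremal measurements, each a post-processing of an atomic extremal (hence simulation irreducible) measurement, so by the first reduction every multimeter is classically simulable by a multimeter consisting only of simulation irreducible measurements, recovering the result of \cite{filippov2018simulability}.
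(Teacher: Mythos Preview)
The paper does not prove this proposition; it is quoted from \cite{filippov2018simulability} without argument. Your three-step proof is correct and essentially reconstructs the argument of that reference: reduce to extremal measurements by convexity, refine each extremal measurement to an atomic extremal one via the Minkowski argument on the auxiliary polytope $\mathcal H$, and verify that atomic extremal measurements are simulation irreducible (your step~2 is the ``if'' direction of \Cref{prop:sim-irr-extr}, which the paper likewise merely quotes). The point you correctly identify as delicate---extracting an \emph{extremal} atomic refinement of a given extremal $E$---is handled soundly: indecomposability of each $(E_0^\ast)_b$ forces any decomposing $G',G''\in\mathcal O_N(K)$ to stay on the same rays, while extremality of $E$ forces their $\sigma$-coarse-grainings back to $E$, so $G',G''\in\mathcal H$ and extremality of $E_0^\ast$ there finishes the job.
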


Thus, in terms of understanding classical simulability studying the simulation irreducible measurements is extremely useful. It can be shown that in the postprocessing equivalence class of each simulation irreducible measurement there exists a unique extremal measurement which we take as representative of the equivalence class (and which is the same as the \emph{minimally sufficient representative} in \cite{Kuramochi2015}). In \cite{filippov2018simulability} the following characterization was given:

\begin{prop}[\cite{filippov2018simulability}]\label{prop:sim-irr-extr}
A measurement is simulation irreducible and extremal if and only if it consists of linearly independent indecomposable effects.
\end{prop}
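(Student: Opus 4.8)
The plan is to prove the two implications separately, treating linear independence and indecomposability as essentially independent ingredients, and using the characterization recalled in the excerpt that a nonzero effect is indecomposable precisely when it spans an extreme ray of $A(K)^+$. Assume first that $A=\{A_1,\dots,A_k\}$ consists of linearly independent indecomposable effects. To see that $A$ is extremal, write $A=\lambda B+(1-\lambda)C$ for $k$-outcome measurements $B,C$ and $\lambda\in(0,1)$. For each $i$, the identity $A_i=\lambda B_i+(1-\lambda)C_i$ together with $A_i$ lying on an extreme ray of $A(K)^+$ forces $B_i=\beta_i A_i$ and $C_i=\gamma_i A_i$ with $\beta_i,\gamma_i\ge 0$ and $\lambda\beta_i+(1-\lambda)\gamma_i=1$. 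Summing over $i$ and using $\sum_i B_i=\sum_i A_i=\one_V$, linear independence yields $\beta_i=1$ for all $i$, hence $B=C=A$.

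For simulation irreducibility, suppose a multimeter $M=\{M_{\cdot|y}\}_{y\in[r]}$ simulates $A$ through probability data $\pi,\nu$, so that $A_a=\sum_y \pi_y\bigl(\sum_b \nu_{a|b,y}M_{b|y}\bigr)$. Setting $A^{(y)}_a:=\sum_b \nu_{a|b,y}M_{b|y}$ defines $k$-outcome measurements $A^{(y)}$ with $A=\sum_y \pi_y A^{(y)}$, so extremality of $A$ gives $A^{(y)}=A$ for every $y$ with $\pi_y>0$; fix one such $y$. Then $A_a=\sum_b \nu_{a|b,y}M_{b|y}$ and the extreme-ray property of $A_a$ force every $M_{b|y}$ entering with positive weight to be a nonnegative multiple of $A_a$; since distinct $A_a$ are linearly independent, hence non-proportional, each nonzero $M_{b|y}$ is proportional to a \emph{unique} $A_{\sigma(b)}$, with $\nu_{\sigma(b)|b,y}=1$. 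The relabeling $b\mapsto\sigma(b)$ realizes $A$ as a post-processing of $M_{\cdot|y}$, and writing $M_{b|y}=c_b A_{\sigma(b)}$ and comparing with $A_a=\sum_{b:\sigma(b)=a}M_{b|y}$ gives $\sum_{b:\sigma(b)=a}c_b=1$ for every $a$, so $M_{\cdot|y}$ is conversely a post-processing of $A$; thus $M_{\cdot|y}$ is post-processing equivalent to $A$, as required.

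For the converse we argue by contraposition, restricting (as is implicit in the statement, via passing to the minimally sufficient representative) to measurements with no zero effect. If the effects are linearly dependent, pick a relation $\sum_i c_i A_i=0$ with $c\neq 0$ and $\max_i|c_i|=1$; this relation involves at least two nonzero coefficients, and for small $\epsilon>0$ the distinct $k$-outcome measurements $B_i:=(1+\epsilon c_i)A_i$, $C_i:=(1-\epsilon c_i)A_i$ satisfy $A=\tfrac12(B+C)$, so $A$ is not extremal. If instead the $A_i$ are linearly independent but some $A_{i_0}$ is decomposable, choose nonzero effects $e_1,e_2$ with $A_{i_0}=e_1+e_2$ and $e_1\not\propto A_{i_0}$; the refined $(k+1)$-outcome measurement $B$ obtained by replacing outcome $i_0$ with the pair $e_1,e_2$ simulates $A$ (merge the two new outcomes). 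However, if $B_j=\sum_i \mu_{j|i}A_i$ expresses $B$ as a post-processing of $A$, then for each inherited outcome $j\neq i_0$ linear independence gives $\mu_{j|i}=\delta_{ji}$, so the stochastic column $\mu_{\cdot|i}$ has a $1$ in that row for every $i\neq i_0$, forcing the rows of $e_1,e_2$ to be supported only on column $i_0$; hence $e_1$ would be a multiple of $A_{i_0}$, a contradiction. So $B$ is not post-processing equivalent to $A$ and $A$ is not simulation irreducible.

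I expect the main obstacle to be the simulation-irreducibility half of the ``if'' direction: one must chain extremality (to collapse the convex mixture over the simulating measurements), the extreme-ray property of each $A_a$ (to force proportionality of the $M_{b|y}$), and linear independence (to make the label $\sigma(b)$ unique), and only then verify that the resulting deterministic relabeling runs in both directions by checking that the proportionality constants $c_b$ sum correctly outcome by outcome. The degenerate cases --- zero effects and low-dimensional state spaces --- are absorbed into the convention of working with the minimally sufficient (equivalently, the extremal) representative of the post-processing class, which is precisely what makes ``extremal'' the right normalization in the statement.
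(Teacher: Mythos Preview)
The paper does not supply a proof of this proposition; it merely cites the result from \cite{filippov2018simulability}. Your argument is correct and is essentially the standard one: extremality follows from the extreme-ray property of each $A_i$ together with linear independence; simulation irreducibility follows by first using extremality to collapse the convex mixture over the simulating settings and then the ray property to force each nonzero $M_{b|y}$ to be proportional to a unique $A_{\sigma(b)}$ (with the coefficients summing correctly so that the post-processing runs both ways); and the converse is handled by the natural contrapositions. Your caveat about excluding zero effects is well placed: the statement is indeed to be read under that convention (equivalently, for the minimally sufficient representative), since otherwise a measurement such as $(\delta_1,\delta_2,0)$ on $S_2$ would be extremal and simulation irreducible while having linearly dependent effects.
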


Now we can use the above result to characterize all the extremal simulation irreducible measurements on the state space of column stochastic matrices.

\begin{prop}\label{prop:sim-irr-polysimplices}
The $k$-outcome measurements $m^{(1)}, \ldots, m^{(g)}$  defined in \cref{eq:sim-irr-measurements} are the only extremal simulation irreducible measurements on $CS^1_{k,g}$.
\end{prop}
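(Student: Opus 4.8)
The plan is to use \Cref{prop:sim-irr-extr}: a measurement is extremal and simulation irreducible if and only if it consists of linearly independent indecomposable effects. So I would characterize (i) which effects on $CS^1_{k,g}$ are indecomposable, and (ii) which collections of such effects form a valid measurement (summing to $\one_{CS^1_{k,g}}$) while being linearly independent. For (i), recall from \cref{sec:CS-GPT} that an effect is indecomposable if and only if it lies on an extreme ray of $(CS^+_{k,g})^*$, and by \cite[Lemma 1]{jencova2018incompatible} the extreme rays of $(CS^+_{k,g})^*$ are exactly the rays generated by the $m^{(j)}_i$, $i \in [k]$, $j \in [g]$. Hence every indecomposable effect is a positive multiple of some $m^{(j)}_i$. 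Since a measurement built from indecomposable effects must use effects of this form and must satisfy $\sum_c f_c = \one_{CS^1_{k,g}}$, I would expand such a hypothetical measurement in terms of the $m^{(j)}_i$ and impose the normalization constraint.

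The key computation is then: suppose $f_1, \ldots, f_n$ is a measurement with each $f_c = \alpha_c\, m^{(j_c)}_{i_c}$ for some $\alpha_c > 0$. The normalization condition $\sum_{c=1}^n f_c = \one_{CS^1_{k,g}}$ must be checked against the basis \cref{eq:effect-basis}, or equivalently evaluated on the extreme points $s_{i_1,\ldots,i_g}$. Using \cref{eq:sim-irr-measurements}, $m^{(j)}_i(s_{b_1,\ldots,b_g}) = \delta_{b_j, i}$, and $\one_{CS^1_{k,g}}(s_{b_1,\ldots,b_g}) = 1$ for every extreme point. So the requirement becomes: for every choice of $(b_1,\ldots,b_g) \in [k]^g$, exactly a weighted count of the $f_c$ whose index matches — more precisely $\sum_{c : i_c = b_{j_c}} \alpha_c = 1$ — must hold for all $(b_1,\ldots,b_g)$. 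I would argue this forces all the effects to come from a single column $j$: if two different columns $j \neq j'$ appeared among the $j_c$, varying $b_j$ while holding the other coordinates fixed changes which terms from column $j$ contribute but not those from column $j'$, and a short argument shows the sum cannot stay constant at $1$ unless one column contributes nothing. Once restricted to a single column $j$, the condition $\sum_{c: i_c = b} \alpha_c = 1$ for all $b \in [k]$ together with $\sum_c \alpha_c$ summing consistently forces, after accounting for the $k$ outcomes, that each $\alpha_c = 1$ and that the effects are exactly $m^{(j)}_1, \ldots, m^{(j)}_k$; i.e. the measurement is $m^{(j)}$. Linear independence of $m^{(j)}_1, \ldots, m^{(j)}_k$ is immediate from the basis structure in \cref{eq:effect-basis}, and conversely each $m^{(j)}$ does consist of linearly independent indecomposable effects, hence is extremal and simulation irreducible by \Cref{prop:sim-irr-extr}.

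The main obstacle I anticipate is the combinatorial step showing that an indecomposable-effect measurement on $CS^1_{k,g}$ cannot mix effects from two distinct columns: one must be careful because an effect $m^{(j)}_i$ is nonzero on many extreme points, so the "partition" intuition that works for simplices needs the explicit evaluation on the $s_{b_1,\ldots,b_g}$ and an argument that freezing all coordinates but one isolates a single column's contribution. A clean way to handle this is to note that the effects $m^{(j)}_1,\dots,m^{(j)}_k$ for a fixed $j$ already sum to $\one_{CS^1_{k,g}}$, so if a measurement's effects span more than one column the "excess" must cancel — but all coefficients $\alpha_c$ are strictly positive, contradicting the possibility of cancellation once one evaluates on suitable extreme points. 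With that lemma in hand, the rest is bookkeeping. I would also remark that this recovers, for the polysimplex $CS^1_{k,g}$, the expected fact that its extremal simulation irreducible measurements are precisely the $g$ "coordinate" measurements, consistent with the simulability structure underlying \Cref{thm:channel-polysimplices-characterization}.
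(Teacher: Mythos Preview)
Your overall strategy matches the paper's: both invoke \Cref{prop:sim-irr-extr} and use that the indecomposable effects on $CS^1_{k,g}$ are exactly the positive multiples of the $m^{(j)}_i$. The divergence, and the gap, is in how you rule out measurements that mix effects from two different columns.

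You claim that normalization plus strict positivity of the coefficients already forces a single column (``the sum cannot stay constant at $1$ unless one column contributes nothing'', and later ``the `excess' must cancel --- but all coefficients $\alpha_c$ are strictly positive''). This is false as stated. On $CS^1_{2,2}$ the four effects
\[
\tfrac{1}{2}m^{(1)}_1,\ \tfrac{1}{2}m^{(1)}_2,\ \tfrac{1}{2}m^{(2)}_1,\ \tfrac{1}{2}m^{(2)}_2
\]
form a valid measurement (they sum to $\one_{CS^1_{2,2}}$), all effects are indecomposable, all coefficients are strictly positive, and two columns are involved. What your evaluation-on-extreme-points argument actually yields is only that the total coefficient in each column is constant across outcomes (i.e.\ $\alpha^{(j)}_i = \alpha^{(j)}$ for all $i$) with $\sum_j \alpha^{(j)} = 1$; it does not exclude several columns appearing with weights $\alpha^{(j)} \in (0,1)$.

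The missing ingredient is precisely the \emph{linear independence} of the effects, which you have available from extremality but never deploy at this step. Once you know each active column $j$ contributes all $k$ effects $\alpha^{(j)} m^{(j)}_1,\ldots,\alpha^{(j)} m^{(j)}_k$, the relation $\sum_{i} m^{(j)}_i = \one_{CS^1_{k,g}} = \sum_i m^{(j')}_i$ gives a nontrivial linear dependence whenever two distinct columns are active, contradicting linear independence. This is exactly how the paper closes the argument: after deriving $\alpha^{(x)}_a = \alpha^{(x)}$ and $\sum_x \alpha^{(x)} = 1$ from normalization, it explicitly uses the linear independence of the nonzero effects to force $G_{\neq 0} = \{x\}$ and hence $\alpha^{(x)}=1$. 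Your proposal will go through once you insert linear independence at this point rather than at the end.
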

\begin{proof}
First, let us note that each $m^{(x)}$ consists of linearly independent indecomposable effects so by \Cref{prop:sim-irr-extr} it is simulation irreducible and extremal for all $x \in [g]$. 

On the other hand, let us look into the structure of (other possible) simulation irreducible measurements on $CS^1_{k,g}$. Since simulation irreducible measurements consist of indecomposable effects and since the effects of the measurements $m^{(1)}, \ldots, m^{(g)}$ are exactly all the extremal indecomposable effects, without loss of generality we can represent an arbitrary extremal indecomposable measurement $\tilde{f}$ as having the outcome set $[g] \times [k]$ such that $\tilde{f}_{(x,a)} = \alpha^{(x)}_{a} m^{(x)}_{a}$ for some $\alpha^{(x)}_{a} \in [0,1]$ for all $a \in [k]$ for all $x \in [g]$. Since $\tilde{f}$ and each $m^{(x)}$ are measurements the normalization $\sum_{x \in [g]} \sum_{a \in [k]} \tilde{f}_{(x,a)} = \mathds{1}_{CS^1_{k,g}} = \sum_{a \in [k]} m^{(x)}_{a}$ holds for all $x \in [g]$ so that we have
\begin{align}
0 = \mathds{1}_{CS^1_{k,g}} - \sum_{x =1}^g \sum_{a =1}^{ k} \alpha^{(x)}_{a} m^{(x)}_{a} = \left(1-\sum_{x=1}^g \alpha^{(x)}_{k} \right) \mathds{1}_{CS^1_{k,g}} + \sum_{x =1}^g \sum_{a =1}^{ k-1} \left( \alpha^{(x)}_{k} -\alpha^{(x)}_{a} \right) m^{(x)}_{a}.
\end{align}
Now, since the set $\{\mathds{1}_{CS^1_{k,g}}, m^{(1)}_{1}, \ldots, m^{(1)}_{k_g-1},\ldots, m^{(g)}_{1}, \ldots, m^{(g)}_{k_g-1}\}$ forms a basis of $CS_{k,g}^*$ so that in particular the set is linearly independent, we have that $\alpha^{(x)}_{a} = \alpha^{(x)}_{a'} =: \alpha^{(x)}$ for all $a,a' \in [k]$ for all $x \in [g]$ and that $\sum_{x \in [g]} \alpha^{(x)} =1$. 

Since $\tilde{f}$ is extremal it consist of linearly independent effects. Thus, if we denote $G_{\neq 0} = \{ x \in [g] \, | \, \alpha^{(x)} \neq 0 \}$, which is clearly non-empty since $\tilde{f}$ is a measurement, then $\{\tilde{f}_{(x,a)}\}_{a \in [k], x \in G_{\neq 0}}$ is a set of linearly independent elements. Let us now fix $x \in G_{\neq 0}$. Now we see that
\begin{align}
\frac{1}{\alpha^{(x)}} \sum_{a = 1}^{k} \tilde{f}_{(x,a)} = \frac{1}{\alpha^{(x)}} \sum_{a = 1}^{k} \alpha^{(x)} m^{(x)}_{a} = \mathds{1}_{CS^1_{k,g}} = \sum_{y =1}^g \alpha^{(y)} \mathds{1}_{CS^1_{k,g}} = \sum_{y =1}^g \alpha^{(y)} \sum_{a =1}^{k} m^{(y)}_{a} = \sum_{y \in G_{\neq 0}} \sum_{a=1}^{k} \tilde{f}_{(y,a)}.
\end{align}
By rearranging the terms we have that
\begin{equation}
0 = \alpha^{(x)} \sum_{y \in G_{\neq 0} \setminus \{x\}} \sum_{a=1}^{k}  \tilde{f}_{(y,a)}+ \left(\alpha^{(x)}-1\right) \sum_{a=1}^{k}  \tilde{f}_{(x,a)}.
\end{equation} 
Since $\{\tilde{f}_{(x,a_x)}\}_{a_x \in [k_x], x \in G_{\neq 0}}$ is a set of linearly independent elements and since we have that $\alpha^{(x)}\neq 0$ we must have that $G_{\neq 0} \setminus \{x\} = \emptyset$, i.e., $G_{\neq 0 } = \{x \}$, so that by the fact that $\sum_{y \in [g]} \alpha^{(y)} =1$ we in fact have that $\alpha^{(x)}=1$ and thus the only nonzero effects of $\tilde{f}$ are $\tilde{f}_{(x,a)} = m^{(x)}_{a}$ for all $a \in [k]$. This concludes the proof.
\end{proof}

Together with \Cref{prop:sim-irr-reduction} we can use the above result for a more operational proof of \Cref{thm:channel-polysimplices-characterization}.
\begin{proof}[Alternative proof for \Cref{thm:channel-polysimplices-characterization}]
As we have already argued before, a linear map $\Phi: CS_{l,r} \to CS_{k,g}$ is a channel (or a multimeter) if and only if there exist $g$ measurements $\{A_{\cdot|x}\}_{x \in [g]}$ with $k$ outcomes such that 
\begin{equation}\label{eq:CS-channel-2}
\Phi(\varrho) = \one_{CS^1_{l,r}}(\varrho) s_{k, \ldots,k} + \sum_{x=1}^g \sum_{a =1}^{k-1} A_{a|x}(\varrho) e^{(x)}_a  
\end{equation}
for all $\varrho \in CS_{l,r}$. Since now $A_{\cdot|x}$ is a measurement on $CS^1_{l,r}$ for each $x \in [g]$ we can write it as
\begin{equation}
A_{\cdot|x}(\varrho) = \sum_{a=1}^{k} A_{a|x}(\varrho) \delta_{a}
\end{equation}
for all $\varrho \in CS^1_{l,r}$, where $A_{1|x}, \ldots, A_{k|x}$ are the effects of $A_{\cdot|x}$.

Since $A_{\cdot|x}$ is a measurement on $CS^1_{l,r}$ for each $x \in [g]$, by \Cref{prop:sim-irr-reduction} there exists some multimeter $M^{(x)}$ consisting of simulation irreducible measurements on $CS^1_{l,r}$ which can be used to simulate $A_{\cdot|x}$ for all $x \in [g]$. Since every simulation irreducible measurement is postprocessing equivalent with an extremal simulation irreducible measurement we can take $M^{(x)}$ to contain only extremal simulation irreducible measurements. Furthermore, since $M^{(x)}$ can be used to simulate $A_{\cdot|x}$ also the multimeter $M$ which consists of all of the extremal simulation irreducible measurements on $CS^1_{l,r}$ can be used to simulate $A_{\cdot|x}$ for all $x \in [g]$. By \Cref{prop:sim-irr-polysimplices} the multimeter $M$ thus consists exactly of the measurements $m^{(1)}, \ldots, m^{(r)}$ defined in \cref{eq:sim-irr-measurements}. By the definition of classical simulability there thus exists some conditional probability distributions $\pi = (\pi_{\cdot|x})_{x \in [g]}$ on $[r]$ and $\nu= (\nu_{\cdot|b,x,y})_{b \in [l], x \in [g], y\in [r]}$ on $[k]$ such that 
\begin{equation}
A_{a|x} = \sum_{y=1}^r \pi_{y|x} \sum_{b=1}^{l} \nu_{a|b,x,y} m^{(y)}_{b}
\end{equation}
for all $a \in [k]$ and $x \in [g]$. Then by \cref{eq:CS-channel-2} we have that
\begin{align} 
    \Phi(X) &= \one_{CS^1_{l,r}}(X) s_{k, \ldots,k} + \sum_{x=1}^g \sum_{a =1}^{k-1} \left[ \sum_{y=1}^r \pi_{y|x} \sum_{b=1}^l \nu_{a|b,x,y} m^{(y)}_b(X)\right] e^{(x)}_a
\end{align}
for all $X \in CS_{l,r}$. Again, the converse is straightforward to verify.
\end{proof}

\section{Steering}\label{sec:steering}

\subsection{Steering in quantum mechanics and general probabilistic theories}
One form of quantum nonlocality that is closely related to measurement incompatibility is quantum steering. Therefore, we will now show how a steering assemblage can be understood in the formalism used so far. We will first stick with steering in quantum mechanics and then generalize it to GPTs. Proofs will come at the end of the section, as it is convenient to prove everything in the GPT formalism and to recover quantum mechanics as a special case.

Let us consider a multimeter consisting of $g$ POVMs acting on a $d$-dimensional system with $k$ outcomes each and a bipartite state $\varrho \in D(\Cnum^{d_A d_B})$.
They enable us to construct the following object:
\begin{equation}
\label{assembl_def}
    [(M \otimes \id)(\varrho)]_{a|x} = \Tr_A[(M_{a|x} \otimes \one_{K_B})\varrho] = \sigma_{a|x}. 
\end{equation}
This object is called an \emph{assemblage}. Assemblages in quantum mechanics are defined as sets of positive operators $\sigma := \{\sigma_{a|x}\}_{a \in [k], x \in [g]} \subset \mathrm{PSD}_{d_B}$ with the property that there exists a $\bar\sigma \in D(\mathds C^{d_B})$ such that
\begin{equation}\label{eq:assembl-no-sign}
    \bar \sigma = \sum_{a = 1}^k \sigma_{a|x} \qquad \forall x \in [g] \,.
\end{equation}
In \cref{assembl_def}, we can easily verify that $\bar \sigma = \mathrm{Tr}_A(\varrho)$ and that indeed $\sigma_{a|x}\geq 0$ for all $a\in [k]$, $x \in [g]$ as required. 

The interpretation of an assemblage is that it corresponds to a probabilistic state preparator such that given a classical input $x$ the state $\hat{\sigma}_{a|x} :=\sigma_{a|x}/p_{a|x}$ is prepared with probability $p_{a|x} = \Tr[\sigma_{a|x}]$. The condition of \cref{eq:assembl-no-sign} then means that without knowing the classical output $a$, i.e., when averaging over all the possible states, we can have no knowledge of the classical input $x$. Now \cref{assembl_def} is a particular realization of the assemblage in the two-party setting so that the states are prepared by performing a measurement on one part of a joint state, and in this case the condition of \cref{eq:assembl-no-sign} corresponds to a no-signaling condition so that without having access to the measurement outcome the party with the prepared states cannot know which measurement setting was chosen on the other part of the shared state.

As for multimeters, we can now consider assemblages as tensors:
\begin{lem}
\label{assemblage}
    Given a multimeter $M: D(\mathds C^{d_A}) \rightarrow \mathrm{CS}_{k,g}^1$ and a bipartite state $\varrho \in D(\Cnum^{d_A d_B})$, the assemblage obtained as $\sigma = (M \otimes \id)(\varrho)$ in \cref{assembl_def} are elements of the set $\mathrm{CS}_{k,g}^1 \tmax D(\mathds C^{d_B})$.
\end{lem}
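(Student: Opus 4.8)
The plan is to verify directly that $\sigma=(M\otimes\id)(\varrho)$ meets the two defining conditions of the maximal tensor product of state spaces: it is normalized against $\CSou\otimes\one_{D(\Cnum^{d_B})}$, and it pairs nonnegatively with every product functional $f_A\otimes f_B$ with $f_A\in A(CS^1_{k,g})^+$ and $f_B\in A(D(\Cnum^{d_B}))^+$. Since $M$ is a channel into $CS^1_{k,g}$, it maps $V(D(\Cnum^{d_A}))$ into $CS_{k,g}$, so $\sigma$ automatically lies in the ambient space $CS_{k,g}\otimes\sa{d_B}$; equivalently, $\sum_a\sigma_{a|x}$ is independent of $x$, which is exactly the no-signaling identity $\bar\sigma=\Tr_A(\varrho)$ noted right after the definition of the assemblage. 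So only the two maximal-tensor-product conditions remain.

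Both conditions are most cleanly handled by passing to the dual map. For normalization, $\langle\CSou\otimes\one_{D(\Cnum^{d_B})},(M\otimes\id)(\varrho)\rangle=\langle M^\ast(\CSou)\otimes\one_{D(\Cnum^{d_B})},\varrho\rangle$, and since $M$ is a channel, $M^\ast$ is unital, i.e.\ $M^\ast(\CSou)=\Tr$ (the order unit of $\mathrm{QM}_{d_A}$, the identity operator on $\Cnum^{d_A}$); hence the pairing equals $\Tr\varrho=1$. For positivity, fix $f_A\in A(CS^1_{k,g})^+=(CS^+_{k,g})^\ast$ and $f_B\in A(D(\Cnum^{d_B}))^+\simeq\Posb$; then $\langle f_A\otimes f_B,(M\otimes\id)(\varrho)\rangle=\langle M^\ast(f_A)\otimes f_B,\varrho\rangle$. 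Because $M$ is positive, so is $M^\ast$, whence $M^\ast(f_A)\in\Posa$ (using self-duality of $\Posa$ to identify $A(D(\Cnum^{d_A}))^+$ with it). The Kronecker product of two positive semidefinite matrices is positive semidefinite, so $M^\ast(f_A)\otimes f_B\in\Posab$, and since $\varrho\in\Posab$ the Hilbert--Schmidt pairing $\Tr[(M^\ast(f_A)\otimes f_B)\varrho]$ is nonnegative. Together with normalization, this puts $\sigma$ in $CS^1_{k,g}\tmax D(\Cnum^{d_B})$.

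I do not expect a genuine obstacle: the statement is essentially a bookkeeping exercise, and the only thing to be careful with is the chain of dual-space identifications ($A(K)^+=(V(K)^+)^\ast$, self-duality of the PSD cone, and the quantum order unit being the trace). A slightly slicker route that I might present instead is the abstract one: $\varrho\in D(\Cnum^{d_A d_B})\subseteq D(\Cnum^{d_A})\tmax D(\Cnum^{d_B})$ (recorded in the preliminaries), and $M\otimes\id$, as a tensor product of two positive normalization-preserving maps, carries the maximal tensor product of state spaces into the maximal tensor product of state spaces --- this being the one-line dual computation of the positivity step above applied to both legs. The claim follows immediately, and this is also the formulation that will transfer verbatim to the GPT version reproved later in the section.
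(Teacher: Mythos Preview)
Your proposal is correct. The paper's own argument is precisely the abstract one-liner you sketch at the end: it just records that $\varrho\in D(\Cnum^{d_A d_B})\subset D(\Cnum^{d_A})\tmax D(\Cnum^{d_B})$ and implicitly invokes the fact that $M\otimes\id$ preserves the maximal tensor product (Lemma~\ref{extend_map}). Your direct verification of normalization and positivity against product functionals is a fine, more explicit unpacking of the same dual computation; it is longer but requires no appeal to the general lemma about extending positive maps.
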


The statement follows from the fact that $\varrho \in D(\Cnum^{d_A d_B}) \subset D(\Cnum^{d_A}) \tmax D(\Cnum^{d_B})$. In fact, it is known that the converse also holds. From the GHJW-theorem \cite{steering_g, steering_lrw}, it follows that for each assemblage, there is a set of measurements and a bipartite state such that the assemblage is obtained as in \cref{assembl_def}.

\begin{lem}
\label{assem_real}
    Given an element $\sigma \in \mathrm{CS}_{k,g}^1 \tmax D(\mathds C^{d_B})$, for $d_A \geq d_B$ there exists a multimeter $M: D(\mathds C^{d_A}) \rightarrow \mathrm{CS}_{k,g}^1$ and a bipartite state $\varrho \in D(\Cnum^{d_A d_B})$ such that $\sigma = (M \otimes \id)(\varrho)$.
\end{lem}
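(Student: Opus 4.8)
The plan is to reduce to the known quantum GHJW theorem by unpacking the tensor $\sigma \in \mathrm{CS}_{k,g}^1 \tmax D(\mathds{C}^{d_B})$ into its constituent operators and then invoking the standard purification/steering construction. First I would use the identification of elements of $\mathrm{CS}_{k,g}^1 \tmax D(\mathds{C}^{d_B})$ with assemblages developed just before the statement: writing $\sigma$ in the basis of \cref{eq:state-basis} on the first tensor factor, the data of $\sigma$ is precisely a family $\{\sigma_{a|x}\}_{a\in[k],x\in[g]}$ of positive operators on $\mathds{C}^{d_B}$ satisfying the no-signaling condition \cref{eq:assembl-no-sign}, with $\bar\sigma := \sum_a \sigma_{a|x} \in D(\mathds{C}^{d_B})$ independent of $x$ (positivity of these operators is exactly what the membership in the maximal tensor product encodes, since the $m^{(x)}_a \otimes f_B$ generate the extreme rays of the dual cone). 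So the problem becomes: given such a quantum assemblage $\{\sigma_{a|x}\}$ on $\mathds{C}^{d_B}$, produce $\varrho \in D(\mathds{C}^{d_A d_B})$ and a multimeter $M$ on $\mathds{C}^{d_A}$ reproducing it via \cref{assembl_def}.

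Next I would carry out the explicit GHJW construction. Let $\bar\sigma = \sum_j \mu_j \ketbra{j}{j}$ be a spectral decomposition; take $\varrho$ to be (proportional to) the canonical purification of $\bar\sigma$ on $\mathds{C}^{d_B}\otimes\mathds{C}^{d_B}$, namely $\ket{\psi} = \sum_j \sqrt{\mu_j}\,\ket{j}_A\ket{j}_B$ and $\varrho = \ketbra{\psi}{\psi}$; since $d_A \geq d_B$ this lives in $D(\mathds{C}^{d_A}\otimes\mathds{C}^{d_B})$ after padding the $A$ system with zeros. On the support of $\bar\sigma$ define $M_{a|x} := \bar\sigma^{-1/2}\,\sigma_{a|x}^{\mathsf{T}}\,\bar\sigma^{-1/2}$ (transpose in the chosen basis), extended to a POVM element on all of $\mathds{C}^{d_A}$ by adding any convenient completion on the orthogonal complement of the support; the no-signaling condition $\sum_a \sigma_{a|x} = \bar\sigma$ guarantees $\sum_a M_{a|x} = \one$ on the support, so $\{M_{\cdot|x}\}$ is a genuine multimeter. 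Then one checks by the standard "ricochet" identity for maximally entangled-type vectors that $\Tr_A[(M_{a|x}\otimes\one)\varrho] = \sigma_{a|x}$, which is \cref{assembl_def}; translating back through Lemma \ref{assemblage}'s identification gives $\sigma = (M\otimes\id)(\varrho)$.

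The only genuinely delicate point is the case where $\bar\sigma$ is not full rank, so that $\bar\sigma^{-1/2}$ only makes sense on the support; here I would restrict all of the above to the range of $\bar\sigma$ (a subspace of dimension $r \le d_B$), do the construction there, and use the freedom $d_A \geq d_B \geq r$ to embed, completing the POVMs arbitrarily (e.g.\ putting all the leftover weight on one fixed outcome) outside the support — the purification has no overlap with that part of the space, so it does not affect the assemblage. A secondary bookkeeping point is the normalization: one must check $\Tr\varrho = \Tr\bar\sigma = 1$, which holds because $\bar\sigma \in D(\mathds{C}^{d_B})$. I expect the main obstacle to be purely expository — making the back-and-forth between the tensor $\sigma$ and the operator family $\{\sigma_{a|x}\}$ precise enough that the ricochet computation is unambiguous — rather than any real mathematical difficulty, since the GHJW theorem \cite{steering_g, steering_lrw} is being invoked essentially verbatim and the dimension hypothesis $d_A \geq d_B$ is exactly what it needs.
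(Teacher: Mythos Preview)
Your proposal is correct and follows exactly the route the paper takes: the paper simply invokes the GHJW theorem \cite{steering_g, steering_lrw} without spelling out any details, and your plan is precisely the standard GHJW purification-plus-pretty-good-inverse construction that underlies that citation. You are just being more explicit than the paper, which states the lemma and defers the proof entirely to the literature.
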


Since the assemblages are in one-to-one correspondence with elements in the maximal tensor product of quantum states and column-stochastic matrices, the question arises what happens if a certain assemblage is separable, i.e., part of the minimal tensor product of these state spaces. This question lies at the core of this work and the answer will follow from our results at the end of this section. 
We recover the result that separable elements are in one-to-one correspondence with assemblages admitting a local hidden state (LHS) model \cite[Theorem 4]{jencova2018incompatible}.

\begin{prop} \label{prop:LHS-separable}
    For an assemblage $\sigma$, it holds that $\sigma \in \mathrm{CS}_{k,g}^1 \tmin D(\mathds C^d)$ if and only if there exists an LHS model that reproduces it. By LHS model we mean that there exists a $\Lambda \in \mathds N$, operators $\{B_\lambda\}_{\lambda\in[\Lambda]} \subset \Pos$ and a conditional probability distribution $p =(p_{\cdot|\lambda, \, x})_{x \in [g], \lambda \in [\Lambda]}$ on $[k]$ such that
    \begin{equation}
        \sigma_{a|x} = \sum_{\lambda =1}^\Lambda p_{a|\lambda,x} \, B_\lambda
    \end{equation}
for all $a \in [k]$ and $x \in [g]$.
\end{prop}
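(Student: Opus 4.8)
The plan is to deduce this from the already-established equivalences in \Cref{cor:sep-com} (and the underlying \Cref{prop:min-factoring-and-tensor}), applied not to the multimeter $M$ itself but to the map associated to the assemblage tensor $\sigma \in \mathrm{CS}_{k,g}^1 \tmax D(\Cnum^{d_B})$. By \Cref{lem:xi-to-phi} there is a unique positive map $\Phi_\sigma: A(\mathrm{CS}_{k,g}^1)^+ \to \mathrm{PSD}_{d_B}$ with $\Phi_\sigma(\one) \in D(\Cnum^{d_B})$ such that $\sigma = (\Phi_\sigma \otimes \id)(\chi)$; equivalently we may view $\sigma$ as a map in the other tensor slot, associating to it the map sending $b \in A(D(\Cnum^{d_B}))^+$ to the column-stochastic-matrix-valued object $(\id \otimes b)(\sigma)$. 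I would set things up so that the statement becomes precisely the separability-versus-factorization dichotomy of \Cref{prop:min-factoring-and-tensor}: $\sigma \in \mathrm{CS}_{k,g}^1 \tmin D(\Cnum^{d_B})$ iff the associated map factors through a simplex $S_\Lambda$ as $\Psi_2 \circ \Psi_1$ with $\Psi_1$ positive (indeed a channel, since $\Phi_\sigma$ can be taken to be a channel) and $\Psi_2: S_\Lambda \to D(\Cnum^{d_B})$ a channel.

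The core step is then to translate this factorization into the LHS data. A channel $\Psi_2: S_\Lambda \to \mathrm{PSD}_{d_B}$ is exactly a choice of states; writing $B'_\lambda := \Psi_2(\delta_\lambda) \in D(\Cnum^{d_B})$, and a positive map $\Psi_1$ out of $A(\mathrm{CS}_{k,g}^1)^+$ landing in $S_\Lambda^+$ is — by the structure of $CS$ spaces recalled in \Cref{sec:CS-GPT}, where the $m^{(x)}_a$ generate the extreme rays of $(CS^+_{k,g})^*$ — precisely a collection of nonnegative coefficients. Composing, one reads off $\sigma_{a|x} = m^{(x)}_a \big( (\Psi_2\circ\Psi_1)^{\text{(appropriate evaluation)}}\big) = \sum_{\lambda=1}^\Lambda q_\lambda\, r_{a|x,\lambda}\, B'_\lambda$ for suitable nonnegative $q_\lambda$ and conditional distributions $r_{\cdot|x,\lambda}$; absorbing $q_\lambda$ into the $B_\lambda := q_\lambda B'_\lambda \in \Pos$ gives exactly the claimed form $\sigma_{a|x} = \sum_\lambda p_{a|\lambda,x} B_\lambda$. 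The normalization $\sum_a p_{a|\lambda,x}=1$ comes from $\sum_a m^{(x)}_a = \one_{CS^1_{k,g}}$ being preserved by channels. Concretely this mirrors the explicit $\Psi_1,\Psi_2$ written in the proof of \Cref{prop:min-factoring-and-tensor}: decomposing $\sigma = \sum_\lambda p_\lambda\, t_\lambda \otimes B'_\lambda$ with $t_\lambda \in \mathrm{CS}_{k,g}^1$, $B'_\lambda \in D(\Cnum^{d_B})$, one sets $\sigma_{a|x} = \sum_\lambda p_\lambda\, m^{(x)}_a(t_\lambda)\, B'_\lambda$ and defines $p_{a|\lambda,x} := m^{(x)}_a(t_\lambda)$ (a conditional probability distribution on $[k]$ since $t_\lambda$ is column-stochastic) and $B_\lambda := p_\lambda B'_\lambda$.

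For the converse, given an LHS model $\sigma_{a|x} = \sum_\lambda p_{a|\lambda,x} B_\lambda$, set $p_\lambda := \Tr[B_\lambda]$ (assume WLOG each $p_\lambda>0$, discarding zero terms), $B'_\lambda := B_\lambda/p_\lambda \in D(\Cnum^{d_B})$, and $t_\lambda := (p_{a|\lambda,x})_{a,x} \in \mathrm{CS}_{k,g}^1$; one checks $\sum_\lambda p_\lambda = \Tr[\sum_a\sigma_{a|x}] = \Tr[\bar\sigma] = 1$ using \cref{eq:assembl-no-sign}, so $\sigma = \sum_\lambda p_\lambda\, t_\lambda \otimes B'_\lambda$ is manifestly in $\mathrm{CS}_{k,g}^1 \tmin D(\Cnum^{d_B})$.

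I expect the only real friction to be bookkeeping: being careful about which tensor slot carries the map and the associated dualization (the $(K_A)^*_\varrho$ subtlety from \Cref{prop:min-factoring-and-tensor} and \Cref{cor:sep-com}), and verifying that the classical label $\lambda$ and the coefficient $p_\lambda$ can be cleanly reshuffled between the simplex-channel $\Psi_1$ and the state-channel $\Psi_2$. None of this is deep — it is exactly the $r=1$ specialization of \Cref{cor:simulation-factorization} transported across the known correspondence between assemblages and multimeters — but it requires matching conventions precisely. The statement for general GPT state spaces $K_B$ (replacing $D(\Cnum^d)$, $\Pos$) would follow verbatim; restricting to quantum mechanics here is purely for exposition.
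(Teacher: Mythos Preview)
Your proposal is correct and follows essentially the same route as the paper: the paper defers the proof to \Cref{cor:separbility-LHS}, which establishes precisely the equivalence separability $\Leftrightarrow$ factorization through $S_\Lambda$ $\Leftrightarrow$ LHS model via \Cref{prop:min-factoring-and-tensor} and the $r=1$ case of \Cref{thm:classical-steering}, and your concrete decomposition $\sigma=\sum_\lambda p_\lambda\, t_\lambda\otimes B'_\lambda$ with $p_{a|\lambda,x}:=m^{(x)}_a(t_\lambda)$ is exactly the translation that makes this work. The only difference is cosmetic: you give the direct computation inline, whereas the paper packages it into the later general GPT corollary and then specializes.
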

The proposition follows in particular from \Cref{cor:separbility-LHS}. For a recent review of quantum steering, we refer the reader to \cite{Uola2020}.

With this preparation, we can define steering in the GPT setting. Let us choose a GPT $(V(K), V(K)^+, \mathds 1_K)$ for what follows. We start with the definition of an assemblage. 
\begin{defi} \label{def:GPT-assemblage}
    Let $(V(K), V(K)^+, \mathds 1_K)$ be a GPT. An \emph{assemblage} is a set $\{q_{a|x}, \varrho_{a|x}\}_{a \in [k], x \in [g]}$ of states $\varrho_{a|x} \in K$ and conditional probability distributions $(q_{a|x})_{a \in [k]}$ for all $x \in [g]$ together with an average state $\bar \varrho \in K$ such that
\begin{equation}
    \bar \varrho = \sum_{a \in [k]} q_{a|x} \varrho_{a|x} \qquad \forall x \in [g] \, .  
\end{equation}
\end{defi}

As was explained earlier in the quantum setting, the interpretation of an assemblage is that it corresponds to a probabilistic state preparator where given a classical input $x$ a state $\varrho_{a|x}  \in K$ is prepared with probability $q_{a|x}$. For an assemblage $\varrho = \{q_{a|x}, \varrho_{a|x}\}_{a \in [k], x \in [g]}$ we may also write $\varrho = \{\tilde{\varrho}_{a|x}\}_{a \in [k], x \in [g]}$ where $\tilde{\varrho}_{a|x} = q_{a|x} \varrho_{a|x} \in V(K)^+$. Since the state space $K$ forms a base for the cone $V(K)^+$ and $\one(\tau) = 1$ for all $\tau \in K$, we can recover the conditional probabilities $q$ simply as $q_{a|x} = \one(\tilde{\varrho}_{a|x})$. Thus, we may go back and forth between expressing the assemblage as probabilities together with the states and expressing the assemblages just with the unnormalized states. We use both notations interchangeably. 

The definition for assemblages having an LHS model can be straightforwardly generalized:
\begin{defi}\label{def:lhs_model}
    Let $\{q_{a|x}, \varrho_{a|x}\}_{a \in [k], x \in [g]}$ be an assemblage. Then, the assemblage admits an \emph{LHS model} if there exist $\Lambda \in \mathds N$, a set of states $\{\sigma_{\lambda}\}_{\lambda \in [\Lambda]}$, a probability distribution $(p_\lambda)_{\lambda \in [\Lambda]}$ and conditional probability distributions $(\nu_{a|\lambda,x})_{a\in [k]}$ for all $\lambda \in [\Lambda]$, $x \in [g]$, such that
     \begin{equation}
    q_{a|x} \varrho_{a|x} = \sum_{\lambda=1}^\Lambda \nu_{a|\lambda,x} p_{\lambda} \sigma_{\lambda} \qquad \forall a \in [k],~\forall x \in [g] \, .
\end{equation}
\end{defi}

Theorem 4 of \cite{jencova2018incompatible} states that for any $\xi \in CS^1_{k,g} \tmax K$, there is an assemblage $\{q_{a|x}, \varrho_{a|x}\}_{a \in [k], x \in [g]}$ such that 
\begin{equation} \label{eq:assemblages-as-tensors}
    \xi = s_{k, \ldots, k} \otimes \bar \varrho + \sum_{x = 1}^g \sum_{a =1}^{k-1}  q_{a|x} e_a^{(x)} \otimes \varrho_{a|x} \, .
\end{equation}
Conversely, any element of the form in \cref{eq:assemblages-as-tensors} is in $CS^1_{k,g} \tmax K$, such that there is a one-to-one correspondence between assemblages and tensors in $CS^1_{k,g}  \tmax K$. This implies in particular \Cref{assemblage}.

\subsection{Assemblages as measurements on bipartite states}
The considerations at the end of the previous section, however, do not give us a counterpart to \Cref{assem_real}, because \Cref{def:GPT-assemblage} does not make any reference to applying a multimeter to any bipartite state. We will start with a purely mathematical construction to work around this. As discussed in \Cref{sec:maps-between-GPTs}, any $\xi \in CS^1_{k,g}  \tmax K$ can be written as $\xi = (\id \otimes \Phi)(\chi_{CS^\ast_{k,g} }) = (\Phi^\ast \otimes \id)(\chi_{V(K)})$, where $\Phi: (CS^\ast_{k,g})^+ \to V(K)^+$ such that (see \cref{eq:assemblages-as-tensors})
\begin{align}
    \Phi(\one_{CS^1_{k,g} }) &= \bar \varrho  \label{eq:Phi-steering-1} \\
    \Phi(m_a^{(x)}) &= q_{a|x}\varrho_{a|x} \qquad \forall x \in [g], \, \forall a \in [k] \label{eq:Phi-steering-2}\, .
\end{align}
The dual map $\Phi^\ast: A(K)^+ \to CS^+_{k,g} $ is such that for any $\alpha \in A(K)^+$, 
\begin{equation}
    \Phi^\ast: \alpha \mapsto \alpha(\bar \varrho) s_{k, \ldots k} + \sum_{x = 1}^g \sum_{a = 1}^{k-1} q_{a|x} \alpha(\varrho_{a|x}) e_a^{(x)} \, .
\end{equation}

Now, we would like to see $\Phi^\ast$ as a multimeter, i.e., we need a state space in $A(K)^+$ which $\Phi^\ast$ maps to $CS^1_{k,g} $ and hence makes $\Phi^\ast$ a channel. Let us assume that $\bar \varrho$ is in the relative interior of $K$. Then, we consider $K^\ast_{\bar \varrho}$ defined in \cref{eq:dual-state-space}. It follows from the definition of $K^\ast_{\bar \varrho}$ that for all $\alpha \in K^\ast_{\bar \varrho}$ 
\begin{equation}
    \one_{CS_{k,g}^1}(\Phi^\ast(\alpha)) = \alpha(\bar \varrho) = 1 \, ,
\end{equation}
thus indeed $\Phi^\ast: K^\ast_{\bar \varrho} \to CS_{k,g}^1$ is a channel and hence a multimeter.

\begin{remark} \label{rem:bar-rho-relative-interior}
    The assumption that $\bar \varrho$ is in the relative interior of $K$ is no restriction, since otherwise $K$ is in a sense too large. More precisely, if $\bar \varrho$ is not in the relative interior of $K$, then it is in a face of the convex set $K$ \cite[Theorem 18.2]{Rockafellar1970}. Let $F$ be the smallest face that contains $\bar \varrho$. This implies that $\bar \varrho$ is in the relative interior of $F$, since otherwise there would be a smaller face containing $\bar \varrho$. $F$ is compact because $K$ is. Therefore, we can define a GPT $(V(F), V(F)^+, \one_F)$ from $F$. As $\bar \varrho \in F$ and $\bar \varrho=\sum_{a=1}^k q_{a|x} \varrho_{a|x}$ for all $x \in [g]$, it follows that $\varrho_{a|x} \in F$ for all $a\in [k]$, $x \in [g]$. Indeed, for $0 <q_{a|x} <1$ this is a consequence of $F$ being a face. For $q_{a|x} =1$, it follows that $\varrho_{a|x} = \bar \varrho$ and for $q_{a|x} =0$, we can choose $\varrho_{a|x}$ any way we like. Thus, $\Phi((CS_{k,g}^\ast)^+) \subseteq V(F)^+$ and we can consider $\Phi$ as a map $\Phi:(CS_{k,g}^\ast)^+ \to V(F)^+$ instead. 
\end{remark}

We can now more generally raise the question if there always exists a realization as in \cref{assembl_def} for a given steering assemblage as it is the case in quantum theory.
To be more precise, given a state space $K_A$ we consider an assemblage $\sigma= \{\sigma_{a|x}\}_{a \in [k],x \in [g]} \subset V(K_A)^+$ with the property that $\sum_{a \in [k]} \sigma_{a|x} = \bar{\sigma} \in K_A$ for all $x \in [g]$.
Then, there are two questions:
\begin{enumerate}
    \item Is there a state space $K_B$, a tensor product $\otimes$, a state $z \in K_A \otimes K_B$ and a multimeter $M: K_B \to CS^1_{k,g}$ such that $\sigma_{a|x} = (\id \otimes M_{a|x})(z)$ ?
    \item Is there a tensor product $\otimes$, a state $w \in K_A \otimes K_A$ and a multimeter $N: K_A \to CS^1_{k,g}$ such that $\sigma_{a|x} = (\id \otimes N_{a|x})(w)$ ?
\end{enumerate}
The answer to these questions will show that the existence of a realization cannot be taken for granted and depends on the second state space.

If we may choose $K_B$ freely, we can give a suitable choice and thus a positive answer to question (1) in the following proposition. This already follows from the considerations at the beginning of this section, but we will give a simpler proof below.
\begin{prop}
    An assemblage $\sigma= \{\sigma_{a|x}\}_{a \in [k],x \in [g]} \subset V(K_A)^+$, with an average state  $\sum_{a} \sigma_{a|x}=\bar{\sigma}$, always finds a realization, i.e.,  $\sigma_{a|x} = (\id \otimes M_{a|x})(z)$ with a state $z \in K_A \otimes K_B$ and a multimeter $M: K_B \to CS^1_{k,g}$ if $K_B$ is chosen as $K_B \coloneqq (K_A)_{\bar{\sigma}}^*$ and $\bar{\sigma}$ is in the relative interior of $K_A$.
\end{prop}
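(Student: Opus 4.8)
The plan is to realize the assemblage by the ``flipped canonical tensor'' on $K_A \tmax (K_A)^*_{\bar\sigma}$, paired with the multimeter built directly out of the operators $\sigma_{a|x}$. The key dictionary is that, because $\bar\sigma$ lies in the relative interior of $K_A$, the set $K_B := (K_A)^*_{\bar\sigma}$ is a compact state space for the GPT $(A(K_A),A(K_A)^+,\bar\sigma)$, with $V(K_B)^+ = A(K_A)^+$, $A(K_B)^+ = V(K_A)^+$, and order unit $\one_{K_B} = \bar\sigma$ (viewed in $V(K_A) = A(K_A)^*$); this is exactly where the relative-interior hypothesis is used.

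First I would define the multimeter $M\colon K_B \to \CS$ by declaring its effects to be $M_{a|x} := \sigma_{a|x}$, now regarded as elements of $V(K_A) = A(K_B)$, i.e.\ as the affine functionals $\alpha \mapsto \alpha(\sigma_{a|x})$ on $K_B$. Positivity of $\sigma_{a|x}$ gives $M_{a|x} \in A(K_B)^+$, and the assemblage's no-signaling constraint $\sum_{a} \sigma_{a|x} = \bar\sigma$ translates into $\sum_a M_{a|x} = \bar\sigma = \one_{K_B}$ for every $x \in [g]$; hence each $\{M_{\cdot|x}\}$ is a genuine $k$-outcome measurement on $K_B$ and together they assemble into a multimeter (equivalently, $M$ coincides with the dual map $\Phi^*$ of the $\Phi$ from \cref{eq:Phi-steering-1,eq:Phi-steering-2}, which was noted above to be a channel $(K_A)^*_{\bar\sigma}\to\CS$). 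Then I would choose $\otimes = \tmax$ and set $z := \sum_i e_i \otimes a_i$, where $\{e_i\}$ is a basis of $V(K_A)$ and $\{a_i\}$ the dual basis of $A(K_A) = V(K_B)$ --- that is, $z$ is the flip of $\chi_{V(K_A)}$. By \cite[Lemma A.1]{jencova2018incompatible}, $\chi_{V(K_A)} \in A(K_A) \tmax V(K_A)$, so after flipping $z \in V(K_A)^+ \tmax V(K_B)^+ = V(K_A \tmax K_B)^+$; and it is normalized because $\langle \one_{K_A} \otimes \one_{K_B}, z \rangle = \sum_i \one_{K_A}(e_i)\, a_i(\bar\sigma) = \one_{K_A}\big(\sum_i a_i(\bar\sigma)\, e_i\big) = \one_{K_A}(\bar\sigma) = 1$, so $z \in K_A \tmax K_B$ is a state.

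Finally I would verify the realization by a one-line computation with dual bases: for all $a \in [k]$, $x \in [g]$,
\[
(\id \otimes M_{a|x})(z) = \sum_i e_i\, M_{a|x}(a_i) = \sum_i a_i(\sigma_{a|x})\, e_i = \sigma_{a|x},
\]
which is the desired identity. I expect essentially no obstacle beyond bookkeeping: the substantive point --- and the one worth stating carefully --- is that the no-signaling condition $\sum_a \sigma_{a|x} = \bar\sigma$ is precisely what makes $M$ normalization-preserving on $K_B$, so the whole construction hinges on $\bar\sigma$ being a legitimate order unit for $K_B$, which is guaranteed exactly by the relative-interior assumption (and, as \cref{rem:bar-rho-relative-interior} notes, costs no generality).
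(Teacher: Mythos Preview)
Your proposal is correct and is essentially the paper's own proof: your state $z=\sum_i e_i\otimes a_i$ is precisely the tensor $\chi_{A(K_A)}$ that the paper denotes by $\xi_{\varphi_{\mathrm{id}}}$, and your multimeter $M_{a|x}:=\sigma_{a|x}$ together with the normalization check $\langle \one_{K_A}\otimes \bar\sigma, z\rangle=\one_{K_A}(\bar\sigma)=1$ match the paper line for line. The only difference is cosmetic: you expand the realization identity via the dual-basis formula $\sum_i a_i(\sigma_{a|x})\,e_i=\sigma_{a|x}$, whereas the paper appeals to \cref{eq:tensor-to-map}.
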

\begin{proof}
    Let $\sigma= \{\sigma_{a|x}\}_{a \in [k],x \in [g]}$ be an assemblage with $\sum_a \sigma_{a|x} = \bar{\sigma}$.
    We consider a trivial map $\varphi_{\mathrm{id}}: A(K_A)^{+} \rightarrow A(K_A)^{+}$, that is, as a tensor $\xi_{\varphi_{\mathrm{id}}} = \chi_{A(K_A)} \in V(K_A)^{+} \tmax A(K_A)^+$ (see \cref{eq:max-ent-tensor}). 
    If we could choose $M_{a|x} = \sigma_{a|x}$, we would be done since then by \cref{eq:tensor-to-map}
    \begin{equation}
        (\id \otimes M_{a|x})(\xi_{\varphi_{\mathrm{id}}}) = \sigma_{a|x}\, .
    \end{equation}
    Thus, we set $K_B = (K_A)_{\bar{\sigma}}^*$ and the assemblages become functionals on the states in $K_B$. We recall from \Cref{sec:maps-between-GPTs} that $\bar{\sigma}$ is the order unit for the GPT constructed from the state space $(K_A)_{\bar{\sigma}}^*$. Moreover, for all $a\in [k]$, $x\in [g]$ it holds that $\sigma_{a|x} \in A((K_A)_{\bar{\sigma}}^*)^+$, as $A((K_A)_{\bar{\sigma}}^*)^+ = V(K_A)^+$. Thus, setting $M_{a|x}:=\sigma_{a|x}$ indeed gives a valid multimeter acting on $(K_A)_{\bar{\sigma}}^*$. It remains for us to verify that $\xi_{\varphi_{\mathrm{id}}} \in K_A \tmax K_B$: Using again that the order unit on $(K_A)_{\bar{\sigma}}^*$ is $\bar{\sigma}$, we can confirm that
    \begin{equation}
        \one_{K_A} \otimes \one_{K_B}(\xi_{\varphi_{\mathrm{id}}}) = \one_{K_A}(\bar{\sigma}) = 1 \,,
    \end{equation}
    which is all we needed to check.
\end{proof}

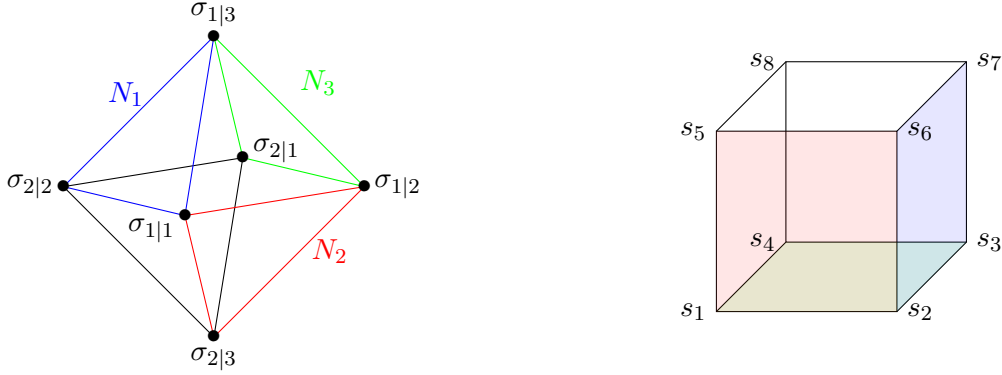
\begin{figure}
    \begin{center}
    \begin{minipage}{0.45\textwidth}
    \begin{tikzpicture}[scale=2]
      \coordinate (A) at (0,0,1);
      \coordinate (B) at (1,0,0.5);
      \coordinate (C) at (0,1,0.5);
      \coordinate (D) at (-1,0,0.5);
      \coordinate (E) at (0,-1,0.5);
      \coordinate (F) at (0,0,0);

      \draw (E) -- (D) -- (F) -- (E);
      \draw[red] (A) -- (B);
      \draw[red] (B) -- (E);
      \draw[red] (A) -- (E);
      \draw[blue] (C) -- (D);
      \draw[blue] (C) -- (A);
      \draw[blue] (D) -- (A);
      \draw[green] (B) -- (F);
      \draw[green] (B) -- (C);
      \draw[green] (F) -- (C);
      
      \node[left, yshift=-2mm] at (A) {$\sigma_{1|1}$};
      \node[black] at (A) {$\bullet$};
      \node[right] at (B) {$\sigma_{1|2}$};
      \node[black] at (B) {$\bullet$};
      \node[above] at (C) {$\sigma_{1|3}$};
      \node[black] at (C) {$\bullet$};
      \node[left] at (D) {$\sigma_{2|2}$};
      \node[black] at (D) {$\bullet$};
      \node[below] at (E) {$\sigma_{2|3}$};
      \node[black] at (E) {$\bullet$};
      \node[above, right, yshift=1mm] at (F) {$\sigma_{2|1}$};
      \node[black] at (F) {$\bullet$};

      \node at (-.7,.5,.2) {$\blue{N_1}$};
      \node at (.5,.5,0) {$\green{N_3}$};
      \node at (.5,-.7,-.2) {$\red{N_2}$};
    \end{tikzpicture}
\end{minipage}
\hfill
\begin{minipage}{0.45\textwidth}
    \begin{tikzpicture}[scale=1.2]
        \draw (0,0,0) -- (2,0,0) -- (2,2,0) -- (0,2,0) -- cycle;
        \draw (0,0,2) -- (2,0,2) -- (2,2,2) -- (0,2,2) -- cycle;
        \draw (0,0,0) -- (0,0,2);
        \draw (2,0,0) -- (2,0,2);
        \draw (2,2,0) -- (2,2,2);
        \draw (0,2,0) -- (0,2,2);

        \fill[green, opacity=.1] (0,0,2) -- (2,0,2) -- (2,0,0) -- (0,0,0);
        \fill[red, opacity=.1] (0,0,2) -- (2,0,2) -- (2,2,2) -- (0,2,2);
        \fill[blue, opacity=.1] (2,0,2) -- (2,0,0) -- (2,2,0) -- (2,2,2);

        \node[left] at (0,0,0) {$s_4$};
        \node[right] at (2,0,0) {$s_3$};
        \node[left] at (0,0,2) {$s_1$};
        \node[right] at (2,0,2) {$s_2$};
        \node[left] at (0,2,0) {$s_8$};
        \node[right] at (2,2,0) {$s_7$};
        \node[left] at (0,2,2) {$s_5$};
        \node[right] at (2,2,2) {$s_6$};
    \end{tikzpicture}
\end{minipage}
    \caption{The two state spaces represented by an octahedron (left) and a cube (right) are dual to each other. For the measurements on the octahedron we choose the measurements $N_i$ that pick out a face (colored). The opposing face is then picked by $\one_{K_{octa}} - N_i$ and the fourth measurements follows from normalization. For the cube we choose measurements that pick front (red, $M_3$), bottom (green, $M_2$) and right side (blue, $M_1$) of the cube.}
    \label{cube_n_octa}
    \end{center}

\end{figure}

However, if $K_B$ is fixed, a realization does not always exist. While this was known previously \cite{jencova2018incompatible}, explicit counterexamples are lacking in the literature. Necessary conditions on state spaces such that every assemblage admits a realization can be found in \cite{barnum2013ensemble, stevens2014steering}. We will find an explicit counterexample by considering a cube and an octahedron as shown in \Cref{cube_n_octa}. One can choose a state space represented by an octahedron. The corresponding dual state space is then represented by the cube. Note that this also works the other way round. For the measurements on the respective bodies we consider the ones that pick certain faces of the body (see also \Cref{cube_n_octa}), which generate all other measurements on that state space. 

Before we can proceed, we need to prove some facts about the measurement incompatibility in the cube and the octahedron: For a multimeter $M: K \to \mathrm{CS}^1_{k,g}$, we define its compatibility robustness as
\begin{equation}
        R_m(M) = \max \left\{\lambda \in [0,1] \, : \exists (p_{\cdot|x})_{x \in [g]} \ \mathrm{such~that}\, \left\{\lambda M_{\cdot|x} + (1-\lambda) p_{\cdot|x} \one_K \right\}_{x \in [g]} \text{ is compatible}\right\}\, .
    \end{equation}
\begin{lem} \label{lem:cube-robustness-eq-third}
    Let $M=\{M_1,M_2,M_3\}$ be a multimeter containing the measurements that pick out the front ($M_3$), bottom ($M_2$) and right ($M_1$) side of the cube as in \Cref{cube_n_octa}. Then, $R_m(M) = \frac{1}{3}$.
\end{lem}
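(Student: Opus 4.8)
The plan is to work with an explicit coordinate model of the cube and reduce the statement to a short linear computation. I would identify the cube state space of \Cref{cube_n_octa} with $[-1,1]^3$, embedded in $\Rnum^4$ as $\{(1,x_1,x_2,x_3) : x_i \in [-1,1]\}$, so that effects are exactly the affine functionals $f(x) = f^0 + f^1 x_1 + f^2 x_2 + f^3 x_3$ with $0 \le f \le 1$ on the cube, and the order unit $\one_K$ is the constant functional $1$. Up to relabeling and reflecting coordinates, the three face-picking measurements are the dichotomic coordinate measurements $M_i = \{M_i^+, M_i^-\}$ with $M_i^{\pm}(x) = \tfrac12(1 \pm x_i)$. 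Throughout I would use the standard reformulation of compatibility: the noisy family $\{\lambda M_{\cdot|i} + (1-\lambda) p_{\cdot|i}\one_K\}_{i=1,2,3}$ is compatible if and only if there exist effects $\{G_{abc}\}_{a,b,c \in \{+,-\}}$ on the cube with $\sum_{abc} G_{abc} = \one_K$ whose two-index marginals reproduce the noisy measurements, e.g.\ $\sum_{b,c} G_{abc} = \lambda M_1^a + (1-\lambda) p_1^a \one_K$ and cyclically. This is immediate from the definition of compatibility by setting $G_{abc} = \sum_\mu \nu^{(1)}_{a|\mu}\nu^{(2)}_{b|\mu}\nu^{(3)}_{c|\mu} N_\mu$ for a common refinement $N = \{N_\mu\}$ with post-processings $\nu^{(i)}$.

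For the lower bound $R_m(M) \ge \tfrac13$, I would take the symmetric noise $p_i^+ = p_i^- = \tfrac12$, so that the noisy measurements become $x \mapsto \tfrac12(1 \pm \lambda x_i)$, and exhibit the candidate joint observable
\[
  G_{abc}(x) = \tfrac18\bigl(1 + \lambda(a x_1 + b x_2 + c x_3)\bigr), \qquad a,b,c \in \{+,-\}.
\]
On the cube the quantity $a x_1 + b x_2 + c x_3$ ranges over $[-3,3]$, so $G_{abc}$ is a valid effect precisely when $1 - 3\lambda \ge 0$ (the bound $G_{abc} \le \tfrac14$ is automatic); summing over $\pm$ makes the two ``foreign'' coordinate terms cancel, which gives $\sum_{b,c} G_{abc}(x) = \tfrac12(1 + a\lambda x_1)$ and cyclically, i.e.\ the noisy measurements are recovered as marginals. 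Hence the noisy family is compatible for every $\lambda \le \tfrac13$, so $R_m(M) \ge \tfrac13$.

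For the upper bound $R_m(M) \le \tfrac13$, I would suppose such a joint observable $\{G_{abc}\}$ exists for some $\lambda$ and some (arbitrary) noise $p_i$, and write $G_{abc}(x) = g^0_{abc} + \sum_{i=1}^3 g^i_{abc} x_i$. Evaluating $G_{abc}$ at the cube vertex that minimizes it gives the positivity constraint $g^0_{abc} \ge |g^1_{abc}| + |g^2_{abc}| + |g^3_{abc}|$. Summing over the eight triples $(a,b,c)$ and using that $\sum_{abc} G_{abc} = \one_K$ has constant term $1$ yields $1 \ge \sum_{abc}\bigl(|g^1_{abc}| + |g^2_{abc}| + |g^3_{abc}|\bigr)$. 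On the other hand, the marginal conditions pin down the $x_i$-coefficients of the noisy $M_i^{\pm}$: for instance $\sum_{b,c} g^1_{+bc} = \tfrac{\lambda}{2}$ and $\sum_{b,c} g^1_{-bc} = -\tfrac{\lambda}{2}$, so the triangle inequality gives $\sum_{abc}|g^1_{abc}| \ge \lambda$, and likewise $\sum_{abc}|g^2_{abc}| \ge \lambda$ and $\sum_{abc}|g^3_{abc}| \ge \lambda$. Combining the two estimates gives $1 \ge 3\lambda$, i.e.\ $\lambda \le \tfrac13$, and together with the lower bound this forces $R_m(M) = \tfrac13$.

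The computation is short; the two points I would be careful about are (i) the reduction of compatibility to an eight-outcome joint observable with prescribed marginals, so that nothing is lost by restricting attention to such $G_{abc}$, and (ii) allowing \emph{arbitrary}, not merely uniform, noise $p_i$ in the upper bound — here the key observation is that the $p_i$ affect only the constant terms $g^0_{abc}$, which enter the argument solely through the single identity $\sum_{abc} g^0_{abc} = 1$, so the bound is insensitive to the choice of noise. I expect the only (minor) obstacle to be bookkeeping the marginal-to-coefficient dictionary in the right indices; the geometric content is simply that positivity of an affine functional on the cube costs ``mass'' $|g^1|+|g^2|+|g^3|$ in its constant term, while the three marginals demand total mass $3\lambda$, which is what makes $\tfrac13$ appear. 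As a sanity check, the same method applied to the square state space $[-1,1]^2$ yields compatibility robustness $\tfrac12$ for its two coordinate measurements.
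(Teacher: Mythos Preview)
Your argument is correct and self-contained. The paper takes a different, essentially citational route: it observes that the cube state space is isomorphic to the polysimplex $CS^1_{2,3}$ (the analog of the $CS^1_{2,2}\cong$ square identification), so that the face-picking multimeter becomes the identity channel $CS^1_{2,3}\to CS^1_{2,3}$, and then invokes the known fact (from \cite{jencova2018incompatible}, Example~12) that the identity multimeter on $CS^1_{k,g}$ is maximally incompatible with robustness $1/g$, here $1/3$.

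Your approach is more elementary and stays entirely inside the concrete coordinate model: the lower bound via the product-type joint observable $G_{abc}=\tfrac18(1+\lambda(ax_1+bx_2+cx_3))$ is the explicit witness behind the cited result, and your upper bound replaces the general structural argument by the simple ``mass accounting'' inequality $g^0\ge|g^1|+|g^2|+|g^3|$ characterizing positivity on the cube. What your route buys is independence from the external reference and transparency of why $1/3$ appears; what the paper's route buys is that it slots the statement into the general polysimplex framework, so the same sentence covers all $CS^1_{2,g}$ at once and connects the lemma to the rest of the formalism. Both your care points are handled correctly: the reduction to an eight-outcome joint observable is the standard marginalization $G_{abc}=\sum_\mu \nu^{(1)}_{a|\mu}\nu^{(2)}_{b|\mu}\nu^{(3)}_{c|\mu}N_\mu$, and the insensitivity of the upper bound to the noise distributions $p_i$ is exactly because those affect only the constant terms, which enter solely via $\sum_{abc}g^0_{abc}=1$.
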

\begin{proof}
    Analogously to an argument later in the proof of \Cref{thm:CHSH-is-linear-isom} which shows that $CS^1_{2,2}$ is isomorphic to a square, one can show that $CS^1_{2,3}$ is isomorphic to a cube (see also \cite[Proposition 3]{jencova2018incompatible} for a more general statement). Thus, we can consider the multimeter $M$ as a channel $M: CS^1_{2,3} \to CS^1_{2,3}$. Then, as shown in \cite[Example 12]{jencova2018incompatible}, the dichotomic measurements of the multimeter $M$ are maximally incompatible, that is, $R_m(M) = \frac{1}{3}$.
\end{proof}
Let us now shift the focus from measurements on the cube to measurements on the octahedron.
The aim is to find distinct separability constraints, i.e., different thresholds for robustness.
\begin{lem} \label{lem:simulate-binary-octa}
    Let $K_{\mathrm{octa}}$ be the state space represented by an octahedron and let  $N=\{N_1,N_2,N_3,N_4\}$ be the multimeter containing the face-picking measurements on $K_{\mathrm{octa}}$ from \Cref{cube_n_octa}.
    Then every dichotomic measurement can be classically simulated by $N$.
\end{lem}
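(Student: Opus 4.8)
The plan is to realize the octahedron as a cross-polytope, read off its effects in coordinates, and then recognize classical simulability by $N$ as an elementary convex-geometry statement. Via an affine isomorphism I would identify $K_{\mathrm{octa}}$ with $\operatorname{conv}\{\pm e_1,\pm e_2,\pm e_3\}\subset\Rnum^3$ (standard basis vectors $e_1,e_2,e_3$), embedded in the hyperplane $\{x_0=1\}$ of $\Rnum^4$ with the center of the octahedron at the origin of $\Rnum^3$. An affine functional is then $e(x)=c+\langle v,x\rangle$ with $c=e(0)$, and $e$ is an effect exactly when $0\le c\pm v_i\le 1$ for $i=1,2,3$, i.e.\ $\|v\|_\infty\le\min(c,1-c)$. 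A one-line computation shows that the effect which equals $1$ on the face $\operatorname{conv}\{s_1e_1,s_2e_2,s_3e_3\}$ and $0$ on the opposite face, for a sign vector $s\in\{\pm1\}^3$, is
\[
    N^{(s)}=\tfrac12\bigl(\one+\langle s,\cdot\,\rangle\bigr),\qquad \one-N^{(s)}=N^{(-s)},
\]
so the four face-picking measurements $N_1,\dots,N_4$ of \Cref{cube_n_octa} have effect pairs $\{N^{(s)},N^{(-s)}\}$ ranging over the four antipodal pairs of $\{\pm1\}^3$; together their eight effects are precisely $\{N^{(s)}:s\in\{\pm1\}^3\}$.

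Next I would translate classical simulability of a dichotomic measurement into membership in a convex set of effects. By \Cref{cor:simulation-factorization} (equivalently \cref{eq:cl-simulation} with $g=1$, $k=2$, $r=4$, $l=2$), a dichotomic measurement $(e,\one-e)$ is classically simulable by $N$ iff $e=\sum_{i=1}^4\pi_i\bigl(\nu^{(i)}_{1|1}N_{1|i}+\nu^{(i)}_{1|2}N_{2|i}\bigr)$ for a probability vector $(\pi_i)_i$ and stochastic post-processings $\nu^{(i)}$. Since $N_{2|i}=\one-N_{1|i}$, as $\nu^{(i)}_{1|1},\nu^{(i)}_{1|2}$ range over $[0,1]$ the post-processed first effect sweeps out exactly the parallelogram $P_i:=\operatorname{conv}\{0,\one,N_{1|i},\one-N_{1|i}\}$, whose vertex set is these four points because $0+\one=N_{1|i}+(\one-N_{1|i})$. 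Mixing over $i$ then shows that $(e,\one-e)$ is classically simulable by $N$ iff
\[
    e\in\operatorname{conv}\Bigl(\bigcup_{i=1}^4 P_i\Bigr)=\operatorname{conv}\bigl(\{0,\one\}\cup\{N^{(s)}:s\in\{\pm1\}^3\}\bigr).
\]

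It then remains to show that this last convex hull is the full effect interval $E(K_{\mathrm{octa}})=[\,0,\one\,]$. In the coordinates above, $0$ and $\one$ are the points $(c,v)=(0,0)$ and $(1,0)$, while $N^{(s)}$ is $(\tfrac12,\tfrac12 s)$, so the hull is the bipyramid over the cube $\{\tfrac12\}\times[-\tfrac12,\tfrac12]^3$ with apexes $(0,0)$ and $(1,0)$; all these points lie in $[\,0,\one\,]$, so it suffices to prove the reverse inclusion. Given an effect $e=c\,\one+\langle v,\cdot\,\rangle$ with $\|v\|_\infty\le\min(c,1-c)$: if $c\le\tfrac12$ then $v/c\in[-1,1]^3=\operatorname{conv}\{\pm1\}^3$, and writing $v/c=\sum_s\mu_s\,s$ with $(\mu_s)_s$ a probability vector yields the convex combination $e=\sum_s(2c\mu_s)\,N^{(s)}+(1-2c)\cdot 0$; the case $c\ge\tfrac12$ is symmetric, using the apex $\one$. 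Hence $[\,0,\one\,]=\operatorname{conv}(\{0,\one\}\cup\{N^{(s)}\})$, and therefore every dichotomic effect — and so every dichotomic measurement on $K_{\mathrm{octa}}$ — is classically simulable by $N$.

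The only delicate point is the second paragraph: one must check both inclusions showing that the post-processings of a single dichotomic measurement cover precisely $P_i$, and that the recipe of \cref{eq:cl-simulation} reduces here to convex mixing of the $P_i$. An index slip there would change the set of simulable effects, so I would sanity-check the identification against the $r=1$ specialization recorded in \Cref{cor:sep-com}. Everything else — the coordinate description of the effects and the bipyramid computation — is routine once $K_{\mathrm{octa}}$ is in cross-polytope form.
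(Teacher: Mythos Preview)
Your proof is correct and follows the same two-step outline as the paper: reduce classical simulability of a dichotomic measurement by $N$ to membership of its first effect in $\operatorname{conv}\bigl(\{0,\one\}\cup\{\text{effects of }N\}\bigr)$, then verify this hull is all of $E(K_{\mathrm{octa}})$. The paper does both steps by citation and assertion: the simulation criterion is quoted from \cite[Corollary~3]{filippov2018simulability}, and the claim that the eight face effects generate the extreme rays of $A(K_{\mathrm{octa}})^+$ (so that together with $0,\one$ they span the full effect interval) is stated without computation. Your version is fully self-contained: you derive the simulation criterion directly from \cref{eq:cl-simulation} via the parallelogram picture $P_i=\operatorname{conv}\{0,\one,N_{1|i},N_{2|i}\}$, and you supply the explicit bipyramid decomposition showing the hull equals $[0,\one]$. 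The content is the same; the tradeoff is brevity versus reliance on an external reference.
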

\begin{proof}
    In \cite[Corollary 3]{filippov2018simulability} it was shown that a dichotomic measurement can be classically simulated by a multimeter containing dichotomic measurements if and only if the effects of the measurement are contained in the convex hull of the effects of the measurements contained in the simulating multimeter together with the zero effect $0$ and the unit effect $\one_{K_{\mathrm{octa}}}$. The effects of the four measurements $N_1,N_2,N_3,N_4$ generate the extreme rays of the effect cone and together with the zero and the unit effect, their convex hull is the whole effect space $E(K_{\mathrm{octa}})$. Thus, any dichotomic measurement can be simulated by $N$.
\end{proof}

\begin{lem} \label{lem:simulation-increase-robustness}
    Let $K$ be a state space. Then for any multimeter $M: K \to CS^1_{k,g}$ and any classical simulation $\Phi: CS^1_{k,g} \to CS^1_{l,r}$ we have that $R_m(\Phi \circ M) \geq R_m(M)$.
\end{lem}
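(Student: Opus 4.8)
The plan is to use that a channel between two spaces of column stochastic matrices is exactly a classical post-processing (\Cref{thm:channel-polysimplices-characterization}), so that $\Phi$ acts \emph{affinely} on the effects of any multimeter on $K$; affine maps commute with the addition of white noise, and classical post-processing preserves compatibility, and these two facts together force the robustness not to decrease.

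Concretely, set $\lambda = R_m(M)$ and choose a conditional probability distribution $(p_{\cdot|x})_{x \in [g]}$ on $[k]$ witnessing it, i.e.\ such that the noisy multimeter $M^\lambda$ with effects $M^\lambda_{a|x} = \lambda M_{a|x} + (1-\lambda) p_{a|x} \one_K$ is compatible. Applying \Cref{thm:channel-polysimplices-characterization} with the roles of $(k,g)$ and $(l,r)$ interchanged, write $\Phi$ by means of conditional probability distributions $\pi = (\pi_{\cdot|y})_{y \in [r]}$ on $[g]$ and $\nu = (\nu_{\cdot|a,y,x})_{a \in [k],\, y \in [r],\, x \in [g]}$ on $[l]$. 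Evaluating $\Phi$ on $M(\varrho)$ and on $M^\lambda(\varrho)$ and reading off via the column projections $m^{(x)}_a$, one finds that the effects of the composed multimeters are $(\Phi \circ M)_{b|y} = \sum_{x=1}^g \pi_{x|y} \sum_{a=1}^k \nu_{b|a,y,x} M_{a|x}$ and, with $M$ replaced by $M^\lambda$, $(\Phi \circ M^\lambda)_{b|y} = \sum_{x=1}^g \pi_{x|y} \sum_{a=1}^k \nu_{b|a,y,x} M^\lambda_{a|x}$.

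Since this post-processing formula is linear in the effects, substituting $M^\lambda_{a|x} = \lambda M_{a|x} + (1-\lambda) p_{a|x}\one_K$ yields $(\Phi \circ M^\lambda)_{b|y} = \lambda (\Phi \circ M)_{b|y} + (1-\lambda) p'_{b|y} \one_K$ with $p'_{b|y} := \sum_{x=1}^g \pi_{x|y} \sum_{a=1}^k \nu_{b|a,y,x} p_{a|x}$; using $\sum_{b=1}^l \nu_{b|a,y,x} = 1$, $\sum_{a=1}^k p_{a|x} = 1$ and $\sum_{x=1}^g \pi_{x|y} = 1$ one checks that $(p'_{\cdot|y})_{y\in[r]}$ is again a conditional probability distribution on $[l]$. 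Finally, $\Phi \circ M^\lambda$ is compatible: by \Cref{cor:sep-com}, compatibility of $M^\lambda$ means $M^\lambda = \Psi \circ N$ for a measurement $N: K \to S_\Lambda$ and a channel $\Psi: S_\Lambda \to CS^1_{k,g}$, hence $\Phi \circ M^\lambda = (\Phi \circ \Psi) \circ N$ factors through $S_\Lambda$ and is compatible by the same corollary. Therefore $\{\lambda (\Phi \circ M)_{\cdot|y} + (1-\lambda) p'_{\cdot|y} \one_K\}_{y \in [r]}$ is compatible, so $\lambda = R_m(M)$ lies in the set over which the maximum defining $R_m(\Phi \circ M)$ is taken (indeed the entire admissible set for $M$ is contained in that for $\Phi \circ M$), giving $R_m(\Phi \circ M) \geq R_m(M)$.

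The only slightly delicate point is the index bookkeeping in the middle step: \Cref{thm:channel-polysimplices-characterization} is stated for channels $CS_{l,r} \to CS_{k,g}$, so one must transpose the conventions carefully for the composed channel $K \to CS^1_{l,r}$, and in particular verify that $\Phi$ sends the trivial (state-independent) noise part $\{p_{\cdot|x}\one_K\}_{x\in[g]}$ to the trivial noise part $\{p'_{\cdot|y}\one_K\}_{y\in[r]}$ -- which is exactly the normalization check above. Everything else is an immediate consequence of \Cref{thm:channel-polysimplices-characterization} and \Cref{cor:sep-com}.
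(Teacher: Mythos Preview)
Your proof is correct and follows essentially the same approach as the paper: pick the optimal $\lambda$ and noise $p$ for $M$, observe that $\Phi$ applied to the noisy multimeter yields the $\lambda$-noisy version of $\Phi\circ M$ with some new trivial noise, and use \Cref{cor:sep-com} to see that $\Phi$ preserves compatibility via composition of the simplex factorization. The only cosmetic difference is that the paper obtains the new noise distribution more compactly as $q=\Phi(p)$, viewing $p$ directly as an element of $CS^1_{k,g}$, rather than unpacking $\Phi$ via \Cref{thm:channel-polysimplices-characterization} to compute your $p'$ componentwise; both produce the same object.
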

\begin{proof}
    Let $R_m(M)=\hat{\lambda}$. Thus, there exists a conditional probability distribution $p = (p_{\cdot|x})_{x \in [g]}$ on $[k]$ such that the multimeter $\hat{M} = \{\hat{\lambda} M_{\cdot |x}+(1-\hat{\lambda})p_{\cdot|x} \one_K\}_{x \in [g]}$ is compatible. By \Cref{cor:sep-com} this means that there exists some measurement $G: K \to S_n$ and a channel $\mu: S_n \to CS^1_{k,g}$ such that $\hat{M} = \mu \circ G$. Now clearly 
    \begin{align}
        (\Phi \circ \mu) \circ G =  \Phi \circ (\mu \circ G) = \Phi \circ \hat{M}
    \end{align}
    so that again by \Cref{cor:sep-com} the multimeter $\Phi \circ \hat{M}: K \to CS^1_{l,r}$ is compatible with a joint measurement $G: K  \to S_n$ and a postprocessing channel $\Phi \circ \mu: S_n \to CS^1_{l,r}$. Now since $p \in CS^1_{k,g}$  we can set $q = \Phi(p)$ and we have that 
    \begin{equation}
        (\Phi \circ \hat{M})_{b|y} = \hat{\lambda} (\Phi \circ M)_{b|y} +(1- \hat{\lambda}) q_{b|y} \one_{K}
    \end{equation}
    for all $b \in [l]$ and $y \in [r]$. Since $\Phi \circ \hat{M}$ is compatible, it follows that $R_m(\Phi \circ M) \geq \hat{\lambda}$.    
\end{proof}

\begin{lem} \label{lem:octahedron-noise-geq-half}
    For any multimeter $E$ on $K_{\mathrm{octa}}$ consisting of dichotomic measurements it holds that $R_m(E) \geq \frac{1}{2}$.
\end{lem}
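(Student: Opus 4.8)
The plan is to deduce the bound for an arbitrary dichotomic multimeter $E$ from the single inequality $R_m(N)\geq \tfrac12$ for the fixed face-picking multimeter $N=\{N_1,N_2,N_3,N_4\}$ of \Cref{cube_n_octa}, and then to prove that inequality by writing down an explicit joint measurement for the $\tfrac12$-noisy version of $N$.

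For the reduction, let $E$ be a multimeter on $K_{\mathrm{octa}}$ all of whose $g$ measurements are dichotomic. By \Cref{lem:simulate-binary-octa} each measurement of $E$ is classically simulable by $N$; concatenating these simulation schemes (the classical data simply carry the extra label of which measurement of $E$ is being simulated) shows that $E$ itself is classically simulable by $N$. By \Cref{cor:simulation-factorization} this means $E=\Phi\circ f_N$ for some channel $\Phi: CS^1_{2,4}\to CS^1_{2,g}$, where $f_N: K_{\mathrm{octa}}\to CS^1_{2,4}$ is the channel associated to $N$. Then \Cref{lem:simulation-increase-robustness} yields $R_m(E)=R_m(\Phi\circ f_N)\geq R_m(N)$, so everything reduces to showing $R_m(N)\geq \tfrac12$.

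To prove $R_m(N)\geq\tfrac12$ I would work in a concrete model: since compatibility robustness is invariant under affine isomorphisms of the state space, take $K_{\mathrm{octa}}=\operatorname{conv}\{\pm e_1,\pm e_2,\pm e_3\}\subset\Rnum^3$. Its effect cone is the cone over the cube, so the indecomposable effects are exactly the functionals $v\mapsto\tfrac12(1+\langle s,v\rangle)$ with $s\in\{\pm 1\}^3$, and the four face measurements are $N_i=\{\tfrac12(\one\pm\langle s_i,\cdot\rangle)\}$ with $s_1,\dots,s_4$ one representative of each antipodal pair in $\{\pm1\}^3$. Adding $\tfrac12$ white noise with the trivial distribution $p_{\cdot|i}=(\tfrac12,\tfrac12)$ replaces $N_i$ by the dichotomic measurement with effects $\tfrac12\pm\tfrac14\langle s_i,\cdot\rangle$. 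I claim these four are jointly measurable via the eight-outcome POVM $G=\{G_s\}_{s\in\{\pm1\}^3}$, $G_s:=\tfrac18(\one+\langle s,\cdot\rangle)$, which is a valid measurement because $\langle s,v\rangle\geq -\lVert v\rVert_1\geq -1$ on $K_{\mathrm{octa}}$ and $\sum_{s\in\{\pm1\}^3}s=0$. The post-processing is the majority rule: for measurement $i$, output $+$ iff the sampled $s$ agrees with $s_i$ in at least two coordinates, i.e.\ iff $\langle s,s_i\rangle>0$. Using the elementary counts $\#\{s\in\{\pm1\}^3:\langle s,s_i\rangle>0\}=4$ and $\sum_{s:\langle s,s_i\rangle>0}s=2s_i$, one gets $\sum_{s:\langle s,s_i\rangle>0}G_s=\tfrac12\one+\tfrac14\langle s_i,\cdot\rangle$, exactly the noisy effect; hence the $\tfrac12$-noisy $N$ is compatible and $R_m(N)\geq\tfrac12$.

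The reduction is routine once \Cref{lem:simulate-binary-octa}, \Cref{lem:simulation-increase-robustness} and \Cref{cor:simulation-factorization} are in hand. The genuinely delicate step is guessing the joint measurement for the noisy $N$: the naive ``product'' joint measurement with outcomes $\vec a\in\{\pm1\}^4$ and effects $\tfrac{1}{16}\bigl(1+\lambda\sum_i a_i\langle s_i,\cdot\rangle\bigr)$ only certifies $\lambda=\tfrac14$, because the all-aligned outcome forces $\lambda\,\lVert\sum_i s_i\rVert_\infty\leq 1$. One therefore has to use instead the symmetric POVM supported on all eight indecomposable effects together with the majority-vote post-processing above, and verify the two small identities about $\{s:\langle s,s_i\rangle>0\}$, which are the real content of the estimate.
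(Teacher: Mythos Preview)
Your proof is correct and follows the same overall structure as the paper: first reduce to the fixed multimeter $N$ via \Cref{lem:simulate-binary-octa} and \Cref{lem:simulation-increase-robustness}, then exhibit a joint measurement certifying $R_m(N)\geq\tfrac12$.

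The only genuine difference is in the joint-measurement step. The paper does not use your symmetric eight-outcome measurement with majority voting; instead it exploits the affine relation $e_1+e_2+e_3+e_4=2\,\one_{K_{\mathrm{octa}}}$ (which holds for a suitable choice of one face from each antipodal pair) to write down a \emph{four}-outcome joint measurement $G=\{e_1/2,\,e_2/2,\,e_3/2,\,\one-\sum_{i=1}^3 e_i/2\}$, and uses the non-uniform trivial noise $p_{\cdot|i}=(0,1)$ rather than white noise. This is slightly more economical but relies on the special relation among the chosen faces. Your construction is more symmetric, works for any choice of representatives $s_i$, and makes the link to the indecomposable effects of the cube explicit; the small identities $\#\{s:\langle s,s_i\rangle>0\}=4$ and $\sum_{s:\langle s,s_i\rangle>0}s=2s_i$ that you flag are indeed the whole content of that step. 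Either route proves the lemma.
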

\begin{proof}
     Consider the dichotomic measurements $N_1, N_2, N_3, N_4$ on the octahedron as visualized in \Cref{cube_n_octa} and let us denote the defining effects for them as $e_1, e_2,e_3,e_4$.
    A joint measurement for the noisy versions of these measurements can be given by the measurement
    \begin{equation}
        G = \left\{\frac{e_1}{2},\,\frac{e_2}{2},\,\frac{e_3}{2},\,\one_{K_{\mathrm{octa}}} - \sum_{i=1}^3 \frac{e_i}{2} \right \}\,,
    \end{equation}
    as it can be used to obtain the noisy measurements $\{\frac{e_i}{2}, \one_{K_{\mathrm{octa}}}- \frac{e_i}{2}\}$ for all $i\in[4]$.
    We rewrite each of these measurements as $\{\frac{1}{2}e_i + \frac{1}{2} \cdot 0 \cdot \one_{K_{\mathrm{octa}}}, \frac{1}{2}(\one_{K_{\mathrm{octa}}} - e_i) + \frac{1}{2} \cdot 1 \cdot \one_{K_{\mathrm{octa}}}\}$ so that they correspond to the noisy measurements $\left\{\lambda N_i + (1-\lambda) T_i\right\}$, where we have $\lambda = \frac{1}{2},  T_{+|i}=0,\;T_{-|i}=\one_{K_{\mathrm{octa}}}$ for all $i \in [4]$. Thus, for the multimeter $N = \{N_1,N_2,N_3,N_4\}$ we have that $R_m(N) \geq \frac{1}{2}$.
    Since by \Cref{lem:simulate-binary-octa} every dichotomic multimeter $E$ can be classically simulated by $N$, and since according to \Cref{lem:simulation-increase-robustness} classical simulation cannot decrease the robustness, we have that $R_m(E)  \geq R_m(N)\geq \frac{1}{2}$ for any dichotomic multimeter $E$.
\end{proof}

Now we can finally provide our counterexample.

\begin{prop}
    Let $K_{\mathrm{octa}}$ be the state space of an octahedron. There exists a steering assemblage $\sigma = \{\sigma_{a|x}\}_{a,x} \subset V(K_{\mathrm{octa}})^+$ which does not have a realization in terms of a multimeter on $K_{\mathrm{octa}}$ and a state in $K_{\mathrm{octa}} \otimes K_{\mathrm{octa}}$ for any tensor product $\otimes$.
\end{prop}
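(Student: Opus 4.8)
The strategy is to use the cube--octahedron duality (\Cref{cube_n_octa}) to turn the \emph{maximally incompatible} cube multimeter of \Cref{lem:cube-robustness-eq-third} into an assemblage on $K_{\mathrm{octa}}$, and then to show that any realization of this assemblage on $K_{\mathrm{octa}}$ would force the cube multimeter's compatibility robustness to obey the universal lower bound $\tfrac12$ for dichotomic multimeters on the octahedron from \Cref{lem:octahedron-noise-geq-half}. Concretely, let $\bar\sigma$ be the center of $K_{\mathrm{octa}}$, which lies in its relative interior, so that $(K_{\mathrm{octa}})^\ast_{\bar\sigma}$ is a state space affinely isomorphic to $K_{\mathrm{cube}}$. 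Let $M=\{M_1,M_2,M_3\}$ be the face-picking multimeter on $K_{\mathrm{cube}}$, so $R_m(M)=\tfrac13$. Using the identification $A(K_{\mathrm{cube}})^+ = A((K_{\mathrm{octa}})^\ast_{\bar\sigma})^+ = V(K_{\mathrm{octa}})^+$ from \Cref{sec:maps-between-GPTs}, the effects $M_{a|x}$ are elements of $V(K_{\mathrm{octa}})^+$, and since $\sum_a M_{a|x}=\one_{K_{\mathrm{cube}}}$ equals the order unit $\bar\sigma$ of $(K_{\mathrm{octa}})^\ast_{\bar\sigma}$, setting $\sigma_{a|x}:=M_{a|x}$ (for $a\in[2]$, $x\in[3]$) defines an assemblage on $K_{\mathrm{octa}}$ with average state $\bar\sigma$. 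By the correspondence recalled at the beginning of \Cref{sec:steering} (see \cref{eq:Phi-steering-2}), the steering multimeter $\Phi^\ast$ of $\sigma$ on $(K_{\mathrm{octa}})^\ast_{\bar\sigma}=K_{\mathrm{cube}}$ has $(a,x)$-effect $\alpha\mapsto\alpha(\sigma_{a|x})=\alpha(M_{a|x})$, which is exactly the effect $M_{a|x}$ acting on $K_{\mathrm{cube}}$; hence $\Phi^\ast = M$.

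Next I would suppose, for contradiction, that $\sigma$ admits a realization $\sigma_{a|x}=(\id\otimes N_{a|x})(w)$ for some tensor product $\otimes$, some state $w\in K_{\mathrm{octa}}\otimes K_{\mathrm{octa}}\subseteq K_{\mathrm{octa}}\tmax K_{\mathrm{octa}}$, and some multimeter $N\colon K_{\mathrm{octa}}\to CS^1_{2,3}$ (necessarily a triple of dichotomic measurements, since $k=2$). Define $R\colon A(K_{\mathrm{octa}})\to V(K_{\mathrm{octa}})$ by $R(\alpha)=(\alpha\otimes\id)(w)$. Because $w\in K_{\mathrm{octa}}\tmax K_{\mathrm{octa}}$ the map $R$ is positive, and because $(\id\otimes\one_{K_{\mathrm{octa}}})(w)=\sum_a\sigma_{a|x}=\bar\sigma$ we get $\one_{K_{\mathrm{octa}}}(R(\alpha))=\alpha(\bar\sigma)=1$ for every $\alpha\in(K_{\mathrm{octa}})^\ast_{\bar\sigma}=K_{\mathrm{cube}}$; hence $R\colon K_{\mathrm{cube}}\to K_{\mathrm{octa}}$ is a channel. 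Evaluating on $\alpha\in K_{\mathrm{cube}}$ gives $M_{a|x}(\alpha)=(\Phi^\ast)_{a|x}(\alpha)=\alpha(\sigma_{a|x})=(\alpha\otimes N_{a|x})(w)=N_{a|x}(R(\alpha))$, so that $M=\Phi^\ast = N\circ R$ as multimeters on $K_{\mathrm{cube}}$.

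Finally I would observe that pre-composition with a channel does not decrease compatibility robustness: if $(p_{\cdot|x})_x$ is such that $\{\lambda N_{\cdot|x}+(1-\lambda)p_{\cdot|x}\one_{K_{\mathrm{octa}}}\}_x$ is compatible, say as post-processings of a joint measurement $G$ on $K_{\mathrm{octa}}$, then composing everything with $R$ on the right and using $\one_{K_{\mathrm{octa}}}\circ R=\one_{K_{\mathrm{cube}}}$ shows that $\{\lambda M_{\cdot|x}+(1-\lambda)p_{\cdot|x}\one_{K_{\mathrm{cube}}}\}_x$ is compatible with joint measurement $G\circ R$ on $K_{\mathrm{cube}}$; hence $R_m(M)\geq R_m(N)$. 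But $N$ is a dichotomic multimeter on $K_{\mathrm{octa}}$, so $R_m(N)\geq\tfrac12$ by \Cref{lem:octahedron-noise-geq-half}, while $R_m(M)=\tfrac13$ by \Cref{lem:cube-robustness-eq-third} --- a contradiction. Since the only property of $w$ used was membership in the maximal tensor product, this rules out a realization for \emph{every} tensor product, completing the proof.

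I expect the main obstacle to be the bookkeeping around the dual state space: keeping straight in which vector space each object lives, reconciling "$M_{a|x}$ as an effect on the cube" with "$M_{a|x}=\sigma_{a|x}$ as an unnormalized state on the octahedron", and --- most importantly --- verifying that the map $R$ induced by the bipartite state $w$ is a genuine \emph{channel} $K_{\mathrm{cube}}\to K_{\mathrm{octa}}$ rather than merely a positive map. This is exactly the point where the no-signaling/average-state condition $\sum_a\sigma_{a|x}=\bar\sigma$ is essential, and it is also the step that requires $\bar\sigma$ to be in the relative interior of $K_{\mathrm{octa}}$ (guaranteed here since $\bar\sigma$ is the center).
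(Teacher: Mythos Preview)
Your proposal is correct and follows essentially the same approach as the paper: the same cube--octahedron duality, the same assemblage (the cube face-picking effects read as unnormalized octahedron states), and the same contradiction between $R_m(M)=\tfrac13$ (\Cref{lem:cube-robustness-eq-third}) and the lower bound $\tfrac12$ for dichotomic multimeters on the octahedron (\Cref{lem:octahedron-noise-geq-half}). The only cosmetic difference is that you bypass the paper's intermediate steering robustness $R_s(\sigma)$ by packaging the argument as a factorization $M=N\circ R$ through the channel $R(\alpha)=(\alpha\otimes\id)(w)$ and invoking monotonicity of $R_m$ under channel pre-composition; the paper instead shows $R_s(\sigma)\leq R_m(M)$ and $R_s(\sigma)\geq R_m(E)$ separately, but the underlying computation is the same.
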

\begin{proof}
    It is a well-known fact that the dual of a cube is an octahedron and vice versa.
    Consequently, $(V(K_{\mathrm{cube}})^+)^* = V(K_{\mathrm{octa}})^+$ and \Cref{cone_facts} implies $(V(K_{\mathrm{octa}})^+)^* = V(K_{\mathrm{cube}})^+$.

    We start by defining the robustness for assemblages $\sigma = \{\sigma_{a|x}\}_{a \in [k], x \in [g]}$, $\sum_{a = 1}^k \sigma_{a|x} = \bar \sigma$, as
    \begin{equation}
        R_s(\sigma) = \max \left\{\mu \in [0,1] \, : \, \exists \,(p_{\cdot|x})_{x\in[g]} \ \text{ s.t.} \ \left\{\mu \sigma_{a|x} + (1-\mu) p_{a|x} \bar \sigma\right\}_{a,x} \text{ has an LHS model}\right\} \, .
    \end{equation}
    Let us consider the assemblage on the octahedron given by the (unnormalized) opposing extreme points embedded in a hyperplane in a 4-dimensional space, i.e., 
     \begin{align}
        \sigma_{1|1}&=\frac{1}{2}(1,0,0,1)\,, & \sigma_{1|2}&=\frac{1}{2}(0,0,1,1)\,, &
        \sigma_{1|3}&=\frac{1}{2}(0,1,0,1)\,,  \\
        \sigma_{2|1}&=\frac{1}{2}(-1,0,0,1)\,, & \sigma_{2|2}&=\frac{1}{2}(0,0,-1,1) \, ,&
        \sigma_{2|3}&=\frac{1}{2}(0,-1,0,1)\,.
    \end{align}
    Here, the notation for the extreme points is as in \Cref{cube_n_octa}.
    We can identify this assemblage on the octahedron with measurements on the cube: if we consider the cube embedded on a hyperplane of a 4-dimensional space as $K_{\mathrm{cube}} = \{(x,y,z,1) \, : \,-1 \leq x,y,z \leq 1\} $ it follows that $\one_{K_{\mathrm{cube}}} = \bar \sigma =(0,0,0,1)$, and also that now $\{\sigma_{a|x}\}_{a\in [2], x \in [3]}$ are the only non-trivial ray-extremal effects on $K_{\mathrm{cube}}$ and that $E(K_{\mathrm{cube}}) = \mathrm{conv}(\left\lbrace \{\sigma_{a|x}\}_{a \in [2], x \in [3]}, 0, \one_{K_{\mathrm{cube}}}) \right\rbrace)$. As the ray-extremal effects (in polytope state spaces) are in one-to-one correspondence with the facets of the state space \cite[Proposition 16]{Heinosaari2019nofreeinformation}, then considering $\sigma$ as a multimeter we can identify $\sigma_{\cdot|1}$ with $M_1$, $\sigma_{\cdot|2}$ with $M_2$ and $\sigma_{\cdot|3}$ with $M_3$ from \Cref{cube_n_octa}. Then from \Cref{lem:cube-robustness-eq-third} it follows that $R_m(\sigma)=\frac{1}{3}$.
    
    Now, if $R_s(\sigma) = \lambda$, this means that there exists a conditional probability distribution $p = (p_{\cdot|x})_{x \in [3]}$ such that $\{\lambda \sigma_{\cdot|x} +(1-\lambda)p_{\cdot|x} \bar \sigma\}_{x \in [3]}$ has an LHS model. By \Cref{def:lhs_model} there exists $n \in \nat$, an ensemble $\varrho = \{\varrho_c\}_{c \in [n]} \subset V(K_{\mathrm{octa}})^+$ with $\bar \varrho :=\sum_{c=1}^n \varrho_c  = \bar \sigma$ and a conditional probability distribution $\nu = (\nu_{\cdot|c,x})_{c \in [n], x \in [3]}$ on [2] such that $\lambda \sigma_{a|x}+(1-\lambda)p_{a|x} \bar \sigma = \sum_{c=1}^n \nu_{a|c,x} \varrho_c$ for all $a \in [2]$ and $x \in [3]$. Now, since $\{\varrho_c\}_{c\in[n]} \subset V(K_{\mathrm{octa}})^+ = (V(K_{\mathrm{cube}})^+)^*$ and $\bar \varrho = \sum_c \varrho_c = \one_{K_{\mathrm{cube}}}$, it follows that $\varrho$ is a measurement on $K_{\mathrm{cube}}$. Thus, the multimeter $\{\lambda \sigma_{\cdot|x} +(1-\lambda)p_{\cdot|x} \one_{K_{\mathrm{cube}}}\}_{x \in [3]}$ is compatible, meaning that $R_m(\sigma)  \geq \lambda$ so that $R_s(\sigma) \leq R_m(M) = \frac{1}{3}$.
    
    Suppose now that $E: K_{\mathrm{octa}} \to \mathrm{CS}_{2,3}^1$ is a multimeter and $w \in K_{\mathrm{octa}} \otimes K_{\mathrm{octa}}$ such that
    \begin{equation}
        \sigma_{a|x} = (E_{a|x} \otimes \id)(w) \qquad \forall a \in [2],~x \in [3] \, .
    \end{equation}
    Then clearly $\bar \sigma = (\one_{K_{\mathrm{octa}}} \otimes \id)(w)$. Let $R_m(E) = \mu$ so that there exists a measurement $F = \{F_d\}_{d=1}^m$ on $K_{\mathrm{octa}}$ and conditional probability distributions $q = (q_{\cdot|x})_{x \in [3]}$ and $\eta = \{\eta_{\cdot|d,x}\}_{d \in [m], x \in [3]}$ on [2] such that
    \begin{equation}
        \mu E_{a|x} +(1-\mu)q_{a|x} \one_{K_{\mathrm{octa}}} = \sum_{d=1}^m \eta_{a|d,x} F_d
    \end{equation}
    for all $a \in [2]$ and $x \in [3]$. Now 
    \begin{align}
        \mu \sigma_{a|x} +(1-\mu) q_{a|x} \bar \sigma  = ((\mu E_{a|x} + (1-\mu)q_{a|x}\one_{K_{\mathrm{octa}}}) \otimes \id)(w) = \sum_{d=1}^m \eta_{a|d,x} (F_d \otimes \id)(w) 
    \end{align}
    for all $a \in [2]$ and $x \in [3]$. If we define $\kappa_d = (F_d \otimes \id)(w) \in V(K_{\mathrm{octa}})^+$, we see that $\bar \kappa:= \sum_d \kappa_d = \bar \sigma$ and thus the assemblage $\{\mu \sigma_{a|x} +(1-\mu) q_{a|x} \bar \sigma\}_{a \in [2],x \in [3]}$ has an LHS model. This shows that $R_s(\sigma) \geq \mu$. Finally, by \Cref{lem:octahedron-noise-geq-half} we have that $R_m(E) = \mu \geq \frac{1}{2}$ so that also $R_s(\sigma) \geq \frac{1}{2}$. This contradicts our earlier observation that $R_s(\sigma) \leq \frac{1}{3}$.

    Thus, the presented assemblage $\sigma$ on $K_{\mathrm{octa}}$ cannot have an realization in terms of a multimeter on $K_{\mathrm{octa}}$ and a state in $K_{\mathrm{octa}} \otimes K_{\mathrm{octa}}$ for any tensor product $\otimes$.
\end{proof}

\subsection{Factorizations of maps preparing ensembles}
For multimeters, we considered different factorizations of the associated map and related them to the compatibility of measurements, $K_B$-simulability and classical simulability. In this section, our aim is similar, but this time we consider a map that prepares the assemblage and then factorize it.

Now assume that you are given a bipartite state space $K_A \otimes K_B$ with an appropriate tensor product $ \otimes$ and a state $\varrho_{AB} \in K_A  \otimes K_B$. We recall that one way to prepare an assemblage from a bipartite state is to apply a multimeter $M$ to one of the systems, e.g., the system $K_A$.  This yields
\begin{equation}
    q_{a|x}\varrho_{a|x} = (M_{a|x} \otimes \id)(\varrho_{AB})
\end{equation}
with $\varrho_{a|x} \in K_B$ and $\sum_{a=1}^k q_{a|x}\varrho_{a|x} =\bar \varrho$ for all $x \in [g]$. Let's assume now that the multimeter $M$ is in addition $K'_A$-simulable, i.e., there exists an instrument $\Phi$ with operations $\Phi_\lambda: V(K_A)^+ \to V(K_A')^+$, and a multimeter $N = \{N_{\cdot|x,\lambda}\}_{x \in [g], \lambda \in [\Lambda]}$ on $K_A'$ such that 
\begin{equation}
    M_{a|x} = \sum_{\lambda =1}^\Lambda\Phi^\ast_\lambda(N_{a|x,\lambda})
\end{equation}
 for all $a \in [k]$ and $x \in [g]$. If we have moreover a family of channels $\Psi_\lambda: K_B \to K_B^\prime$ that can access the outcome $\lambda$ of the instrument $\Phi$, then
\begin{equation}
    q_{a|x}\varrho_{a|x} = \sum_{\lambda = 1}^\Lambda \Psi_\lambda(\sigma_{a, \lambda|x}) \qquad \forall a \in [k], \, x \in [g] \, ,
\end{equation}
where the $\sigma_{a, \lambda|x}$ are subnormalized states defined as $\sigma_{a, \lambda|x} = ((N_{a|x,\lambda}\circ \Phi_\lambda) \otimes \id)(\varrho_{AB})$. Hence, using the properties of the multimeter and the instrument, $\sum_{a=1}^k \sigma_{a, \lambda|x} = \bar \sigma_\lambda$ for all $x \in [g]$, where $\bar \sigma_\lambda$ is again a subnormalized state, and $\sum_{\lambda =1}^\Lambda \bar \sigma_\lambda = \bar \sigma \in K_B^\prime$. This motivates the following definition:

\begin{defi} \label{defi:K-simulable-steering}
    Let $\{q_{a|x}, \varrho_{a|x}\}_{a \in [k], x \in [g]}$ be an assemblage with average state $\bar \varrho \in K_A$. The assemblage is \emph{$K_B$-simulable} if there exists a finite number of outcomes $\Lambda$, a family of channels $\Psi_\lambda: K_B \to K_A$, and a set of subnormalized states $\{\sigma_{a, \lambda|x}\}_{a \in [k], \lambda \in [\Lambda], x \in [g]} \subset V(K_B)^+$ with $\sum_{a=1}^k \sigma_{a, \lambda|x} = \bar \sigma_\lambda \in V(K_B)^+$ for all $x \in [g]$, and $\sum_{\lambda =1}^\Lambda \bar \sigma_\lambda = \bar \sigma \in K_B$, such that 
\begin{equation}
 q_{a|x}\varrho_{a|x} = \sum_{\lambda = 1}^\Lambda \Psi_\lambda(\sigma_{a,  \lambda|x}) \qquad \forall a \in [k], \, x \in [g] \, .
\end{equation}
\end{defi}

As in the case of $K_B$-simulability of multimeters, we can also characterize $K_B$-simulability of assemblages in terms of a particular factorization of the associated map.

\begin{thm} \label{thm:steering-K-simulable}
 Let $\{q_{a|x}, \varrho_{a|x}\}_{a \in [k], x \in [g]}$ be an assemblage with average state $\bar \varrho \in K_A$ and  associated map $\Phi: (CS^\ast_{k,g})^+ \to V(K_A)^+$ such that $\Phi(\one_{CS^1_{k,g}}) = \bar \varrho$. Then this assemblage is $K_B$-simulable if and only if there exists a finite number of outcomes $\Lambda$, a map $\Psi_1: (CS^\ast_{k,g})^+ \to V(K_B \dot{\otimes} S_\Lambda)^+$ with $\Psi_1(\one_{CS^1_{k,g}})\in K_B \tmin S_\Lambda$ and a channel $\Psi_2: K_B \dot \otimes S_\Lambda \to K_A$ such that the following diagram commutes:
\begin{equation}
    \begin{tikzcd}
    (CS^\ast_{k,g})^+ \arrow[rd, "\Psi_1"] \arrow[rr, "\Phi"] & & V(K_A)^+ \\
    & V(K_B \dot{\otimes} S_\Lambda)^+ \arrow[ru, "\Psi_2"] &
    \end{tikzcd}  
\end{equation}
\end{thm}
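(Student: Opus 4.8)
The plan is to translate the equivalence through two dictionaries that are already available: the correspondence from \cite[Theorem 4]{jencova2018incompatible} — recorded around \cref{eq:assemblages-as-tensors} and \cref{eq:Phi-steering-1,eq:Phi-steering-2} — between assemblages on a state space $K$ and positive maps $(CS^\ast_{k,g})^+ \to V(K)^+$ sending $\one_{CS^1_{k,g}}$ to the average state; and the elementary repackaging used already in \Cref{lemma:1} and in the discussion of instruments in \Cref{sec:prelim-meters}, whereby an element of the minimal tensor cone $V(K_B)^+ \tmin S_\Lambda^+$ is the same thing as a tuple $(w_\lambda)_{\lambda \in [\Lambda]}$ of elements $w_\lambda \in V(K_B)^+$ via $\sum_\lambda w_\lambda \otimes \delta_\lambda$, and likewise a state of $K_B \dot{\otimes} S_\Lambda$ is such a tuple of subnormalized states with $\sum_\lambda \one_{K_B}(w_\lambda) = 1$. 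In this light the statement is the preparation-side counterpart of \Cref{thm:compression-as-factorization}, and I expect the proof to be bookkeeping once the dictionaries are in place.

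For the forward direction, assume the assemblage is $K_B$-simulable and fix the data of \Cref{defi:K-simulable-steering}: a number $\Lambda$, channels $\Psi_\lambda \colon K_B \to K_A$, and subnormalized states $\sigma_{a,\lambda|x} \in V(K_B)^+$ with $\bar\sigma_\lambda := \sum_a \sigma_{a,\lambda|x}$ independent of $x$ and $\bar\sigma := \sum_\lambda \bar\sigma_\lambda \in K_B$. I define $\Psi_1 \colon (CS^\ast_{k,g})^+ \to V(K_B \dot{\otimes} S_\Lambda)^+$ as the linear map fixed on the basis \cref{eq:effect-basis} by $\Psi_1(\one_{CS^1_{k,g}}) = \sum_\lambda \bar\sigma_\lambda \otimes \delta_\lambda$ and $\Psi_1(m_a^{(x)}) = \sum_\lambda \sigma_{a,\lambda|x} \otimes \delta_\lambda$ for $a \in [k-1]$. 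Linearity together with $\sum_a \sigma_{a,\lambda|x} = \bar\sigma_\lambda$ then forces $\Psi_1(m_k^{(x)}) = \sum_\lambda \sigma_{k,\lambda|x} \otimes \delta_\lambda$, so $\Psi_1$ maps every extreme ray $m_a^{(x)}$ of $(CS^+_{k,g})^\ast$ into $V(K_B)^+ \tmin S_\Lambda^+$ and is hence positive, while $\Psi_1(\one_{CS^1_{k,g}}) \in K_B \tmin S_\Lambda$ because $\bar\sigma \in K_B$. I then bundle the channels into $\Psi_2 \colon K_B \dot{\otimes} S_\Lambda \to K_A$, $\Psi_2(\tau \otimes \delta_\lambda) := \Psi_\lambda(\tau)$ extended linearly; it is positive and normalization-preserving because each $\Psi_\lambda$ is. Finally $\Psi_2 \circ \Psi_1$ agrees with $\Phi$ on \cref{eq:effect-basis}: on $m_a^{(x)}$ it gives $\sum_\lambda \Psi_\lambda(\sigma_{a,\lambda|x}) = q_{a|x}\varrho_{a|x} = \Phi(m_a^{(x)})$ and on $\one_{CS^1_{k,g}}$ it gives $\sum_\lambda \Psi_\lambda(\bar\sigma_\lambda) = \sum_a q_{a|x}\varrho_{a|x} = \bar\varrho = \Phi(\one_{CS^1_{k,g}})$, hence $\Phi = \Psi_2 \circ \Psi_1$.

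For the converse, given $\Lambda$, a map $\Psi_1$ with $\Psi_1(\one_{CS^1_{k,g}}) \in K_B \tmin S_\Lambda$ and a channel $\Psi_2 \colon K_B \dot{\otimes} S_\Lambda \to K_A$ with $\Phi = \Psi_2 \circ \Psi_1$, I run the construction backwards. Using that every element of $V(K_B)^+ \tmin S_\Lambda^+$ has a unique decomposition $\sum_\lambda w_\lambda \otimes \delta_\lambda$ with $w_\lambda \in V(K_B)^+$, I read off $\bar\sigma_\lambda$ from $\Psi_1(\one_{CS^1_{k,g}})$ and $\sigma_{a,\lambda|x}$ from $\Psi_1(m_a^{(x)}) \in V(K_B \dot{\otimes} S_\Lambda)^+$; applying $\Psi_1$ to $\sum_a m_a^{(x)} = \one_{CS^1_{k,g}}$ and comparing $\delta_\lambda$-components gives $\sum_a \sigma_{a,\lambda|x} = \bar\sigma_\lambda$, and $\Psi_1(\one_{CS^1_{k,g}}) \in K_B \tmin S_\Lambda$ gives $\sum_\lambda \bar\sigma_\lambda =: \bar\sigma \in K_B$. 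Setting $\Psi_\lambda \colon K_B \to K_A$, $\Psi_\lambda(\tau) := \Psi_2(\tau \otimes \delta_\lambda)$, which is a channel, I get $\sum_\lambda \Psi_\lambda(\sigma_{a,\lambda|x}) = \Psi_2(\Psi_1(m_a^{(x)})) = \Phi(m_a^{(x)}) = q_{a|x}\varrho_{a|x}$ for all $a \in [k]$, $x \in [g]$ — for $a=k$ via $\Phi(m_k^{(x)}) = \Phi(\one_{CS^1_{k,g}}) - \sum_{a<k}\Phi(m_a^{(x)}) = q_{k|x}\varrho_{k|x}$ — which is exactly \Cref{defi:K-simulable-steering}.

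There is no genuinely hard step here; the two places needing attention are that $\Psi_1$ in the forward direction must be verified to be positive as a map of cones, which is why one checks it on the extreme rays $m_a^{(x)}$ of $(CS^+_{k,g})^\ast$ rather than on a general effect, and that the hypothesis $\Psi_1(\one_{CS^1_{k,g}}) \in K_B \tmin S_\Lambda$ — rather than mere membership in $V(K_B \dot{\otimes} S_\Lambda)^+$ — is exactly what lets the converse recover the normalization $\sum_\lambda \bar\sigma_\lambda \in K_B$ built into \Cref{defi:K-simulable-steering}; everything else is the $S_\Lambda$-component bookkeeping already exhibited in \Cref{lemma:1}.
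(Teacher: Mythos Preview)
Your proof is correct and follows essentially the same route as the paper's: both directions use the dictionary between assemblages and positive maps $\Psi_1(m_a^{(x)}) = \sum_\lambda \sigma_{a,\lambda|x}\otimes\delta_\lambda$, together with $\Psi_\lambda(\cdot) = \Psi_2(\cdot\otimes\delta_\lambda)$, and check the factorization on the basis of \cref{eq:effect-basis}. If anything, you are slightly more explicit than the paper in verifying positivity of $\Psi_1$ via the extreme rays $m_a^{(x)}$ of $(CS^+_{k,g})^\ast$, which the paper leaves implicit.
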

\begin{proof}
Let us assume the existence of the maps $\Psi_1$, $\Psi_2$ as in the commuting diagram. Let us consider the map $\Psi_1$ first and let us define $\Psi_1(m_a^{(x)}) =: \tilde \sigma_{a|x}$ for all $a \in [k]$ and $x \in [g]$. Since $\Psi_1(\one_{CS^1_{k,g}})=:\hat \sigma \in K_B \tmin S_\Lambda$, it follows that
\begin{equation}
    \hat \sigma = \sum_{a = 1}^k \tilde \sigma_{a|x} \qquad \forall x \in [g] \, .
\end{equation}
We can furthermore decompose uniquely
\begin{equation}
    \tilde \sigma_{a|x} = \sum_{\lambda = 1}^\Lambda \sigma_{a, \lambda |x} \otimes \delta_{\lambda} \qquad \forall a \in [k], \, \forall x \in [g] \, .
\end{equation}
Here, $\sigma_{a, \lambda|x} \in V(K_B)^+$ for all $a \in [k]$, $x \in [g]$, and $\lambda \in [\Lambda]$. We can likewise decompose uniquely $\hat \sigma = \sum_{\lambda = 1}^\Lambda \bar \sigma_\lambda \otimes \delta_{\lambda}$ with $\bar \sigma_\lambda \in V(K_B)^+$ for all $\lambda \in [\Lambda]$. Since both decompositions are unique, we conclude
\begin{equation}
    \forall x \in [g] \qquad \sum_{a = 1}^k \sigma_{a, \lambda|x} = \bar \sigma_\lambda \, , \qquad\sum_{\lambda =1}^\Lambda \bar \sigma_\lambda = (\id \otimes \one_{S_\Lambda})(\hat \sigma) =: \bar \sigma \in K_B \, .
\end{equation}
As $\Psi_2$ is a channel, so is $\Psi_\lambda(\cdot) := \Psi_2(\cdot \otimes \delta_\lambda)$ for all $\lambda \in [\Lambda]$. Combining \cref{eq:Phi-steering-1} and \cref{eq:Phi-steering-2} with $\Phi = \Psi_2 \circ \Psi_1$, we obtain that indeed 
\begin{equation}
 q_{a|x}\varrho_{a|x} = \sum_{\lambda = 1}^\Lambda \Psi_\lambda(\sigma_{a,  \lambda|x}) \qquad \forall a \in [k], \, x \in [g] \, ,
\end{equation}
which proves the converse. 
The other direction follows straightforwardly along the same lines, verifying the properties of $\Psi_1$ and $\Psi_2$. Since $\{\one_{CS^1_{k,g}}, m^{(x)}_a\}_{a \in [k-1], x \in [g]}$ is a basis for $CS_{k,g}^\ast$, the $\tilde \sigma_{a|x}$ and $\hat \sigma$ uniquely define a positive map $\Psi_1$ with $\Psi_1(\one_{CS^1_{k,g}}) \in K_B \otimes S_{\Lambda}$. Finally, we can set 
\begin{equation}
    \Psi_2(x) = \sum_{\lambda = 1}^\Lambda q_\lambda \Psi_\lambda(x_\lambda) \qquad \forall x \in K_B \tmin S_\Lambda \, ,
\end{equation}
since any such $x$ has a unique decomposition as $x = \sum_{\lambda = 1}^\Lambda q_\lambda x_\lambda \otimes \delta_\lambda$ for $(q_\lambda)_{\lambda \in [\Lambda]}$ a probability distribution and $x_\lambda \in K_B$ for all $\lambda \in [\Lambda]$. Thus, $\Psi_2$ defines a channel, which concludes the proof.
\end{proof}

Now by looking at the dual map $\Phi^*$, which we were able to interpret as a multimeter, we can prove a very similar factorization:
\begin{thm} \label{thm:steering-K-simulable-second-version}
    Let $\{q_{a|x}, \varrho_{a|x}\}_{a \in [k], x \in [g]}$ be an assemblage with average state $\bar \varrho \in K_A$  and  associated map $\Phi: (CS^\ast_{k,g})^+ \to V(K_A)^+$ such that $\Phi(\one_{CS^1_{k,g}}) = \bar \varrho$. Let $\bar \varrho \in K_A$ be in the relative interior of $K_A$.  Then, there exists a finite number of outcomes $\Lambda \in \mathds N$, a map $\Psi_1: (CS^\ast_{k,g})^+ \to V(K_B \dot{\otimes} S_\Lambda)^+$ with $\Psi_1(\one_{CS^1_{k,g}})=\bar \sigma \otimes q$ for a probability distribution $(q_\lambda)_{\lambda \in [\Lambda]}$ such that $q_\lambda > 0$ for all $\lambda \in [\Lambda]$ and $\bar \sigma$ in the relative interior of $K_B$, and a map $\Psi_2: V(K_B \dot{\otimes} S_\Lambda)^+ \to V(K_A)^+$ with $\Psi_2(\bar \sigma \otimes q) =\bar \varrho$ such that the following diagram commutes:
\begin{equation}
    \begin{tikzcd}
    (CS^\ast_{k,g})^+ \arrow[rd, "\Psi_1"] \arrow[rr, "\Phi"] & & V(K_A)^+ \\
    & V(K_B \dot{\otimes} S_\Lambda)^+ \arrow[ru, "\Psi_2"] &
    \end{tikzcd}  
\end{equation}
if and only if the multimeter $\Phi^\ast : (K_A)_{\bar \varrho}^\ast \to CS_{k,g}^1$ is $(K_B)_{\bar \sigma}^\ast$-simulable.
\end{thm}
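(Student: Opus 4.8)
The plan is to derive this as the ``dual'' of \Cref{thm:compression-as-factorization}: the statement relates a factorization of the assemblage map $\Phi$ to a $(K_B)^\ast_{\bar\sigma}$-simulability of the dual multimeter map $\Phi^\ast$, and since $(\cdot)^\ast$ already turns one picture into the other, the content should be almost formal once the relevant dual state spaces are correctly identified. First I would recall, from the discussion preceding \Cref{rem:bar-rho-relative-interior}, that because $\bar\varrho$ lies in the relative interior of $K_A$ the map $\Phi^\ast\colon A(K_A)^+\to CS^+_{k,g}$ restricts to an honest multimeter $\Phi^\ast\colon (K_A)^\ast_{\bar\varrho}\to CS^1_{k,g}$ on the state space of the GPT $(A(K_A),A(K_A)^+,\bar\varrho)$, and that because $\bar\sigma$ lies in the relative interior of $K_B$ the set $(K_B)^\ast_{\bar\sigma}$ is a genuine compact state space; keeping both hypotheses in force is exactly what makes \Cref{thm:compression-as-factorization} applicable. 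I would also record that tensoring with a simplex commutes with dualization and collapses the min/max distinction, so that $V(K_B\dot{\otimes}S_\Lambda)^+=V(K_B)^+\tmin S_\Lambda^+$ has $A(K_B\dot{\otimes}S_\Lambda)^+=A(K_B)^+\tmax S_\Lambda^+=A(K_B)^+\tmin S_\Lambda^+$, and consequently, building on \Cref{ex:dual-state-space-CM}, that for any probability vector $q$ with all $q_\lambda>0$ the dual state space $(K_B\dot{\otimes}S_\Lambda)^\ast_{\bar\sigma\otimes q}$ (in the sense of \cref{eq:dual-state-space}, which is compact since $\bar\sigma\otimes q$ is then in the relative interior of $K_B\dot{\otimes}S_\Lambda$) is isomorphic as a GPT state space to $(K_B)^\ast_{\bar\sigma}\dot{\otimes}S_\Lambda$, via the coordinatewise rescaling of the classical register by $q_\lambda^{-1}$. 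I would fix the corresponding positive linear isomorphism and henceforth treat these two spaces as identified.

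\textbf{From the factorization to simulability.} Assuming the diagram commutes, i.e.\ $\Phi=\Psi_2\circ\Psi_1$, I would dualize to obtain $\Phi^\ast=\Psi_1^\ast\circ\Psi_2^\ast$. Using $\Psi_2(\bar\sigma\otimes q)=\bar\varrho$ I would check that $\Psi_2^\ast$ maps $(K_A)^\ast_{\bar\varrho}$ into $(K_B\dot{\otimes}S_\Lambda)^\ast_{\bar\sigma\otimes q}$, because $\langle\Psi_2^\ast(\alpha),\bar\sigma\otimes q\rangle=\langle\alpha,\bar\varrho\rangle=1$ whenever $\alpha(\bar\varrho)=1$; hence (even though $\Psi_2$ itself need not be a channel) $\Psi_2^\ast$ is, after the identification above, a channel $(K_A)^\ast_{\bar\varrho}\to (K_B)^\ast_{\bar\sigma}\dot{\otimes}S_\Lambda$, that is, an instrument. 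Symmetrically, using $\Psi_1(\one_{CS^1_{k,g}})=\bar\sigma\otimes q$ I would check that $\Psi_1^\ast$ sends $(K_B\dot{\otimes}S_\Lambda)^\ast_{\bar\sigma\otimes q}$ into $CS^1_{k,g}$, so that it is a multimeter on $(K_B)^\ast_{\bar\sigma}\dot{\otimes}S_\Lambda$. Then $\Phi^\ast=\Psi_1^\ast\circ\Psi_2^\ast$ is exactly the factorization in \Cref{thm:compression-as-factorization}, so $\Phi^\ast$ is $(K_B)^\ast_{\bar\sigma}$-simulable.

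\textbf{From simulability to the factorization.} Conversely, if $\Phi^\ast$ is $(K_B)^\ast_{\bar\sigma}$-simulable, \Cref{thm:compression-as-factorization} supplies $\Lambda\in\nat$, an instrument $\Theta\colon (K_A)^\ast_{\bar\varrho}\to (K_B)^\ast_{\bar\sigma}\dot{\otimes}S_\Lambda$ and a multimeter $\tilde N\colon (K_B)^\ast_{\bar\sigma}\dot{\otimes}S_\Lambda\to CS^1_{k,g}$ with $\Phi^\ast=\tilde N\circ\Theta$. I would fix any probability vector $q$ on $[\Lambda]$ with $q_\lambda>0$, let $D$ be the positive linear isomorphism of $V(K_B\dot{\otimes}S_\Lambda)$ that rescales the $\lambda$-th classical slot by $q_\lambda$, and set $\Psi_1:=D\circ\tilde N^\ast$ and $\Psi_2:=\Theta^\ast\circ D^{-1}$; then $\Psi_2\circ\Psi_1=\Theta^\ast\circ\tilde N^\ast=(\tilde N\circ\Theta)^\ast=\Phi$ by finite-dimensional reflexivity, and positivity of $\Psi_1,\Psi_2$ is inherited. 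Since $\tilde N$ is a channel, $\tilde N^\ast$ is unital, so $\tilde N^\ast(\one_{CS^1_{k,g}})$ is the order unit $\bar\sigma\otimes(1,\dots,1)$ of $(K_B)^\ast_{\bar\sigma}\dot{\otimes}S_\Lambda$, whence $\Psi_1(\one_{CS^1_{k,g}})=D(\bar\sigma\otimes(1,\dots,1))=\bar\sigma\otimes q$; and $\Theta$ a channel makes $\Theta^\ast$ unital, whence $\Psi_2(\bar\sigma\otimes q)=\Theta^\ast(\bar\sigma\otimes(1,\dots,1))=\bar\varrho$. This is the required diagram.

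\textbf{Main obstacle.} The conceptual input is small; the delicate part is the dual-state-space bookkeeping. The crux I expect to spend time on is verifying that the dual of $K_B\dot{\otimes}S_\Lambda$ in the sense of \cref{eq:dual-state-space} is genuinely $(K_B)^\ast_{\bar\sigma}\dot{\otimes}S_\Lambda$ and not a larger object — this hinges on $S_\Lambda$ being a simplicial tensor factor, so that minimal and maximal tensor products coincide — and in tracking the normalization vector $q$ correctly: it is the strict positivity $q_\lambda>0$ that makes $D$ invertible and lines up the two order units $\bar\sigma\otimes q$ and $\bar\sigma\otimes(1,\dots,1)$. A secondary point to be careful with is that the two relative-interior hypotheses, on $\bar\varrho$ in $K_A$ and on $\bar\sigma$ in $K_B$, propagate through every dual state space that appears, so that all of them stay compact and \Cref{thm:compression-as-factorization} applies without modification.
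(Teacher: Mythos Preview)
Your proposal is correct and follows essentially the same route as the paper: dualize the factorization, identify $(K_B\dot\otimes S_\Lambda)^\ast_{\bar\sigma\otimes q}$ with $(K_B)^\ast_{\bar\sigma}\dot\otimes S_\Lambda$ via the rescaling coming from \Cref{ex:dual-state-space-CM}, verify that $\Psi_1^\ast$ and $\Psi_2^\ast$ become channels on the dual state spaces, and invoke \Cref{thm:compression-as-factorization}. Your use of the explicit rescaling isomorphism $D$ in the converse direction is exactly the paper's choice of ``any $q$ in the relative interior of $S_\Lambda$'' together with the identification $S_\Lambda\simeq (S_\Lambda)^\ast_q$, just made more concrete.
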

\begin{proof}
    As $\bar \sigma$ and $q$ are both in the relative interior of $K_B$ and $S_\Lambda$, respectively, we can define the state spaces $(K_B)_{\bar \sigma}^\ast$ and $(S_\Lambda)_q^\ast$. By \Cref{ex:dual-state-space-CM}, we can identify $(S_\Lambda)_q^\ast \simeq S_\Lambda$. We observe that $\Psi_1^\ast: (K_B)_{\bar \sigma}^\ast \tmin (S_\Lambda)_q^\ast \to CS_{k,g}^1$ is a channel, since for all probability distributions $(p_i)_i$ and $\alpha_i \in (K_B)_{\bar \sigma}^\ast$, $s_i \in (S_\Lambda)_q^\ast$, 
    \begin{equation}
        \left\langle \one_{CS_{k,g}^1}, \Psi_1^\ast\left(\sum_{i} p_i \alpha_i \otimes s_i\right) \right\rangle =  \sum_{i}\left\langle \Psi_1(\one_{CS_{k,g}^1}),  p_i \alpha_i \otimes s_i\right \rangle = \sum_{i} p_i \alpha_i(\bar \sigma) s_i(q) = 1 \,.
    \end{equation}
    Moreover, as $\Psi_2(\hat \sigma) = \bar \varrho$ with $\hat \sigma = \bar \sigma \otimes q$, we have that for all $\alpha \in (K_A)_{\bar \varrho}^\ast$, 
    \begin{equation}
        \langle \Psi_2^\ast(\alpha), \bar \sigma \otimes q \rangle = \alpha(\Psi_2(\hat \sigma)) = \alpha(\bar \varrho) = 1 \, ,
    \end{equation}
    hence $\Psi_2^\ast: (K_A)_{\bar \varrho}^\ast \to (K_B)_{\bar \sigma}^\ast \tmin (S_\Lambda)_q^\ast$ is a channel. The assertion then follows by \Cref{thm:compression-as-factorization}. It thus follows that the multimeter $\Phi^\ast$ is $(K_B)_{\bar \sigma}^\ast$-simulable.
    
    Conversely, for $(K_B)_{\bar \sigma}^\ast$ to be a state space $\bar \sigma$ has to be in the relative interior of $K_B$.  \Cref{thm:compression-as-factorization} guarantees the factorization $\Phi^\ast = \Psi_1^\ast \circ \Psi_2^\ast$ with channels 
    $\Psi_1^\ast: (K_B)_{\bar \sigma}^\ast \tmin S_\Lambda \to CS_{k,g}^1$ and 
    $\Psi_2^\ast: (K_A)_{\bar \varrho}^\ast \to (K_B)_{\bar \sigma}^\ast \tmin S_\Lambda$. Thus, $\Psi_1$ and $\Psi_2$ are both unital. Picking any $q$ in the relative interior of $S_\Lambda$, we can define $S_\Lambda \simeq (S_\lambda)_q^\ast$ such that $\one_{S_\Lambda} \simeq q$ (see  \Cref{ex:dual-state-space-CM}). Note also that $\one_{(K_B)_{\bar \sigma}^\ast} = \bar \sigma$. We recall that $A((K_A)_{\bar \varrho}^\ast)^+ = V(K_A)^+$ and $A((K_B)_{\bar \sigma}^\ast)^+ = V(K_B)^+$. Hence, $\Psi_1: (CS_{k,g}^\ast)^+ \to V(K_B)^+ \tmin V((S_\Lambda)_q^\ast)^+$ fulfills $\Psi_1(\one_{CS_{k,g}^1})=\bar \sigma \otimes q \in K_B \tmin (S_\Lambda)_q^\ast $ and $\Psi_2: V(K_B)^+ \tmin V((S_\Lambda)_q^\ast)^+ \to V(K_A)^+$ fulfills $\Psi_2(\bar \sigma \otimes q) = \bar \varrho$.
\end{proof}

\begin{remark}
    The main differences between the factorizations in \Cref{thm:steering-K-simulable} and \Cref{thm:steering-K-simulable-second-version} is that in \Cref{thm:steering-K-simulable} the map $\Psi_2$ is a channel, whereas we only require $\Psi_2(\bar \sigma \otimes q) = \bar \varrho$ in \Cref{thm:steering-K-simulable-second-version}. On the other hand, $\Psi_1(\one_{CS_{k,g}^1})=\bar \sigma \otimes q$ in \Cref{thm:steering-K-simulable-second-version}, whereas in \Cref{thm:steering-K-simulable} $\Psi_1(\one_{CS_{k,g}^1}) \in K_B \tmin S_\Lambda$ needs not be a pure tensor product. 
\end{remark}

\begin{remark} \label{rem:steering-simuable-to-quantum}
    Let us take a closer look at  \Cref{thm:steering-K-simulable-second-version} for the case of quantum theory. Seeing $\Phi^\ast=\{M_{\cdot|x}\}_{x \in [g]}$ as a multimeter on $(D(\Cnum^d))^\ast_{\bar \varrho}$, the factorization of $\Phi$ in \Cref{thm:steering-K-simulable-second-version} is equivalent to the existence of a finite number of outcomes $\Lambda$, of an instrument $\Psi: (D(\Cnum^d))^\ast_{\bar \varrho} \to (D(\Cnum^{d^\prime}))^\ast_{\bar \sigma} \dot{\otimes} S_\Lambda$ with operations $\Psi_\lambda: \mathrm{PSD}_d \to \mathrm{PSD}_{d^\prime}$, and of a multimeter $N = \{N_{\cdot|x,\lambda}\}_{x \in [g], \lambda \in [\Lambda]}$ of $g \cdot \Lambda$  measurements with $k$ outcomes on $(D(\Cnum^{d^\prime}))^\ast_{\bar \sigma}$ such that 
\begin{equation}
    M_{a|x} = \sum_{\lambda =1}^\Lambda\Psi^\ast_\lambda(N_{a|x,\lambda})
\end{equation}
for all $a \in [k]$ and $x \in [g]$. We will now relate this to the existence of ordinary quantum multimeters and instruments. \Cref{ex:dual-state-space-QM} tells us the shape of $(D(\Cnum^d))^\ast_{\bar \varrho}$ and $(D(\Cnum^{d^\prime}))^\ast_{\bar \sigma}$. Thus, there exists a multimeter $\{\tilde M_{\cdot|x}\}_{x \in [g]}$ on $D(\Cnum^d)$ such that 
    \begin{equation}
        \tilde M_{a|x}(\tau) = M_{a|x}\left(\bar \varrho^{-\frac{1}{2}} \tau \bar \varrho^{-\frac{1}{2}}\right) \qquad \forall a \in [k], \forall x \in [g], \forall \tau \in D(\Cnum^d) \, .
    \end{equation}
    Moreover, there exist positive maps $\tilde \Psi^\ast_\lambda: PSD_d \to PSD_{d^\prime}$ such that $\sum_{\lambda =1}^\Lambda \tilde \Psi_\lambda^\ast: D(\Cnum^d) \to D(\Cnum^{d^\prime})$ is a channel (although it is not a quantum channel, since it is not necessarily completely positive) and such that 
    \begin{equation}
        \tilde \Psi^\ast_\lambda(\tau) =\bar \sigma^{\frac{1}{2}} \Psi^\ast_\lambda(\bar \varrho^{-\frac{1}{2}} \tau \bar \varrho^{-\frac{1}{2}})\bar \sigma^{\frac{1}{2}} \qquad \forall \lambda \in [\Lambda], \forall \tau \in D(\Cnum^d) \, .
    \end{equation}
    Finally, there exists a multimeter $\{\tilde N_{\cdot|x, \lambda}\}_{x \in [g], \lambda \in [\Lambda]}$ on $D(\Cnum^{d^\prime})$ such that 
    \begin{equation}
        \tilde N_{a|x, \lambda}(\tau) = N_{a|x, \lambda}\left(\bar \sigma^{-\frac{1}{2}} \tau \bar \sigma^{-\frac{1}{2}}\right) \qquad \forall a \in [k], \forall x \in [g], \forall \lambda \in [\Lambda], \forall \tau \in D(\Cnum^d) \, .
    \end{equation}
    Hence, since $\bar \varrho$ and $\bar \sigma$ are invertible, the factorization of $\Phi$ is equivalent to 
    \begin{equation}
    \tilde M_{a|x} = \sum_{\lambda =1}^\Lambda\tilde \Psi^\ast_\lambda(\tilde N_{a|x,\lambda})
\end{equation}
for all $a \in [k]$ and $x \in [g]$.
\end{remark}

Similarly to the discussion that motivated the definition of $K_B$-simulable assemblages, we can instead assume the multimeter preparing the assemblage to be classically simulable. This motivates the following definition:
\begin{defi}
    Let $\{q_{a|x}, \varrho_{a|x}\}_{a \in [k], x \in [g]}$ be an assemblage with average state $\bar \varrho \in K$. We say that the assemblage can be \emph{classically simulated} (or is \emph{classically simulable}) by an assemblage $\{v_{b|y}, \sigma_{b|y}\}_{b \in [l], y \in [r]}$ with $\bar \sigma = \bar \varrho \in K$ if there exist conditional probability distributions $\pi = (\pi_{\cdot|x})_{x \in [g]}$ on $[r]$ and $\nu= (\nu_{\cdot|b,x,y})_{b \in [l], x \in [g], y\in [r]}$ on $[k]$ such that 
\begin{equation}
    q_{a|x} \varrho_{a|x} = \sum_{y=1}^r \pi_{y|x} \sum_{b=1}^l \nu_{a|b,x,y} v_{b|y} \sigma_{b|y} 
\end{equation}
for all $a \in [k]$ and $x \in [g]$.
\end{defi}
We can again find a characterization of classically simulable assemblages in terms of factorizations.

\begin{thm} \label{thm:classical-steering}
 Let $\{q_{a|x}, \varrho_{a|x}\}_{a \in [k], x \in [g]}$ be an assemblage with average state $\bar \varrho \in K$ and  associated map $\Phi: (CS^\ast_{k,g})^+ \to V(K)^+$ such that $\Phi(\one_{CS^1_{k,g}}) = \bar \varrho$. Then this assemblage is classically simulable by some assemblage $\{v_{b|y}, \sigma_{b|y}\}_{b \in [l], y \in [r]}$ with $\bar \sigma = \bar \varrho \in K$  if and only if there exist a map $\Psi_1: (CS^\ast_{k,g})^+ \to (CS_{l, r}^\ast)^+$ with $\Psi_1(\one_{CS^1_{k,g}})=\one_{CS^1_{l,r}}$ and a map $\Psi_2: (CS_{l, r}^\ast)^+ \to V(K)^+$ with $\Psi_2(\one_{CS^1_{l,r}}) \in K$ such that the following diagram commutes:
\begin{equation}
    \begin{tikzcd}
    (CS^\ast_{k,g})^+ \arrow[rd, "\Psi_1"] \arrow[rr, "\Phi"] & & V(K)^+ \\
    & (CS_{l, r}^\ast)^+ \arrow[ru, "\Psi_2"] &
    \end{tikzcd}  
\end{equation}
\end{thm}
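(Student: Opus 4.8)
The plan is to run the same ``duality dictionary'' in both directions, exactly as in the multimeter case (\Cref{cor:simulation-factorization}), now applied to the map $\Phi$ preparing the assemblage. The two ingredients are: (i) a positive map $\Psi_1 \colon (CS^\ast_{k,g})^+ \to (CS^\ast_{l,r})^+$ with $\Psi_1(\one_{CS^1_{k,g}}) = \one_{CS^1_{l,r}}$ is precisely the dual of a \emph{channel} $\Psi_1^\ast \colon CS_{l,r} \to CS_{k,g}$ between the $CS$-state spaces — positivity of $\Psi_1$ is positivity of $\Psi_1^\ast$, and $\Psi_1(\one_{CS^1_{k,g}}) = \one_{CS^1_{l,r}}$ is normalization-preservation of $\Psi_1^\ast$ — so by \Cref{thm:channel-polysimplices-characterization} it is exactly the data of a pair of conditional probability distributions $(\pi,\nu)$; and (ii) a positive map $\Psi_2 \colon (CS^\ast_{l,r})^+ \to V(K)^+$ with $\Psi_2(\one_{CS^1_{l,r}}) \in K$ is, via \cref{eq:assemblages-as-tensors} and the uniqueness statement of \Cref{lem:xi-to-phi}, exactly the data of an assemblage $\{v_{b|y},\sigma_{b|y}\}_{b\in[l],y\in[r]}$ with average state $\bar\sigma := \Psi_2(\one_{CS^1_{l,r}})$, characterized by $\Psi_2(m_b^{(y)}) = v_{b|y}\sigma_{b|y}$. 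Everything then reduces to comparing $\Phi$ and $\Psi_2 \circ \Psi_1$ on the fixed basis $\{\one_{CS^1_{k,g}}, m_a^{(x)}\}_{a\in[k-1],x\in[g]}$ of $CS^\ast_{k,g}$.

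For the ``only if'' direction, I would start from a classical simulation $q_{a|x}\varrho_{a|x} = \sum_{y=1}^r \pi_{y|x}\sum_{b=1}^l \nu_{a|b,x,y}\, v_{b|y}\sigma_{b|y}$ with $\bar\sigma = \bar\varrho$. Let $T \colon CS_{l,r} \to CS_{k,g}$ be the channel built from $(\pi,\nu)$ through \cref{eq:cl-sim-channel}, and set $\Psi_1 := T^\ast$; then $\Psi_1$ is positive, satisfies $\Psi_1(\one_{CS^1_{k,g}}) = \one_{CS^1_{l,r}}$, and dualizing \cref{eq:cl-sim-channel} gives $\Psi_1(m_a^{(x)}) = \sum_{y}\pi_{y|x}\sum_b \nu_{a|b,x,y}\, m_b^{(y)}$ for all $a\in[k]$, $x\in[g]$. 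Let $\Psi_2$ be the unique positive map of \Cref{lem:xi-to-phi} associated with the tensor $s_{l,\ldots,l}\otimes\bar\sigma + \sum_{y}\sum_{b=1}^{l-1} v_{b|y}\, e_b^{(y)}\otimes \sigma_{b|y} \in CS^1_{l,r}\tmax K$, so that $\Psi_2(\one_{CS^1_{l,r}}) = \bar\sigma \in K$ and $\Psi_2(m_b^{(y)}) = v_{b|y}\sigma_{b|y}$. On $\one_{CS^1_{k,g}}$ both $\Phi$ and $\Psi_2\circ\Psi_1$ give $\bar\varrho = \bar\sigma$ (using \cref{eq:Phi-steering-1}), and on $m_a^{(x)}$ both give $q_{a|x}\varrho_{a|x}$ (using \cref{eq:Phi-steering-2} and the simulation identity); since they agree on a basis, $\Phi = \Psi_2\circ\Psi_1$ and the diagram commutes.

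For the ``if'' direction, I would take $\Psi_1,\Psi_2$ as in the diagram. As noted above, $\Psi_1^\ast \colon CS_{l,r}\to CS_{k,g}$ is a channel, so \Cref{thm:channel-polysimplices-characterization} supplies $(\pi,\nu)$ with $\Psi_1(m_a^{(x)}) = \sum_{y}\pi_{y|x}\sum_b \nu_{a|b,x,y}\, m_b^{(y)}$. Likewise, $\Psi_2$ with $\Psi_2(\one_{CS^1_{l,r}}) \in K$ corresponds, through \cref{eq:assemblages-as-tensors} and \Cref{lem:xi-to-phi}, to an assemblage $\{v_{b|y},\sigma_{b|y}\}_{b\in[l],y\in[r]}$ with $\bar\sigma := \Psi_2(\one_{CS^1_{l,r}}) \in K$ and $\Psi_2(m_b^{(y)}) = v_{b|y}\sigma_{b|y}$. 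Evaluating $\Phi = \Psi_2\circ\Psi_1$ on $\one_{CS^1_{k,g}}$ yields $\bar\varrho = \bar\sigma$, and on $m_a^{(x)}$ yields $q_{a|x}\varrho_{a|x} = \sum_{y}\pi_{y|x}\sum_b \nu_{a|b,x,y}\, v_{b|y}\sigma_{b|y}$, which is precisely the statement that the assemblage is classically simulable by $\{v_{b|y},\sigma_{b|y}\}$ with $\bar\sigma = \bar\varrho$.

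I do not expect a genuine obstacle here; the only point that needs care is the duality bookkeeping — keeping straight which space is the dual of which, and verifying that the two hypotheses on $\Psi_1$ together say exactly ``$\Psi_1^\ast$ is a channel $CS_{l,r}\to CS_{k,g}$'' so that \Cref{thm:channel-polysimplices-characterization} applies verbatim, while the single hypothesis $\Psi_2(\one_{CS^1_{l,r}})\in K$ is precisely what makes $\Psi_2$ the map of an honest assemblage via \Cref{lem:xi-to-phi}. In particular, no regularity assumption on $\bar\varrho$ (such as lying in the relative interior of $K$) is needed, since the entire argument stays at the level of positive maps between cones and their associated tensors rather than passing to dual state spaces.
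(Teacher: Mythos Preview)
Your proposal is correct and follows essentially the same approach as the paper: both hinge on recognizing that $\Psi_1$ is the dual of a channel $CS^1_{l,r}\to CS^1_{k,g}$ so that \Cref{thm:channel-polysimplices-characterization} yields the conditional distributions $(\pi,\nu)$, and that $\Psi_2$ encodes an assemblage via $\Psi_2(m_b^{(y)})=v_{b|y}\sigma_{b|y}$, after which equality of $\Phi$ and $\Psi_2\circ\Psi_1$ on the basis $\{\one_{CS^1_{k,g}},m_a^{(x)}\}$ gives the simulation identity. The paper writes out the ``if'' direction in detail (including a separate check for $a=k$) and calls the converse straightforward, whereas you spell out both directions and package the $\Psi_2$--assemblage correspondence through \Cref{lem:xi-to-phi}; these are cosmetic differences only.
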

\begin{proof}
    Using \Cref{thm:channel-polysimplices-characterization} and duality of maps, every map $\Psi_1: (CS^\ast_{k,g})^+ \to (CS_{l, r}^\ast)^+$ with $\Psi_1(\one_{CS^1_{k,g}})=\one_{CS^1_{l,r}}$ can be written as
    \begin{equation}
        \Psi_1(\alpha) = \alpha(s_{k, \ldots,k})\one_{CS^1_{l,r}}  + \sum_{x=1}^g \sum_{a =1}^{k-1} \alpha(e^{(x)}_a)\left[ \sum_{y=1}^r \pi_{y|x} \sum_{b=1}^l \nu_{a|b,x,y} m^{(y)}_b\right] \, ,
    \end{equation}
    where $\alpha \in (CS^\ast_{k,g})^+$. Now given $\Psi_2: (CS_{l, r}^\ast)^+ \to V(K)^+$, we can define $\Psi_2(m^{(y)}_b) =: v_{b|y} \sigma_{b|y}$ with $\sigma_{b|y} \in K$ and $v_{b|y} :=  \one_{CS^1_{l,r}}(\Psi_2(m^{(y)}_b))$. Setting moreover
    \begin{equation}
        \sum_{b=1}^l v_{b|y} \sigma_{b|y} =: \bar \sigma \in K \qquad \forall y \in [r] \, ,
    \end{equation}
    it follows that $(v_{b|y})_{b \in [l]}$ is a probability distribution for all $y \in [r]$. As $\Psi_1(\one_{CS^1_{k,g}})=\one_{CS^1_{l,r}}$, it follows that $\bar \sigma = \bar \varrho$. Using $\Phi = \Psi_2 \circ \Psi_1$, we find
    \begin{equation}
    q_{a|x} \varrho_{a|x} = \sum_{y=1}^r \pi_{y|x} \sum_{b=1}^l \nu_{a|b,x,y} v_{b|y} \sigma_{b|y} 
\end{equation}
for all $a \in [k-1]$ and $x \in [g]$. For $a = k$, we infer
     \begin{equation}
    q_{k|x} \varrho_{k|x} =\bar \varrho - \underbrace{\sum_{a=1}^{k-1} \sum_{y=1}^r \pi_{y|x} \sum_{b=1}^l \nu_{a|b,x,y} v_{b|y} \sigma_{b|y}}_{\sum_{y=1}^r \pi_{y|x} \sum_{b=1}^l (1-\nu_{k|b,x,y})v_{b|y} \sigma_{b|y}} = \sum_{y=1}^r \pi_{y|x} \sum_{b=1}^l \nu_{k|b,x,y} v_{b|y} \sigma_{b|y} \,.
\end{equation}

    The converse is straightforward defining the map $\Psi_2$ using the assemblage  $\{v_{b|y}, \sigma_{b|y}\}_{b \in [l], y \in [r]}$.
\end{proof}
Instead of seeing classical simulability as a property of the assemblage, we can consider it as a property of the associated multimeter, as the next corollary shows.

\begin{cor}
     Let $\{q_{a|x}, \varrho_{a|x}\}_{a \in [k], x \in [g]}$ be an assemblage with average state $\bar \varrho \in K$ and  associated map $\Phi: (CS^\ast_{k,g})^+ \to V(K)^+$ such that $\Phi(\one_{CS^1_{k,g}}) = \bar \varrho$. Let $\bar \varrho$ be in the relative interior of $K$. Then this assemblage is classically simulable if and only if the multimeter $\Phi^\ast: K_{\bar \varrho}^\ast \to CS_{k,g}^1$ is classically simulable. 
\end{cor}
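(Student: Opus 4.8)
The plan is to deduce this corollary from \Cref{thm:classical-steering} and \Cref{cor:simulation-factorization} by dualizing the two factorization diagrams against each other. The guiding observation is that the middle object in \Cref{thm:classical-steering} is the dual cone $(CS_{l,r}^\ast)^+$, while the middle object in \Cref{cor:simulation-factorization} is the primal state space $CS^1_{l,r}$; and that, because $\bar\varrho$ lies in the relative interior of $K$, the dual state space $K_{\bar\varrho}^\ast$ is a genuine compact state space with $A((K)_{\bar\varrho}^\ast)^+ = V(K)^+$ and order unit $\bar\varrho$, as recorded in \Cref{sec:maps-between-GPTs}. Throughout I would use the two standard facts from that section: a linear map is positive iff its adjoint is, and a positive map is a channel iff its adjoint is unital.

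First I would prove the forward direction. If the assemblage is classically simulable, \Cref{thm:classical-steering} provides maps $\Psi_1\colon (CS^\ast_{k,g})^+ \to (CS_{l,r}^\ast)^+$ with $\Psi_1(\one_{CS^1_{k,g}}) = \one_{CS^1_{l,r}}$ and $\Psi_2\colon (CS_{l,r}^\ast)^+ \to V(K)^+$ with $\Psi_2(\one_{CS^1_{l,r}}) \in K$, such that $\Phi = \Psi_2 \circ \Psi_1$. Composing the two normalization identities forces $\Psi_2(\one_{CS^1_{l,r}}) = \Psi_2(\Psi_1(\one_{CS^1_{k,g}})) = \Phi(\one_{CS^1_{k,g}}) = \bar\varrho$. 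Taking adjoints gives $\Phi^\ast = \Psi_1^\ast \circ \Psi_2^\ast$ with positive maps $\Psi_1^\ast\colon CS_{l,r} \to CS_{k,g}$ and $\Psi_2^\ast\colon A(K) \to CS_{l,r}$. Unitality of $\Psi_1$ translates into $\one_{CS^1_{k,g}}(\Psi_1^\ast(X)) = \langle \Psi_1(\one_{CS^1_{k,g}}), X\rangle = \one_{CS^1_{l,r}}(X)$ for all $X$, so $\Psi_1^\ast$ is a channel $CS^1_{l,r} \to CS^1_{k,g}$; and for $\alpha \in K_{\bar\varrho}^\ast$ one gets $\one_{CS^1_{l,r}}(\Psi_2^\ast(\alpha)) = \alpha(\Psi_2(\one_{CS^1_{l,r}})) = \alpha(\bar\varrho) = 1$, so $\Psi_2^\ast$ is a multimeter $K_{\bar\varrho}^\ast \to CS^1_{l,r}$. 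Hence $\Phi^\ast = \Psi_1^\ast \circ \Psi_2^\ast$ is precisely a factorization of the multimeter $\Phi^\ast$ through $CS^1_{l,r}$, and \Cref{cor:simulation-factorization} shows $\Phi^\ast$ is classically simulable.

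For the converse I would run this backwards. If $\Phi^\ast\colon K_{\bar\varrho}^\ast \to CS^1_{k,g}$ is classically simulable, \Cref{cor:simulation-factorization} yields a multimeter $N\colon K_{\bar\varrho}^\ast \to CS^1_{l,r}$ and a channel $\Theta\colon CS^1_{l,r} \to CS^1_{k,g}$ with $\Phi^\ast = \Theta \circ N$. Dualizing, $\Phi = N^\ast \circ \Theta^\ast$; since $\Theta$ is a channel, $\Theta^\ast(\one_{CS^1_{k,g}}) = \one_{CS^1_{l,r}}$, and since $N$ is a channel into $CS^1_{l,r}$ from the GPT with order unit $\bar\varrho$, we get $N^\ast(\one_{CS^1_{l,r}}) = \bar\varrho \in K$. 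Setting $\Psi_1 := \Theta^\ast$ and $\Psi_2 := N^\ast$ then satisfies all the hypotheses of \Cref{thm:classical-steering}, so the assemblage is classically simulable.

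The one point requiring care, rather than a real obstacle, is the bookkeeping of order units on the various dual GPTs: one has to verify that the normalization constraints in \Cref{thm:classical-steering} dualize exactly into the channel conditions of \Cref{cor:simulation-factorization}, and in particular that $\Psi_2(\one_{CS^1_{l,r}})$ is forced to equal $\bar\varrho$ rather than merely lie in $K$. This is where the hypothesis that $\bar\varrho$ is in the relative interior of $K$ is used: it guarantees that $K_{\bar\varrho}^\ast$ is a compact state space and that $A((K)_{\bar\varrho}^\ast)^+ = V(K)^+$, so that \Cref{cor:simulation-factorization} can be applied verbatim to $\Phi^\ast$ viewed as a multimeter on $K_{\bar\varrho}^\ast$.
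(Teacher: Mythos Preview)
Your proof is correct and follows essentially the same route as the paper: invoke \Cref{thm:classical-steering} to obtain the factorization $\Phi=\Psi_2\circ\Psi_1$, dualize to $\Phi^\ast=\Psi_1^\ast\circ\Psi_2^\ast$, identify $\Psi_1^\ast$ as a channel and $\Psi_2^\ast$ as a multimeter on $K_{\bar\varrho}^\ast$, and then appeal to \Cref{cor:simulation-factorization}; the converse runs the duality backwards. You are in fact slightly more careful than the paper's write-up in explicitly observing that $\Psi_2(\one_{CS^1_{l,r}})=\Psi_2(\Psi_1(\one_{CS^1_{k,g}}))=\Phi(\one_{CS^1_{k,g}})=\bar\varrho$, which is precisely what is needed for $\Psi_2^\ast$ to map $K_{\bar\varrho}^\ast$ into $CS^1_{l,r}$.
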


\begin{proof}
    By \Cref{thm:classical-steering}, the assemblage is classically simulable if and only if there exist a map $\Psi_1: (CS^\ast_{k,g})^+ \to (CS_{l, r}^\ast)^+$ with $\Psi_1(\one_{CS^1_{k,g}})=\one_{CS^1_{l,r}}$ and a map $\Psi_2: (CS_{l, r}^\ast)^+ \to V(K)^+$ with $\Psi_2(\one_{CS^1_{l,r}}) \in K$ such that $\Phi = \Psi_2 \circ \Psi_1$. Therefore, $\Psi_2^\ast: K_{\bar \varrho}^\ast \to CS_{l,r}$ is a multimeter, and as $\Psi_1$ is unital, $\Psi_1^\ast: CS^1_{l, r} \to CS^1_{k,g}$ is a channel as well. As $\Phi = \Psi_2 \circ \Psi_1$, we have $\Phi^\ast = \Psi_1^\ast \circ \Psi^\ast_2$. The assertion follows from \Cref{cor:simulation-factorization} and the converse follows in the same manner.
\end{proof}

Finally, we can recover from our results a characterization of assemblages admitting an LHS model, see also \Cref{prop:LHS-separable}. 

\begin{cor} \label{cor:separbility-LHS}
Let $\{q_{a|x}, \varrho_{a|x}\}_{a \in [k], x \in [g]}$ be an assemblage with average state $\bar \varrho \in K$ and  associated map $\Phi: (CS^\ast_{k,g})^+ \to V(K)^+$ such that $\Phi(\one_{CS^1_{k,g}}) = \bar \varrho$. Then, the following are equivalent:
\begin{enumerate}
    \item $(\id \otimes \Phi)(\chi_{CS^\ast_{k,g}}) \in CS^1_{k,g} \tmin K$ 
    \item $\{q_{a|x}, \varrho_{a|x}\}_{a \in [k], x \in [g]}$ has an LHS model
    \item There exist a finite number $\Lambda$,  a map $\Psi_1: (CS^\ast_{k,g})^+ \to (S_\Lambda^+)^\ast$ with $\Psi_1(\one_{CS^1_{k,g}})=\one_{S_\Lambda}$ and a map $\Psi_2: (S_\Lambda^+)^\ast \to V(K)^+$ with $\Psi_2(\one_{S_\Lambda}) \in K$ such that the following diagram commutes:
\begin{equation}
    \begin{tikzcd}
    (CS^\ast_{k,g})^+ \arrow[rd, "\Psi_1"] \arrow[rr, "\Phi"] & & V(K)^+ \\
    & (S_\Lambda^+)^\ast \arrow[ru, "\Psi_2"] &
    \end{tikzcd}  
\end{equation}
\end{enumerate}
\end{cor}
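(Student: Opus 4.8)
The plan is to establish the cycle of equivalences by reading each implication off a result already proved in the paper, so that essentially no new estimate is needed. First I would prove $(1)\Leftrightarrow(2)$ directly from the one-to-one correspondence between assemblages and tensors in $CS^1_{k,g}\tmax K$ recorded in \eqref{eq:assemblages-as-tensors}. Writing $\xi:=(\id\otimes\Phi)(\chi_{CS^\ast_{k,g}})$, a short computation with the bases of \Cref{sec:CS-GPT} gives $\langle m_a^{(x)}\otimes\alpha,\xi\rangle=\alpha(q_{a|x}\varrho_{a|x})$ and $\langle\one_{CS^1_{k,g}}\otimes\alpha,\xi\rangle=\alpha(\bar\varrho)$ for all $\alpha\in A(K)$. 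If $\xi\in CS^1_{k,g}\tmin K$, decompose $\xi=\sum_{\lambda=1}^\Lambda p_\lambda\, s_\lambda\otimes\sigma_\lambda$ with $s_\lambda\in CS^1_{k,g}$, $\sigma_\lambda\in K$ and $(p_\lambda)_\lambda$ a probability distribution; setting $\nu_{a|\lambda,x}:=m_a^{(x)}(s_\lambda)$, which is a conditional probability distribution on $[k]$ since $\sum_a m_a^{(x)}=\one_{CS^1_{k,g}}$ and $s_\lambda\in CS^1_{k,g}$, yields $q_{a|x}\varrho_{a|x}=\sum_\lambda\nu_{a|\lambda,x}\,p_\lambda\sigma_\lambda$, i.e.\ exactly the LHS model of \Cref{def:lhs_model}. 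Conversely, an LHS model reassembles via \eqref{eq:assemblages-as-tensors} into the separable tensor $\sum_\lambda p_\lambda\, t_\lambda\otimes\sigma_\lambda$, where $t_\lambda\in CS^1_{k,g}$ is the column-stochastic matrix with entries $(\nu_{a|\lambda,x})_{a,x}$.

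For $(2)\Leftrightarrow(3)$ I would specialize \Cref{thm:classical-steering} to $l=\Lambda$ and $r=1$. Then $CS_{\Lambda,1}=\mathds{R}^\Lambda$ and $CS^+_{\Lambda,1}=S_\Lambda^+$, so $(CS_{\Lambda,1}^\ast)^+=(S_\Lambda^+)^\ast$ and $\one_{CS^1_{\Lambda,1}}=\one_{S_\Lambda}$, whence the commuting diagram together with the constraints $\Psi_1(\one_{CS^1_{k,g}})=\one_{S_\Lambda}$ and $\Psi_2(\one_{S_\Lambda})\in K$ in $(3)$ are literally those of \Cref{thm:classical-steering} in this special case. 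It then only remains to note that a one-setting assemblage $\{v_b,\sigma_b\}_{b\in[\Lambda]}$ with average state $\bar\varrho$ is nothing but an ensemble, and that classical simulation by it — which reads $q_{a|x}\varrho_{a|x}=\sum_{b=1}^\Lambda\nu_{a|b,x}\,v_b\sigma_b$ once the collapsing sum over $y$ and the weights $\pi_{y|x}$ are dropped — is precisely the LHS model of \Cref{def:lhs_model} with probabilities $p_b=v_b$ and states $\sigma_b$ (the constraint $\bar\sigma=\bar\varrho$ of \Cref{thm:classical-steering} being automatic from normalization of $\nu$). As an alternative route one may derive $(1)\Leftrightarrow(3)$ from \Cref{prop:min-factoring-and-tensor} applied with $K_A=CS^1_{k,g}$ and $K_B=K$, since $\Phi$ is the unique map associated with $\xi$ by \Cref{lem:xi-to-phi} and satisfies $\Phi(\one_{CS^1_{k,g}})=\bar\varrho\in K$.

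The only genuinely fiddly point is matching normalization conventions across the three pictures. The factorization furnished by \Cref{prop:min-factoring-and-tensor} has $\Psi_1$ mapping into the simplex cone with $\Psi_1(\one_{CS^1_{k,g}})$ a \emph{probability} vector and $\Psi_2$ a channel, whereas $(3)$ wants $\Psi_1(\one_{CS^1_{k,g}})=\one_{S_\Lambda}$, the all-ones vector, and only asks $\Psi_2(\one_{S_\Lambda})\in K$. These are reconciled by absorbing a diagonal rescaling — coordinate $\lambda$ by $p_\lambda^{-1}$, discarding those $\lambda$ with $p_\lambda=0$ — from $\Psi_2$ into $\Psi_1$: this leaves $\Psi_2\circ\Psi_1=\Phi$ unchanged, sends the probability vector $\Psi_1(\one_{CS^1_{k,g}})$ to $\one_{S_\Lambda}$, and turns $\Psi_2(\one_{S_\Lambda})$ into $\sum_\lambda p_\lambda\Psi_2(\delta_\lambda)=\bar\varrho\in K$. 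This is also why, in contrast to \Cref{cor:sep-com}, no relative-interior hypothesis on $\bar\varrho$ is needed here: the distinguished state lives on the $K$-side of the tensor, while the $CS^1_{k,g}$-side always carries the order unit $\one_{CS^1_{k,g}}$ in its interior. Apart from this transcription, nothing new is proved.
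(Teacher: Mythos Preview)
Your proposal is correct and follows essentially the same route as the paper: the paper proves $(1)\Leftrightarrow(3)$ by invoking \Cref{prop:min-factoring-and-tensor} and $(2)\Leftrightarrow(3)$ by specializing \Cref{thm:classical-steering} to $r=1$, $l=\Lambda$, exactly as you do. Your additional direct argument for $(1)\Leftrightarrow(2)$ is redundant once the other two equivalences are in place, but it is correct. You are also more careful than the paper in one respect: the paper simply asserts that $(1)\Leftrightarrow(3)$ follows from \Cref{prop:min-factoring-and-tensor} without commenting on the discrepancy between the normalization in that proposition ($\Psi_1(\one_{CS^1_{k,g}})\in S_\Lambda$, $\Psi_2$ a channel) and the one stated in $(3)$ ($\Psi_1(\one_{CS^1_{k,g}})=\one_{S_\Lambda}$, $\Psi_2(\one_{S_\Lambda})\in K$); your diagonal-rescaling remark makes this transcription explicit.
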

\begin{proof}
    The equivalence of $(1)$ and $(3)$ follows from \Cref{prop:min-factoring-and-tensor}. The equivalence of $(2)$ and $(3)$ follows from \Cref{thm:classical-steering} with $r=1$ and $l = \Lambda$, because a classical simulation of this kind has the form 
    \begin{equation}
    q_{a|x} \varrho_{a|x} = \sum_{\lambda=1}^\Lambda \nu_{a|\lambda,x} v_{\lambda} \sigma_{\lambda} \, .
\end{equation}
for all $a \in [k]$ and $x \in [g]$.
\end{proof}

\section{Bell nonlocality}\label{sec:bell}

Another form of quantum nonlocality that is closely related to measurement incompatibility is Bell nonlocality, a phenomenon that questions the assumption of locality in an experiment. A no-signaling probability distribution (Bell behavior) is said to be local if it has a local realization (local hidden variable (LHV) model). Then Bell nonlocality can be detected through a violation of a Bell inequality. We can use our formalism to show how no-signaling probability distributions can be seen as tensors and that separability of that tensor corresponds to having a local realization through an LHV model. Again, this will correspond to a factorization of the associated map through a simplex and we see how different factorizations give different simulation processes for the no-signaling distributions. We will also see how Bell inequalities can be formulated in this framework and we will be able to recover some important known results from \cite{Wolf_2009, Barnum_2010, Pironio_2014}.

\subsection{Bell nonlocality in quantum theory}

We begin investigating Bell nonlocality by considering two multimeters $M: D(\Cnum^{d_A}) \rightarrow \CSC$, $N: D(\Cnum^{d_B}) \rightarrow CS_{l,r}^+$ and $\varrho \in D(\Cnum^{d_A d_B})$. 
Both multimeters together yield a map
\begin{equation}
    M \otimes N: \varrho \mapsto (M \otimes N)(\varrho),
\end{equation}
which forms a bipartite Bell behavior with the outcome statistic
\begin{equation}
\label{out_stat}
    \Tr[(M_{a|x} \otimes N_{b|y})\, \varrho] = p_{a,b|x,y} \geq 0 \quad \text{with } a \in [k],\, b \in [l],\, x \in [g],\, y \in [r] \,.
\end{equation}

The following is straightforward to confirm.
\begin{lem}
\label{stat_in_CS_max}
    The image $(M \otimes N)(\varrho)$ is in the maximal tensor product $\CS \tmax CS_{l,r}^1$.
\end{lem}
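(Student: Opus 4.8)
The plan is to recognize $(M\otimes N)(\varrho)$ as the image of a state under a positive map and then invoke the definition of the maximal tensor product directly. First I would observe that since both $M\colon D(\Cnum^{d_A})\to \CSC$ and $N\colon D(\Cnum^{d_B})\to CS_{l,r}^1$ are channels (multimeters in the sense of \Cref{sec:multimeters-in-GPTs}), they are in particular positive maps between the corresponding cones, $M\colon \Pos \to \CSC$ and $N\colon \mathrm{PSD}_{d_B}\to CS_{l,r}^+$. The tensor map $M\otimes N$ is then positive from $\mathrm{PSD}_{d_A}\tmin \mathrm{PSD}_{d_B}$ into $\CSC \tmin CS_{l,r}^+$, and since $\varrho\in D(\Cnum^{d_A d_B})\subset D(\Cnum^{d_A})\tmin D(\Cnum^{d_B})$, the image lands in $\CSC \tmin CS_{l,r}^+ \subseteq \CSC \tmax CS_{l,r}^+$, which at the level of state spaces is $\CS\tmax CS_{l,r}^1$ once we check normalization.

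The cleanest route, however, is to verify the defining inequalities of the maximal tensor product pairing against product effects. By the characterization of $K_A\tmax K_B$ recalled in \Cref{sec:GPT-formalism}, it suffices to show that $\langle f_A\otimes f_B, (M\otimes N)(\varrho)\rangle \ge 0$ for all positive affine functionals $f_A\in A(\CS)^+$, $f_B\in A(CS_{l,r}^1)^+$, together with the normalization $\langle \one_{\CS}\otimes \one_{CS_{l,r}^1}, (M\otimes N)(\varrho)\rangle = 1$. For the positivity, I would use the duality $M^\ast\colon A(\CS)^+\to \Posd$ and $N^\ast\colon A(CS_{l,r}^1)^+\to (\mathrm{PSD}_{d_B})^\ast$ which hold because $M,N$ are channels (hence their duals are positive and unital); then
\[
\langle f_A\otimes f_B, (M\otimes N)(\varrho)\rangle = \langle M^\ast(f_A)\otimes N^\ast(f_B), \varrho\rangle = \Tr\!\big[(M^\ast(f_A)\otimes N^\ast(f_B))\,\varrho\big]\ge 0,
\]
since $M^\ast(f_A)\otimes N^\ast(f_B)$ is a tensor product of PSD matrices, hence PSD, and $\varrho$ is a density matrix. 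Concretely $M^\ast(f_A)$ and $N^\ast(f_B)$ are just nonnegative combinations of the effects $M_{a|x}$ and $N_{b|y}$, so this is exactly the statement that the numbers $p_{a,b|x,y}$ in \cref{out_stat} are nonnegative, which \cref{out_stat} already records.

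Finally, for normalization I would note that $\one_{\CS}\circ M = \Tr$ and $\one_{CS_{l,r}^1}\circ N = \Tr$ because $M$ and $N$ are channels (normalization-preserving), so $\langle \one_{\CS}\otimes \one_{CS_{l,r}^1}, (M\otimes N)(\varrho)\rangle = \Tr[\varrho] = 1$. I do not anticipate a genuine obstacle here — the statement is essentially bookkeeping, and the only point requiring a little care is making sure the identification of the GPT state space $\CS\tmax CS_{l,r}^1$ with the subset of $\CSC\tmax CS_{l,r}^+$ of unit order-unit value is applied correctly, i.e.\ that positivity against \emph{product} effects suffices, which is precisely the content of the definition of $\tmax$ recalled earlier.
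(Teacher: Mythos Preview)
Your main argument (the second paragraph onward) is correct and is precisely the direct verification the paper has in mind when it says the statement is ``straightforward to confirm.'' The paper gives no explicit proof; for the analogous \Cref{assemblage} it simply points to the inclusion $\varrho\in D(\Cnum^{d_A d_B})\subset D(\Cnum^{d_A})\tmax D(\Cnum^{d_B})$ and tacitly applies \Cref{extend_map}\,(ii) twice to conclude that $M\otimes N$ carries the maximal tensor product into the maximal tensor product. Your direct check of positivity on product effects and of the normalization is equivalent and arguably more explicit.

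One correction: in your first paragraph the inclusion is reversed. You write $\varrho\in D(\Cnum^{d_A d_B})\subset D(\Cnum^{d_A})\tmin D(\Cnum^{d_B})$, but in fact $D(\Cnum^{d_A})\tmin D(\Cnum^{d_B})$ is the set of separable states, so the correct chain is $D(\Cnum^{d_A})\tmin D(\Cnum^{d_B})\subset D(\Cnum^{d_A d_B})\subset D(\Cnum^{d_A})\tmax D(\Cnum^{d_B})$. If you fix this, your first route works too, using \Cref{extend_map}\,(ii) (positive maps extend between maximal tensor products), and in fact that is the argument the paper alludes to. Since you abandon that route anyway and your second argument is self-contained, the slip is harmless for the overall proof.
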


As a classical explanation to nonlocality one could search for a model that explains nonlocal correlations by local hidden variables. Knowledge of these hidden variables would allow for a complete determination of the measurement outcomes and hence leave the inherit randomness of quantum theory behind.
Such models are consequently called local hidden variable models and will be abbreviated with LHV models from now on.

\begin{defi}
    A Bell behavior $(p_{a_1, \dots, a_N | x_1, \dots, x_N})_{a_j \in [k_j], x_j \in [g_j], j \in [N]}$ with $N$ parties, each performing $m$ measurements with $n$ outcomes, can be described by an LHV model if there is a locally hidden variable $\Lambda \in \mathds N$ and probability distributions $q$ on $[\Lambda]$ and $(p_{\cdot|\lambda,x_j})_{\lambda \in [\Lambda], x_j \in [g_j]}$ on $[k_j]$ for all $j \in [N]$ such that the outcome statistics defined likewise as in \cref{out_stat} can be written as
    \begin{equation}
        p_{a_1, \dots, a_N | x_1 \dots x_N} = \sum_{\lambda \in [\Lambda]} q_\lambda \prod_{j=1}^N p_{a_j |\lambda, x_j}
    \end{equation}
    for all $a_j \in [k_j]$, $x_j \in [g_j]$ and $j \in [N]$.
\end{defi}

The idea is partially similar to LHS models, and as with LHS models, there is a quite similar connection between separability and LHV models to be found.
However, in contrast to steering, where every bipartite quantum assemblage has a realization (\Cref{assem_real}), not all outcome statistics can be achieved by quantum theory.
Formally, this means $\exists P \in \CS \tmax CS_{l,r}^1$ st. $P \neq (M \otimes N)(\varrho)$, where $M, \, N$ are quantum multimeters and $\varrho$ is a density matrix.
An example of such objects $P$ are Popescu-Rohrlich nonlocal boxes \cite{Popescu_Rohrlich_94}. 

Typically, in scenarios where more than one party performs sets of measurements on a shared system, the parties are considered spatially separated.
For example, the parties are in distinct laboratories performing their operations and measurements.
Thus, it is fair to assume that the outcomes of one party are not affected by the measurement choice of another party, since that would require instantaneous or superluminal signal transmission.
Hence, the assumption is well motivated by special relativity and is widely known as no-signaling.

\begin{defi}
\label{ns_behavior}
    A conditional probability distribution $(p_{\cdot, \cdot |xy})_{x \in [g],y \in [r]}$ on $[k] \times [l]$ is said to be no-signaling if it satisfies:
    \begin{enumerate}
        \item $\sum_{a=1}^k p_{a,b|x,y} = \sum_{a=1}^k p_{a,b|x',y} \quad \forall x,x' \in [g]$, $b \in [l]$
        \item $\sum_{b=1}^l p_{a,b|x,y} = \sum_{b=1}^l p_{a,b|x,y'} \quad \forall y,y' \in [r]$, $a \in [k]$
    \end{enumerate}
    In the context of experiments, these probability distributions are also called no-signaling behaviors.
\end{defi}

The concept of no-signaling is more general and weaker than local hidden variables.
As the next lemma shows, all physical statistics are covered in the set $\CS \tmax CS^1_{l, r}$ (see \Cref{prop:NS-tensor} for proof, originally shown in  \cite{jencova2018incompatible}). 
\begin{lem} \label{lem:NS}
    The set of all no-signaling behaviors is in one-to-one correspondence with $\CS \tmax CS_{l, r}^1$.
\end{lem}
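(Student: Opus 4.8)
The plan is to reduce the statement to the already-established one-to-one correspondence between elements of $CS^1_{k,g}\tmax K$ and assemblages on $K$ recorded in \eqref{eq:assemblages-as-tensors} (Theorem~4 of \cite{jencova2018incompatible}), specialised to $K=CS^1_{l,r}$, and then to identify assemblages on $CS^1_{l,r}$ with no-signaling behaviors by a purely notational re-indexing of the data. It is most convenient to use the \emph{unnormalized} description of an assemblage on $CS^1_{l,r}$: a family $\{\tilde\varrho_{a|x}\}_{a\in[k],\,x\in[g]}$ with $\tilde\varrho_{a|x}\in CS^+_{l,r}$ together with an average state $\bar\varrho\in CS^1_{l,r}$ satisfying $\sum_{a=1}^k\tilde\varrho_{a|x}=\bar\varrho$ for every $x\in[g]$.

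First I would spell out the re-indexing. Given such a family, put $p_{a,b|x,y}:=(\tilde\varrho_{a|x})_{b,y}$. Nonnegativity of $p$ is just nonnegativity of the matrices $\tilde\varrho_{a|x}$; the fact that each $\tilde\varrho_{a|x}$ lies in $CS_{l,r}$, i.e.\ has equal column sums, is exactly no-signaling condition~(2) of \Cref{ns_behavior}; the fact that $\sum_a\tilde\varrho_{a|x}$ does not depend on $x$ is exactly no-signaling condition~(1); and the fact that this common matrix $\bar\varrho$ is column-stochastic is exactly $\sum_{a,b}p_{a,b|x,y}=1$, so that $(p_{\cdot,\cdot|x,y})$ is a probability distribution on $[k]\times[l]$ for each $(x,y)$. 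Conversely, starting from a no-signaling behavior and defining $\tilde\varrho_{a|x}$ by the same formula yields an element of $CS^+_{l,r}$ by condition~(2), a well-defined $x$-independent average $\bar\varrho:=\sum_a\tilde\varrho_{a|x}$ by condition~(1), and $\bar\varrho\in CS^1_{l,r}$ by normalization. The two assignments are manifestly mutually inverse, so no-signaling behaviors biject with unnormalized assemblages on $CS^1_{l,r}$; composing with \eqref{eq:assemblages-as-tensors} gives the claim.

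A self-contained variant avoids \eqref{eq:assemblages-as-tensors} altogether: by definition of the maximal tensor product of state spaces, $P\in CS^1_{k,g}\tmax CS^1_{l,r}$ iff $\langle\one_{CS^1_{k,g}}\otimes\one_{CS^1_{l,r}},P\rangle=1$ and $\langle m^{(x)}_a\otimes m^{(y)}_b,P\rangle\geq 0$ for all $a,b,x,y$, since by \cite[Lemma~1]{jencova2018incompatible} the $m^{(x)}_a$ and the $m^{(y)}_b$ generate the two dual cones in question and positivity is linear in each slot. Setting $p_{a,b|x,y}:=\langle m^{(x)}_a\otimes m^{(y)}_b,P\rangle$ and using $\sum_a m^{(x)}_a=\one_{CS^1_{k,g}}$ and $\sum_b m^{(y)}_b=\one_{CS^1_{l,r}}$ as functionals, one reads off that $\sum_a p_{a,b|x,y}$ is independent of $x$, that $\sum_b p_{a,b|x,y}$ is independent of $y$, and that $\sum_{a,b}p_{a,b|x,y}=1$; injectivity and surjectivity of $P\mapsto(p_{a,b|x,y})$ then follow from the facts that the $m^{(x)}_a\otimes m^{(y)}_b$ span $(CS_{k,g}\otimes CS_{l,r})^\ast$ and that any no-signaling behavior can be written out in the dual basis \eqref{eq:state-basis}.

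I do not expect a genuine obstacle here: the entire content is bookkeeping, and the only points that require a little care are matching the two no-signaling conditions to the ``equal column sums within each $\tilde\varrho_{a|x}$'' and ``$x$-independence of the average'' conditions, and keeping normalization factors straight (for instance $\one_{CS^1_{l,r}}$ carries a $1/r$, while the behavior normalization $\sum_{a,b}p_{a,b|x,y}=1$ does not). Working with unnormalized assemblages is precisely what makes this painless, since it removes any need to make an arbitrary choice of $\varrho_{a|x}$ when the corresponding weight $q_{a|x}=\sum_b p_{a,b|x,y}$ vanishes.
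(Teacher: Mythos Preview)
Your proposal is correct. Your self-contained variant is essentially the paper's own argument (carried out in \Cref{prop:NS-tensor}): expand $P$ in the dual basis, pair with $m^{(x)}_a\otimes m^{(y)}_b$ to define $p_{a,b|x,y}$, and read off nonnegativity, normalization, and the no-signaling conditions from $\sum_a m^{(x)}_a=\one_{CS^1_{k,g}}$ and $\sum_b m^{(y)}_b=\one_{CS^1_{l,r}}$; the converse uses that the $m^{(x)}_a$ generate the extreme rays of the dual cone.

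Your first route is a genuine alternative: rather than redoing the basis computation, you invoke the already-recorded bijection \eqref{eq:assemblages-as-tensors} between $CS^1_{k,g}\tmax K$ and (unnormalized) assemblages on $K$, specialise to $K=CS^1_{l,r}$, and then observe that an unnormalized assemblage $\{\tilde\varrho_{a|x}\}\subset CS^+_{l,r}$ with common average $\bar\varrho\in CS^1_{l,r}$ is literally the same data as a no-signaling behavior under $p_{a,b|x,y}=(\tilde\varrho_{a|x})_{b,y}$. This buys you a cleaner modular proof that reuses existing machinery; the paper's direct computation buys an explicit coordinate expression for $\xi$ in terms of $p$, which it later needs anyway (e.g.\ for \Cref{cor:Phi-NS} and the extendability results). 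Your care with the $q_{a|x}=0$ degeneracy by passing to unnormalized assemblages is exactly right.
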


As we saw with the steering assemblages, the separable assemblages correspond to those that are classically explainable with LHS models. We can make a similar conclusion here with the separable no-signaling behaviors being classically explainable with LHV models (see \Cref{cor:separability-LHV-simulation} for proof, originally shown in  \cite{plavala2017conditions, jencova2018incompatible}):

\begin{lem}
\label{sep-LHV}
    Let $P \in \CS \tmax CS_{l,r}^1$. Then $P \in \CS \tmin CS_{l,r}^1$ if and only if $P$ has an LHV model.
\end{lem}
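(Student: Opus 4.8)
The plan is to prove the two implications separately, using the correspondence between no-signaling behaviors and tensors (\Cref{lem:NS}) to reduce the statement about behaviors to a purely cone-theoretic statement about the minimal and maximal tensor products, and then reading off the LHV model from the explicit form of a separable tensor.

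First I would establish the easy direction: if $P$ has an LHV model, then $P \in \CS \tmin CS^1_{l,r}$. Given the model $p_{a,b|x,y} = \sum_{\lambda} q_\lambda\, p_{a|\lambda,x}\, p_{b|\lambda,y}$, the key observation is that for each fixed $\lambda$ the family $(p_{a|\lambda,x})_{a,x}$ is column-stochastic, so it defines an element $P^A_\lambda \in CS^1_{k,g}$, and likewise $P^B_\lambda \in CS^1_{l,r}$. Then $P = \sum_\lambda q_\lambda\, P^A_\lambda \otimes P^B_\lambda$ is manifestly a convex combination of product elements, hence lies in $\CS \tmin CS^1_{l,r}$ by the definition of the minimal tensor product of state spaces. (One should double-check that the identification of the no-signaling behavior with the tensor, as in \Cref{lem:NS} / \Cref{eq:assemblages-as-tensors}, respects this product structure, but this is immediate from bilinearity.)

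For the converse, suppose $P \in \CS \tmin CS^1_{l,r}$. By definition of $\tmin$ of state spaces, $P = \sum_{\lambda=1}^{\Lambda} q_\lambda\, x_\lambda \otimes y_\lambda$ for some probability distribution $(q_\lambda)_\lambda$, some $x_\lambda \in CS^1_{k,g}$ and some $y_\lambda \in CS^1_{l,r}$. Writing out the entries of $x_\lambda$ using the measurements $m^{(x)}_a$ from \cref{eq:sim-irr-measurements}, set $p_{a|\lambda,x} := m^{(x)}_a(x_\lambda)$ and $p_{b|\lambda,y} := m^{(y)}_b(y_\lambda)$; since $x_\lambda, y_\lambda$ are column-stochastic these are conditional probability distributions on $[k]$ and $[l]$ respectively. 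Applying $m^{(x)}_a \otimes m^{(y)}_b$ to both sides of the decomposition of $P$ recovers exactly $p_{a,b|x,y} = \sum_\lambda q_\lambda\, p_{a|\lambda,x}\, p_{b|\lambda,y}$, which is an LHV model with hidden variable $\lambda \in [\Lambda]$. Here one uses that the $m^{(x)}_a$, $m^{(y)}_b$ extract the coordinates of the behavior from the tensor, which is how the correspondence in \Cref{lem:NS} is set up.

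I do not expect a serious obstacle here: both directions are essentially unwinding definitions once the state-space tensor products and the coordinate functionals $m^{(x)}_a$ are in place. The one point requiring a little care is the bookkeeping in translating between the "tensor in $CS^1_{k,g} \otimes CS^1_{l,r}$" picture and the "table of probabilities $(p_{a,b|x,y})$" picture — in particular, checking that the no-signaling constraints of \Cref{ns_behavior} are exactly what make a probability table define a legitimate element of the maximal tensor product, which is the content of \Cref{lem:NS} (proved in \Cref{prop:NS-tensor}) that we are allowed to cite. Given that, the argument is a direct specialization of \Cref{cor:simulation-factorization}/\Cref{prop:min-factoring-and-tensor} to the case where both state spaces are spaces of column-stochastic matrices, and indeed an alternative write-up would simply invoke \Cref{prop:min-factoring-and-tensor} with $K_A = CS^1_{k,g}$, $K_B = CS^1_{l,r}$ and then apply \Cref{thm:channel-polysimplices-characterization} to identify the two factor maps with the local response functions.
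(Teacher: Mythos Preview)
Your argument is correct: both directions go through exactly as you describe, and the only bookkeeping is the identification of entries of $P$ with the behavior via $p_{a,b|x,y} = \langle m^{(x)}_a \otimes m^{(y)}_b, P\rangle$, which is precisely how \Cref{prop:NS-tensor} sets up the correspondence.

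The paper's route is different, though ultimately equivalent. Rather than unpacking the minimal tensor product directly, the paper proves the lemma as \Cref{cor:separability-LHV-simulation}: it associates to $P$ the positive map $\Phi:(CS^+_{k,g})^\ast \to CS^+_{l,r}$ via \Cref{cor:Phi-NS}, invokes \Cref{prop:min-factoring-and-tensor} to equate separability of the tensor with a factorization of $\Phi$ through a simplex $S_\Lambda$, and then specializes \Cref{thm:Bell-compression} to $K=S_1$ to read off the LHV model from that factorization. So the paper deliberately passes through the factorization machinery it has built, whereas you bypass it with an elementary coordinate argument. Your approach is shorter and self-contained; the paper's approach has the advantage of exhibiting the lemma as one instance of the unified ``separable $\Leftrightarrow$ factors through a simplex $\Leftrightarrow$ classical model'' pattern that also yields \Cref{cor:sep-com} and \Cref{cor:separbility-LHS}. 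Your closing remark about invoking \Cref{prop:min-factoring-and-tensor} is close in spirit to what the paper does, although the paper pairs it with \Cref{thm:Bell-compression} (at $K=S_1$) rather than \Cref{thm:channel-polysimplices-characterization}.
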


In particular, if a behavior $P$ has a realization as $P = (M\otimes N)(\varrho)$ for some multimeters $M$ and $N$ and some shared state $\varrho$, we get the following:

\begin{cor}
    Consider $M \otimes N : D(\Cnum^{d_A d_B}) \rightarrow \CS \tmax CS_{l, r}^1$ and $\varrho \in D(\Cnum^{d_A d_B})$. Then $(M \otimes N)(\varrho)$ has a LHV model if and only if $(M \otimes N)(\varrho) \in \CS \tmin CS_{l, r}^1$.
\end{cor}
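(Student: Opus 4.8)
The plan is to derive this corollary as an immediate consequence of \Cref{sep-LHV} together with \Cref{stat_in_CS_max}. The key observation is that once we know $(M \otimes N)(\varrho)$ lies in the maximal tensor product $\CS \tmax CS_{l,r}^1$, the statement of \Cref{sep-LHV} applies verbatim with $P := (M \otimes N)(\varrho)$.

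First I would invoke \Cref{stat_in_CS_max}, which guarantees that for quantum multimeters $M \colon D(\Cnum^{d_A}) \to \CS$, $N \colon D(\Cnum^{d_B}) \to CS_{l,r}^1$ and any bipartite state $\varrho \in D(\Cnum^{d_A d_B})$, the behavior $P := (M \otimes N)(\varrho)$ is an element of $\CS \tmax CS_{l,r}^1$. Having established this membership, I would then apply \Cref{sep-LHV} to this particular $P$: that lemma states precisely that $P \in \CS \tmin CS_{l,r}^1$ if and only if $P$ has an LHV model. Substituting $P = (M \otimes N)(\varrho)$ into both sides of this equivalence yields exactly the claimed statement, namely that $(M \otimes N)(\varrho)$ has an LHV model if and only if $(M \otimes N)(\varrho) \in \CS \tmin CS_{l,r}^1$.

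There is essentially no obstacle here — the corollary is a specialization of the general lemma to the case where the no-signaling behavior arises from a concrete quantum realization. The only thing worth remarking is that the implication is not vacuous: it is genuinely restrictive to ask a \emph{quantum-realizable} behavior to be separable, since not every element of $\CS \tmax CS_{l,r}^1$ is of the form $(M \otimes N)(\varrho)$ (as noted in the excerpt, e.g.\ Popescu--Rohrlich boxes are not), but conversely every separable element \emph{is} quantum-realizable. So one could optionally add a sentence noting that the "only if" direction combined with the general theory shows separable behaviors are exactly those with classical (deterministic-mixture) realizations, but strictly speaking the proof is the two-line composition above. I would write:

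\begin{proof}
    By \Cref{stat_in_CS_max}, the behavior $P := (M \otimes N)(\varrho)$ belongs to $\CS \tmax CS_{l,r}^1$. Applying \Cref{sep-LHV} to this $P$ gives that $P \in \CS \tmin CS_{l,r}^1$ if and only if $P$ admits an LHV model, which is precisely the claim.
\end{proof}
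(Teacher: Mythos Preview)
Your proof is correct and follows exactly the paper's approach: the corollary is stated immediately after \Cref{sep-LHV} with the remark ``In particular, if a behavior $P$ has a realization as $P = (M\otimes N)(\varrho)$ \ldots\ we get the following,'' and no separate proof is given. Your two-line argument via \Cref{stat_in_CS_max} and \Cref{sep-LHV} is precisely the intended derivation.
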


It is possible to detect nonlocality in Bell behaviors with Bell inequalities.
Statistics from quantum theory are able to exceed bounds to statistics stemming from local realist behaviors, that is, statistics that have an LHV model.
There are also bounds to the statistics of quantum theory, for instance the Tsirelson bound, which in turn is exceeded by the already mentioned Popescu-Rohrlich boxes. Those are consequently also nonlocal.
A necessary condition for violating a Bell inequality is the inseparability of the quantum states and incompatibility of the measurements in the Bell experiment \cite{Fine2}. As we saw earlier, incompatibility can also be seen as a form of entanglement. Therefore, the Bell inequality acts as an entanglement witness in a generalized sense.

\begin{defi}
    A Bell inequality is a witness-type map defined as $W : \CS \tmax CS_{l,r}^1 \rightarrow \Rnum$ such that $W(P) \geq 0$  for all $P \in \CS \tmin CS_{l, r}^1$.
\end{defi}

This definition can easily be extended to every finite number of parties.
We call functionals witnesses if they are positive on separable elements.
Moreover, we will also often make use of the fact that for every inseparable (entangled) element, there is always a witness that detects it, i.e., maps it to a negative number.
For more details, the reader is asked to see \Cref{appendix:tensor-cones}.
The claim from above that Bell inequalities are generalized entanglement witnesses shall now be formalized for two parties. 

\begin{lem}
\label{BI_maxtp}
    Every Bell inequality is an element of the set $\CSCd \tmax (CS_{l,r}^*)^+$ and vice versa, every functional in this set is a Bell inequality.
\end{lem}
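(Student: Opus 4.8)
The plan is to unpack both defining conditions of a Bell inequality and recognize them as exactly the membership conditions for the maximal tensor product of the dual cones. First I would recall from \Cref{sec:maps-between-GPTs} (and \Cref{appendix:tensor-cones}) the general duality fact that $(\cone_1 \tmin \cone_2)^* = \cone_1^* \tmax \cone_2^*$ for proper cones; applied to $\cone_1 = \CSC$ and $\cone_2 = CS^+_{l,r}$, this gives $(\CSC \tmin CS^+_{l,r})^* = \CSCd \tmax (CS^*_{l,r})^+$. So the content of the lemma is just that ``being a Bell inequality'' is the same as ``being a nonnegative functional on the cone $\CSC \tmin CS^+_{l,r}$''.

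The key steps, in order. \textbf{Step 1:} A Bell inequality $W$ is by definition a linear functional on the vector space $CS_{k,g} \otimes CS_{l,r}$, i.e.\ an element of $(CS_{k,g} \otimes CS_{l,r})^* \simeq CS^*_{k,g} \otimes CS^*_{l,r}$, so it already lives in the right ambient space; one only has to check the positivity/normalization bookkeeping is consistent with identifying $W$ as a tensor rather than a functional, which is the standard Choi-type identification used throughout the paper. \textbf{Step 2:} The condition $W(P) \geq 0$ for all $P \in \CS \tmin CS^1_{l,r}$ extends by positive homogeneity to $W(P) \geq 0$ for all $P$ in the cone $\CSC \tmin CS^+_{l,r}$ generated by that base (every element of the cone is a nonnegative multiple of an element of the base, except $0$, on which $W$ vanishes). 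Conversely, restricting a cone-nonnegative functional to the base recovers the Bell-inequality condition. \textbf{Step 3:} Invoke the duality identity from Step~0: $W$ is nonnegative on $\CSC \tmin CS^+_{l,r}$ exactly when $W \in (\CSC \tmin CS^+_{l,r})^* = \CSCd \tmax (CS^*_{l,r})^+$. Combining Steps 1--3 gives both directions.

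I expect the only mild obstacle to be notational/bookkeeping: being careful that the ``base vs.\ cone'' passage is valid (i.e.\ that $\CS$ really is a base of $\CSC$ and $CS^1_{l,r}$ of $CS^+_{l,r}$, which is noted in \Cref{sec:main-results} and \Cref{sec:CS-GPT}), and that the minimal tensor product of the state spaces $\CS \tmin CS^1_{l,r}$ generates the cone $\CSC \tmin CS^+_{l,r}$ — this is exactly the relation $V(K_A \tmin K_B)^+ = V(K_A)^+ \tmin V(K_B)^+$ recorded in \Cref{sec:GPT-formalism}. There is no hard analytic content here; the lemma is essentially a translation exercise, and the ``vice versa'' direction is as immediate as the forward one once the cone-duality identity is in hand. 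One could alternatively give the completely explicit proof by writing $W = \sum_{i} \alpha_i \otimes \beta_i$ in the bases \cref{eq:effect-basis} of $CS^*_{k,g}$ and $CS^*_{l,r}$ and checking nonnegativity on product states $s_{\vec i} \otimes s_{\vec j}$, but the cone-duality argument is cleaner and is the one I would present.
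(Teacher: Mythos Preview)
Your proof is correct and follows essentially the same approach as the paper, which simply invokes the duality identity $(CS^*_{k,g})^+ \tmax (CS^*_{l,r})^+ = (CS_{k,g}^+ \tmin CS_{l,r}^+)^*$ from \Cref{appendix:tensor-cones}. Your additional care about passing from the base $\CS \tmin CS^1_{l,r}$ to the cone $\CSC \tmin CS^+_{l,r}$ via homogeneity is a valid elaboration that the paper leaves implicit.
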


The statement follows directly from the fact that $(CS^*_{k,g})^+ \tmax (CS^*_{l,r})^+ = (CS_{k,g}^+ \tmin CS_{l,r}^+)^*$ (see \cref{appendix:tensor-cones}). 

As mentioned before, it is a well-known fact that measurement incompatibility in the Bell setup is a necessity for violating Bell inequalities. 
In fact, no Bell inequality with $N$ parties can be violated if $N-1$ parties perform compatible measurements \cite{Fine2}.
In the bipartite case, this reduces to one set of compatible POVMs.
Both entanglement and incompatibility are a form of inseparability, which can be witnessed by a Bell inequality.

Popescu-Rohrlich nonlocal boxes are able to achieve the algebraic maximum of Bell inequalities, succeeding the Tsirelson bounds of quantum theory.
A question arises whether normalized elements of $D(\Cnum^{d_A}) \tmax D(\Cnum^{d_B})$ lead to statistics unachievable by quantum states.
That is, we could relax the condition on the joint system and instead of considering density matrices, let it be a more general object, the elements of which get mapped to probabilities by POVMs.
In this way, we might get more than quantum correlations in $\CSC \tmax CS_{l,r}^+$.
However, in \cite{Barnum_2010} this possibility was already ruled out.
Here we are able to write down a different proof and also give another point of view on this result.

\begin{thm}
    Let $d_B \geq d_A$. For all multimeters $M: D(\Cnum^{d_A}) \to \CSmn^1$, $N: D(\Cnum^{d_B}) \to CS_{l,r}^1$ and $\Gamma \in D(\Cnum^{d_A}) \tmax D(\Cnum^{d_B})$ there exist multimeter $\tilde N: D(\Cnum^{d_B}) \to CS_{l, r}^1$ and $\varrho \in D(\Cnum^{d_A d_B})$ such that
\begin{equation}
    (M \otimes N)(\Gamma) = (M \otimes \Tilde N)(\varrho)\, .
\end{equation}
\end{thm}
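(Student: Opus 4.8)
The plan is to absorb the ``post-quantum'' part of $\Gamma$ into Bob's measurement and then re-realize the resulting genuine quantum assemblage by an honest bipartite quantum state, via the GHJW theorem (\Cref{assem_real}). The starting observation is that $M \otimes N = (M \otimes \id)\circ(\id \otimes N)$ as maps, so that
\[
    (M \otimes N)(\Gamma) \;=\; (M \otimes \id)(\tau), \qquad \tau := (\id \otimes N)(\Gamma).
\]
The first step I would carry out is to check that $\tau$ is a legitimate quantum assemblage on $\Cnum^{d_A}$, i.e.\ an element of $D(\Cnum^{d_A}) \tmax CS^1_{l,r}$ in the correspondence of \Cref{assemblage}. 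Its components are $\tau_{b|y} = \Tr_{\Cnum^{d_B}}[(\one_{\Cnum^{d_A}} \otimes N_{b|y})\Gamma]$, and writing $N_{b|y} = \sum_k |\eta_k\rangle\langle\eta_k|$, block-positivity of $\Gamma$ gives for every $|\phi\rangle \in \Cnum^{d_A}$
\[
    \langle\phi|\tau_{b|y}|\phi\rangle \;=\; \sum_k \langle\phi\otimes\eta_k|\,\Gamma\,|\phi\otimes\eta_k\rangle \;\geq\; 0,
\]
so $\tau_{b|y}\geq 0$; moreover $\sum_b \tau_{b|y} = \Tr_{\Cnum^{d_B}}\Gamma =: \bar\tau$ is independent of $y$, is positive semidefinite and has unit trace, hence $\bar\tau \in D(\Cnum^{d_A})$. (Equivalently, $\id\otimes N$ sends $\tmax$ into $\tmax$ because $N$ is a channel, as in \Cref{stat_in_CS_max}.)

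The second step is to re-realize this assemblage. By \Cref{assem_real}, a quantum assemblage living on $\Cnum^{d_A}$ can always be produced by measuring one half of a bipartite quantum state whose measured half may have \emph{any} dimension $\geq d_A$; since $d_B \geq d_A$ by hypothesis, we may take that half to be $\Cnum^{d_B}$. This furnishes a state $\varrho \in D(\Cnum^{d_B d_A}) = D(\Cnum^{d_A d_B})$ together with a multimeter $\tilde N : D(\Cnum^{d_B}) \to CS^1_{l,r}$ (the one supplied by that lemma, hence genuinely a multimeter) such that $\tau_{b|y} = \Tr_{\Cnum^{d_B}}[(\one_{\Cnum^{d_A}} \otimes \tilde N_{b|y})\varrho]$ for all $b,y$; equivalently $\tau = (\id \otimes \tilde N)(\varrho)$, where $\id$ now acts on the $\Cnum^{d_A}$ leg carrying the assemblage (the labelling of the two tensor factors in \Cref{assem_real} being immaterial). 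Plugging this back in,
\[
    (M \otimes \tilde N)(\varrho) \;=\; (M \otimes \id)\bigl((\id \otimes \tilde N)(\varrho)\bigr) \;=\; (M \otimes \id)(\tau) \;=\; (M \otimes N)(\Gamma),
\]
which is the assertion.

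I do not anticipate a genuine difficulty: the whole content sits in the first step --- that half-measuring the block-positive operator $\Gamma$ with an \emph{arbitrary} POVM $N$ still leaves positive operators whose sum is a density matrix. This is exactly what, and all that, block-positivity provides, and it is also the ``other point of view'' being advertised: the only non-quantum feature of $\Gamma$ is invisible to a local measurement on one side, hence it can be transferred to the other party's POVM and then undone by purification. What remains is bookkeeping: verifying that the hypothesis $d_B \geq d_A$ is precisely what \Cref{assem_real} needs in order to place the new realization on Bob's $d_B$-dimensional system, and observing that no invertibility or full-rank assumption on $\bar\tau$ is required, since \Cref{assem_real} imposes none.
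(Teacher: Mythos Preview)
Your proposal is correct and follows essentially the same route as the paper: apply $\id\otimes N$ to $\Gamma$ to obtain a genuine quantum assemblage on $\Cnum^{d_A}$, invoke \Cref{assem_real} (using $d_B\geq d_A$) to realize it as $(\id\otimes\tilde N)(\varrho)$, and then hit both sides with $M\otimes\id$. You supply more explicit detail than the paper in checking positivity of $\tau_{b|y}$ via the spectral decomposition of $N_{b|y}$ and block-positivity of $\Gamma$, but the structure and key lemma are identical.
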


\begin{proof}
    For this $\Gamma$ we can write
    \begin{equation}
        (\id \otimes N)(\Gamma) \in D(\Cnum^{d_A}) \tmax \CS, \mspace{8mu} (M \otimes N)(\Gamma) \in \CS \tmax CS^1_{l, r} \, .
    \end{equation}
    These objects are still positive and we get an assemblage $\sigma$ from $\Gamma$.
    According to \Cref{assem_real} the obtained assemblage $\sigma$ finds a quantum realization 
    \begin{equation}
        (\id \otimes N)(\Gamma) = \sigma =(\id \otimes \Tilde N)(\varrho) \, .
    \end{equation}
    Now we are already done since for all $a \in [k],\,b \in [l],\,x \in [g], \,y \in [r]$ it will hold that
    \begin{equation}
        \Tr[(M_{a|x} \otimes N_{b|y}) \, \Gamma] = \Tr[(M_{a|x} \otimes \Tilde N_{b|y}) \, \varrho] \, .
    \end{equation}
\end{proof}
It follows that there cannot be a Bell inequality to detect if the statistics in $\CS \tmax CS^1_{l, r}$ stemmed from a density matrix in $D(\Cnum^{d_A d_B})$ or a block-positive, unit trace matrix $\Gamma \in D(\Cnum^{d_A}) \tmax D(\Cnum^{d_B})$.
\begin{cor}
    For every $\Gamma \in D(\Cnum^{d_A}) \hat \otimes D(\Cnum^{d_B})$ and multimeters $M: D(\Cnum^{d_A}) \to \CSmn^1$, $N: D(\Cnum^{d_B}) \to CS_{l,r}^1$ there is no Bell inequality $W \in \CSCd \tmax (CS^\ast_{l,r})^+$ with $W((M \otimes N)(\Gamma)) < 0$ and $W((\Tilde M \otimes \Tilde N)(\varrho)) \geq 0$ for all multimeters $\Tilde M: D(\Cnum^{d_A}) \to \CSmn^1$, $ \tilde N: D(\Cnum^{d_B}) \to CS_{l,r}^1$ and density matrices $\varrho \in D(\Cnum^{d_A d_B})$.
\end{cor}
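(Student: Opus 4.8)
The plan is to read this off the previous theorem by a short contradiction argument, with a little bit of dimension bookkeeping. Suppose for contradiction that such a Bell inequality $W \in \CSCd \tmax (CS^\ast_{l,r})^+$ exists, i.e.\ $W((M \otimes N)(\Gamma)) < 0$ while $W((\tilde M \otimes \tilde N)(\varrho)) \geq 0$ for all multimeters $\tilde M \colon D(\Cnum^{d_A}) \to \CSmn^1$, $\tilde N \colon D(\Cnum^{d_B}) \to CS^1_{l,r}$ and all density matrices $\varrho \in D(\Cnum^{d_A d_B})$.

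First I would reduce to the case $d_B \geq d_A$. Since the maximal tensor product of state spaces is symmetric under the flip, $D(\Cnum^{d_A}) \tmax D(\Cnum^{d_B})$ is canonically isomorphic to $D(\Cnum^{d_B}) \tmax D(\Cnum^{d_A})$, and likewise $\CSCd \tmax (CS^\ast_{l,r})^+$ is unchanged when the two tensor factors are exchanged; hence, after possibly swapping the roles of the two parties (and of $M$ with $N$), we may assume the hypothesis $d_B \geq d_A$ of the theorem holds. Then I apply the theorem to $\Gamma$, $M$ and $N$: it produces a multimeter $\tilde N \colon D(\Cnum^{d_B}) \to CS^1_{l,r}$ and a density matrix $\varrho \in D(\Cnum^{d_A d_B})$ with $(M \otimes N)(\Gamma) = (M \otimes \tilde N)(\varrho)$. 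Choosing $\tilde M := M$ in the assumed positivity of $W$, and using that $W$ is a single fixed linear functional, we obtain $0 \leq W((M \otimes \tilde N)(\varrho)) = W((M \otimes N)(\Gamma)) < 0$, a contradiction. Therefore no such $W$ exists.

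There is essentially no obstacle here beyond invoking the theorem with the correct choice of fixed and free multimeters; the only points requiring a moment's attention are the symmetry of $\tmax$ used to meet the theorem's hypothesis $d_B \geq d_A$, and the observation that a witness, being one functional, must take the same value on two behaviors that happen to coincide. Equivalently, one can phrase the argument without contradiction: the theorem shows the set $\{(M \otimes N)(\Gamma) : \Gamma \in D(\Cnum^{d_A}) \tmax D(\Cnum^{d_B})\}$ is contained in $\{(M \otimes \tilde N)(\varrho) : \tilde N \text{ a multimeter}, \, \varrho \in D(\Cnum^{d_A d_B})\}$, so any functional nonnegative on the latter set is automatically nonnegative on the former.
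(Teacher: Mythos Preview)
Your argument is correct and is exactly the intended one: the paper states this corollary without proof, as an immediate consequence of the preceding theorem, and your contradiction (equivalently, your set-inclusion reformulation) is precisely how one reads it off. The only thing you add beyond the paper is the symmetry reduction to $d_B \geq d_A$, which the paper tacitly assumes; your handling of it is fine.
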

Note that the correctness of this theorem is not clear anymore when extending to the general multipartite case.
The technique of the given proof breaks down since there is no quantum realization for every steering assemblage in case of more than two parties as it was shown in \cite{postQ_steering} by considering PR-boxes again.

\subsection{CHSH inequalities as incompatibility witnesses}

This section is about proving a theorem linking CHSH inequalities and isomorphisms between cones with square basis, and about applying it in order to retrieve interesting results in quantum theory.
The CHSH scenario, in which two parties perform two dichotomic (also binary or two-outcome) measurements, therefore plays a special role among Bell experiments.
Most commonly, a CHSH inequality is represented by quantum correlations, which are expectation values of the products of outcomes $\pm 1$. A lower and an upper bound together with 4 positions of a negative sign lead to 8 CHSH inequalities.
\begin{equation}
    \begin{split}
        &-2 \leq E_{1,1} + E_{1,2} + E_{2,1} - E_{2,2} \leq 2 \\
        &-2 \leq E_{1,1} + E_{1,2} - E_{2,1} + E_{2,2} \leq 2 \\
        &-2 \leq E_{1,1} - E_{1,2} + E_{2,1} + E_{2,2} \leq 2 \\
        &-2 \leq  - E_{1,1} + E_{1,2} + E_{2,1} + E_{2,2} \leq 2
    \end{split}
\end{equation}
where $E_{x,y} = p_{1,1|x,y} - p_{1,2|x,y} - p_{2,1|x,y} + p_{2,2|x,y}$.
Here, the first index corresponds to the first party and the second index to the second party, whereas the value indicates which measurement is being used.
Those inequalities hold for all classical correlations and can be violated by nonlocal theories such as quantum mechanics or theories containing PR-boxes.

\begin{thm}
\label{thm:CHSH-is-linear-isom}
    The tensors $\xi_L \in (CS^*_{2,2})^+ \tmax (CS^*_{2,2})^+$ corresponding to linear isomorphisms  $L: \CStwo \rightarrow \CStwodu$ are in one-to-one correspondence with CHSH inequalities.
\end{thm}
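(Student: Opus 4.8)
The plan is to reduce the statement to the three-dimensional geometry of the square and then match two explicit finite lists by direct computation.

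\emph{Step 1 (geometry of the cones).} First I would prove that $CS^1_{2,2}$ is a square: a column-stochastic $2\times 2$ matrix is determined by its first-row entries $p,q\in[0,1]$, so $\bigl(\begin{smallmatrix} p & q \\ 1-p & 1-q\end{smallmatrix}\bigr)\mapsto(2p-1,2q-1)$ is an affine isomorphism of $CS^1_{2,2}$ onto $[-1,1]^2$. Hence $\CStwo=CS^+_{2,2}$ is the cone over a square, with four extreme rays $s_{1,1},s_{1,2},s_{2,1},s_{2,2}$ obeying the single affine relation $s_{1,1}+s_{2,2}=s_{1,2}+s_{2,1}$; since the polar of a square is again a square, $\CStwodu=(CS^*_{2,2})^+$ is likewise a cone over a square, with extreme rays $m^{(1)}_1,m^{(1)}_2,m^{(2)}_1,m^{(2)}_2$ satisfying $m^{(1)}_1+m^{(1)}_2=m^{(2)}_1+m^{(2)}_2=\one_{CS^1_{2,2}}$. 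The dual bases \cref{eq:effect-basis} and \cref{eq:state-basis} with $k=g=2$ provide convenient coordinates.

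\emph{Step 2 (tensor--map dictionary).} By the correspondence between tensors and maps (\cite[Lemma A.1]{jencova2018incompatible}; cf.\ \cref{lem:xi-to-phi} and \cref{eq:tensor-to-map}), the elements $\xi_L\in(CS^*_{2,2})^+\tmax(CS^*_{2,2})^+$ are in bijection with the positive linear maps $L:\CStwo\to\CStwodu$ via $\langle\xi_L,g\otimes f\rangle=\langle L(f),g\rangle$, and under this bijection $\xi_L$, viewed as a functional on $CS^+_{2,2}\tmin CS^+_{2,2}$, is exactly the Bell inequality it represents (\cref{BI_maxtp}). It therefore remains to show that $L$ is an isomorphism of ordered cones if and only if $\xi_L$ is a CHSH inequality.

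\emph{Step 3 (matching the two lists).} For the ``only if'' direction: an isomorphism of the two square cones permutes extreme rays, so $L$ sends $s_{1,1},\dots,s_{2,2}$ to $m^{(1)}_1,m^{(1)}_2,m^{(2)}_1,m^{(2)}_2$ up to positive scalars; pushing the relation $s_{1,1}+s_{2,2}=s_{1,2}+s_{2,1}$ through $L$ and using linear independence of $\{m^{(1)}_1,m^{(2)}_1,\one_{CS^1_{2,2}}\}$ forces the four scalars to coincide, so $L$ lies in an explicit list of $2\cdot2\cdot2=8$ rays (which pair of opposite rays maps to which, and the two orderings within each pair). Computing $\xi_L$ for each of these eight rays and expanding it in the correlator/marginal coordinates $E_{x,y}$ produces exactly the eight CHSH functionals, namely those with no marginal terms whose $2\times2$ correlator coefficient matrix is one of the eight $\pm1$-matrices of nonzero determinant. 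The ``if'' direction is this computation run backwards: for each of the eight CHSH functionals one reads off the associated $L$ and checks it is bijective with positive inverse. Passing to rays on both sides then yields the claimed one-to-one correspondence.

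\emph{Main obstacle.} The content is mostly bookkeeping, concentrated in Step 3: choosing coordinates in which the passage $\xi_L\leftrightarrow L\leftrightarrow$ functional is transparent, and verifying that the eight extreme-ray assignments reproduce precisely the eight CHSH sign patterns --- in particular that the positivity facets $p_{a,b|x,y}\ge0$ correspond to rank-one, hence non-invertible, maps and are correctly excluded. The one conceptual point to settle first is the meaning of ``isomorphism'': it must be an isomorphism of ordered cones (equivalently, a symmetry identification of the two squares), since a merely bijective positive map can carry one square strictly into the interior of the other and would not give a CHSH inequality.
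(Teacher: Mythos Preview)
Your proposal is correct and follows essentially the same route as the paper: identify $CS^1_{2,2}$ with a square, note that $\CStwo$ and $\CStwodu$ are both cones over squares, enumerate the eight order isomorphisms between them (the dihedral group $D_4$), and compute the corresponding tensors to see they are exactly the eight CHSH functionals. The paper carries this out with explicit coordinates in $\Rnum^3$, writing down one isomorphism by hand, checking it gives a CHSH inequality, and then composing with the eight square symmetries to obtain the remaining seven; your Step~3 packages the same computation more structurally via the ``opposite pairs'' count and the characterization of CHSH functionals as those whose $2\times2$ correlator-coefficient matrix has entries $\pm1$ and nonzero determinant. Your closing remark that ``isomorphism'' must mean \emph{order} isomorphism (positive with positive inverse) is exactly the implicit assumption the paper uses when it asserts that vertices go to vertices with neighbours preserved, and is worth making explicit.
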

This was also observed in \cite[Example 14]{jencova2018incompatible}. Here, we give a more elementary proof. We first provide an outline of the proof.
\begin{enumerate}
    \item We show that the cone $\CStwo$ is generated by matrices $\{X_i\}_{i\in[4]}$ satisfying the square condition and we map them to vectors $\{x_i\}_{i \in [4]}$ in the Euclidean space $\Rnum^3$.
    \item For the vectors $\{x_i\}_{i \in [4]} \subset \Rnum^3$ we construct dual vectors $\{\vp_j\}_{j \in [4]} \subset  (\Rnum^3)^* \cong \Rnum^3$ and we make sure that both sets of vectors lie in a square and span the corresponding vector space.
    \item A linear isomorphism $L$ maps vectors $\{x_i\}_{i \in [4]}$ to dual vectors $\{\vp_j\}_{j \in [4]}$ in $(\Rnum^3)^*$.
    Using the one-to-one correspondence of linear maps and maximal tensor products of cones, we can decompose the tensor $\xi_L$.
    \item We find the dual matrices $\{F_j\}_{j \in [4]}$ corresponding to the set of dual vectors and specify $\xi_L$.
    \item The decomposition enables us to show that $\xi_L$ is a CHSH inequality.
    \item We show that all CHSH inequalities correspond to such  a tensor $\xi_L$, or equivalently, to such an isomorphism $L$.
\end{enumerate}
\begin{proof}
Any matrix from $CS_{2,2}^1$ can be written in the following form, with $a$, $b \in [0,1]$:
\begin{align}
    &\begin{bmatrix}
        a & b \\
        1-a &1-b
    \end{bmatrix}
    = a
    \begin{bmatrix}
        1 & b \\
        0 & 1-b
    \end{bmatrix}
    + (1-a)
    \begin{bmatrix}
        0 & b \\
        1 & 1-b
    \end{bmatrix} \\
    &= a \left(b
    \begin{bmatrix}
    1 & 1 \\
    0 & 0 
    \end{bmatrix} 
    +(1-b)
    \begin{bmatrix}
        1 & 0 \\
        0 & 1
    \end{bmatrix}
    \right)
    +(1-a) \left(b
    \begin{bmatrix}
        0 & 1 \\
        1 & 0
    \end{bmatrix}
    +(1-b)
    \begin{bmatrix}
        0 & 0 \\
        1 & 1
    \end{bmatrix}  
    \right)
\end{align}
So we can define the generating matrices
\begin{equation}
    X_1 = \begin{bmatrix}
            1 & 1 \\
            0 & 0 
        \end{bmatrix}, \,
    X_2 = \begin{bmatrix}
        1 & 0 \\
        0 & 1
        \end{bmatrix}, \,
    X_3 = \begin{bmatrix}
        0 & 0 \\
        1 & 1
        \end{bmatrix}, \,
    X_4 = \begin{bmatrix}
        0 & 1 \\
        1 & 0
        \end{bmatrix},
\end{equation}
which form the basis of the cone $\CStwo$.
Note that:
\begin{equation}
\label{square_condition}
    X_1 + X_3 = X_2 + X_4 \, ,
\end{equation}
which will be referred to as the ``square condition" because it says that two diagonals intersect in the center of the shape the 4 points form.
Consequently the cone $\CStwo$ is generated by a square.
We now use an isomorphism $\iota: \CStwo \rightarrow \Rnum^3, \, X_i \mapsto x_i$ to map the four matrices $X_i$ to vectors in Euclidean space $\Rnum^3$ that, satisfying \cref{square_condition}, lie in a square.
\begin{equation}
    x_1 = \begin{pmatrix} 1 \\ -1 \\ 1 \end{pmatrix}, \,
    x_2 = \begin{pmatrix} 1 \\ 0 \\ 0 \end{pmatrix}, \,
    x_3 = \begin{pmatrix} 0 \\ 1 \\ 0 \end{pmatrix}, \,
    x_4 = \begin{pmatrix} 0 \\ 0 \\ 1 \end{pmatrix}
\end{equation}
The choice was arbitrary except that the vectors $\{x_i\}_{i \in [4]}$ have to satisfy \cref{square_condition} and as such they can be seen as the extremal points of $\CStwo$.
From here we want to find dual vectors $\{\vp_j\}_{j \in [4]}$ that fulfill the constraint
\begin{equation}
\label{eq:duality-const-CHSH}
    \vp_j(x_i ) \geq 0 \, .
\end{equation}
To find the dual vectors, fix one $\vp_j = (a \mspace{5mu} b \mspace{5mu} c)$ and with \cref{eq:duality-const-CHSH} one can solve for the components of $\vp_j$.
\begin{equation}
    \vp_1 = (1 \mspace{5mu} 1 \mspace{5mu} 0), \, \vp_2 = (0 \mspace{5mu} 1 \mspace{5mu} 1), \, \vp_3 = (0 \mspace{5mu} 0 \mspace{5mu} 1), \, \vp_4 = \vp_1 - \vp_2 + \vp_3 = (1 \mspace{5mu} 0 \mspace{5mu} 0) 
\end{equation}
Also these dual vectors form a square and they span $(\Rnum^3)^*$.
As both cones are generated by a square, they are isomorphic to each other $\CStwo \cong \CStwodu$.
The linear isomorphism is $L : \CStwo \rightarrow \CStwodu$.
We can map a square to a square by mapping a vertex to a vertex.
Since the isomorphism must be linear, neighboring vertices must remain being neighbors.
Therefore we are left with $4 \cdot 2 = 8$ isomorphisms $L$.
For instance, one could choose $\hat L$ such that:
\begin{equation}
        x_1 \overset{h}{\mapsto} \vp_2, \, x_2 \overset{h}{\mapsto} \vp_3, \, x_3 \overset{h}{\mapsto} \vp_4, \, x_4 \overset{h}{\mapsto} \vp_1\, ,
\end{equation}
where $h = (\iota^*)^{-1} \circ \hat L \circ \iota^{-1}:\Rnum^3 \rightarrow (\Rnum^3)^*$, but up to the isomorphisms this can be thought of as just $\hat L$.
For an isomorphism $L$ we define the tensor $\xi_L \in (CS^*_{2,2})^+ \tmax (CS^*_{2,2})^+$:
\begin{equation}
\label{respawn}
    \vp_{j_L}(x_i) = \langle L(x_j),x_i\rangle = \langle L(\iota^{-1}(X_j)),\iota^{-1}(X_i)\rangle = \langle\xi_L,X_j \otimes X_i \rangle\, ,
\end{equation}
Let $\{e_i\}_{i \in [3]}$ and $\{f_j\}_{j \in [3]}$ be the standard bases of $\Rnum^3$ and $\Rnum^{3^*}$ respectively, such that $f_j( e_i ) = \delta_{ji}$ with the usual Kronecker delta.
We write down the action of $\xi_L$ using its basis decomposition $\xi_L = \sum_{k,l = 1}^3 \alpha_{kl} \mu_k \otimes \mu_l$, where $\mu_k = \iota^*(f_k)$, so that 
\begin{align}
    \langle\xi_L, X_i \otimes X_j \rangle &= \left\langle \sum_{k,l = 1}^3 \alpha_{kl} \mu_k \otimes \mu_l, \iota^{-1}(x_i)  \otimes \iota^{-1}(x_j) \right\rangle  \\
    &= \sum_{k,l = 1}^3 \alpha_{kl} \, \langle \mu_k, \iota^{-1}(x_i) \rangle \langle \mu_l,\iota^{-1}(x_j) \rangle  \\
    &=  \sum_{k,l = 1}^3 \alpha_{kl} \langle (\iota^{-1})^*(\mu_k), x_i \rangle \langle (\iota^{-1})^*(\mu_l) x_j \rangle \\
    &= \langle \sum_{k,l = 1}^3 \alpha_{kl} \, f_k \otimes f_l, x_i \otimes x_j \rangle
\end{align}
and the coefficients are found via $\langle \xi_L, \iota^{-1}(e_k) \otimes \iota^{-1}(e_l)\rangle = ( L(e_k))(e_l ) = \alpha_{kl}$.
Since the set of dual vectors $\{\vp_j\}_{j \in [4]}$ span $(\Rnum^3)^*$, the standard basis $\{f_j\}_{j \in [3]}$ can be expressed through them and vice versa.
Expressing $\xi_{\hat L}$ with the $\vp$'s yields for the above example:
\begin{equation} \label{eq:decomp-psi-hatL}
    \langle \xi_{\hat L}, \, \cdot \, \rangle =\langle \vp_3 \otimes \vp_2 + \vp_2 \otimes \vp_4 + (\vp_4 - \vp_3) \otimes \vp_3, \iota \otimes \iota(\,\cdot\,) \rangle
\end{equation}

Due to the isomorphisms $\iota$ and $\xi_L$ there must be a correspondence between the functionals $\{\vp_j\}_{j \in [4]}$ and dual matrices $\{F_j\}_{j \in [4]}$ via:
\begin{equation}
\label{eq:dual-matrix-correspondance}
    \Tr(F_j^\intercal \, X_i) = \langle F_j,X_i\rangle = \langle F_j \circ \iota^{-1} \circ \iota, X_i \rangle = \langle \vp_j , \iota (X_i)\rangle =\vp_j (x_i)
\end{equation}
Recall that the matrices $X$  contain conditional probabilities $X_{a|x}=p_{a|x}$ for $a,x \in [2]$.
So, the result of \cref{eq:dual-matrix-correspondance} has to be a combination of (conditional) probabilities.
Let $q$ be an arbitrary vector in convex hull of $\{x_1,x_2,x_3,x_4\}$ in $\Rnum^3$ corresponding to some column stochastic matrix
\begin{equation}
    \begin{bmatrix} p_{1|1} & p_{1|2} \\ p_{2|1} & p_{2|2} \end{bmatrix} \, .
\end{equation}
Then, there is a convex combination of matrices $\{X_i\}_{i \in [4]}$, such that:
\begin{equation}
\label{decomp_phiq}
    \vp_j( q ) = \left \langle \vp_j, \iota \left(\sum_{i=1}^4 \lambda_i X_i \right) \right \rangle=\Tr \left(F_j^\intercal \, \begin{bmatrix} p_{1|1} & p_{1|2} \\ p_{2|1} & p_{2|2} \end{bmatrix} \right)
\end{equation}
This can be written down explicitly for every $q$ by calculating $\vp_j(x_i)$ because $\{x_i\}_{i \in [4]}$ span $\Rnum^3$.
This is explained best by showing one instance, that is, we fix one index $j$, say $\vp_1$.
We have
\begin{equation}
    \vp_1(x_1) = \vp_1(x_4)=0, \, \vp_1(x_2) = \vp_1(x_3) = 1
\end{equation}
and the matrices $\{X_i\}_{i \in [4]}$ were given above.
By solving \cref{eq:dual-matrix-correspondance} for every $i$ we can deduce that $F_1^\intercal = \begin{bmatrix} 1 & 1 \\ -1 & 0 \end{bmatrix}$ and so 
\begin{equation}
    \vp_1(q) = \Tr \left(F_1^\intercal \, \begin{bmatrix} p_{1|1} & p_{1|2} \\ p_{2|1} & p_{2|2} \end{bmatrix} \right) = p_{1|1} + p_{2|1} - p_{1|2} = 1 - p_{1|2} = p_{2|2} \, .   
\end{equation}
Note that the construction of $F_j$ is not unique although matrices, vectors and dual vectors are fixed.
We repeat the same procedure for the remaining $\vp_j$ to compute $\vp_j(q)$ and find:
\begin{equation}\label{eq:CHSH-action-of-vp}
    \vp_1(q) = p_{2|2}, \, \vp_2(q) = p_{2|1}, \, \vp_3(q) = p_{1|2}, \, \vp_4(q) = p_{1|1}
\end{equation}
The results we obtained again satisfy the square condition \cref{square_condition}.
By construction (recall \cref{eq:duality-const-CHSH}) $\langle \xi_L,P \rangle \geq 0$ for any $P$ that is a convex combination of product elements, i.e., for any separable $P \in \CStwo \tmin \CStwo$ and hence $\xi_L \in \CStwodu \tmax \CStwodu$ for any $L$.
It follows from \Cref{BI_maxtp} that $\xi_L$ is a Bell inequality.

For $P = \sum_k \lambda_k Q_k \otimes \tilde{Q}_k \in \CStwo \tmin \CStwo$, we use the above positivity and plug the obtained expressions of $\vp_j(q_k)$ and $\vp_j(\tilde{q}_k)$ for the corresponding vectors $q_k$ and $\tilde{q}_k$ of the matrices $Q_k$ and $\tilde{Q}_k$ into the decomposition \cref{eq:decomp-psi-hatL} to find  that
\begin{align}
    \langle \xi_{\hat L}, P \rangle &=\langle (\iota^* \otimes \iota^*)(\vp_3 \otimes \vp_2 + \vp_2 \otimes \vp_4 + (\vp_4 - \vp_3) \otimes \vp_3), P \rangle\\
    &= \sum_{k=1}^n \lambda_k \, \langle F_3 \otimes F_2 + F_2 \otimes F_4 + (F_4-F_3)\otimes F_3), Q_k \otimes \Tilde Q_k\rangle\\
    &= \sum_{k=1}^n \lambda_k \, \left(p_{1|2}^{(k)} \cdot \tilde{p}_{2|1}^{(k)} + p_{2|1}^{(k)} \cdot \tilde{p}_{1|1}^{(k)} + p_{1|1}^{(k)} \cdot \tilde{p}_{1|2}^{(k)} - p_{1|2}^{(k)} \cdot \tilde{p}_{1|2}^{(k)} \right)\\
    &= p_{1,2|2,1} + p_{2,1|1,1} + p_{1,1|1,2} - p_{1,1|2,2} \geq 0 ,\label{eq:CHSH-ineq-1}
\end{align}
where we have set $p_{a,b|x,y}=\sum_{k=1}^n \lambda_k \,  p_{a|x}^{(k)} \cdot \tilde{p}_{b|y}^{(k)}  $, which is a LHV model for $P$.
We want to rewrite this in terms of expectation values in order to show that this is indeed a CHSH inequality.
Using the normalization of the probabilities and the marginal probabilities as well as standard correlations
\begin{equation}\label{eq:CHSH-ineq-2}
    E_{x,y} = p_{1,1|x,y} - p_{1,2|x,y} - p_{2,1|x,y} + p_{2,2|x,y} \, ,
\end{equation}
the inequality obtained from \cref{eq:CHSH-ineq-1} is equivalent to
\begin{equation}
    E_{1,1} + E_{2,1} + E_{2,2} - E_{1,2} \leq 2 \, .
\end{equation}
This is a CHSH inequality. Constructing the remaining CHSH inequalities now becomes easier as we can obtain them by applying the symmetries of a square to $\hat L$.

\begin{figure}
    \centering
    \begin{tikzpicture}

    \draw[thick, blue] (0,0) rectangle (4,4);

    \node[below] at (0,0) {$\varphi_1$};
    \node[below] at (4,0) {$\varphi_2$};
    \node[above] at (4,4) {$\varphi_3$};
    \node[above] at (0,4) {$\varphi_4$};

    \draw[<->, thick,green] (.3,.3) -- (3.7,3.7);
    \node[right, green] at (2.2,2){$Q_1$};
    
    \draw[<->, thick,orange] (.3,3.7) -- (3.7,.3);
    \node[left, orange] at (1.8,2){$Q_2$};

    \draw[<->, thick] (-0.8,0) -- (-0.8,4);
    \node at (-1.2,2) {$S_2$};
    \draw[<->, thick] (4.8,0) -- (4.8,4);
    \node at (5.2,2) {$S_2$};
    \draw[<->, thick] (0,-0.8) -- (4, -0.8);
    \node at (2,-.5) {$S_1$};
    \draw[<->, thick] (0,4.8) -- (4, 4.8);
    \node at (2,4.5) {$S_1$};
\end{tikzpicture}
    \caption{The isomorphisms $S_1, \, S_2, \, Q_1$ and $Q_2$ are reflections of the square and can also be used to get counterclockwise rotations by $\frac{\pi}{2}, \pi \text{ and } \frac{3 \pi}{2}$. The arrows indicate the effects of the maps, that is, they reflect across the symmetry axes of the square. As a consequence, measurement choices and outcomes get swapped.}
    \label{isom_square}
\end{figure}

By symmetries of a square we mean reflections along the symmetry axes and rotations as indicated in \cref{isom_square}, i.e.\ the elements of the dihedral group $D_4$, which is known to have 8 of them.
By comparing the effects of the symmetries in \cref{isom_square} with the explicit expressions in \cref{eq:CHSH-action-of-vp}, it can readily be verified that the symmetries have physical implications, e.g.\ $\vp_1(q) = p_{2|2} \overset{S_1}{\mapsto} \vp_2(q) = p_{2|1}$.
\begin{enumerate}
    \item id: identity
    \item $S_1$: reflection along vertical axis (flip measurement choice)
    \item $S_2$: reflection along horizontal axis (flip measurement and outcome)
    \item $Q_1$: reflection along diagonal $\vp_1 - \vp_3$ axis (flip outcomes of second measurement)
    \item $Q_2$: reflection along diagonal $\vp_2 - \vp_4$ axis (flip outcomes of first measurement)
    \item $R_{\frac{\pi}{2}}=S_1 \circ Q_2$: rotation by $\frac{\pi}{2}$
    \item $R_\pi = S_1 \circ S_2$: rotation by $\pi$
    \item $R_{\frac{3\pi}{2}}=S_2 \circ Q_2$: rotation by $\frac{3\pi}{2}$
\end{enumerate}
A key step in obtaining the inequality corresponding to $\hat L$ was to find the decomposition of $\xi_{\hat L}$ in terms of $\{\vp_j\}_{j \in [4]}$ (see \cref{eq:decomp-psi-hatL}).
By concatenating $\hat L$ with an element from $D_4$, that is, another isomorphism between squares, we get a decomposition of a new tensor $\xi_L$.
This allows for recycling the previously calculated $\alpha_{kl}$, e.g.\ 
\begin{equation}
    L(e_1) = L(x_2) = R_\pi \circ \hat L(x_2) = R_\pi(\vp_3) = \vp_1 = \hat L(x_4) = \hat L(e_3) \Rightarrow \alpha_{1l} = \alpha^{(\hat L)}_{3l} \, \forall l \in [3] \, .
\end{equation}
This leads to another CHSH inequality by repeating the same procedure as before and in the end, to all CHSH inequalities.
\begin{align}
	&\langle \xi_{R_{\frac{\pi}{2}} \circ \hat L}, P \rangle = p_{1,2|2,1} + p_{1,1|1,1} - p_{1,2|1,2} + p_{2,2|2,2} \geq 0 \Leftrightarrow -2 \leq E_{1,1} - E_{2,1} + E_{2,2} + E_{1,2} \\
	&\langle \xi_{S_1 \circ \hat L}, P \rangle = p_{2,2|1,1} + p_{1,2|1,2} - p_{2,2|2,2} + p_{2,1|2,1} \geq 0 \Leftrightarrow -E_{1,1} + E_{2,1} + E_{2,2} + E_{1,2} \leq 2 \\
	&\langle \xi_{R_{\pi \circ \hat L}}, P \rangle = p_{1,1|1,1} + p_{2,2|2,1} - p_{2,1|2,2} + p_{2,1|1,2} \geq 0 \Leftrightarrow -2 \leq E_{1,1} + E_{2,1} + E_{2,2} - E_{1,2} \\
	&\langle \xi_{S_2 \circ \hat L}, P \rangle = p_{1,2|1,1} + p_{1,2|1,2} - p_{1,2|2,2} + p_{1,1|2,1} \geq 0 \Leftrightarrow -2 \leq - E_{1,1} + E_{2,1} + E_{2,2} + E_{1,2} \\
	&\langle \xi_{R_{\frac{3\pi}{2}} \circ \hat L}, P \rangle = p_{1,2|1,1} + p_{2,1|1,2} - p_{1,1|2,2} + p_{1,1|2,1} \geq 0 \Leftrightarrow E_{1,1} - E_{2,1} + E_{2,2} + E_{1,2} \leq 2 \\
	&\langle \xi_{Q_1 \circ \hat L}, P \rangle = p_{1,2|1,2} + p_{2,1|1,1} - p_{1,1|2,1} + p_{1,1|2,2} \geq 0 \Leftrightarrow E_{1,1} + E_{2,1} - E_{2,2} + E_{1,2} \leq 2 \\
	&\langle \xi_{Q_2 \circ \hat L}, P \rangle = p_{1,1|1,2} + p_{2,2|1,1} - p_{1,2|2,1} + p_{1,2|2,2} \geq 0 \Leftrightarrow -2 \leq E_{1,1} + E_{2,1} - E_{2,2} + E_{1,2}
\end{align}
This concludes that all 8 CHSH inequalities can be obtained by such an isomorphism $L$ from above.
\end{proof}

Of course, there are easier ways to derive the CHSH inequality and the above result was already discovered.
So, let us immediately make use of the obtained isomorphism.
In \cite{Wolf_2009} the authors showed the equivalence of dichotomic POVMs being incompatible and them allowing for violation of the CHSH inequality.
We reformulate this result in our terms and prove it using our established formalism.
\begin{thm}
\label{result1}
    A multimeter $M: D(\Cnum^{d_A}) \to CS^1_{2,2}$ is incompatible if and only if there exists another multimeter $N: D(\Cnum^{d_B}) \to CS^1_{2,2}$ and a bipartite entangled state $\varrho \in D(\Cnum^{d_A d_B})$ such that the resulting behavior violates the CHSH inequality.
\end{thm}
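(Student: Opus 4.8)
The plan is to prove the two implications separately. The ``only if'' direction is immediate from the factorization characterization of compatibility, while the ``if'' direction combines the tensor picture of incompatibility with the identification of CHSH inequalities provided by \Cref{thm:CHSH-is-linear-isom}.

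For the easy direction, suppose $M$ is compatible. By \Cref{cor:sep-com} we may write $M=\Phi\circ G$ for a measurement $G\colon D(\Cnum^{d_A})\to S_\Lambda$ and a channel $\Phi\colon S_\Lambda\to CS^1_{2,2}$. Then for any multimeter $N\colon D(\Cnum^{d_B})\to CS^1_{2,2}$ and any state $\varrho$, bilinearity gives $(M\otimes N)(\varrho)=(\Phi\otimes\id)\big((G\otimes N)(\varrho)\big)$, and $(G\otimes N)(\varrho)\in S_\Lambda\tmax CS^1_{2,2}$ by \Cref{stat_in_CS_max}. Since $S_\Lambda$ is a classical (simplex) state space, its tensor products coincide, $S_\Lambda\tmax CS^1_{2,2}=S_\Lambda\tmin CS^1_{2,2}$ (see \Cref{appendix:tensor-cones}), so $(G\otimes N)(\varrho)$ is separable; as $\Phi\otimes\id$ sends product elements to product elements, $(M\otimes N)(\varrho)\in CS^1_{2,2}\tmin CS^1_{2,2}$. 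By \Cref{sep-LHV} this behavior has an LHV model and is therefore nonnegative on every Bell inequality, in particular on all eight CHSH inequalities. The same reasoning applied with $\varrho$ separable shows that a CHSH-violating behavior forces $\varrho$ to be entangled, so the entanglement clause of the statement comes for free.

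For the hard direction, suppose $M$ is incompatible. Viewing $M$ as a tensor and using self-duality of $\Pos$, \Cref{cor:sep-com} tells us the normalized Choi tensor $\xi_M\in D(\Cnum^{d_A})\tmax CS^1_{2,2}$ is not separable. As $\big(\Pos\tmin CS^+_{2,2}\big)^{*}=\Pos\tmax (CS^{*}_{2,2})^{+}$ and the minimal cone is closed, there is a witness $W\in\Pos\tmax (CS^{*}_{2,2})^{+}$ with $\langle W,\xi_M\rangle<0$ and $\langle W,\cdot\rangle\ge 0$ on $D(\Cnum^{d_A})\tmin CS^1_{2,2}$. Identify $W$ with a positive map $\Theta\colon CS^{+}_{2,2}\to\Pos$, the target playing the role of the second party's system. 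Now precompose with the inverse of one of the eight CHSH isomorphisms $L\colon CS^{+}_{2,2}\to (CS^{*}_{2,2})^{+}$ of \Cref{thm:CHSH-is-linear-isom}, setting $N^{*}:=\Theta\circ L^{-1}\colon (CS^{*}_{2,2})^{+}\to\Pos$, which is again positive. Writing $X_1,\dots,X_4$ for the square generators of $CS^{+}_{2,2}$, the square condition \cref{square_condition} forces
\[
  C:=\Theta(X_1)+\Theta(X_3)=\Theta(X_2)+\Theta(X_4)\in\Pos ,
\]
and conjugating by $C^{-1/2}$ (passing to a state space built on the face supporting $C$ if $C$ is singular, as in \Cref{rem:bar-rho-relative-interior}) turns $N^{*}$ into the dual of a genuine dichotomic multimeter $N\colon D(\Cnum^{d_B})\to CS^1_{2,2}$, the normalization $N_{1|1}+N_{2|1}=N_{1|2}+N_{2|2}$ being exactly the image of the square condition on $(CS^{*}_{2,2})^{+}$. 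Unwinding the tensor–map correspondence, $(\id\otimes N^{*})(\xi_L)$ is again a witness detecting $\xi_M$, so for the entangled state $\varrho$ realizing the assemblage $\xi_M$ — which exists by \Cref{assem_real} (the maximally entangled state realizes $\xi_M$ for $M^{\mathsf T}$, and a party swap together with the symmetry of the CHSH family absorbs the transpose) — the quantum behavior $(M\otimes N)(\varrho)=(\id\otimes N)(\xi_M)$ satisfies $\langle\xi_L,(M\otimes N)(\varrho)\rangle<0$, i.e.\ it violates the CHSH inequality $\xi_L$, and by \Cref{BI_maxtp} this is a bona fide Bell inequality.

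The main obstacle is precisely this last ``CHSH-completion'' step: turning an abstract separating witness in $\Pos\tmax(CS^{*}_{2,2})^{+}$ into one that factors through an actual quantum measurement on the second party, reconciling positivity with the unital normalization constraint while keeping the pairing with $\xi_M$ strictly negative. The square structure of both $CS^{+}_{2,2}$ and $(CS^{*}_{2,2})^{+}$ and the explicit list of eight isomorphisms from \Cref{thm:CHSH-is-linear-isom} are exactly what make this possible — conceptually, this is the statement that in the $(2,2,2)$ scenario the only nontrivial facets of the local set are the CHSH inequalities, so any nonseparable object is detected by one of them — and the realizability of the associated assemblage by an entangled quantum state is handed to us by \Cref{assem_real}. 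The remaining bookkeeping (transposes in the GHJW-type realization, the rescaling dictated by the square condition, and tracking which of the eight $L$ does the job) is routine and I would not belabor it.
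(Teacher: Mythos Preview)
Your easy direction is correct and argued via factorization through a simplex; the paper instead reads its forward computation backward to exhibit an incompatibility witness, but both routes are valid.

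The hard direction has a genuine gap at the normalization step. Conjugating $N^*:=\Theta\circ L^{-1}$ by $C^{-1/2}$ to force unitality does not merely rescale the witness: it replaces $W$ by the functional attached to $C^{-1/2}\Theta(\cdot)C^{-1/2}$, and nothing guarantees that this new functional still pairs negatively with $\xi_M$. Your choice of $\varrho$ (maximally entangled, so that $(M\otimes\id)(\varrho)\propto\xi_M$ up to a transpose) is calibrated to the \emph{unconjugated} $N^*$, so the two normalizations are mismatched and the conclusion $\langle\xi_L,(M\otimes N)(\varrho)\rangle<0$ is left unjustified. This is not mere bookkeeping --- the conjugation genuinely distorts the pairing unless $C\propto\one$. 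The paper sidesteps the issue entirely: because $L$ is a cone \emph{isomorphism}, $\id\otimes L$ maps $\Posa\tmax\CStwo$ onto $\Posa\tmax\CStwodu$, so one writes $W=(\id\otimes L)(\sigma')$ and normalizes $\sigma'$ by a single positive \emph{scalar} (no conjugation) to land in the assemblage space $D(\Cnum^{d_A})\tmax CS^1_{2,2}$. Then \Cref{assem_real} produces $N$ and $\varrho$ \emph{together} via $\sigma=(\id\otimes N)(\varrho)$, and the identity $\langle W,\xi_M\rangle=\langle\xi_L,(M\otimes N)(\varrho)\rangle$ follows in two lines. The fix to your argument is to apply GHJW to $(\id\otimes L^{-1})(W)$ rather than to $\xi_M$.
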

\begin{proof}
    To begin with, let $M: D(\Cnum^{d_A}) \to CS^1_{2,2}$ be a multimeter and let us consider the corresponding tensor $\xi_M \in \Posa^* \tmax \CStwo$. From \Cref{witness_exists} we get a witness-type functional $W_M$ such that $\langle W_M,\xi_M\rangle < 0$ if $\xi_M$ is not separable, i.e., $M$ is incompatible.
    \begin{equation}
    \label{witness}
    W_M \in \Posa \tmax \CStwodu = (\id \otimes L) (\Posa \tmax \CStwo),
    \end{equation}
    where the linear isomorphism $L: \CStwo \rightarrow \CStwodu$ was used. We are free to choose a normalization here, so we can take $W_M \in (\id \otimes L)(D(\mathds C^{d_A}) \tmax CS_{2,2}^1)$.  
    The set $D(\mathds C^{d_A}) \tmax CS_{2,2}^1$ is the already known set of assemblages $\sigma$.
    Recall that for each assemblage $\sigma \in D(\mathds C^{d_A}) \tmax CS_{2,2}^1$ there is a multimeter $N: D(\Cnum^{d_B}) \to CS^1_{2,2}$ and a bipartite state $\varrho \in D(\Cnum^{d_A d_B})$ that reproduce it as $\sigma = (\id \otimes N)(\varrho)$. Thus, there exists such $\sigma$ such that $W_M = (\id \otimes L)(\sigma)$. 
    By expressing $\varrho = \sum_a \alpha_a \otimes \beta_a$ we have that
    \begin{align}
        0 > \langle W_M,\xi_M\rangle
        &= \langle (\id \otimes L)(\sigma), \xi_M\rangle = \sum_a \langle \alpha_a \otimes L(N(\beta_a)), \xi_M\rangle \\
        &= \sum_a \langle M(\alpha_a), L(N(\beta_a)) \rangle = \sum_a \langle\xi_L, M(\alpha_a) \otimes N(\beta_a) \rangle \\
        &= \langle \xi_L, (M\otimes N)(\varrho)\rangle,
    \end{align}
    where $\xi_L \in (CS^*_{2,2})^+ \tmax (CS^*_{2,2})^+$ is the tensor corresponding to the map $L$ which by \cref{thm:CHSH-is-linear-isom} can be interpreted as a CHSH inequality.

    For the converse direction, one just needs to read the previous equations backward and conclude $\langle (\id \otimes L)(\sigma), \xi_M\rangle <0$. Now $W_M = (\id \otimes L)(\sigma)$ is an incompatibility witness for $M$. As an image of the assemblage $\sigma \in D(\mathds C^{d_A}) \tmax CS_{2,2}^1$ under the positive map $\id \otimes L: \Posa \tmax CS^+_{2,2} \to \Posa \tmax (CS^*_{2,2})^+$ we have that $W_M \in \Posa \tmax \CStwodu$. By the definition of the maximal tensor product we then have that   $\langle W_M, \xi \rangle \geq 0$ for all $\xi \in \Posa^* \tmin \CStwo$. Thus, $W_M$ detects the incompatibility of $M$.
\end{proof}

The more well-known direction is the implication that if the set of measurements $M$ enables the violation of a Bell inequality, then it must be incompatible.
Usually, this is shown via contraposition: If in a bipartite scenario a Bell inequality is violated, then all measurements must have been incompatible because the measurement incompatibility on both parties is necessary.
As soon as one party has more than two measurements, incompatibility does not imply Bell nonlocality anymore \cite{noJM-noB,Bene2018}.

We go a bit further and relax the conditions on one party.
So now suppose that the second party has an arbitrary number of measurements $g$ as well as arbitrary many outcomes $k$, whereas without loss of generality $k$ can be set to an equal constant for every measurement by adding zero outcomes.
This gives some measurement statistics, which could arise from any (nonlocal) model. Then the same witness-type argument can be made as before.
Again, all nontrivial facets of the Bell polytope can be constructed from CHSH, which was found in \cite{Pironio_2014}.

\begin{thm}
\label{result2}
    For any Bell inequality $W \in (CS^*_{2,2})^+ \tmax (CS^*_{k,g})^+$ there is a positive map $\Xi^*: CS^+_{k,g} \to CS^+_{2,2}$ such that $W$ can be expressed through the CHSH inequality as $W=(\id \otimes \Xi^*)(\xi_L)$, where $\xi_L \in (CS^*_{2,2})^+ \tmax (CS^*_{2,2})^+$ is the tensor corresponding to $L$ in \cref{thm:CHSH-is-linear-isom}.
\end{thm}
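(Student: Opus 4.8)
The plan is to read the Bell inequality $W$ as the Choi-type tensor of a positive map, ``divide out'' the CHSH isomorphism $L$ of \Cref{thm:CHSH-is-linear-isom}, and take what is left to be $\Xi^\ast$. The conceptual content has already been isolated in \Cref{thm:CHSH-is-linear-isom}; what remains here is essentially bookkeeping.

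Concretely, I would proceed in four steps. First, since $W \in \CStwodu \tmax \CSCd$ is by hypothesis a Bell inequality, the correspondence between maximal tensor products of cones and positive maps (\cite[Lemma A.1]{jencova2018incompatible}; cf.\ \Cref{lem:xi-to-phi} and \cref{eq:tensor-to-map}) identifies $W$ with a positive map $\Phi_W \colon \CSC \to \CStwodu$, characterized by $\langle \Phi_W(X), Y \rangle = \langle W, Y \otimes X \rangle$ for all $X \in \CSmn$ and $Y \in CS_{2,2}$. Second, by \Cref{thm:CHSH-is-linear-isom} the tensor $\xi_L$ is exactly the tensor associated in this way with the linear isomorphism $L \colon \CStwo \to \CStwodu$ appearing there; since $L$ is a vector-space isomorphism that maps the cone $\CStwo$ bijectively onto the cone $\CStwodu$, its inverse $L^{-1} \colon \CStwodu \to \CStwo$ exists and is again positive. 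Third, I define
\begin{equation}
    \Xi^\ast \coloneqq L^{-1} \circ \Phi_W \colon \CSC \longrightarrow \CStwo \, ,
\end{equation}
which is positive as a composition of positive maps and satisfies $\Phi_W = L \circ \Xi^\ast$. Fourth, translating this factorization of maps back through the tensor--map correspondence --- where precomposing a map on its input cone becomes the partial application $\id \otimes \Xi^\ast$ on the associated tensor --- gives $W = (\id \otimes \Xi^\ast)(\xi_L)$, as claimed. Equivalently, one checks directly that $\langle W, P \rangle = \langle \xi_L, (\id \otimes \Xi^\ast)(P) \rangle$ for every behavior $P$, which is nothing but $\Phi_W = L \circ \Xi^\ast$ evaluated on product elements.

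The step I expect to require the most care is the fourth one: one must track which tensor leg each map acts on, together with the transpose hidden in the tensor--map dictionary, so that the conclusion emerges precisely as $W = (\id \otimes \Xi^\ast)(\xi_L)$ with $\Xi^\ast \colon \CSC \to \CStwo$ rather than as a dualized variant. Two remarks are worth adding. Only positivity of $\Xi^\ast$ is asserted, which is essential, since $L^{-1} \circ \Phi_W$ need not be normalization-preserving; if one additionally wanted $\Xi^\ast$ to be a genuine classical post-processing (a channel), one would normalize and invoke \Cref{thm:channel-polysimplices-characterization}. Moreover, the argument hinges entirely on the fact supplied by \Cref{thm:CHSH-is-linear-isom} that $\CStwo$ is a cone over a square and hence linearly isomorphic to its own dual cone $\CStwodu$ --- a feature of the $2 \times 2$ scenario that breaks down once the ``small'' party has three or more settings, which is the structural reason the CHSH inequality plays this universal role. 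In particular, the construction is a bijection between the Bell inequalities of this scenario and the positive maps $\CSC \to \CStwo$, i.e.\ the cone-theoretic form of the statement of \cite{Pironio_2014} that every such inequality is a post-processing of CHSH.
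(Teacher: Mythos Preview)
Your proposal is correct and follows essentially the same approach as the paper: both hinge on the invertibility of the CHSH isomorphism $L$ from \Cref{thm:CHSH-is-linear-isom} and the tensor--map correspondence, differing only in whether $L^{-1}$ is applied on the tensor level (the paper sets $\xi_\Xi = (L^{-1}\otimes\id)(W)$ and then dualizes the associated map $\Xi$) or on the map level (you set $\Xi^\ast = L^{-1}\circ\Phi_W$ directly). The verification you single out, $\langle W,P\rangle = \langle \xi_L,(\id\otimes\Xi^\ast)(P)\rangle$, is precisely the computation the paper carries out.
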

The channel $\Xi^*$ can be understood as a post-processing, enabling the reduction of the witness $W$ to a CHSH inequality. 
\begin{proof}    
    We can rewrite $W \in (CS^*_{2,2})^+ \tmax (CS^*_{k,g})^+$ by using the isomorphism $L: CS^+_{2,2} \to (CS^*_{2,2})^+$, which is the already familiar CHSH inequality, simply as $W = (L \otimes \id)(L^{-1} \otimes \id)(W)= (L \otimes \id)(\xi_\Xi)$, where we have defined $\xi_\Xi = (L^{-1} \otimes \id)(W) \in \CStwo \tmax \CSCd$. Now $\xi_\Xi$ corresponds to a positive map $\Xi : \CStwodu \rightarrow \CSCd$.

    Now if we write $P = \sum_i A_i \otimes B_i$ we see that 
    \begin{align}
        0 > \langle W, P\rangle &= \sum_i \langle (L \otimes \id)(\xi_\Xi), P \rangle = \sum_i \langle \xi_\Xi, L^*(A_i) \otimes B_i \rangle = \sum_i \langle L^*(A_i) , \Xi^*(B_i) \rangle \\
        &= \sum_i \langle \xi_L, A_i \otimes \Xi^*(B_i) \rangle = \langle \xi_L, (\id \otimes \Xi^*)(P) \rangle, 
    \end{align}
    where $\Xi^*: CS^+_{k,g} \to CS^+_{2,2}$. The map $\Xi^*$ is positive since its dual map $\Xi$ is.
\end{proof}

Up to normalization, we see that $\Xi^*$ is exactly the type of map that was characterized in \cref{thm:channel-polysimplices-characterization} as classical simulations. The way $\Xi^*$ acts on $P$ is by reducing the $k \cross g$ column stochastic matrix to a $2 \cross 2$ column stochastic matrix. 

\subsection{Bell inequalities in general probabilistic theories}
In this section, we will again look at Bell inequalities from a factorization point of view. The following in particular proves \Cref{lem:NS}.
\begin{prop} \label{prop:NS-tensor}
    $\xi \in CS^1_{k,g} \tmax CS^1_{l, r}$ if and only if there exist no-signaling probability distributions $(p_{a,b|x,y})_{a \in [k],b \in [l], x \in [g], y \in [r]}$ such that
    \begin{align}
    \xi &= \sum_{x = 1}^g \sum _{y = 1}^r \sum_{a = 1}^{k-1} \sum_{b=1}^{l-1} p_{a,b|x,y} e_a^{(x)} \otimes e_b^{(y)} + \sum_{x=1}^g \sum_{a = 1}^{k-1} \left(\sum_{b=1}^{l} p_{a,b|x,1}\right) e_a^{(x)} \otimes s_{l, \ldots, l} \\ 
    & + \sum_{y=1}^r \sum_{b=1}^{l-1} \left(\sum_{a=1}^{k} p_{a,b|1,y}\right) s_{k, \ldots, k} \otimes e_b^{(y)} +  s_{k, \ldots, k} \otimes s_{l, \ldots, l}\, .
    \end{align}
\end{prop}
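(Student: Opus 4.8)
The plan is to exhibit the correspondence explicitly, working in the basis $\{s_{k,\ldots,k}\}\cup\{e^{(x)}_a : a\in[k-1],\ x\in[g]\}$ of $CS_{k,g}$ with dual basis $\{\one_{CS^1_{k,g}}\}\cup\{m^{(x)}_a : a\in[k-1],\ x\in[g]\}$ of $CS^\ast_{k,g}$ (see \cref{eq:effect-basis} and \cref{eq:state-basis}), and analogously for $CS_{l,r}$. By the definition of the maximal tensor product of state spaces, $\xi\in CS^1_{k,g}\tmax CS^1_{l,r}$ if and only if $\langle\one_{CS^1_{k,g}}\otimes\one_{CS^1_{l,r}},\xi\rangle=1$ together with $\langle f\otimes h,\xi\rangle\ge 0$ for all $f\in(CS^+_{k,g})^\ast$ and $h\in(CS^+_{l,r})^\ast$. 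Since, by \cite[Lemma 1]{jencova2018incompatible}, the functionals $m^{(x)}_a$ with $a\in[k]$, $x\in[g]$ generate the extreme rays of $(CS^+_{k,g})^\ast$ (and likewise the $m^{(y)}_b$ for $(CS^+_{l,r})^\ast$), the positivity condition is equivalent to $\langle m^{(x)}_a\otimes m^{(y)}_b,\xi\rangle\ge 0$ for all $a\in[k]$, $b\in[l]$, $x\in[g]$, $y\in[r]$. This reduction to product functionals is the heart of the argument.

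For the forward implication, given $\xi\in CS^1_{k,g}\tmax CS^1_{l,r}$ I would set $p_{a,b|x,y}:=\langle m^{(x)}_a\otimes m^{(y)}_b,\xi\rangle$, which is nonnegative by the above. Using $\sum_{a=1}^k m^{(x)}_a=\one_{CS^1_{k,g}}$ for every $x$ and the analogous identity on the other factor, one gets $\sum_{a,b}p_{a,b|x,y}=\langle\one_{CS^1_{k,g}}\otimes\one_{CS^1_{l,r}},\xi\rangle=1$ for all $x,y$, and moreover $\sum_{a=1}^k p_{a,b|x,y}=\langle\one_{CS^1_{k,g}}\otimes m^{(y)}_b,\xi\rangle$ is independent of $x$ while $\sum_{b=1}^l p_{a,b|x,y}=\langle m^{(x)}_a\otimes\one_{CS^1_{l,r}},\xi\rangle$ is independent of $y$; hence $(p_{a,b|x,y})$ is a no-signaling behavior. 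Expanding $\xi$ in the tensor basis $\{s_{k,\ldots,k},e^{(x)}_a\}\otimes\{s_{l,\ldots,l},e^{(y)}_b\}$ and computing each coefficient by pairing with the corresponding dual basis element then yields precisely the displayed formula, once the two marginal coefficients are rewritten as $\langle m^{(x)}_a\otimes\one_{CS^1_{l,r}},\xi\rangle=\sum_{b}p_{a,b|x,1}$ and $\langle\one_{CS^1_{k,g}}\otimes m^{(y)}_b,\xi\rangle=\sum_{a}p_{a,b|1,y}$, which is legitimate exactly because of the no-signaling conditions.

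For the converse, given a no-signaling behavior $(p_{a,b|x,y})$ I would define $\xi$ by the displayed formula and verify the two defining conditions of the maximal tensor product. Normalization is immediate: pairing with $\one_{CS^1_{k,g}}\otimes\one_{CS^1_{l,r}}$ annihilates every term containing an $e^{(x)}_a$ or $e^{(y)}_b$ factor and returns the coefficient $1$ of $s_{k,\ldots,k}\otimes s_{l,\ldots,l}$. For positivity I would evaluate $\langle m^{(x)}_a\otimes m^{(y)}_b,\xi\rangle$ for all $a\in[k]$, $b\in[l]$: for $a<k$ and $b<l$ it equals $p_{a,b|x,y}$ directly; for the boundary cases I would substitute $m^{(x)}_k=\one_{CS^1_{k,g}}-\sum_{a'<k}m^{(x)}_{a'}$ and $m^{(y)}_l=\one_{CS^1_{l,r}}-\sum_{b'<l}m^{(y)}_{b'}$, expand, and use the no-signaling identities together with $\sum_{a,b}p_{a,b|x,y}=1$ to telescope the result down to $p_{k,b|x,y}$, $p_{a,l|x,y}$, or $p_{k,l|x,y}$, each nonnegative.

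I expect the only genuine bookkeeping to be the boundary cases of the positivity check in the converse ($a=k$ and/or $b=l$), where one must see that the no-signaling constraints and the normalization of $(p_{\cdot,\cdot|x,y})$ exactly cancel the unwanted terms; this is routine but is the single place where all the hypotheses are actually used. Conceptually the proof is driven entirely by the observation that positivity in $CS^1_{k,g}\tmax CS^1_{l,r}$ need only be tested on the finitely many product functionals $m^{(x)}_a\otimes m^{(y)}_b$, whose values on $\xi$ are by definition the behavior probabilities $p_{a,b|x,y}$.
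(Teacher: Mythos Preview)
Your proof is correct and follows essentially the same approach as the paper: both arguments define $p_{a,b|x,y} := \langle m^{(x)}_a \otimes m^{(y)}_b, \xi\rangle$, use the extreme-ray structure of $(CS^+_{k,g})^\ast$ to reduce positivity to these product functionals, and read off the expansion coefficients and no-signaling conditions from the relations $\sum_a m^{(x)}_a = \one_{CS^1_{k,g}}$. Your treatment of the converse (the $a=k$ and $b=l$ boundary cases) is in fact more explicit than the paper's, which simply remarks that the $m^{(j)}_i$ generate the extreme rays.
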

\begin{proof}
Let $\xi \in CS^1_{k,g} \tmax CS^1_{l, r}$. Then, we can use the basis of the polysimplex to write
\begin{equation}
    \xi = \sum_{j = 1}^g \sum_{q = 1}^r \sum_{i = 1}^{k-1} \sum_{p=1}^{l-1} \lambda_{i,p|j,q} e_i^{(j)} \otimes e_p^{(q)} + \sum_{j = 1}^g \sum_{i = 1}^{k-1} \mu_{i|j} e_i^{(j)} \otimes s_{l, \ldots, l} + \sum_{q = 1}^r \sum_{p=1}^{l-1} \mu^\prime_{p|q} s_{k, \ldots, k} \otimes e_p^{(q)} + \eta  s_{k, \ldots, k} \otimes s_{l, \ldots, l}\, .
\end{equation}
We infer $\langle \mathds 1_{CS^1_{k,g}} \otimes  \mathds 1_{CS^1_{l,r}}, \xi \rangle = \eta$, thus $\eta = 1$. Moreover, using that the $m_i^{(j)}$ are positive functionals for all $i$ and $j$, we obtain
\begin{align}
    \langle m_i^{(j)} \otimes m_p^{(q)}, \xi \rangle = \lambda_{i,p|j,q} \geq 0 \quad & \quad \forall j \in [g], q \in [r], i \in [k-1], p \in [l-1] \, , \\
    \langle m_i^{(j)} \otimes m_{l}^{(q)}, \xi \rangle = \mu_{i|j} - \sum_{p=1}^{l-1} \lambda_{i,p|j,q} \geq 0 \quad & \quad \forall j \in [g], q \in [r], i \in [k-1] \, , \\
    \langle m_{k}^{(j)} \otimes m_p^{(q)}, \xi \rangle = \mu^\prime_{p|q} - \sum_{i = 1}^{k-1} \lambda_{i,p|j,q} \geq 0 \quad & \quad \forall j \in [g], q \in [r], p \in [l-1] \, , \\
    \langle m_i^{(j)} \otimes \mathds 1_{CS^1_{l, r}}, \xi \rangle = \mu_{i|j} \geq 0 \quad & \quad \forall j \in [g], i \in [k-1] \, , \\
    \langle \mathds 1_{CS_{g,k}^1} \otimes m_p^{(q)}, \xi \rangle = \mu^\prime_{p|q} \geq 0 \quad & \quad \forall q \in [r], p \in [l-1] \, 
\end{align}
and 
\begin{equation}
     \langle m_{k}^{(j)} \otimes m_{l}^{(q)}, \xi \rangle = 1 +\sum_{i = 1}^{k-1} \sum_{p=1}^{l-1} \lambda_{i,p|j,q} -\sum_{i = 1}^{k-1} \mu_{i|j} - \sum_{p=1}^{l-1} \mu_{p|q}^\prime \geq 0 \, .
\end{equation}
We can therefore define $p_{a,b|x,y} =  \langle m_a^{(x)} \otimes m_b^{(y)}, \xi \rangle$ for all $x \in [g]$, $y \in [r]$, $a \in [k]$, $b \in [l]$, i.e.,
\begin{equation*}
    p_{a,b|x,y} = \begin{cases}\lambda_{a,b|x,y} & x \in [g], y \in [r], a \in [k-1], b \in [l-1] \\
    \mu^\prime_{b|y} - \sum_{a = 1}^{k-1} \lambda_{a,b|x,y} & x\in [k], y \in [r], a =k, b \in [l-1] \\
    \mu_{a|x} - \sum_{b = 1}^{l-1} \lambda_{a,b|x,y} & x\in [k], y \in [r], b =l, a \in [k-1] \\
    1 +\sum_{a = 1}^{k-1} \sum_{b=1}^{l-1} \lambda_{a,b|x,y} -\sum_{a = 1}^{k-1} \mu_{a|x} - \sum_{b=1}^{l-1} \mu_{b|y}^\prime & x\in [k], y \in [r], a=k, b=l
    \end{cases} 
\end{equation*}
From the inequalities above, we can verify that the $(p_{a,b|x,y})_{a \in [k],b \in [l]}$ defined in this way form valid probability distributions. As $\sum_{i} m_i^{(j)} = \mathds 1_{CS_{g,k}^1}$ for all $j$,  they satisfy the no-signaling condition, as
\begin{align}
    \sum_{b = 1}^{l} p_{a,b|x,y} = \mu_{a|x} \quad & \quad \forall y \in [r] \, , \\
    \sum_{a = 1}^{k} p_{a,b|x,y} = \mu^\prime_{b|y} \quad & \quad \forall x \in [g] \, .
\end{align}
The converse follows as the $m^{(j)}_i$ form the extreme rays of $(CS^+_{g,k})^*$ and $(CS^+_{l,r})^*$.
\end{proof}

By the general correspondence between tensors and positive maps (see \Cref{sec:maps-between-GPTs}) we can also equate the no-signaling probability distributions with some positive maps.

\begin{cor} \label{cor:Phi-NS}
    $\Phi: (CS_{k, g}^+)^\ast \to CS_{l, r}^+$ with $\Phi(\mathds 1_{CS^1_{k,g}}) \in CS^1_{l, r}$ if and only if and there exist no-signaling probability distributions $(p_{a,b|x,y})_{a \in [k],b \in [l]}$  such that
    \begin{align}
        \langle m_b^{(y)}, \Phi(m_a^{(x)})\rangle = p_{a,b|x,y} \quad & \quad \forall a \in [k], \forall b \in [l],\forall x \in [g],\forall y \in [r] \\
        \langle \mathds 1_{CS^1_{l, r}}, \Phi(m_a^{(x)})\rangle = \sum_{b=1}^{l} p_{a,b|x,1} \quad & \quad \forall a \in [k], \forall x \in [g] \\
        \langle m_b^{(y)}, \Phi(\one_{CS^1_{k,g}})\rangle = \sum_{a=1}^{k} p_{a,b|1,y} \quad  & \quad \forall b \in [l], \forall y \in [r] \\
        \langle \one_{CS^1_{l, r}}, \Phi(\one_{CS^1_{k,g}})\rangle = 1
    \end{align}
\end{cor}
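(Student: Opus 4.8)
The plan is to read off this corollary directly from \Cref{prop:NS-tensor} via the general bijection between positive maps and tensors in a maximal tensor product recalled in \Cref{sec:maps-between-GPTs}. Set $K_A = CS^1_{k,g}$ and $K_B = CS^1_{l,r}$, so that $(CS_{k,g}^+)^\ast = A(K_A)^+$, $CS_{l,r}^+ = V(K_B)^+$, $A(K_A) = CS_{k,g}^\ast$ and $A(K_B) = CS_{l,r}^\ast$. By \Cref{lem:xi-to-phi} together with \cref{eq:tensor-to-map}, a positive map $\Phi\colon (CS_{k,g}^+)^\ast \to CS_{l,r}^+$ with $\Phi(\one_{CS^1_{k,g}}) \in CS^1_{l,r}$ corresponds bijectively to a tensor with one leg in $CS^1_{k,g}$ and one leg in $CS^1_{l,r}$; reordering the legs so as to match \Cref{prop:NS-tensor} (i.e.\ composing with the swap of tensor factors if needed) and writing $\xi_\Phi$ for the result, this reads $\xi_\Phi \in CS^1_{k,g} \tmax CS^1_{l,r}$ with $\langle \Phi(f_A), f_B \rangle = \langle \xi_\Phi, f_A \otimes f_B \rangle$ for all $f_A \in CS_{k,g}^\ast$ and $f_B \in CS_{l,r}^\ast$. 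The requirement $\Phi(\one_{CS^1_{k,g}}) \in CS^1_{l,r}$ is precisely what forces $\langle \one_{CS^1_{k,g}} \otimes \one_{CS^1_{l,r}}, \xi_\Phi\rangle = 1$, i.e.\ it places $\xi_\Phi$ in the state space and not merely in the cone $CS_{k,g}^+ \tmax CS_{l,r}^+$.

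Applying \Cref{prop:NS-tensor} then gives: $\xi_\Phi \in CS^1_{k,g} \tmax CS^1_{l,r}$ if and only if there exist no-signaling probability distributions $(p_{a,b|x,y})$ realizing $\xi_\Phi$ as in that proposition, and the proof of \Cref{prop:NS-tensor} shows that in this case $p_{a,b|x,y} = \langle m_a^{(x)} \otimes m_b^{(y)}, \xi_\Phi\rangle$ for all $a,b,x,y$. Combined with the map--tensor relation above, this yields the first displayed identity at once, $\langle m_b^{(y)}, \Phi(m_a^{(x)})\rangle = \langle \xi_\Phi, m_a^{(x)} \otimes m_b^{(y)}\rangle = p_{a,b|x,y}$. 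The remaining three identities follow by linearity of $\Phi$ together with $\sum_{b=1}^l m_b^{(y)} = \one_{CS^1_{l,r}}$ for every $y$ (take $y=1$) and $\sum_{a=1}^k m_a^{(x)} = \one_{CS^1_{k,g}}$ for every $x$ (take $x=1$): summing the first identity over $b$ gives $\langle \one_{CS^1_{l,r}}, \Phi(m_a^{(x)})\rangle = \sum_{b=1}^l p_{a,b|x,1}$, summing over $a$ gives $\langle m_b^{(y)}, \Phi(\one_{CS^1_{k,g}})\rangle = \sum_{a=1}^k p_{a,b|1,y}$, and summing over both gives $\langle \one_{CS^1_{l,r}}, \Phi(\one_{CS^1_{k,g}})\rangle = \sum_{a,b} p_{a,b|1,1} = 1$, the last step because $(p_{a,b|1,1})_{a,b}$ is a probability distribution.

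For the converse, one runs the same chain backwards: given no-signaling $(p_{a,b|x,y})$, \Cref{prop:NS-tensor} produces a tensor in $CS^1_{k,g} \tmax CS^1_{l,r}$, hence (undoing the reordering and using \Cref{lem:xi-to-phi}) a positive map $\Phi\colon (CS_{k,g}^+)^\ast \to CS_{l,r}^+$ with $\Phi(\one_{CS^1_{k,g}}) \in CS^1_{l,r}$, and the four equations hold by the computation just carried out. I do not expect a substantive obstacle: all the content is already contained in \Cref{prop:NS-tensor}, and the only point demanding care is the bookkeeping of which tensor leg corresponds to the domain of $\Phi$ and which to its codomain, and correspondingly the role of the swap and of the marginal choices $y=1$, $x=1$ enforced by the no-signaling conditions.
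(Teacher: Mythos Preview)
Your proposal is correct and follows exactly the paper's approach: the paper's proof reads in its entirety ``This follows from directly from \Cref{lem:xi-to-phi} and \Cref{prop:NS-tensor},'' and you have simply spelled out that deduction, including the bookkeeping with the swap of tensor legs and the derivation of the three marginal identities from $\sum_a m_a^{(x)}=\one_{CS^1_{k,g}}$ and $\sum_b m_b^{(y)}=\one_{CS^1_{l,r}}$.
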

\begin{proof}
    This follows from directly from \Cref{lem:xi-to-phi} and \Cref{prop:NS-tensor}. 
\end{proof}

Analogously to $K$-simulability of multimeters and assemblages, we can look into $K$-simulability of no-signaling distributions. It turns out this relates to how the no-signaling distribution can be realized in terms of multimeters and state assemblages on the simulating state space.

\begin{thm} \label{thm:Bell-compression}
    Let $\Phi: (CS^+_{k,g})^\ast \to CS_{l, r}^+$ with $\Phi(\one_{CS^1_{k,g}}) \in CS^1_{l, r}$ and let $(V(K_A), V(K)^+, \mathds 1_K)$ be a GPT. Furthermore, let $\Psi: (CS^+_{k,g})^\ast \to V(K)^+ \tmin S_\Lambda^+$ such that $\Psi(\one_{CS^1_{k,g}}) \in K \tmin S_\Lambda$ and let $F: K \tmin S_\Lambda \to CS^1_{l, r}$. Then, the following diagram commutes
    \begin{equation}
        \begin{tikzcd}
        (CS^\ast_{k,g})^+ \arrow[rd, "\Psi"] \arrow[rr, "\Phi"] & & CS^+_{l, r} \\
        & V(K)^+ \tmin S_\Lambda^+\arrow[ru, "F"] &
        \end{tikzcd}  
    \end{equation}
if and only the no-signaling probability distributions defining $\Phi$ (see \Cref{cor:Phi-NS}) satisfy
    \begin{equation} \label{eq:Bell-compression}
        p_{a,b|x,y} = \sum_{\lambda=1}^\Lambda w_\lambda  \langle F_{b|y, \lambda}, q_{a|x, \lambda} \varrho_{a|x, \lambda} \rangle \qquad \forall a \in [k], \forall b \in [l], \forall x \in [g], \forall y \in [r] \, .
    \end{equation}
    Here, $(w_\lambda)_{\lambda \in [\Lambda]}$ is a probability distribution,  $(q_{a|x, \lambda})_{a\in[k]}$ are some conditional probability distributions, $F_{b|y, \lambda} \in A(K)^+$ are such that $\sum_{b=1}^l F_{b|y, \lambda} = \mathds 1_K$ for all $y \in [r]$ and $\lambda \in [\Lambda]$; in other words, $F$ is a multimeter. The $\varrho_{a|x, \lambda} \in K$ are such that $\sum_{a = 1}^k q_{a|x, \lambda} \varrho_{a|x, \lambda} = \bar \varrho_\lambda$ for all $x \in [g]$ and $\lambda \in [\Lambda]$ with $\bar \varrho_\lambda \in K$. Finally, $\sum_{\lambda \in [\Lambda]} w_\lambda \bar \varrho_\lambda = \bar \varrho \in K$, i.e., $\{q_{a|x, \lambda}, \varrho_{a|x, \lambda}\}_{a \in [k], x \in [g]}$ is an assemblage for all $\lambda \in [\Lambda]$.
\end{thm}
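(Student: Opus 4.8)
The plan is to unpack the three objects in the diagram using correspondences already established and then match coefficients in the basis \cref{eq:effect-basis} of $CS^*_{k,g}$. First, by \Cref{cor:Phi-NS}, the map $\Phi$ is equivalent to the data of no-signaling probability distributions $(p_{a,b|x,y})$ through $\langle m_b^{(y)}, \Phi(m_a^{(x)})\rangle = p_{a,b|x,y}$, so it suffices to compute $\langle m_b^{(y)}, (F\circ\Psi)(m_a^{(x)})\rangle$ and to compare it with the right-hand side of \cref{eq:Bell-compression}. Since $\{\one_{CS^1_{k,g}}, m_a^{(x)}\}_{a\in[k-1],x\in[g]}$ is a basis of $CS^*_{k,g}$, it is enough to check that $F\circ\Psi$ and $\Phi$ agree on this basis.

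Next I would extract the assemblage structure from $\Psi$, exactly along the lines of the proof of \Cref{thm:steering-K-simulable}. Because the $m_a^{(x)}$ generate the extreme rays of $(CS^+_{k,g})^*$, positivity of $\Psi$ is equivalent to $\Psi(m_a^{(x)}) \in V(K)^+\tmin S_\Lambda^+$ for all $a,x$. Writing uniquely $\Psi(m_a^{(x)}) = \sum_{\lambda=1}^\Lambda \sigma_{a,\lambda|x}\otimes\delta_\lambda$ with $\sigma_{a,\lambda|x}\in V(K)^+$ and $\Psi(\one_{CS^1_{k,g}}) = \sum_{\lambda=1}^\Lambda \bar\sigma_\lambda\otimes\delta_\lambda$, the relation $\sum_{a=1}^k m_a^{(x)} = \one_{CS^1_{k,g}}$ forces $\sum_{a=1}^k\sigma_{a,\lambda|x} = \bar\sigma_\lambda$ for every $x$ and $\lambda$, while $\Psi(\one_{CS^1_{k,g}})\in K\tmin S_\Lambda$ forces $\bar\sigma_\lambda = w_\lambda\bar\varrho_\lambda$ for a probability distribution $(w_\lambda)_{\lambda}$ and states $\bar\varrho_\lambda\in K$ with $\sum_\lambda w_\lambda\bar\varrho_\lambda = \bar\varrho$. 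Normalizing $\sigma_{a,\lambda|x} = w_\lambda q_{a|x,\lambda}\varrho_{a|x,\lambda}$ with $q_{a|x,\lambda} := \one_K(\sigma_{a,\lambda|x})/w_\lambda$ (choosing $\varrho_{a|x,\lambda}$, $q_{a|x,\lambda}$ arbitrarily when $w_\lambda=0$ or $\one_K(\sigma_{a,\lambda|x})=0$) produces, for each $\lambda$, precisely an assemblage $\{q_{a|x,\lambda},\varrho_{a|x,\lambda}\}_{a,x}$ with average $\bar\varrho_\lambda$. In parallel, \Cref{lemma:1} identifies $F: K\tmin S_\Lambda \to CS^1_{l,r}$ with a multimeter $\{F_{\cdot|y,\lambda}\}_{y\in[r],\lambda\in[\Lambda]}$ of effects $F_{b|y,\lambda}\in A(K)^+$, $\sum_b F_{b|y,\lambda} = \one_K$, acting by $F(\varrho\otimes\delta_\lambda) = \sum_{y,b} F_{b|y,\lambda}(\varrho)E_{b,y}$.

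With this in hand, the commutativity of the diagram is a direct computation: for all $a,b,x,y$,
\begin{align*}
\langle m_b^{(y)}, (F\circ\Psi)(m_a^{(x)})\rangle &= \sum_{\lambda=1}^\Lambda \langle m_b^{(y)}, F(\sigma_{a,\lambda|x}\otimes\delta_\lambda)\rangle = \sum_{\lambda=1}^\Lambda F_{b|y,\lambda}(\sigma_{a,\lambda|x}) \\
&= \sum_{\lambda=1}^\Lambda w_\lambda\langle F_{b|y,\lambda}, q_{a|x,\lambda}\varrho_{a|x,\lambda}\rangle ,
\end{align*}
and a parallel computation gives $\langle \one_{CS^1_{l,r}}, (F\circ\Psi)(\one_{CS^1_{k,g}})\rangle = \sum_\lambda w_\lambda\langle\one_K,\bar\varrho_\lambda\rangle = 1$. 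Hence $\Phi = F\circ\Psi$ if and only if $p_{a,b|x,y}$ equals the last displayed expression, which is \cref{eq:Bell-compression}. For the converse one reverses the construction: given the per-$\lambda$ assemblages $\{q_{a|x,\lambda},\varrho_{a|x,\lambda}\}$, the weights $(w_\lambda)$ and the multimeter $F$, set $\Psi(m_a^{(x)}) := \sum_\lambda w_\lambda q_{a|x,\lambda}\varrho_{a|x,\lambda}\otimes\delta_\lambda$ and $\Psi(\one_{CS^1_{k,g}}) := \sum_\lambda w_\lambda\bar\varrho_\lambda\otimes\delta_\lambda$, extended linearly via the basis of $CS^*_{k,g}$; the assemblage constraints make this well defined, positivity of $\Psi$ follows since its values on the extreme generators $m_a^{(x)}$ lie in $V(K)^+\tmin S_\Lambda^+$, and $\Psi(\one_{CS^1_{k,g}})\in K\tmin S_\Lambda$ is immediate. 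The same computation as above then shows that $F\circ\Psi$ and $\Phi$ agree on the basis, hence coincide.

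I expect no genuinely hard step here; the subtlety, as in the steering proofs, is purely bookkeeping: making sure the normalized states $\varrho_{a|x,\lambda}$ and conditional distributions $q_{a|x,\lambda}$ can be extracted consistently (handling the degenerate cases $w_\lambda = 0$ and vanishing subnormalizations), and confirming that positivity of $\Psi$ on the extreme rays $m_a^{(x)}$ upgrades to positivity on all of $(CS^*_{k,g})^+$. Both are handled exactly as in the earlier sections.
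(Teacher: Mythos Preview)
Your proposal is correct and follows essentially the same approach as the paper: decompose $\Psi(m_a^{(x)})$ and $\Psi(\one_{CS^1_{k,g}})$ along the basis $\{\delta_\lambda\}$ of $S_\Lambda$, normalize to extract per-$\lambda$ assemblages (handling the $w_\lambda=0$ case), identify $F$ with a multimeter via \Cref{lemma:1}, and then read off \cref{eq:Bell-compression} from \Cref{cor:Phi-NS}. Your converse is spelled out slightly more explicitly than the paper's, but the argument is the same.
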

\begin{proof}
    We start by realizing that any map $F: K \tmin S_{\Lambda} \to CS^1_{l, r}$ is a multimeter and can therefore be written as $F= \{F_{\cdot|y, \lambda}\}_{y \in [r], \lambda \in [\Lambda]}$ (see also \Cref{lemma:1}). In particular, $m_b^{(y)}\circ F(\cdot \otimes \delta_\lambda) = F_{b|y, \lambda}$. Moreover, we find that 
    \begin{align}
        \Psi(m_a^{(x)}) &=:  \tilde \sigma_{a|x} \in V(K)^+ \tmin S^+_\Lambda\qquad \forall a \in [k], \forall x \in [g] \\
        \Psi(\mathds 1_{CS^1_{k,g}}) &=: \bar \sigma \in K \tmin S_\Lambda
    \end{align}
    by definition and $\sum_{a=1}^k \tilde \sigma_{a|x} = \bar \sigma$ as $\sum_{a=1}^k m_a^{(x)} = \mathds 1_{CS^1_{k,g}}$ for all $x \in [g]$. We can furthermore decompose 
    \begin{equation} \label{eq:Bell-1}
        \tilde \sigma_{a|x} = \sum_{\lambda \in [\Lambda]} \tilde \varrho_{a, \lambda|x} \otimes \delta_\lambda
    \end{equation}
    and
    \begin{equation} \label{eq:Bell-2}
    \bar \sigma = \sum_{\lambda = 1}^\Lambda w_\lambda \varrho_\lambda  \otimes \delta_\lambda
    \end{equation}
    We can set $\bar \varrho := (\id \otimes \one_{S_\Lambda})\bar \sigma$ and $\tilde q_{a, \lambda|x} = \one_K(\tilde \varrho_{a,\lambda|x})$ such that $\tilde \varrho_{a, \lambda|x} = \tilde q_{a, \lambda|x} \varrho_{a|x, \lambda}$ with $\varrho_{a|x, \lambda} \in K$ for all $a \in [k]$, $\lambda \in [\Lambda]$ and $x \in [g]$. The equation $\sum_{a = 1}^k \tilde q_{a, \lambda|x} \varrho_{a |x, \lambda} = w_\lambda \bar \varrho_\lambda$ can be easily inferred from a comparison of \cref{eq:Bell-1} and \cref{eq:Bell-2}, which also implies that the $(\tilde q_{a, \lambda|x})_{a\in[k], \lambda \in [\Lambda]}$ are conditional probability distributions. We can define a probability distribution
    \begin{equation}
        q_{a|x, \lambda} = \begin{cases}
            q_{a, \lambda|x}/w_\lambda & w_\lambda \neq 0 \\
            1/k & w_\lambda = 0
        \end{cases} \,.
    \end{equation}
    Note that for $w_\lambda = 0$ we can just choose $\bar \varrho_\lambda$ as we please in \cref{eq:Bell-2}, thus enforcing the condition $\sum_{a = 1}^k q_{a|x, \lambda} \varrho_{a|x, \lambda} = \bar \varrho_\lambda$ also in this case.
    The assertion follows then from the relations in \Cref{cor:Phi-NS}, as $\Psi$ can easily defined from an assemblage in the way shown above.
\end{proof}

\begin{remark} \label{rem:Bell-behaviors}
    The roles of $CS_{l, r}$ and $CS_{k,g}$ in \Cref{thm:Bell-compression} can be reversed. In quantum theory, we can find a bipartite state and a multimeter such that $q_{a|x, \lambda} \, \varrho_{a|x, \lambda} = \Tr_A[(E_{a|x, \lambda} \otimes \mathds 1_B) \varrho_{AB}]$. Thus, \cref{eq:Bell-compression} can be written as
    \begin{equation}
        p_{a,b|x,y} = \sum_{\lambda =1}^\Lambda w_\lambda\Tr[(E_{a|x, \lambda} \otimes F_{b|y, \lambda}) \varrho_{AB}] \qquad \forall a \in [k], \forall b \in [l], \forall x \in [g], \forall y \in [r] \, .
    \end{equation}
\end{remark}

Similarly we can consider classical simulability of no-signaling probability distributions and characterize this in terms of a particular factorization. We note that since we want to consider classically simulating no-signaling distributions with no-signaling distributions, we cannot simply consider any joint classical simulations $\Phi: CS^1_{k_1 \cdot l_1,g_1 \cdot h_1} \to CS^1_{k_2 \cdot l_2,g_2 \cdot h_2}$ since they do not necessarily preserve the no-signaling property. Instead, we consider a separate classical simulation process in which we classically simulate the first input-output pair and the second input-output pair separately. It turns out that this kind of classical simulation results in more of a sequential factorization process. This is formalized in the following result:

\begin{thm} \label{thm:Bell-classical-simulation}
    Let $\Phi: (CS_{k,g}^+)^\ast \to CS_{l, r}^+$ with $\Phi(\one_{CS_{k,g}^1}) \in CS_{l,r}^1$. Then, the no-signaling probability distributions defining $\Phi$ (see \Cref{cor:Phi-NS}) satisfy
    \begin{equation}
        p_{a,b|x,y} = \sum_{w=1}^u \pi_{w|x} \sum_{c=1}^s \nu_{a|c,x,w}  \left[ \sum_{z=1}^v \theta_{z|y} \sum_{d=1}^t \mu_{b|d,y,z} q_{c,d|w,z}\right] \, .
    \end{equation}
    where $(\pi_{w|x})_{w \in [u]}$, $(\theta_{z|y})_{z \in [v]}$, $(\nu_{a|c,x,w})_{a \in [k]}$ and $(\mu_{b|d,y,z})_{b \in [l]}$ are conditional probability distributions and $(q_{c, d|w,z})_{c \in [s],d \in [t]}$ are no-signaling probability distributions if and only if there exist maps  $\Psi_1:(CS^\ast_{k,g})^+ \to (CS^\ast_{s,u})^+$ with $\Psi_1(\one_{CS^1_{k,g}})=\one_{CS^1_{s,u}}$, $\Psi_2:(CS^\ast_{s,u})^+ \to CS^+_{t, v}$ with $\Psi_2(\one_{CS^1_{s,u}}) \in CS^1_{t,v}$, and a channel $\Psi_3: CS^1_{t, v} \to CS^1_{l, r}$ such that the following diagram commutes:
     \begin{equation}
        \begin{tikzcd}
        (CS^\ast_{k,g})^+ \arrow[rd, "\Psi_1"] \arrow[rrr, "\Phi"]  & & & CS^+_{l, r}  \\
        & (CS^\ast_{s,u})^+  \arrow[r, "\Psi_2"]&CS^+_{t, v} \arrow[ru, "\Psi_3"]&
        \end{tikzcd}
\end{equation}
\end{thm}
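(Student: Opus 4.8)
The plan is to mirror the proof strategy used for the analogous single-system and steering statements (\Cref{thm:channel-polysimplices-characterization}, \Cref{cor:simulation-factorization}, \Cref{thm:classical-steering} and \Cref{thm:Bell-compression}), but now chained through \emph{two} intermediate column-stochastic state spaces, one for each party's input-output pair. The key structural observation is that the map $\Psi_1 : (CS^\ast_{k,g})^+ \to (CS^\ast_{s,u})^+$ with $\Psi_1(\one_{CS^1_{k,g}}) = \one_{CS^1_{s,u}}$ is, by \Cref{thm:channel-polysimplices-characterization} applied to its dual, exactly the data of the conditional probability distributions $(\pi_{w|x})$ and $(\nu_{a|c,x,w})$ that classically simulate the first input-output pair; the channel $\Psi_3 : CS^1_{t,v} \to CS^1_{l,r}$ is, again by \Cref{thm:channel-polysimplices-characterization}, the data of $(\theta_{z|y})$ and $(\mu_{b|d,y,z})$ simulating the second pair; and the middle map $\Psi_2 : (CS^\ast_{s,u})^+ \to CS^+_{t,v}$ with $\Psi_2(\one_{CS^1_{s,u}}) \in CS^1_{t,v}$ is, by \Cref{cor:Phi-NS}, precisely a no-signaling behavior $(q_{c,d|w,z})$.

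First I would prove the forward direction: given the decomposition of $p_{a,b|x,y}$, define $\Psi_1$ from $(\pi,\nu)$ via \cref{eq:cl-sim-channel} (dualized), define $\Psi_3$ from $(\theta,\mu)$ the same way, and define $\Psi_2$ from $(q_{c,d|w,z})$ using the tensor-to-map correspondence of \Cref{cor:Phi-NS}; then check $\Phi = \Psi_3 \circ \Psi_2 \circ \Psi_1$ by evaluating both sides on the basis functionals $m_a^{(x)}$ and $\one_{CS^1_{k,g}}$ of $CS^\ast_{k,g}$, using that the $m_c^{(w)}$ generate the extreme rays of $(CS^+_{s,u})^\ast$ and likewise for $t,v$. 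The normalization conditions $\Psi_1(\one)=\one$, $\Psi_2(\one)\in CS^1_{t,v}$, $\Psi_3$ a channel then follow from the fact that $\pi,\nu,\theta,\mu$ are genuine conditional probability distributions and $q$ is a genuine no-signaling behavior.

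For the converse, given the commuting diagram, I would apply \Cref{thm:channel-polysimplices-characterization} (via duality) to extract $(\pi_{w|x}),(\nu_{a|c,x,w})$ from $\Psi_1$ and $(\theta_{z|y}),(\mu_{b|d,y,z})$ from $\Psi_3$, and apply \Cref{cor:Phi-NS} to $\Psi_2$ to extract a no-signaling behavior $(q_{c,d|w,z})$ with $\langle m_d^{(z)}, \Psi_2(m_c^{(w)})\rangle = q_{c,d|w,z}$; substituting these into $p_{a,b|x,y} = \langle m_b^{(y)}, \Phi(m_a^{(x)})\rangle = \langle m_b^{(y)}, \Psi_3(\Psi_2(\Psi_1(m_a^{(x)})))\rangle$ and expanding gives the claimed formula. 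The main obstacle I anticipate is bookkeeping the positivity/normalization bookkeeping at the seams — in particular, verifying that $\Psi_2(\one_{CS^1_{s,u}}) \in CS^1_{t,v}$ translates exactly into the marginal $\sum_{c,d} \cdot$ of $q$ being a fixed probability distribution (the no-signaling marginal normalization of \Cref{cor:Phi-NS}), and conversely, handling the $w_{w|x}=0$ type degeneracies when dividing out $\pi_{w|x}$ to define $\nu$ (as in the proof of \Cref{thm:channel-polysimplices-characterization}), so that the decomposition of $\Psi_1$ into $(\pi,\nu)$ is valid even when some $\pi_{w|x}$ vanish. None of this is conceptually hard, but care is needed so that the two-stage classical-simulation picture genuinely composes to the three-map factorization and back without losing the no-signaling structure of the intermediate behavior $q$.
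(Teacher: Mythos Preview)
Your proposal is correct and takes essentially the same approach as the paper: extract $(\pi,\nu)$ from $\Psi_1$ and $(\theta,\mu)$ from $\Psi_3$ via \Cref{thm:channel-polysimplices-characterization} (dualized for $\Psi_1$), extract the no-signaling behavior $q$ from $\Psi_2$ via \Cref{cor:Phi-NS}, and then compose. The paper devotes most of its proof to exactly the bookkeeping you anticipate, namely the explicit case analysis for $a=k$ and/or $b=l$ where one must use the no-signaling property of $q$ to show the formula still holds; your worry about $\pi_{w|x}=0$ degeneracies is already absorbed into \Cref{thm:channel-polysimplices-characterization} and does not resurface here.
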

\begin{proof}
    Let us define $\langle m_d^{(z)}, \Psi_2(m^{(w)}_c) \rangle =: q_{c,d|w,z}$ for all $c \in [s]$, $d \in [t]$, $w \in [u]$, $z \in [v]$ using \Cref{cor:Phi-NS}. We will use \Cref{thm:channel-polysimplices-characterization} again to obtain for any $\alpha \in (CS^\ast_{k,g})^+$ 
    \begin{equation}
        \Psi_1(\alpha) = \alpha(s_{k, \ldots,k})\one_{CS^1_{s,u}}  + \sum_{x=1}^g \sum_{a =1}^{k-1} \alpha(e^{(x)}_a)\left[ \sum_{w=1}^u \pi_{w|x} \sum_{c=1}^s \nu_{a|c,x,w} m^{(w)}_c\right]
    \end{equation}
    and for any $X \in CS_{t, v}$
    \begin{equation}
    \Psi_3(X) = \one_{CS^1_{t,v}}(X) s_{l, \ldots,l} + \sum_{y=1}^r \sum_{b =1}^{l-1} \left[ \sum_{z=1}^v \theta_{z|y} \sum_{d=1}^t \mu_{b|d,y,z} m^{(z)}_d(X)\right] e^{(y)}_b \, .
\end{equation}
    Hence, we infer for $a \in [k-1]$, using \Cref{cor:Phi-NS},
    \begin{align}
        \Psi_3(\Psi_2( \Psi_1(m_a^{(x)})))=&\sum_{w=1}^u \pi_{w|x} \sum_{c=1}^s \nu_{a|c,x,w} \one_{CS^1_{t,v}}(\Psi_2(m^{(w)}_c))s_{l, \ldots,l} \\&+ \sum_{w=1}^u \pi_{w|x} \sum_{c=1}^s \nu_{a|c,x,w} \sum_{y=1}^r \sum_{b =1}^{l-1} \left[ \sum_{z=1}^v \theta_{z|y} \sum_{d=1}^t \mu_{b|d,y,z} m^{(z)}_d(\Psi_2(m^{(w)}_c))\right] e^{(y)}_b \\
        =&\sum_{w=1}^u \pi_{w|x} \sum_{c=1}^s \nu_{a|c,x,w}\sum_{d=1}^t q_{c,d|w,1} s_{l, \ldots,l} \\&+ \sum_{w=1}^u \pi_{w|x} \sum_{c=1}^s \nu_{a|c,x,w} \sum_{y=1}^r \sum_{b =1}^{l-1} \left[ \sum_{z=1}^v \theta_{z|y} \sum_{d=1}^t \mu_{b|d,y,z} q_{c,d|w,z}\right] e^{(y)}_b
    \end{align}
    Moreover,
    \begin{align}
        \Psi_3(\Psi_2( \Psi_1(\one_{CS^1_{k,g}})))&= \one_{CS^1_{t,v}}(\Psi_2(\one_{CS_{s,u}^1}))s_{l, \ldots, l}+\sum_{y=1}^r \sum_{b =1}^{l-1} \left[ \sum_{z=1}^v \theta_{z|y} \sum_{d=1}^t \mu_{b|d,y,z} m^{(z)}_d(\Psi_2(\one_{CS_{s,u}^1}))\right] e^{(y)}_b\\&= s_{l, \ldots, l}+\sum_{y=1}^r \sum_{b =1}^{l-1} \left[ \sum_{z=1}^v \theta_{z|y} \sum_{d=1}^t \mu_{b|d,y,z} \sum_{c=1}^s q_{c,d|1,z}\right] e^{(y)}_b \, .
    \end{align}
    Thus, for $b \in [l-1]$ and $a \in [k-1]$,
    \begin{equation}
        p_{a,b|x,y} = \sum_{w=1}^u \pi_{w|x} \sum_{c=1}^s \nu_{a|c,x,w}  \left[ \sum_{z=1}^v \theta_{z|y} \sum_{d=1}^t \mu_{b|d,y,z} q_{c,d|w,z}\right] \, .
    \end{equation}
    For $b = l$ and $a \in [k-1]$, we have using the no-signaling of $q_{c,d|w,z}$
    \begin{align}
        p_{a,l|x,y} =& \sum_{w=1}^u \pi_{w|x} \sum_{c=1}^s \nu_{a|c,x,w}\sum_{d=1}^t q_{c,d|w,1} \\&- \sum_{w=1}^u \pi_{w|x} \sum_{c=1}^s \nu_{a|c,x,w} \sum_{b =1}^{l-1} \left[ \sum_{z=1}^v \theta_{z|y} \sum_{d=1}^t \mu_{b|d,y,z} q_{c,d|w,z}\right] \\
        =& \sum_{w=1}^u \pi_{w|x} \sum_{c=1}^s \nu_{a|c,x,w} \sum_{z=1}^v \theta_{z|y} \sum_{d=1}^t q_{c,d|w,z} \underbrace{\left(1-\sum_{b =1}^{l-1}\mu_{b|d,y,z}\right)}_{\mu_{l|d,y,z}}
        \, .
    \end{align}
    For $b \in [l-1]$ and $a=k$, it follows that, using the no-signaling of $q_{c,d|w,z}$,
    \begin{align}
        p_{k,b|x,y} =& m^{(y)}_b[\Psi_3(\Psi_2( \Psi_1(\one_{CS^1_{k,g}})))] - \sum_{a =1}^{k-1}m^{(y)}_b[\Psi_3(\Psi_2( \Psi_1(m^{(x)}_a)))] \\ =&  \sum_{z=1}^v \theta_{z|y} \sum_{d=1}^t \mu_{b|d,y,z} \sum_{c=1}^s q_{c,d|1,z}  - \sum_{a =1}^{k-1}\sum_{w=1}^u \pi_{w|x} \sum_{c=1}^s \nu_{a|c,x,w}  \left[ \sum_{z=1}^v \theta_{z|y} \sum_{d=1}^t \mu_{b|d,y,z} q_{c,d|w,z}\right] \\ =&\sum_{z=1}^v \theta_{z|y} \sum_{d=1}^t \mu_{b|d,y,z} \sum_{w=1}^u \pi_{w|x} \sum_{c=1}^s q_{c,d|w,z} \underbrace{\left(1-\sum_{a =1}^{k-1} \nu_{a|c,x,w}\right)}_{\nu_{k|c,x,w}} \, .
    \end{align}
    Finally, we find, using again the no-signaling of $q_{c,d|w,z}$,
    \begin{align}
        p_{k,l|x,y} =& \left(\one_{CS_{l, r}^1} - \sum_{b=1}^{h-1} m_b^{(y)}\right) \left[\Psi_3(\Psi_2( \Psi_1(\one_{CS^1_{k,g}}))) - \sum_{a =1}^{k-1}\Psi_3(\Psi_2( \Psi_1(m^{(x)}_a)))\right] \\
        =& 1- \sum_{b=1}^{h-1} \sum_{z=1}^v \theta_{z|y} \sum_{d=1}^t \mu_{b|d,y,z} \sum_{w=1}^u \pi_{w|x} \sum_{c=1}^s q_{c,d|w,z} \nu_{k|c,x,w}\\& - \sum_{a=1}^{k-1}\sum_{w=1}^u \pi_{w|x} \sum_{c=1}^s \nu_{a|c,x,w}\underbrace{\sum_{d=1}^t q_{c,d|w,1}}_{\sum_{z=1}^v \theta_{z|y}\sum_{d=1}^t q_{c,d|w,z}} \underbrace{1}_{\sum_{b=1}^l \mu_{b|d,y,z}} \\
        =& \sum_{z=1}^v \theta_{z|y} \sum_{d=1}^t \mu_{l|d,y,z} \sum_{w=1}^u \pi_{w|x} \sum_{c=1}^s q_{c,d|w,z} \nu_{k|c,x,w} \,.
    \end{align}
    The converse is straightforward as we can define the maps $\Psi_1$, $\Psi_2$ and $\Psi_3$ via the conditional and no-signaling probabilities.
    \end{proof}

    Finally, as with multimeters and assemblages, for no-signaling probability distributions we can equate the special case of factoring through a simplex and being separable with having a classical realization, i.e., having an LHV model. 

    \begin{cor} \label{cor:separability-LHV-simulation}
        Let $\Phi: (CS^\ast_{k,g})^+ \to CS_{l,r}^+$ with $\Phi(\one_{CS^1_{k,g}}) \in CS_{l,r}^1$. Then, the following are equivalent:
        \begin{enumerate}
            \item $(\Phi \otimes \id)(\chi_{CS_{k,g}}) \in CS_{l,r}^1 \tmin CS_{k,g}^1$.
            \item The associated no-signaling probability distributions $(p_{a,b|x,y})_{a \in [k], b \in [l]}$ by \Cref{cor:Phi-NS} admit an LHV model.
            \item There exist a finite number of outcomes $\Lambda$, a map $\Psi_1: (CS^\ast_{k,g})^+ \to S_\Lambda^+$ such that $\Psi(\one_{CS^1_{k,g}}) \in S_\Lambda$, and a channel $\Psi_2: S_\Lambda \to CS_{l,r}^1$ such that the following diagram commutes:
        \begin{equation}
            \begin{tikzcd}
            (CS^\ast_{k,g})^+ \arrow[rd, "\Psi_1"] \arrow[rr, "\Phi"] & &  CS_{l,r}^+ \\
            & S_{\Lambda}^+ \arrow[ru, "\Psi_2"] &
            \end{tikzcd}  
        \end{equation}
        \end{enumerate}
    \end{cor}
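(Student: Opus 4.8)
The plan is to mirror the pattern of \Cref{cor:sep-com} and \Cref{cor:separbility-LHS}: obtain $(1)\Leftrightarrow(3)$ from the abstract tensor/factorization dictionary, and $(2)\Leftrightarrow(3)$ from the concrete description of $\Phi$ in terms of its no-signaling probabilities.

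For $(1)\Leftrightarrow(3)$, I would note that $\xi_\Phi:=(\Phi\otimes\id)(\chi_{CS_{k,g}})$ is precisely the element of $CS^1_{k,g}\tmax CS^1_{l,r}$ associated to $\Phi:(CS^\ast_{k,g})^+\to CS^+_{l,r}$ by \Cref{lem:xi-to-phi} (identifying $A(CS^1_{k,g})=CS^\ast_{k,g}$ and $V(CS^1_{l,r})=CS_{l,r}$ as in \Cref{sec:maps-between-GPTs}, so that $\chi_{CS_{k,g}}=\chi_{V(CS^1_{k,g})}$). \Cref{prop:min-factoring-and-tensor}, applied with $K_A=CS^1_{k,g}$ and $K_B=CS^1_{l,r}$, then states verbatim that $\xi_\Phi$ lies in the minimal tensor product (which is symmetric in its factors, so the ordering in $(1)$ is immaterial) if and only if $\Phi$ factors as $\Psi_2\circ\Psi_1$ through some simplex $S_\Lambda$ with $\Psi_1:(CS^\ast_{k,g})^+\to S_\Lambda^+$ positive, $\Psi_1(\one_{CS^1_{k,g}})\in S_\Lambda$, and $\Psi_2:S_\Lambda\to CS^1_{l,r}$ a channel, which is exactly $(3)$. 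The normalization side-conditions come for free from that proposition, so this half is essentially a citation.

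For $(2)\Leftrightarrow(3)$, I would argue directly via \Cref{cor:Phi-NS}. The point is that a channel $\Psi_2:S_\Lambda\to CS^1_{l,r}$ is just a multimeter on the classical state space $S_\Lambda$, hence is encoded by conditional distributions $(q_{b|y,\lambda})$ with $\langle m_b^{(y)},\Psi_2(\delta_\lambda)\rangle=q_{b|y,\lambda}$; and a positive $\Psi_1:(CS^\ast_{k,g})^+\to S_\Lambda^+$ with $\Psi_1(\one_{CS^1_{k,g}})\in S_\Lambda$ is encoded, after normalizing the weights $w_\lambda$ coming from $\Psi_1(\one_{CS^1_{k,g}})=\sum_\lambda w_\lambda\delta_\lambda$ and writing $\Psi_1(m_a^{(x)})=\sum_\lambda\tilde p_{a,\lambda|x}\delta_\lambda$ with $\tilde p_{a,\lambda|x}\ge 0$, by conditional distributions $p_{a|x,\lambda}:=\tilde p_{a,\lambda|x}/w_\lambda$ (with a harmless dummy choice $p_{a|x,\lambda}:=1/k$ where $w_\lambda=0$), using $\sum_a m_a^{(x)}=\one_{CS^1_{k,g}}$ to see $\sum_a\tilde p_{a,\lambda|x}=w_\lambda$. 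Composing and reading off $\langle m_b^{(y)},\Phi(m_a^{(x)})\rangle$ through \Cref{cor:Phi-NS} gives $p_{a,b|x,y}=\sum_\lambda w_\lambda\,p_{a|x,\lambda}\,q_{b|y,\lambda}$, which is the LHV form; conversely, an LHV model defines $\Psi_1$ on the basis $\{\one_{CS^1_{k,g}},m_a^{(x)}\}_{a\in[k-1],x\in[g]}$ of $CS^\ast_{k,g}$ via $\Psi_1(m_a^{(x)})=\sum_\lambda w_\lambda p_{a|x,\lambda}\delta_\lambda$ and $\Psi_1(\one_{CS^1_{k,g}})=\sum_\lambda w_\lambda\delta_\lambda$ (positive since all values on $m_a^{(x)}$, $a\in[k]$, are nonnegative) and $\Psi_2$ from $(q_{b|y,\lambda})$ as above, and \Cref{cor:Phi-NS} again confirms $\Psi_2\circ\Psi_1=\Phi$. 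One could instead specialize \Cref{thm:Bell-classical-simulation} to a trivial-input situation, but making its two sequential classical stages collapse into the single simplex of $(3)$ is slightly awkward, so I would prefer the direct route.

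The $(1)\Leftrightarrow(3)$ half is a citation; the only place needing genuine, if modest, care is the bookkeeping in $(2)\Leftrightarrow(3)$ — verifying positivity of the $\Psi_1$ built from an LHV model, extracting well-defined $p_{a|x,\lambda}$ from a given factorization (including the $w_\lambda=0$ case), and the routine check via \Cref{cor:Phi-NS} that the composite reproduces $\Phi$. That is where I expect the (small) effort to lie; everything else is matching definitions.
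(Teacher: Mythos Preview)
Your proof is correct. The equivalence $(1)\Leftrightarrow(3)$ is handled exactly as in the paper, by citing \Cref{prop:min-factoring-and-tensor}. For $(2)\Leftrightarrow(3)$ you take a different but equally valid route: you argue directly via \Cref{cor:Phi-NS}, explicitly building the simplex factorization from an LHV model and reading off the LHV data from a given factorization. The paper instead specializes \Cref{thm:Bell-compression} to the trivial state space $K=S_1$: in that case the assemblage states $\varrho_{a|x,\lambda}$ collapse to the single point of $S_1$, the multimeter effects $F_{b|y,\lambda}$ become conditional probabilities, and \cref{eq:Bell-compression} reduces immediately to the LHV form. Your direct argument is essentially a self-contained reproof of this special case of \Cref{thm:Bell-compression}; it is slightly longer but avoids the overhead of that theorem's assemblage bookkeeping, while the paper's approach buys brevity by cashing in on work already done.
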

    \begin{proof}
        The equivalence of $(1)$ and $(3)$ follows from \Cref{prop:min-factoring-and-tensor}. The equivalence of $(2)$ and $(3)$ follows from \Cref{thm:Bell-compression} with $K=S_1$ as follows: We obtain that $(3)$ is equivalent to 
        \begin{equation}
        p_{a,b|x,y} = \sum_{\lambda=1}^\Lambda w_\lambda q_{a|x, \lambda} f_{b|y, \lambda}  \qquad \forall a \in [k], \forall b \in [l], \forall x \in [g], \forall y \in [r] \, ,
    \end{equation}
    where $(f_{b|y, \lambda})_{b \in [l]}$ is a conditional probability distribution and $\varrho_{a|x, \lambda}=1$ for all $a \in [k]$, $x \in [g]$, $\lambda \in [\Lambda]$ in this case. This gives an LHV model for $(p_{a,b|x,y})_{a\in[k],b \in [l], x\in [g], y \in [r]}$.
    \end{proof}

\section{Separability and extendability} \label{sec:extendability}

The notion of \emph{extendability} has been recognized to be central in the context of quantum entanglement \cite{doherty_complete_2004}. In this section relate the extendability of relevant tensors to intermediate notion of classicality in the setting of measurement compatibility (\cref{sec:ext-comp}) and existence of LHV models for correlations (\cref{sec:ext-LHV}).
In order to introduce the concept of extendability, we recall a known instance from quantum theory.
Consider a separable bipartite quantum state $\varrho_{AB} = \sum_\lambda p(\lambda) \, \varrho_{\lambda}^{(A)} \otimes \varrho_{\lambda}^{(B)}$.
We can construct a possible extension of this state with finitely many parties as follows:
\begin{equation} \label{eq:extendability-symemtricExtensionRho}
    \sigma_{A B_1 B_2 \ldots B_n} = \sum_\lambda p(\lambda) \, \varrho_{\lambda}^{(A)} \otimes \Big(\varrho_{\lambda}^{(B)}\Big)^{\otimes n}\, ,
\end{equation}
such that
\begin{enumerate}
    \item $\Tr_{B_2 \ldots B_n}[\sigma_{AB_1 B_2 \ldots B_n}] = \varrho_{AB}$
    \item $\pi_{B_1 B_2 \ldots B_n}(\sigma_{AB_1 B_2 \ldots B_n}) = \sigma_{AB_1 B_2 \ldots B_n}$, where $\pi$ permutes $B_1, \dots, B_n$
\end{enumerate}
In \cite{doherty_complete_2004} it was found that if a quantum state
$\sigma_{AB_1 B_2 \ldots B_n}$ having properties (1) and (2) exists for all $n \geq 2$, then $\varrho_{AB}$ is separable.
On the other hand, if there is some $n \geq 2$ with no state $\sigma_{AB_1 B_2 \ldots B_n}$ satisfying above properties, then $\varrho_{AB}$ must be entangled.

We will first of all argue that factorizability can be solved using the generalization of the aforementioned methods of symmetric extensions. Consider the case where we are given state spaces $K_A$, $K_B$ and a map $\Phi: K_A \to K_B$. We want to determine whether given the state space $K_C$ there are maps $\Psi_1: K_A \to K_C \dot{\otimes} S_\Lambda$ and $\Psi_2: K_C \dot{\otimes} S_\Lambda \to K_B$ such that the following diagram commutes:
\begin{equation} \label{eq:extendability-generalWithSL}
    \begin{tikzcd}
    V(K_A)^+ \arrow[rd, "\Psi_1"] \arrow[rr, "\Phi"] & & V(K_B)^+ \\
    & V(K_C \dot{\otimes} S_\Lambda)^+ \arrow[ru, "\Psi_2"] &
    \end{tikzcd}  
\end{equation}
Here, as before, $S_\Lambda$ represents a classical side information. For every value $\lambda \in \Lambda$ we define the partial maps $\Psi_{1, \lambda}: V(K_A)^+ \to V(K_C)^+$ and $\Psi_{2, \lambda}: V(K_C)^+ \to V(K_B)^+$ by fixing the value of $\lambda$ in the output or input respectively. Then, commutativity of the diagram \eqref{eq:extendability-generalWithSL} is equivalent to
\begin{equation}
    \Phi = \sum_\lambda \Psi_{2, \lambda} \circ \Psi_{1, \lambda}.
\end{equation}
Defining the evaluation functional $\chi_{V(K_C)}$ on $V(K_C) \otimes A(K_C)$ via
\begin{equation}
    \chi_{V(K_C)} (x \otimes f) = f(x)
\end{equation}
for $x \in K_C$, $f \in A(K_C)$, we get that the commutativity of \eqref{eq:extendability-generalWithSL} is equivalent to
\begin{equation} \label{eq:extendability-PhiSeparability}
    \Phi = (\id \otimes \chi_{V(K_C)} \otimes \id) \left(\sum_\lambda \Psi_{1, \lambda} \otimes \Psi_{2, \lambda} \right).
\end{equation}
It thus follows that commutativity of \eqref{eq:extendability-generalWithSL} is equivalent to the existence of a separable tensor $\sum_\lambda \Psi_{2, \lambda} \otimes \Psi_{1, \lambda}$ such that \eqref{eq:extendability-PhiSeparability} holds. Thus one can use the hierarchy of symmetric extensions \cite{monogamy} that generalizes the idea of \eqref{eq:extendability-symemtricExtensionRho} to rule out the existence of suitable maps $\Psi_{1, \lambda}$ and $\Psi_{2, \lambda}$. We will investigate this in detail in the case of joint measurability bellow, but before that we have to address the case without classical side information. 

Consider again state spaces $K_A$, $K_B$ and a map $\Phi: K_A \to K_B$. We want to determine whether given the state space $K_C$ there are maps $\Psi_1: K_A \to K_C$ and $\Psi_2: K_C \to K_B$ such that the following diagram commutes:
\begin{equation} \label{eq:extendability-generalNoSL}
    \begin{tikzcd}
    V(K_A)^+ \arrow[rd, "\Psi_1"] \arrow[rr, "\Phi"] & & V(K_B)^+ \\
    & V(K_C)^+ \arrow[ru, "\Psi_2"] &
    \end{tikzcd}  
\end{equation}
Repeating the steps as above, we get that this is equivalent to
\begin{equation}
    \Phi = (\id \otimes \chi_{V(K_C)} \otimes \id) (\Psi_1 \otimes \Psi_2).
\end{equation}
In this case $\Psi_1 \otimes \Psi_2$ is a product tensor rather than a separable tensor and one clearly cannot just replace $\Psi_1 \otimes \Psi_2$ with an arbitrary separable tensor $\sum_\lambda \Psi_{2, \lambda} \otimes \Psi_{1, \lambda}$ as this would change the definition of the problem. Nevertheless even in this case one can construct a hierarchy of symmetric extensions that could be used to rule out the existence of $\Psi_1: K_A \to K_C$ and $\Psi_2: K_C \to K_B$ such that \eqref{eq:extendability-generalNoSL} commutes using the methods outlined in \cite{plavala2025polarization}.

\subsection{Extendability of measurements}\label{sec:ext-comp}

Let us now investigate the case of extendability and joint measurability in depth. We will start by recalling the first two theorems from \cite{monogamy}. Let us consider proper cones $V_A^+$ and $V_B^+$ in vector spaces $V_A$ and $V_B$. First we define a reduction map of $n$-th order:
\begin{equation}
    \gamma_n^\Phi: V_B^{\otimes n} \rightarrow V_B, \quad \gamma_n^\Phi= \frac{1}{n} \sum_{i=1}^n \Phi^{\otimes i - 1} \otimes \id \otimes \Phi^{\otimes n-i},
\end{equation}
where $\Phi$ is a linear form in the interior of $(V_B^+)^*$, e.g., the trace function $\Tr$ in the case of quantum theory. We make the following definition:
\begin{defi}
    Let $\Phi$ be a linear form in the interior of $(V_B^+)^*$. A tensor $\xi \in V^+_A \tmax V^+_B$ is $n$-extendable with respect to $\Phi$ if there exists $\xi^{(n)} \in V^+_A \tmax (V^+_B)^{\otimes n}$ such that $(id \otimes \gamma^\Phi_n)(\xi^{(n)}) = \xi$.
\end{defi}

As shown in \cite{monogamy}, if a tensor is $n$-extendable then it is also $n-1$ extendable so that we get a hierarchy. In particular, Theorem 1 of \cite{monogamy} states the following:
\begin{lem}[\cite{monogamy}]
    A tensor $\xi \in V^+_A \tmax V^+_B$ is separable, i.e., $\xi \in V^+_A \tmin V^+_B$, if and only if it is $n$-extendable for all $n \in \nat$ with respect to some linear form $\Phi$ in the interior of $(V_B^+)^*$, i.e.,
    \begin{equation}
\label{monogamy_thm1}
    \acone \tmin \bcone = \bigcap_{n \geq 1} \left(\id \otimes \gamma_n^\Phi \right) \left(\acone \tmax (\bcone)^{\tmax n} \right)
\end{equation}
\end{lem}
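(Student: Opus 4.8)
The plan is to prove the two inclusions of the displayed identity separately: the inclusion ``$\subseteq$'' (separable $\Rightarrow$ $n$-extendable for all $n$) by an explicit construction, and ``$\supseteq$'' (extendable for all $n$ with respect to some $\Phi$ $\Rightarrow$ separable) by a de Finetti--type argument.

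For the easy direction, suppose $\xi \in \acone \tmin \bcone$, so $\xi = \sum_i x_i \otimes y_i$ is a finite sum with $x_i \in \acone$, $y_i \in \bcone$. Since $\Phi$ is in the interior of $(\bcone)^\ast$ we have $\Phi(y_i)>0$ whenever $y_i \neq 0$, so after rescaling $y_i \mapsto y_i/\Phi(y_i)$ (absorbing the scalar into $x_i$) and discarding zero terms we may assume $\Phi(y_i)=1$. Put $\xi^{(n)} := \sum_i x_i \otimes y_i^{\otimes n} \in \acone \tmin (\bcone)^{\tmin n} \subseteq \acone \tmax (\bcone)^{\tmax n}$. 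A one-line computation gives $\gamma_n^\Phi(y_i^{\otimes n}) = \frac{1}{n}\sum_{j=1}^n \Phi(y_i)^{n-1} y_i = y_i$, hence $(\id\otimes\gamma_n^\Phi)(\xi^{(n)}) = \xi$; thus $\xi$ is $n$-extendable for every $n$ (and every admissible $\Phi$).

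For the converse, assume $\xi$ is $n$-extendable for all $n$ with respect to some fixed $\Phi$, with extensions $\xi^{(n)} \in \acone \tmax (\bcone)^{\tmax n}$. First I would symmetrize: $\gamma_n^\Phi$ is invariant under permutations of its $n$ tensor legs, so averaging $\xi^{(n)}$ over permutations of the $B$-systems keeps it in $\acone \tmax (\bcone)^{\tmax n}$ and preserves $(\id\otimes\gamma_n^\Phi)(\xi^{(n)}) = \xi$; hence we may take $\xi^{(n)}$ permutation-symmetric in the $B$-legs. Using the elementary identity $\Phi \circ \gamma_n^\Phi = \Phi^{\otimes n}$ and any fixed strictly positive functional $\omega_A$ on $\acone$, one gets $(\omega_A\otimes\Phi^{\otimes n})(\xi^{(n)}) = (\omega_A\otimes\Phi)(\xi) =: c$ for all $n$, and since $\omega_A\otimes\Phi^{\otimes n}$ is strictly positive on $\acone\tmax(\bcone)^{\tmax n}$, each $\xi^{(n)}$ lies in a compact slice of that cone. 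For $m \le n$ set $\xi^{(n)}_m := (\id_A\otimes\id_B^{\otimes m}\otimes\Phi^{\otimes(n-m)})(\xi^{(n)})$; symmetry together with $\Phi\circ\gamma_k^\Phi = \Phi^{\otimes k}$ shows that $\xi^{(n)}_m$ is again an $m$-extension of $\xi$ (in particular $\xi^{(n)}_1 = \xi$, since $\gamma_1^\Phi = \id$) and that $\xi^{(n)}_m$ is the analogous $m$-reduction of $\xi^{(n)}_{m+1}$. All the $\xi^{(n)}_m$ sit in fixed compact sets, so a diagonal extraction yields a subsequence along which $\xi^{(n)}_m \to \eta_m$ for all $m$ simultaneously, with each $\eta_m$ symmetric, $\eta_m$ the $m$-reduction of $\eta_{m+1}$, and $\eta_1 = \xi$. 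Thus $(\eta_m)_{m\ge1}$ is an infinitely exchangeable family of elements of the maximal tensor powers $\acone\tmax(\bcone)^{\tmax m}$, and a de Finetti representation theorem for such families produces a probability measure $\mu$ on the base of $\bcone$ and elements $z_\lambda \in \acone$ with $\eta_1 = \int z_\lambda\otimes\lambda\,d\mu(\lambda)$. Since $\acone\tmin\bcone$ is closed in finite dimensions --- being the convex conic hull of the compact set $\{p\otimes q : p,q \text{ in fixed bases of }\acone,\bcone\}$, which avoids the origin --- this integral lies in $\acone\tmin\bcone$, i.e.\ $\xi \in \acone\tmin\bcone$.

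The main obstacle is the de Finetti step for exchangeable families valued in a maximal tensor power of a cone; everything else is bookkeeping (symmetrization, compactness, consistency) plus the elementary closedness of $\acone\tmin\bcone$. This step is precisely Theorem 1 of \cite{monogamy}, and for a self-contained proof one can route it through the classical Hewitt--Savage theorem: apply a fixed tomographically complete finite-outcome measurement to all but one of the $B$-systems in $\eta_m$, turning the $B$-part into an exchangeable sequence of classical registers whose de Finetti measure --- guaranteed by Hewitt--Savage --- reconstructs $\mu$ because the measurement is tomographically complete; alternatively one may prove a quantitative finite-$n$ bound in the spirit of Doherty--Parrilo--Spedalieri and let $n\to\infty$, closedness of $\acone\tmin\bcone$ again upgrading the resulting approximation to exact membership.
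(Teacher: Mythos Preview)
The paper does not give its own proof of this lemma: it is quoted from \cite{monogamy} (where it is Theorem~1) and used as a black box, so there is no in-paper argument to compare your proposal against.

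On its own merits, your outline is correct. The easy direction is complete and standard. For the hard direction, the symmetrization of $\xi^{(n)}$, the uniform normalization $(\omega_A\otimes\Phi^{\otimes n})(\xi^{(n)})=c$, the consistency of the reductions $\xi^{(n)}_m$, and the diagonal extraction producing an infinitely exchangeable family $(\eta_m)_{m\ge 1}$ are all sound. You correctly isolate the de Finetti representation as the only substantive step. Note, however, that your remark ``this step is precisely Theorem~1 of \cite{monogamy}'' is circular, since that \emph{is} the statement under discussion; the two self-contained routes you then sketch---classical Hewitt--Savage after a tomographically complete measurement on the $B$-legs, or a quantitative finite-$n$ bound in the Doherty--Parrilo--Spedalieri style followed by $n\to\infty$ and closedness of $\acone\tmin\bcone$---are both viable and are indeed how such cone-valued de Finetti theorems are established.
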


Thus, the consequence is similar as in quantum theory, we get a separable vector via reducing all finite extensions of it. The procedure in \cref{monogamy_thm1} can be thought as cutting the set of all vectors closer to the subset of separable vectors with increasing $n$, eventually converging. Although in practice this is not very applicable unless one wants to check a small set of $n$-extensions where the reduction fails in order to find an entangled vector; otherwise one would need to check infinitely many $n$ which is naturally not feasible.

In special cases, however, \cite[Theorem 2]{monogamy} provides a remedy. Under a certain condition there is a threshold, i.e., one must only check finitely many extensions up to some fixed $n$.
\begin{lem}[\cite{monogamy}]
\label{monogamy_thm2}    
    For $n \geq 1$ the following are equivalent:
    \begin{enumerate}
        \item[(i)] $\acone \tmin \bcone = \left(\id \otimes \gamma_n^\Phi \right) \left(\acone \tmax (\bcone)^{\tmax n} \right)$
        \item[(ii)] The base $K_\Phi = \bcone \cap \Phi^{-1}(1)$ is affinely equivalent to a Cartesian product of at most $n$ simplices, that is, a polysimplex with at most $n$ simplices.
    \end{enumerate}    
\end{lem}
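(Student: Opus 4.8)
The plan is to deduce this from Theorem 2 of \cite{monogamy}, whose statement is quoted as \Cref{monogamy_thm2} in its full generality for proper cones $V_A^+, V_B^+$. The result we want to prove (the characterization of when the $n$-th reduction already yields separability as: the base being a polysimplex with at most $n$ simplices) is essentially a restatement of that theorem. So the real work is only to check that \Cref{monogamy_thm2} has actually been quoted correctly and that the hypotheses are met in the way we intend to apply it later; the ``proof'' of the stated lemma is just the citation. If one instead wants to give a self-contained argument, here is how I would organize it.

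First, for the direction (ii) $\Rightarrow$ (i): if $K_\Phi$ is affinely equivalent to a product of $m \leq n$ simplices, then I would show directly that every element of $V_A^+ \tmax (V_B^+)^{\tmax n}$, after applying $\id \otimes \gamma_n^\Phi$, lands in $V_A^+ \tmin V_B^+$. The key structural fact is that for a polysimplex (a product of simplices) the maximal and minimal tensor powers already collapse in a controlled way: a product of $m$ simplices has the property that $(V_B^+)^{\tmax n}$ symmetrized via $\gamma_n^\Phi$ can be re-expressed, using the explicit vertex/extreme-ray structure from \Cref{sec:CS-GPT} (the extreme points $s_{i_1,\dots,i_g}$ and the functionals $m_i^{(j)}$), as a convex combination of product states once $n \geq m$. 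Concretely, I would decompose $\xi^{(n)} \in V_A^+ \tmax (V_B^+)^{\tmax n}$ in the tensor-product basis, use that the functionals $(m_i^{(j)})^{\otimes n}$ detect all the relevant coefficients, and exploit that the symmetrization $\gamma_n^\Phi$ averages the $n$ ``copies'' so that the surviving data is exactly an $m$-party no-signaling-type object on a polysimplex, which by the analysis already done in the paper (cf.\ the LHV characterizations, \Cref{cor:separability-LHV-simulation}) is automatically separable when there are enough copies to ``spread out'' each of the $m$ simplex factors.

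For the converse (i) $\Rightarrow$ (ii), I would argue contrapositively: if $K_\Phi$ is \emph{not} affinely a product of at most $n$ simplices, exhibit a tensor $\xi \in V_A^+ \tmax V_B^+$ that is $n$-extendable but not separable. The natural candidate is built from a maximally entangled–type tensor $\chi$ (as in \cref{eq:max-ent-tensor}) associated to a cleverly chosen map, whose symmetric $n$-extension exists precisely because $K_\Phi$ admits ``enough'' independent directions but fails to be separable because it is not a product of $\leq n$ simplices; here the obstruction to separability is detected by a witness in $(V_A^*)^+ \tmax (V_B^*)^+$ of the type discussed around \Cref{BI_maxtp}. This direction is where the geometric hypothesis is genuinely used, and it is the hard part: one must translate ``not affinely equivalent to a product of $\leq n$ simplices'' into the existence of a concrete extendable-but-inseparable tensor, which in \cite{monogamy} is done via a careful analysis of the facial structure of $K_\Phi$ and its interplay with the symmetric subspace of $(V_B^+)^{\tmax n}$.

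The main obstacle, therefore, is the converse implication and in particular making the correspondence ``product of $n$ simplices $\leftrightarrow$ exactness of the $n$-th level of the hierarchy'' precise; but since \Cref{monogamy_thm2} is imported verbatim from \cite{monogamy}, in the present paper the proof is simply: \emph{this is \cite[Theorem 2]{monogamy}}, and the only thing to verify is that $\Phi$ being a linear form in the interior of $(V_B^+)^*$ is exactly the hypothesis under which that theorem was proved, which it is.
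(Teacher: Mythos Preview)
Your proposal is correct and matches the paper exactly: the paper does not prove this lemma at all but simply imports it as \cite[Theorem 2]{monogamy}, and you identify this correctly. The self-contained sketch you add is extra material not present in the paper; the paper's ``proof'' is purely the citation.
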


We want to apply this result to multimeters in order to find a criterion for compatibility. To do this, let $M = \{M_{\cdot|x}\}_{x \in [g]}$ be a multimeter that we identify with the tensor
$\xi_M \in A(K)^+ \tmax \CSC$. From now on, we fix the reduction map $\Phi$ as the equal sums of the columns, that is, $\Phi(C) = \one_{\CS}(C) = \sum_{i=1}^k C_{ij}$ for $j \in [g]$ and $C \in \CSC$, and when we consider $n$-extendability we mean $n$-extendability with respect to this choice of $\Phi$. Now the base $K_\Phi = \CSC \cap \Phi^{-1}(1)$ is the set of column stochastic matrices $\CS$. As already discussed, the state space $\CS$ is a special case of a polysimplex and in particular it is affinely isomorphic to a Cartesian product of $g$ simplices $S_k$.

Thus, by combining \cref{prop:min-factoring-and-tensor} and \cref{monogamy_thm2} we get the following characterization for compatibility:

\begin{cor}
    A multimeter $M: K \to CS^1_{k,g}$ is compatible if and only $\xi_M \in A(K)^+ \tmax \CSC$ is $g$-extendable.
\end{cor}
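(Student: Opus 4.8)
The plan is to combine two ingredients already available. The first is the characterization of measurement compatibility as separability of the associated tensor: by \Cref{cor:sep-com} (equivalently, by \Cref{prop:min-factoring-and-tensor} together with the description of compatibility as factorization through a simplex), the multimeter $M$ is compatible if and only if $\xi_M$ is separable, i.e. $\xi_M \in A(K)^+ \tmin CS^+_{k,g}$. I would also note that the hypotheses needed to invoke the extendability machinery of \cite{monogamy} are met: $CS^+_{k,g}$ is a proper cone, and $A(K)^+$ is proper because $K$ is a GPT state space, so we may take $V_A^+ = A(K)^+$ and $V_B^+ = CS^+_{k,g}$.

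The second ingredient is \Cref{monogamy_thm2}, which I would apply with the fixed reduction form $\Phi = \one_{CS^1_{k,g}}$ (equal column sums) and with $n = g$. Its base $K_\Phi = CS^+_{k,g} \cap \Phi^{-1}(1)$ is precisely the state space $CS^1_{k,g}$, and, as recalled in \Cref{sec:CS-GPT}, $CS^1_{k,g}$ is affinely isomorphic to the Cartesian product of $g$ copies of the simplex $S_k$, i.e. a polysimplex built from $g$ simplices, in particular from at most $g$ simplices. Hence condition (ii) of \Cref{monogamy_thm2} holds for $n = g$, and therefore condition (i) holds:
\begin{equation}
    A(K)^+ \tmin CS^+_{k,g} = \left(\id \otimes \gamma_g^\Phi\right)\left(A(K)^+ \tmax (CS^+_{k,g})^{\tmax g}\right).
\end{equation}
Since the right-hand side is by definition the set of $g$-extendable tensors in $A(K)^+ \tmax CS^+_{k,g}$, this equality says exactly that such a tensor is separable if and only if it is $g$-extendable. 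Chaining this with the first ingredient gives the chain $M$ compatible $\iff$ $\xi_M$ separable $\iff$ $\xi_M$ $g$-extendable, which is the assertion; note that the single displayed equality supplies both implications at once.

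I do not expect a genuine obstacle beyond bookkeeping. The one point deserving care is the matching of the extendability parameter with the number of measurements: everything rests on recognizing the base $K_\Phi$ of the chosen reduction form as the polysimplex $CS^1_{k,g} \cong (S_k)^{\times g}$ and feeding $n = g$ into the ``at most $n$ simplices'' clause of \Cref{monogamy_thm2}. One could additionally remark that for $k \geq 2$ the value $g$ is optimal, as $CS^1_{k,g}$ is not affinely a product of fewer than $g$ nontrivial simplices, but this sharpness is not needed for the stated equivalence.
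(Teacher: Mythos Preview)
Your proposal is correct and follows essentially the same route as the paper: the paper likewise observes that the base $K_\Phi = CS^1_{k,g}$ is a product of $g$ simplices, invokes \Cref{monogamy_thm2} with $n=g$ to equate separability of $\xi_M$ with $g$-extendability, and combines this with the compatibility--separability equivalence (stated there via \Cref{prop:min-factoring-and-tensor}, which is the source of the \Cref{cor:sep-com} you cite).
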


The next question that arises is can we conclude something about $M$ by considering its $n$-extensions for $n < g$? It turns out that we can and that it is related to the concept of $n$-wise compatibility.
\begin{defi}
    A multimeter $M = \{M_{\cdot|x}\}_{x \in [g]}$ is said to be $n$-wise compatible with $n \leq g$ if for all $\{x_1, \dots, x_n\} \subset [g]$ the measurements $\{M_{\cdot|x_i}\}_{i \in [n]}$ are compatible.
\end{defi}
Here, the choice of $n$ measurements from $M$ is arbitrary, i.e., if $n < g$, every subset of $n$ measurements from $M$ is compatible. Of course, if $n = g$ then $M$ is (fully) compatible. The key to connecting $n$-extendability to $n$-wise compatibility is to see that applying the reduction map $\gamma_n^{\Phi}$ to an $n$-extension $\xi^{(n)}$ of $\xi_M$ corresponds to obtaining $M$ as a marginal of the map corresponding to $\xi^{(n)}$.

In order to show the connection, we first need some structural results on the extensions. The extensions of $\xi_M \in A(K)^+ \tmax \CSC$ are tensors $A(K)^+ \tmax (\CSC)^{\tmax n}$. Equivalently we can consider the extensions as positive maps from $V(K)^+$ to $(\CSC)^{\tmax n}$. In fact, if $M: K \to CS^1_{k,g}$ is a multimeter, and $\xi^{(n)} \in A(K)^+ \tmax (\CSC)^{\tmax n}$ is its $n$-extension, then actually the corresponding map $N: V(K)^+ \to (\CSC)^{\tmax n}$ is a channel (we will prove this in \cref{thm:extendability-compatibility}). Thus, we will look into properties of normalized elements in $(\CSC)^{\tmax n}$ similarly as we did in \cref{prop:NS-tensor}.

Recall that the symmetric group $\mathfrak{S}_n$ is the set of permutations of the set $[n]$, i.e., the set of bijective functions $\pi: [n] \to [n]$. For a permutation $\pi \in \mathfrak{S}_n$, we denote the \emph{descent set} of $\pi$ by $\mathrm{desc}(\pi)$ as the set of indices $i \in [n-1]$ such that $\pi(i) > \pi(i+1)$. Note that for the identity permutation $e$ we have $\mathrm{desc}(e) = \emptyset$ and it is the only permutation with this property. For $m \in [n-1]$, we denote $\Pi_m := \{\pi \in \mathfrak{S}_n \, : \, \mathrm{desc}(\pi) \subseteq \{m\}\}$, i.e., it is the set of permutations with at most one descent at index $m$. Equivalently, $\Pi_m = \{e\} \cup \{\pi \in \mathfrak{S}_n \, : \, \mathrm{desc}(\pi) = \{m\}\}$. We note that $|\Pi_m| = {n \choose m}$ as it characterizes in how many ways we can pick the elements $\pi(1) < \cdots < \pi(m)$ from the set $[n]$.
We will also make use of the fact that for all $g \in \nat$ the symmetric group $\mathfrak{S}_n$ acts naturally on $[g]^n$ by the linear map $\alpha_\pi: [g]^n \to [g]^n$ defined as $\alpha_\pi(y_1, \ldots,y_n) = (y_{\pi(1)}, \ldots, y_{\pi(n)})$ for all $(y_1, \ldots, y_n) \in [g]^n$ for all $\pi \in \mathfrak{S}_n$. Similarly $\mathfrak{S}_n$ also acts on $(CS_{k,g})^{\otimes n}$ by the linear map $\mathcal{U}_\pi: (CS_{k,g})^{\otimes n} \to (CS_{k,g})^{\otimes n} $ defined as $\mathcal{U}_\pi(X_1 \otimes \cdots \otimes X_n) = X_{\pi(1)} \otimes \cdots \otimes X_{\pi(n)}$ for all $X_i \in CS_{k,g}$ for all $\pi \in \mathfrak{S}_n$.

\begin{prop} \label{prop:extended-NS-tensor}
    $\xi \in (\CS)^{\tmax n}$ with $2 \leq n \in \mathds{N}$ if and only if there exists a no-signalling conditional probability distribution $p =(p_{\cdot,\ldots,\cdot|x_1,\ldots,x_n})_{x_1,\ldots,x_n \in [g]}$ on $[k]^n$ such that
    \begin{equation}\label{eq:xi-tensor-no-signaling}
        \begin{split}
            &\xi = (s_{k,\ldots,k})^{\otimes n} + \sum_{x_1 = 1}^{g} \cdots \sum_{x_n = 1}^g \sum_{a_1 = 1}^{k-1} \cdots \sum_{a_n = 1}^{k-1} p_{a_1,\ldots,a_n|x_1,\ldots,x_n} \bigotimes_{i=1}^n e_{a_i}^{(x_i)} \\
            & \quad +\sum_{m=1}^{n-1} \sum_{\pi \in \Pi_m} \sum_{x_{\pi(1)} = 1}^{g} \cdots \sum_{x_{\pi(m)} = 1}^g  \sum_{a_{\pi(1)} = 1}^{k-1} \cdots \sum_{a_{\pi(m)} = 1}^{k-1} \sum_{a_{\pi(m+1)} = 1}^{k} \cdots \sum_{a_{\pi(n)} = 1}^{k} p_{a_{1},\ldots,a_{n}|\alpha_\pi(x_{\pi(1)},\ldots,x_{\pi(m)},1, \ldots, 1)} \\
            & \quad \quad \times \mathcal{U}_{\pi}\left(e_{a_{\pi(1)}}^{(x_{\pi(1)})} \otimes \cdots \otimes e_{a_{\pi(m)}}^{(x_{\pi(m)})} \otimes (s_{k,\ldots,k})^{\otimes n-m}\right) \, .
        \end{split}
    \end{equation}
\end{prop}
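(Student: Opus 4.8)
The plan is to isolate a single computational lemma and then derive both implications from it together with elementary facts about the polysimplex bases. First I would record the pairing table for the dual bases \cref{eq:effect-basis} and \cref{eq:state-basis}: from \cref{eq:sim-irr-measurements} one obtains $\langle \one_{CS^1_{k,g}}, s_{k,\ldots,k}\rangle = 1$, $\langle \one_{CS^1_{k,g}}, e^{(x)}_a\rangle = 0$, $\langle m^{(y)}_b, s_{k,\ldots,k}\rangle = \delta_{b,k}$, and $\langle m^{(y)}_b, e^{(x)}_a\rangle = \delta_{x,y}(\delta_{a,b}-\delta_{b,k})$ for $a\in[k-1]$; together with $\one_{CS^1_{k,g}} = \sum_{a=1}^k m^{(x)}_a$ for every $x\in[g]$, this is the only leg-wise arithmetic the proof needs. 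Since the $m^{(x)}_a$ with $a\in[k]$, $x\in[g]$ generate the extreme rays of $(CS^+_{k,g})^\ast = A(\CS)^+$ by \cite[Lemma 1]{jencova2018incompatible}, and since the pairing $\langle \bigotimes_i f_i, \xi\rangle$ is multilinear in the $f_i$, the definition of the maximal tensor product of state spaces gives the criterion: $\xi\in(\CS)^{\tmax n}$ if and only if $\langle \bigotimes_{i=1}^n m^{(x_i)}_{a_i}, \xi\rangle \ge 0$ for all $(x_i,a_i)\in([g]\times[k])^n$ and $\langle \bigotimes_{i=1}^n \one_{CS^1_{k,g}}, \xi\rangle = 1$. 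These same tensor products span $(CS^\ast_{k,g})^{\otimes n}$, so an element of $(CS_{k,g})^{\otimes n}$ is determined by the list of all its pairings $\langle \bigotimes_i m^{(x_i)}_{a_i}, \xi\rangle$.

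The heart of the argument is the following claim: for any no-signaling conditional probability distribution $p$ on $[k]^n$, the element $\xi(p)\in(CS_{k,g})^{\otimes n}$ defined by the right-hand side of \eqref{eq:xi-tensor-no-signaling} satisfies $\langle \bigotimes_{i=1}^n m^{(x_i)}_{a_i}, \xi(p)\rangle = p_{a_1,\ldots,a_n|x_1,\ldots,x_n}$ for all $(x_i,a_i)\in([g]\times[k])^n$. I would prove this by expanding the pairing multilinearly over the $n$ tensor legs: against a leg carrying $s_{k,\ldots,k}$ the functional $m^{(x_i)}_{a_i}$ acts as $\delta_{a_i,k}$, and against a leg carrying some $e^{(x_i')}_{a_i'}$ it acts as $\delta_{x_i,x_i'}(\delta_{a_i,a_i'}-\delta_{a_i,k})$; hence a leg with $a_i<k$ selects exactly the $e$-type term $e^{(x_i)}_{a_i}$ and kills the $s$-type, while a leg with $a_i=k$ contributes $+1$ from the $s$-type term and $-1$ from each matching $e$-type term. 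Collecting these contributions, one recognizes the sum over subsets $S\subseteq[n]$ of legs that the formula organizes through $m=|S|$ and $\pi\in\Pi_m$ (the unique $\pi$ with $\{\pi(1)<\cdots<\pi(m)\}=S$, with $\mathcal{U}_\pi$ placing the $e$-legs and $\alpha_\pi$ placing the inputs), and the coefficient attached to each $S$ is the marginal $\sum_{(a_i)_{i\notin S}} p_{\cdot|\cdot}$ with the inputs off $S$ set to $1$; the no-signaling of $p$ is precisely what makes every such marginal well defined and makes the alternating sums produced by the $a_i=k$ legs telescope back to $p_{a_1,\ldots,a_n|x_1,\ldots,x_n}$. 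One also reads off $\langle \bigotimes_i\one_{CS^1_{k,g}},\xi(p)\rangle = \sum_{(a_i)\in[k]^n} p_{(a_i)|(1,\ldots,1)} = 1$.

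Granting this claim, the converse direction of the proposition is immediate: given a no-signaling conditional probability distribution $p$, the element $\xi(p)$ of \eqref{eq:xi-tensor-no-signaling} has all pairings against $\bigotimes_i m^{(x_i)}_{a_i}$ equal to $p_{(a_i)|(x_i)}\ge 0$ and total pairing $1$, hence lies in $(\CS)^{\tmax n}$ by the criterion. For the forward direction, given $\xi\in(\CS)^{\tmax n}$ set $p_{a_1,\ldots,a_n|x_1,\ldots,x_n} := \langle \bigotimes_{i=1}^n m^{(x_i)}_{a_i},\xi\rangle$ for all $a_i\in[k]$: these are nonnegative and sum to $1$ over all outcomes by the criterion, and they satisfy the no-signaling marginal relations because $\sum_{a_j=1}^k m^{(x_j)}_{a_j}=\one_{CS^1_{k,g}}$ independently of $x_j$; thus $p$ is a no-signaling conditional probability distribution, and by the claim $\xi(p)$ has exactly the same pairings as $\xi$ against all $\bigotimes_i m^{(x_i)}_{a_i}$, whence $\xi=\xi(p)$, i.e.\ \eqref{eq:xi-tensor-no-signaling} holds.

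The main obstacle is the bookkeeping inside the claim: matching the leg-wise inclusion--exclusion coming from the $a_i=k$ cases with the precise indexing of \eqref{eq:xi-tensor-no-signaling} through $\Pi_m$, $\mathcal{U}_\pi$ and $\alpha_\pi$, and checking at each step that a replacement of $\one_{CS^1_{k,g}}$ by $\sum_a m^{(x)}_a$ or a marginalization of $p$ is licensed by no-signaling. Everything else is either the elementary pairing table or the extreme-ray characterization already available from \cite{jencova2018incompatible}. An alternative route, carrying essentially the same combinatorics, would be to induct on $n$ using $(CS^+_{k,g})^{\tmax n}=CS^+_{k,g}\tmax(CS^+_{k,g})^{\tmax n-1}$ together with the bipartite case \cref{prop:NS-tensor}.
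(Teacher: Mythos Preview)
Your proposal is correct and follows essentially the same approach as the paper: both arguments hinge on the criterion that $\xi\in(\CS)^{\tmax n}$ iff all pairings $\langle\bigotimes_i m^{(x_i)}_{a_i},\xi\rangle$ are nonnegative with unit total, define $p$ by these pairings, obtain no-signaling from $\sum_a m^{(x)}_a=\one_{CS^1_{k,g}}$, and match the basis coefficients in \eqref{eq:xi-tensor-no-signaling} to the marginals of $p$. The only difference is organizational: the paper starts by expanding $\xi$ in the tensor basis and then identifies the coefficients, whereas you isolate the single computational claim $\langle\bigotimes_i m^{(x_i)}_{a_i},\xi(p)\rangle=p_{(a_i)|(x_i)}$ and derive both implications from it; your bijection between subsets $S\subseteq[n]$ and permutations $\pi\in\Pi_{|S|}$ is exactly the counting argument the paper spells out in its first paragraph.
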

\begin{proof}
    We start by writing a given tensor $\xi \in (\CS)^{\tmax n}$ in the tensor basis of $CS_{k,g}$:
        \begin{equation}\label{eq:xi-tensor}
        \begin{split}
            &\xi = \eta (s_{k,\ldots,k})^{\otimes n} + \sum_{x_1 = 1}^{g} \cdots \sum_{x_n = 1}^g \sum_{a_1 = 1}^{k-1} \cdots \sum_{a_n = 1}^{k-1} N_{a_1,\ldots,a_n|x_1,\ldots,x_n} \bigotimes_{i=1}^n e_{a_i}^{(x_i)} +\sum_{m=1}^{n-1} \sum_{\pi \in \Pi_m} \sum_{x_{\pi(1)} = 1}^{g} \cdots \sum_{x_{\pi(m)} = 1}^g \\
            &\sum_{a_{\pi(1)} = 1}^{k-1} \cdots \sum_{a_{\pi(m)} = 1}^{k-1} N^{(\pi)}_{a_{\pi(1)},\ldots,a_{\pi(m)}|x_{\pi(1)},\ldots,x_{\pi(m)}} \, \mathcal{U}_{\pi}\left(e_{a_{\pi(1)}}^{(x_{\pi(1)})} \otimes \cdots \otimes e_{a_{\pi(m)}}^{(x_{\pi(m)})} \otimes (s_{k,\ldots,k})^{\otimes n-m}\right) \, .
        \end{split}
    \end{equation}
    To see that we get all the basis elements exactly once by applying the permutations in $\Pi_m$ for all $m \in [n-1]$ we argue the following: Let us fix $m \in [n-1]$. Let us consider the tensor product as a list of length $n$ and let us consider filling it with the $m$ terms of $e^{(x_1)}_{a_1}, \ldots, e^{(x_m)}_{a_m}$ and the $m-n$ copies of $s_{k,\ldots,k}$. (Note that in \cref{eq:xi-tensor} we also relabel the indices $i$ of $x_i$ and $a_i$ but just to get all the basis elements we do not need to consider that). In general there are exactly $n!$ ways to do this, but in exactly $m!$ ways the $e^{(x_i)}_{a_i}$ terms are in the same $m$ positions, just differently permuted. We only need one of these $m!$ products and so we select the products where the relative order of the terms $e^{(x_i)}_{a_i}$ remains the same, i.e., where  $e^{(x_j)}_{a_j}$ appears before $e^{(x_{j+1})}_{a_{j+1}}$. Thus, there are now $\frac{m!}{n!}$ of these products. To actually get these products, let us start from the product $e_{a_{1}}^{(x_{1})} \otimes \cdots \otimes e_{a_{m}}^{(x_{m})} \otimes (s_{k,\ldots,k})^{\otimes n-m}$. Since we want to preserve the relative order of the $e^{(x_i)}_{a_i}$'s we can only apply permutations $\pi \in \mathfrak{S}_n$ which have an ascending run of $[m]$, i.e., the set of indices $i$ where $\pi(i) > \pi(i+1)$ is $[m]$. By taking all such permutations we get all $\frac{n!}{m!}$ products that we wanted. However, in these products there are still multiplicities of the basis elements where the $s_{k,\ldots,k}$ terms are permuted among themselves. Since there are $(m-n)!$ ways to permute the $s_{k,\ldots,k}$ terms, in the end we want to end up with ${n \choose m} = \frac{n!}{(n-m)!m!}$ products corresponding to the basis elements. One way to achieve this is to take only those permutations which also preserve the relative ordering of the $s_{k,\ldots,k}$, i.e., those permutations which have an ascending run of $\{m+1, \ldots, n\}$. Thus, in the end we are left with permutations with one ascending run $[n]$, which corresponds to the identity permutation, or two ascending runs $[m]$ and $[n] \setminus [m]$. This is equivalent to saying that we have permutations with either zero descends (the identity permutation) or exactly one descent at position $m$. But this is exactly how we defined $\Pi_m$ so in the end we thus get all the basis elements exactly once for all $m \in [n-1]$.
    
    Since $\xi$ is a state, it is clear that $\langle \one_{\CS}^{\otimes n}, \xi \rangle = 1 = \eta$. We make use of the positivity of the measurements $m_{a}^{(x)}$ for all $a \in [k]$ and $x \in [g]$ to define the conditional probability distribution $p =(p_{\cdot,\ldots,\cdot|x_1,\ldots,x_n})_{x_1,\ldots,x_n \in [g]}$ on $[k]^n$ as follows:
    \begin{equation}
        p_{a_1,\ldots,a_n|x_1,\ldots,x_n} = \langle \bigotimes_{i=1}^n m_{a_i}^{(x_i)}, \xi \rangle \quad \forall a_i \in [k], x_i \in [g], i \in [n]
    \end{equation}
    Clearly, since $\sum_{a=1}^k m^{(x)}_a = \one_{CS^1_{k,g}}$ for all $x \in [g]$, we have:
    \begin{equation}
        \sum_{a_1=1}^{k} \cdots \sum_{a_n=1}^{k} p_{a_1,\ldots,a_n|x_1,\ldots,x_n}=\langle \sum_{a_1=1}^{k} \cdots \sum_{a_n=1}^{k} m_{a_1}^{(x_1)} \otimes \cdots \otimes m_{a_n}^{(x_n)}, \xi\rangle = \langle \one_{\CS}^{\otimes n}, \xi\rangle = 1
    \end{equation}
    so that $p =(p_{\cdot,\ldots,\cdot|x_1,\ldots,x_n})_{x_1,\ldots,x_n \in [g]}$ is indeed a conditional probability distribution on $[k]^n$ for all $x_1, \ldots, x_n \in [g]$.

    What remains to see is that $p$ is no-signaling and that the decompositions in \cref{eq:xi-tensor-no-signaling} and \cref{eq:xi-tensor} match. To see that $p$ is no-signaling, we fix some $\{i_1, \dots, i_v\} \subset [n]$ for some $v \in [n-1]$. Let $\sigma \in \mathfrak{S}_n$ be the unique permutation with at most one descent at index $l$ such that $\sigma(j) = i_j$ for all $j \in [l]$. Thus, $\sigma \in \Pi_l$. We fix some $a_{i_j} \in [k]$ and $x_{i_j} \in [g]$ for all $j \in [l]$. Now for all $x_i \in [g]$, $i \in [n]\setminus \{i_1, \ldots, i_l\} = \sigma([n] \setminus [l])$ we have that
    \begin{align}
                & \sum_{a_{\sigma(l+1)}=1}^{k}  \cdots \sum_{a_{\sigma(n)}=1}^{k} p_{a_1,\ldots,a_n|x_1,\ldots,x_n} \\
                &= \sum_{i \in \sigma([n] \setminus [l])} \sum_{a_i=1}^{k} \langle m^{(x_1)}_{a_1} \otimes \cdots \otimes m^{(x_{i_1})}_{a_{i_1}} \otimes \cdots \otimes m^{(x_{i_l})}_{a_{i_l}} \otimes\cdots \otimes m^{(x_{n})}_{a_{n}}, \xi \rangle \\
                &= \langle \one_{CS^1_{k,g}} \otimes \cdots \otimes  \one_{CS^1_{k,g}} \otimes  m^{(x_{\sigma(1)})}_{a_{\sigma(1)}} \otimes \one_{CS^1_{k,g}} \otimes \cdots \otimes \one_{CS^1_{k,g}} \otimes m^{(x_{\sigma(l)})}_{a_{\sigma(l)}} \otimes \one_{CS^1_{k,g}} \otimes \cdots \otimes \one_{CS^1_{k,g}}, \xi \rangle \\
                &= \left\langle \mathcal{U}^*_{\sigma^{-1}} \left(  m_{a_{\sigma(1)}}^{(x_{\sigma(1)})} \otimes \cdots \otimes m_{a_{\sigma(l)}}^{(x_{\sigma(l)})} \otimes \one_{\CS}^{\otimes n-l} \right), \xi \right\rangle,
    \end{align}
    where $\mathcal{U}^*_{\sigma^{-1}}: (CS_{k.g}^*)^{\otimes n} \to (CS_{k.g}^*)^{\otimes n}$ is the adjoint of $\mathcal{U}_{\sigma^{-1}}$ and arranges the tensor product of the vectors in $CS^*_{k,g}$ in the same order as $\mathcal{U}_{\sigma}$ arranges the tensor product of the vectors in $CS_{k,g}$. Clearly the result is independent of what we choose as $x_i \in [g]$ for $i \in \sigma([n] \setminus [l])$ so that the probability distribution is no-signaling. Thus, in this case we may choose $x_i =1$ for all $i \in \sigma([n] \setminus [l])$. Furthermore, if $a_{i_j} \in [k-1]$ for all $j \in [l]$, then we have that
    \begin{align}
                 \sum_{a_{\sigma(l+1)}=1}^{k}  \cdots \sum_{a_{\sigma(n)}=1}^{k} p_{a_1,\ldots,a_n|\alpha_\sigma(x_{\sigma(1)},\ldots,x_{\sigma(l)},1, \ldots,1)} &= \left\langle \mathcal{U}^*_{\sigma^{-1}} \left(  m_{a_{\sigma(1)}}^{(x_{\sigma(1)})} \otimes \cdots \otimes m_{a_{\sigma(l)}}^{(x_{\sigma(l)})} \otimes \one_{\CS}^{\otimes n-l} \right), \xi \right\rangle  \\
                &= N^{(\sigma)}_{a_{\sigma(1)}, \ldots, a_{\sigma(l)} | x_{\sigma(1)}, \ldots, x_{\sigma(l)}} \, ,
    \end{align}
    which gives us exactly the decomposition in \cref{eq:xi-tensor-no-signaling}.

    The converse follows as the $m^{(x)}_a$ form the extreme rays of $(CS^+_{g,k})^*$.
\end{proof}
\begin{remark}\label{remark:NS-extension}
    With the procedure in the above proof we also could have shown the same statement with $\xi \in CS_{k_1,g_1}^1 \tmax CS_{k_2,g_2}^1 \tmax \ldots \tmax CS_{k_n,g_n}^1$, i.e.\ , extending with arbitrary state cones. One just needs to label the indices accordingly and make sure the measurements act on the correct spaces. The challenge is that the notation becomes even more challenging, see \cref{prop:NS-tensor} for the $n=2$ case.
    However, if we want to make use of the extension as described in the beginning of this section, it is important to have copies of the same state cone.
\end{remark}
At this point we are able to link the extensions with joint measurements.

\begin{thm} \label{thm:extendability-compatibility}
    Let $M: K \rightarrow \CS$ be a multimeter and let $\xi_M \in A(K)^+ \tmax \CSC$ be the corresponding tensor. Then $M$ is $n$-extendable if and only if it is $n$-wise compatible such that for all measurements $\{x_1, \ldots, x_n\} \subset [g]$ there exists joint measurements $G^{(x_1, \ldots, x_n)} = \{G^{(x_1, \ldots, x_n)}_{a_1, \ldots,a_n}\}_{a_1, \ldots, a_n=1}^k$ which satisfy the following no-signaling constraints
    \begin{equation}\label{eq:joint-meas-ns}
        \sum_{a_i=1}^k G^{(x_1, \ldots,x_{i-1}, x_i, x_{i+1}, \ldots x_n)}_{a_1, \ldots,a_n} = \sum_{a_i=1}^k G^{(x_1, \ldots,x_{i-1}, \tilde{x}_i, x_{i+1}, \ldots x_n)}_{a_1, \ldots,a_n}
    \end{equation}
    for all $a_j \in [k]$, $x_j \in [g]$, $j \in [n]\setminus\{i\}$ and $x_i, \tilde{x}_i \in [g]$ for all $i \in [n]$.
\end{thm}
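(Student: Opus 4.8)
The plan is to pass from $n$-extendability of the tensor $\xi_M$ to an equivalent statement about a single positive map, and then to read the joint measurements off that map. Recall that, via the correspondence $\xi_{\bullet}=(\id\otimes\,\bullet)(\chi_{V(K)})$ of \Cref{sec:maps-between-GPTs} and \Cref{lem:xi-to-phi} (exactly as in \Cref{cor:sep-com}), a tensor in $A(K)^+\tmax(\CSC)^{\tmax n}$ is the same datum as a positive map $N\colon V(K)^+\to(\CSC)^{\tmax n}$, and applying $\id\otimes\gamma_n^\Phi$ to the tensor post-composes the map with $\gamma_n^\Phi$. Hence $\xi_M$ is $n$-extendable iff there is a positive map $N\colon V(K)^+\to(\CSC)^{\tmax n}$ with $\gamma_n^\Phi\circ N=M$, where $\Phi=\one_{\CS}$ is the fixed column-sum functional. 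Two elementary identities, checked on a product $X_1\otimes\cdots\otimes X_n$, drive the rest: $\one_{\CS}\circ\gamma_n^\Phi=\one_{\CS}^{\otimes n}$, and $m^{(x)}_a\circ\gamma_n^\Phi=\tfrac{1}{n}\sum_{i=1}^n\one_{\CS}^{\otimes(i-1)}\otimes m^{(x)}_a\otimes\one_{\CS}^{\otimes(n-i)}$. The first, together with $M$ being a channel, forces $\one_{\CS}^{\otimes n}\circ N=\one_{\CS}\circ\gamma_n^\Phi\circ N=\one_{\CS}\circ M=\one_K$, so $N$ is in fact a channel $K\to(\CS)^{\tmax n}$. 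Finally, since $\gamma_n^\Phi\circ\mathcal{U}_\pi=\gamma_n^\Phi$ for every $\pi\in\mathfrak{S}_n$ (reindex the defining sum) and $(\CSC)^{\tmax n}$ is invariant under permuting its factors, replacing $N$ by $\bar N:=\big(\tfrac{1}{n!}\sum_{\pi\in\mathfrak{S}_n}\mathcal{U}_\pi\big)\circ N$ still satisfies $\gamma_n^\Phi\circ\bar N=M$; thus without loss of generality $N$ is permutation symmetric, $\mathcal{U}_\pi\circ N=N$ for all $\pi$.

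For the forward implication, given such a symmetric channel $N$ I would define $G^{(x_1,\dots,x_n)}_{a_1,\dots,a_n}:=\big(m^{(x_1)}_{a_1}\otimes\cdots\otimes m^{(x_n)}_{a_n}\big)\circ N$. These lie in $A(K)^+$ since $N$ is positive and the $m^{(x)}_a$ generate the extreme rays of $(\CSC)^\ast$, and $\sum_{a_1,\dots,a_n}G^{(x_1,\dots,x_n)}_{a_1,\dots,a_n}=\one_{\CS}^{\otimes n}\circ N=\one_K$, so each $G^{(x_1,\dots,x_n)}$ is a $k^n$-outcome POVM. The no-signaling constraints \eqref{eq:joint-meas-ns} follow from $\sum_{a_i=1}^k m^{(x_i)}_{a_i}=\one_{\CS}$ being independent of $x_i$, since then $\sum_{a_i}G^{(x_1,\dots,x_i,\dots,x_n)}_{a_1,\dots,a_n}=\big(m^{(x_1)}_{a_1}\otimes\cdots\otimes\one_{\CS}\otimes\cdots\otimes m^{(x_n)}_{a_n}\big)\circ N$ does not involve $x_i$. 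To identify the single-party marginals I use symmetry: the $i$-th marginal of $G^{(x_1,\dots,x_n)}$ at outcome $a$ is $\big(\one_{\CS}^{\otimes(i-1)}\otimes m^{(x_i)}_a\otimes\one_{\CS}^{\otimes(n-i)}\big)\circ N=\big(m^{(x_i)}_a\otimes\one_{\CS}^{\otimes(n-1)}\big)\circ N$, and since all $n$ summands of $m^{(x)}_a\circ\gamma_n^\Phi\circ N$ coincide by symmetry, $\gamma_n^\Phi\circ N=(\id\otimes\one_{\CS}^{\otimes(n-1)})\circ N=M$; thus that marginal equals $M_{a|x_i}$. Hence $G^{(x_1,\dots,x_n)}$ is a joint measurement of $\{M_{\cdot|x_1},\dots,M_{\cdot|x_n}\}$, so $M$ is $n$-wise compatible with joint measurements satisfying \eqref{eq:joint-meas-ns}.

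For the converse, read the hypothesis as: for every tuple $(x_1,\dots,x_n)\in[g]^n$ there is a $k^n$-outcome POVM $G^{(x_1,\dots,x_n)}$ whose $i$-th marginal is $M_{\cdot|x_i}$, and the family satisfies \eqref{eq:joint-meas-ns}. I would define a linear map $N\colon V(K)\to(\CSmn)^{\otimes n}$ by declaring $\big(m^{(x_1)}_{a_1}\otimes\cdots\otimes m^{(x_n)}_{a_n}\big)\circ N:=G^{(x_1,\dots,x_n)}_{a_1,\dots,a_n}$ for all $a_i\in[k]$. Because $m^{(x_i)}_k=\one_{\CS}-\sum_{a<k}m^{(x_i)}_a$, the values with some index equal to $k$ are overdetermined, and the consistency of this overdetermination in slot $i$ is exactly the condition that $\sum_{a_i}G^{(\dots,x_i,\dots)}$ not depend on $x_i$, i.e.\ \eqref{eq:joint-meas-ns}; hence $N$ is well defined (using the basis of $CS^\ast_{k,g}$ from \eqref{eq:effect-basis} on each factor). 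It is positive into $(\CSC)^{\tmax n}$ because the products $m^{(x_1)}_{a_1}\otimes\cdots\otimes m^{(x_n)}_{a_n}$ generate the extreme rays of $\big((\CSC)^{\tmax n}\big)^\ast=\big((\CSC)^\ast\big)^{\tmin n}$ and $N$ sends them into $A(K)^+$, and it is a channel since $\sum_{a_1,\dots,a_n}G^{(x_1,\dots,x_n)}_{a_1,\dots,a_n}=\one_K$. Finally, using the second identity above and that by \eqref{eq:joint-meas-ns} the $i$-th marginal of $G^{(x_1,\dots,x_{i-1},x,x_{i+1},\dots,x_n)}$ does not depend on the remaining settings and equals $M_{\cdot|x}$, we get $m^{(x)}_a\circ\gamma_n^\Phi\circ N=\tfrac{1}{n}\sum_{i=1}^n M_{a|x}=M_{a|x}$, so $\gamma_n^\Phi\circ N=M$ and $\xi_M$ is $n$-extendable.

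I expect the symmetrization step to be the crux. From an arbitrary extension one only obtains $M$ as the average over $i$ of the $i$-th single-party marginals of the $G^{(x_1,\dots,x_n)}$, which is strictly weaker than the claim that each $G^{(x_1,\dots,x_n)}$ is itself a joint measurement of $M_{\cdot|x_1},\dots,M_{\cdot|x_n}$; restricting to symmetric extensions — legitimate precisely because $\gamma_n^\Phi$ is symmetric — is what collapses all single-party marginals to $M$. The remaining work is notational: matching the cone-and-reduction-map formulation (the map $\gamma_n^\Phi$, the cone $(\CSC)^{\tmax n}$) with the no-signaling decomposition of \Cref{prop:extended-NS-tensor}, and checking that the constraints \eqref{eq:joint-meas-ns} are exactly the consistency conditions needed to reconstruct $N$ in the converse.
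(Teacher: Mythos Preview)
Your approach is the same as the paper's: both translate $n$-extendability into the existence of a channel $N:K\to(\CS)^{\tmax n}$ with $\gamma_n^\Phi\circ N=M$, symmetrize (you replace $N$ by $\bar N$, the paper inserts $P_{\mathrm{Sym}_n}$ on the functional side---these are dual and equivalent), and read off the joint measurements as $G^{(x_1,\dots,x_n)}_{a_1,\dots,a_n}=(m^{(x_1)}_{a_1}\otimes\cdots\otimes m^{(x_n)}_{a_n})\circ N$. Your forward direction is correct and, if anything, tidier than the paper's, which routes through \Cref{prop:extended-NS-tensor}.

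There is one genuine gap in your converse. You ``read the hypothesis as: for every tuple $(x_1,\dots,x_n)\in[g]^n$ there is a $k^n$-outcome POVM $G^{(x_1,\dots,x_n)}$'', but the statement (and the definition of $n$-wise compatibility) only supplies joint POVMs for \emph{subsets} $\{x_1,\dots,x_n\}\subset[g]$ of distinct settings. Your reconstruction of $N$ via $(m^{(x_1)}_{a_1}\otimes\cdots\otimes m^{(x_n)}_{a_n})\circ N=G^{(x_1,\dots,x_n)}_{a_1,\dots,a_n}$ needs $G$ on all tuples, including those with repeated settings, and the no-signaling constraints \eqref{eq:joint-meas-ns} as stated do not directly give you those. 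The paper fills exactly this hole: for a tuple with coinciding settings it defines the joint POVM by forcing equal outcomes on equal settings via Kronecker deltas (schematically $G^{(x,x,y)}_{a_1,a_2,b}:=\delta_{a_1,a_2}G^{(x,y)}_{a_1,b}$), and then spends most of its converse argument checking that the resulting family is still no-signaling so that \Cref{prop:extended-NS-tensor} applies. Once you insert that construction, your well-definedness argument for $N$ goes through; without it, the map $N$ is underdetermined.
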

\begin{proof}
    Let us start with an $n$-extendable $\xi_M$. Then there exists $N^{(n)}:V(K)^+ \rightarrow (\CSC)^{\otimes n}$, that is, $\xi_{N^{(n)}} \in A(K)^+ \tmax (\CSC)^{\otimes n}$ such that $(\id \otimes \gamma^\Phi_n)(\xi_{N^{(n)}}) = \xi_M$ for $\Phi = \one_{\CS}$.
    Let us first see how we can get a given measurement $M_{\cdot|x}$ from the extension:
    \begin{align}
        M_{a|x}(\varrho) &= \langle M(\varrho), m_{a}^{(x)} \rangle = \langle \xi_M, \varrho \otimes m_{a}^{(x)}\rangle \\
        &= \langle (\id \otimes \gamma_n^{\Phi})(\xi_{N^{(n)}}), \varrho \otimes m_{a}^{(x)}\rangle = \langle \xi_{N^{(n)}}, \varrho \otimes (\gamma_n^\Phi)^*(m_{a}^{(x)}) \rangle \\
        &=\langle N^{(n)}(\varrho), (\gamma_n^\Phi)^*(m_{a}^{(x)}) \rangle  =  \langle \gamma^\Phi_n(N^{(n)}(\varrho)), m^{(x)}_{a} \rangle
    \end{align}
    for all $a \in [k]$, $x \in [g]$ and $\varrho \in K$. In particular we see that $M = \gamma^\Phi_n \circ N^{(n)}$.
    
    Now, for $\varrho \in K$ we see that $N^{(n)}(\varrho) \in (\CS)^{\tmax n}$ so that $N^{(n)}$ is a channel. This follows as $M$ is a channel for all $\varrho \in K$:
    \begin{equation}
        \langle \one_{\CS}^{\otimes n}, N^{(n)}(\varrho) \rangle= \langle (\gamma_n^{\Phi})^*(\one_{\CS}), N^{(n)}(\varrho) \rangle = \langle \one_{\CS}, \gamma_n^{\Phi}(N^{(n)}(\varrho))\rangle = \langle \one_{\CS}, M(\varrho) \rangle = \langle \one_K , \varrho \rangle = 1,
    \end{equation}
    where according to the results of \cite{monogamy} we can write $(\gamma^\Phi_n)^*(X) = P_{\mathrm{Sym}_n(CS^*_{k,g})}(X \otimes \Phi^{\otimes (n-1)})$ for all $X \in CS^*_{k,g}$, where 
    \begin{equation}
        P_{\mathrm{Sym}_n(CS^*_{k,g})} = \frac{1}{n!}\sum_{\sigma \in \mathfrak{S}_n} \mathcal{U}^*_{\sigma}
    \end{equation}
    is the symmetric projection onto the symmetric subspace of $(CS^*_{k,g})^{\tmax n}$. Clearly now for $\Phi= \one_{\CS}$ we have $(\gamma_n^{\Phi})^*(\one_{\CS}) = P_{\mathrm{Sym}_n(CS^*_{k,g})}(\one_{\CS}^{\otimes n}) = \one_{\CS}^{\otimes n} $ and the normalization of $N^{(n)}$ follows.
    
    On the other hand, by using the same expression for $(\gamma^\Phi_n)^*$, we see that 
    \begin{align}
        M_{a|x}(\varrho) &= \langle N^{(n)}(\varrho), (\gamma_u^\Phi)^*(m_{a}^{(x)}) \rangle =\langle N^{(n)}(\varrho), P_{\mathrm{Sym}_n(CS^*_{k,g})}(m_{a}^{(x)} \otimes \one_{\CS}^{\otimes (n-1)}) \rangle
    \end{align}
    for all $a \in [k]$, $x \in [g]$ and $\varrho \in K$.
    For any $\{x_1, \ldots, x_n\} \subset [g]$, we can define the joint measurement $G^{(x_1, \ldots x_n)}$ for $\{M_{\cdot|x_1}, \ldots, M_{\cdot|x_n}\}$ as
    \begin{equation}
        G_{a_1,\ldots,a_n}^{(x_1,\ldots,x_n)}(\varrho) = \langle N^{(n)}(\varrho), P_{\mathrm{Sym}_n(CS^*_{k,g})}(m^{(x_1)}_{a_1} \otimes \cdots \otimes m^{(x_n)}_{a_n})\rangle 
    \end{equation}
    for all $a_1, \ldots, a_n \in [k]$ and $\varrho \in K$. Clearly $G^{(x_1, \ldots x_n)}$ is a positive map and it correctly gives the measurements $\{M_{\cdot|x_1}, \ldots, M_{\cdot|x_n}\}$ as marginals:
    \begin{align}
        \sum_{\substack{j=1 \\ j\neq i}}^n \sum_{a_j=1}^k G_{a_1,\ldots,a_n}^{(x_1,\ldots,x_n)}(\varrho) &= \sum_{\substack{j=1 \\ j\neq i}}^n \sum_{a_j=1}^k  \langle N^{(n)}(\varrho), P_{\mathrm{Sym}_n(CS^*_{k,g})}(m^{(x_1)}_{a_1} \otimes \cdots \otimes m^{(x_n)}_{a_n})\rangle \\
        &= \langle N^{(n)}(\varrho), P_{\mathrm{Sym}_n(CS^*_{k,g})}(m_{a_i}^{(x_i)} \otimes \one_{\CS}^{\otimes (n-1)}) \rangle \\
        &= M_{a_i|x_i}(\varrho)
    \end{align}
    for all $a_i \in [k]$, $x_i \in [g]$, $i \in [n]$ and $\varrho \in K$. Thus, $M$ is $n$-wise compatible. Finally, by \cref{prop:extended-NS-tensor} for all $\varrho \in K$ we have that $G_{a_1,\ldots,a_n}^{(x_1,\ldots,x_n)}(\varrho)$ is of the form
   \begin{align}
       G_{a_1,\ldots,a_n}^{(x_1,\ldots,x_n)}(\varrho) = \frac{1}{n!} \sum_{\sigma \in \mathfrak{S}_n} p^{N^{(n)}(\varrho)}_{a_{\sigma(1)}, \ldots, a_{\sigma(n)}|x_{\sigma(1)}, \ldots, x_{\sigma(n)}},
   \end{align} 
    where $ p^{N^{(n)}(\varrho)}$ is the no-signaling probability distribution related to $N^{(n)}(\varrho) \in (CS^1_{k,g})^{\otimes n}$.
    Alternatively, one sees that $G_{a_1,\ldots,a_n}^{(x_1,\ldots,x_n)}(\varrho) = p^{\tilde{N}^{(n)}(\varrho)}_{a_{1}, \ldots, a_{n}|x_{1}, \ldots, x_{n}}$, where $\tilde{N}^{(n)} = P_{\mathrm{Sym}_n(CS_{k,g})} \circ N^{(n)}$. One then directly sees that the joint measurements inherit the no-signaling constraints from the extension so that \cref{eq:joint-meas-ns} is satisfied. This also means that the map $\tilde{N}^{(n)} = P_{\mathrm{Sym}_n(CS_{k,g})} \circ N^{(n)}$ is also a channel.

    On the other hand, if $M$ is $n$-wise compatible and we are given the joint measurements which satisfy the no-signaling conditions in \cref{eq:joint-meas-ns}, we can explicitly construct the $n$-extension $\xi_{N^{(n)}}$ of $\xi_M$. We will do this by using \cref{prop:extended-NS-tensor} by defining the no-signaling probability distributions $p^{N^{(n)}(\varrho)}$ for the elements $N^{(n)}(\varrho) \in (CS^{1}_{k,g})^{\tmax n}$ of the channel $N: K \to (CS^{1}_{k,g})^{\tmax n}$ corresponding to the tensor $\xi_{N^{(n)}} \in A(K)^+ \tmax (CS^{+}_{k,g})^{\tmax n}$.

    Let us start with the joint measurements $G^{(x_1,\ldots,x_n)}$ for all measurement choices $\{x_1, \ldots, x_n\}\subset [g]$. Note that for any $m<n$ we can also construct a joint measurement of the measurements $\{M_{\cdot|x_1}, \ldots, M_{\cdot|x_m}\}$ from the joint measurement $G^{(x_1,\ldots,x_n)}$ of any set of $n$ measurements $\{M_{\cdot|x_1}, \ldots, M_{\cdot|x_n}\}$ containing $\{M_{\cdot|x_1}, \ldots, M_{\cdot|x_m}\}$ as
    \begin{equation}
        \tilde{G}^{(x_1, \ldots, x_m, x_{m+1}, \ldots, x_{n})}_{a_1, \ldots, a_m} := \sum_{a_{m+1}=1}^k \cdots \sum_{a_{n}=1}^k G^{(x_1, \ldots,x_m,x_{m+1}, \ldots, x_n)}_{a_1, \ldots, a_m,a_{m+1}, \ldots, a_n}
    \end{equation}
    for all $a_1, \ldots, a_m \in [k]$. Furthermore, since the joint measurements satisfy the no-signaling conditions in \cref{eq:joint-meas-ns}, we see that if we choose some different measurement settings $\{\tilde{x}_{m+1}, \ldots, \tilde{x}_n\}$ for the remaining $n-m$ measurements and use a joint measurement $G^{(x_1, \ldots, x_m, \tilde{x}_{m+1}, \ldots, \tilde{x}_n)}$ for the settings $\{x_1, \ldots, x_m, \tilde{x}_{m+1}, \ldots, \tilde{x}_n\}$ to define the joint measurement of the measurements $\{M_{\cdot|x_1}, \ldots, M_{\cdot|x_m}\}$, we have that
        \begin{align}
        \tilde{G}^{(x_1, \ldots, x_m, \tilde{x}_{m+1}, \ldots, \tilde{x}_n)}_{a_1, \ldots, a_m} &:= \sum_{a_{m+1}=1}^k \cdots \sum_{a_{n}=1}^k G^{(x_1, \ldots, x_m, \tilde{x}_{m+1}, \ldots, \tilde{x}_n)}_{a_1, \ldots, a_m,a_{m+1}, \ldots, a_n} \\
        &=  \sum_{a_{m+1}=1}^k \cdots \sum_{a_{n}=1}^k G^{(x_1, \ldots,x_m,x_{m+1}, \ldots, x_n)}_{a_1, \ldots, a_m,a_{m+1}, \ldots, a_n} \\
        &= \tilde{G}^{(x_1, \ldots, x_m, x_{m+1}, \ldots, x_{n})}_{a_1, \ldots, a_m}
    \end{align}
    for all $a_1, \ldots, a_m \in [k]$. Thus, the two joint measurements for the measurements $\{M_{\cdot|x_1}, \ldots, M_{\cdot|x_m}\}$ are actually the same so that it only depends on the measurement settings $\{x_1, \ldots, x_m\}$.
    
    In the joint measurements the measurement choices are always different from each other but to construct the $n$-extensions we need to define it also in the cases when some of the measurement choices are identical. We will do this by giving the equal measurements always the same outcome. Next we will do this more formally.
    
    Let us take $x_i \in [g]$ for all $i \in [n]$. Let $m$ be the number of different measurement settings in $\{x_1, \ldots, x_n \}$ and let $y_{1}, \ldots, y_{m}$ be those different measurement settings. Let $l_j$ be the number of measurement settings in $\{x_1, \ldots, x_n \}$ which are equal to $y_j$. Let us partition $[n]$ into intervals of length $l_i$ such that $[n] = \{1, \ldots, l_1, l_1 +1, \ldots, l_1+l_2, l_1+ l_2+1, \ldots, \sum_{j=1}^m l_j\}$. Let us define $i_r =  \sum_{j=1}^r l_j \in [n]$ for all $r \in [m]$ so that now we have $[n] = \{1, \ldots, i_1, i_1 +1, \ldots, i_2, \ldots, i_m\}$, where $i_m=n$. Let us define a permutation $\sigma \in \mathfrak{S}_n$ as the permutation which orders the measurement settings such that 
    \begin{align}
        &x_{\sigma^{-1}(1)} = \ldots = x_{\sigma^{-1}(i_1)}=y_1, \\
        &x_{\sigma^{-1}(i_1+1)} = \ldots = x_{\sigma^{-1}(i_2)}=y_2, \\
        &\vdots\\
        &x_{\sigma^{-1}(i_{m-1}+1)} = \ldots = x_{\sigma^{-1}(i_m)} =y_m, 
    \end{align}
    and $\sigma^{-1}(i) < \sigma^{-1}(i')$ for all $i<i'$, $i,i' \in \{i_{r-1}+1, \ldots,i_{r}\}$ for all $r \in [n]$, where we set $i_0 = 0$.
    Now we can define the conditional probability distribution $(p^{N^{(n)}(\varrho)}_{\cdot,\ldots, \cdot|x_1, \ldots, x_n})$ as
        \begin{align}\label{eq:no-signaling-distr-joint-meas}
        p^{N^{(n)}(\varrho)}_{a_1,\ldots, a_n|x_1, \ldots, x_n} = \sum_{b_1=1}^k \cdots \sum_{b_m=1}^k \prod_{z=1}^m \delta_{a_{\sigma^{-1}(i_{z-1}+1)},\ldots,a_{\sigma^{-1}(i_z)},b_{z}}G^{(y_1, \ldots, y_m)}_{b_1, \ldots, b_m}(\varrho)
    \end{align}
    for all $a_1, \ldots, a_n \in [k]$ and $\varrho \in K$, where we are using the Kronecker delta as
    \begin{equation}
        \delta_{a_1, \ldots,a_u} = \begin{cases}
            1, \quad a_1 = \ldots = a_u, \\
            0, \quad \text{else,}
        \end{cases} \quad \text{for some } u \in \mathds{N} \, .
    \end{equation}
    Clearly $\sum_{i=1}^n \sum_{a_i=1}^k p^{N^{(n)}(\varrho)}_{a_1,\ldots, a_n|x_1, \ldots, x_n} =1$ so that it is a valid probability distribution for all $x_1, \ldots,x_n \in [g]$ and $\varrho \in K$.
    As an example, to see how the construction looks like explicitly, say we want to define $p^{N^{(3)}(\varrho)}_{a_1,a_2,a_3|x_1,x_1,x_3}$ where the first and second measurement choices are equal. Then, we have $y_1=x_1$ and $y_2=x_3$ and we pick the joint measurement between the first and the third party:
    \begin{equation}
        p^{N^{(3)}(\varrho)}_{a_1, a_2,a_3|x_1,x_1,x_3} = \sum_{b_1=1}^k \sum_{b_2=1}^k \delta_{a_1, a_2,b_1} \delta_{a_3,b_2} G_{b_1,b_2}^{(y_1,y_2)}(\varrho) = \delta_{a_1,a_2} G^{(x_1, x_3)}_{a_1,a_3}(\varrho)
    \end{equation}
    for all $a_1, a_3 \in [k]$ and $\varrho \in K$. Thus, we have just set the measurement outcomes the same for the equal measurements.
    
    We now observe that if we fix $i \in [n]$ so that $x_i = y_{j_i}$ for some $j_i \in [m]$ then we have that
    \begin{align}\label{eq:prob_distr_marginal}
        \sum_{\substack{t=1 \\ t\neq i}}^n \sum_{a_t=1}^k p^{N^{(n)}(\varrho)}_{a_1,\ldots, a_n|x_1, \ldots, x_n} = \sum_{\substack{s=1 \\ s \neq j_i}}^m \sum_{b_s=1}^k G^{(y_1, \ldots, y_{j_i}, \ldots, y_n)}_{b_1, \ldots, b_{j_i-1},a_i,b_{j_i+1}, \ldots, b_m }(\varrho) = M_{a_i|y_{j_i}}(\varrho) = M_{a_i|x_i}(\varrho)
    \end{align}
    for all $a_i \in [k]$ and $\varrho \in K$.
    
    According to \cref{prop:extended-NS-tensor}, in order to define the channel $N^{(n)}: K \to (CS^1_{k,g})^{\tmax n}$ via the conditional probabilities $p^{N^{(n)}(\varrho)} = (p^{N^{(n)}(\varrho)}_{\cdot,\ldots, \cdot|x_1, \ldots, x_n})_{x_1, \ldots, x_n \in [g]}$ for all $\varrho \in K$ we still need to check that the no-signaling constraints are satisfied. Thus, let $r \in [n]$. Now $x_r = y_{j_r}$ for some $j_r \in [m]$ and thus $r \in \sigma^{-1}(\{i_{j_{r-1}}+1, \ldots, i_{j_r}\})$. Then 
    \begin{align}
        \sum_{a_r=1}^k &p^{N^{(n)}(\varrho)}_{a_1,\ldots, a_n|x_1, \ldots,x_r, \ldots, x_n} = \sum_{a_r=1}^k \sum_{b_1=1}^k \cdots \sum_{b_m=1}^k \prod_{z=1}^m \delta_{a_{\sigma^{-1}(i_{z-1}+1)},\ldots,a_{\sigma^{-1}(i_z)},b_{z}}G^{(y_1, \ldots, y_m)}_{b_1, \ldots, b_m}(\varrho)\\
        &=  \sum_{b_1=1}^k \cdots \sum_{b_m=1}^k \left( \sum_{a_r=1}^k \delta_{a_{\sigma^{-1}(i_{j_r-1}+1)},\ldots,a_{\sigma^{-1}(i_{j_r})},b_{j_r}} \right) \prod_{\substack{z=1 \\ z \neq j_r}}^m \delta_{a_{\sigma^{-1}(i_{z-1}+1)},\ldots,a_{\sigma^{-1}(i_z)},b_{z}}G^{(y_1, \ldots, y_m)}_{b_1, \ldots, b_m}(\varrho)
    \end{align}
    On the other hand, let us now change the measurement setting $x_r$ to $\tilde{x}_r$ for some $\tilde{x}_r \in [g]$, $x_r \neq \tilde{x}_r$. Thus, we have the total measurement settings $\{x_1, \ldots, x_{r-1}, \tilde{x}_r, x_{r+1}, \ldots, x_n\}$. Let first $\tilde{x}_r \in \{y_1, \ldots, y_m\}$ so that there exists $\tilde{j}_r \in [m]$ such that $\tilde{x}_r = y_{\tilde{j}_r}$. Just as before, there is a permutation $\tilde{\sigma}$ which groups the same measurement settings together. The positions where the measurement settings change are now denoted by $\tilde{i}_1, \ldots,\tilde{i}_m$. Thus, now $r \in \tilde{\sigma}^{-1}(\{\tilde{i}_{\tilde{j}_{r-1}}+1, \ldots, \tilde{i}_{\tilde{j}_r}\})$. Let now $\pi \in \mathfrak{S}_n$ be a permutation such that $\pi \sigma = \tilde{\sigma}$. We note that now 
    \begin{align}
        \pi(\{i_{z-1}+1, \ldots,i_{z}\}) = \{\tilde{i}_{z-1}+1, \ldots, \tilde{i}_{z} \} 
    \end{align}
     for all $z \in [m]$. Then we have that
        \begin{align}
        \sum_{a_r=1}^k &p^{N^{(n)}(\varrho)}_{a_1,\ldots, a_n|x_1, \ldots,\tilde{x}_r, \ldots, x_n} = \sum_{a_r=1}^k \sum_{b_1=1}^k \cdots \sum_{b_m=1}^k \prod_{z=1}^m \delta_{a_{\tilde{\sigma}^{-1}(\tilde{i}_{z-1}+1)},\ldots,a_{\tilde{\sigma}^{-1}(\tilde{i}_z)},b_{z}}G^{(y_1, \ldots, y_m)}_{b_1, \ldots, b_m}(\varrho) \label{eq:thm-7.8-proof-1}\\
        &=  \sum_{b_1=1}^k \cdots \sum_{b_m=1}^k \left( \sum_{a_r=1}^k \delta_{a_{\tilde{\sigma}^{-1}(\tilde{i}_{\tilde{j}_r-1}+1)},\ldots,a_{\tilde{\sigma}^{-1}(\tilde{i}_{\tilde{j}_r})},b_{\tilde{j}_r}} \right) \prod_{\substack{z=1 \\ z \neq \tilde{j}_r}}^m \delta_{a_{\tilde{\sigma}^{-1}(\tilde{i}_{z-1}+1)},\ldots,a_{\tilde{\sigma}^{-1}(\tilde{i}_z)},b_{z}}G^{(y_1, \ldots, y_m)}_{b_1, \ldots, b_m}(\varrho) \\
        &=  \sum_{b_1=1}^k \cdots \sum_{b_m=1}^k \left( \sum_{a_r=1}^k \delta_{a_{\sigma^{-1}(\pi^{-1}(\tilde{i}_{\tilde{j}_r-1}+1))},\ldots,a_{\sigma^{-1}(\pi^{-1}(\tilde{i}_{\tilde{j}_r}))},b_{\tilde{j}_r}} \right) \nonumber  \\
        & \quad \times \prod_{\substack{z=1 \\ z \neq \tilde{j}_r}}^m \delta_{a_{\sigma^{-1}(\pi^{-1}(\tilde{i}_{z-1}+1))},\ldots,a_{\sigma^{-1}(\pi^{-1}(\tilde{i}_z))},b_{z}} G^{(y_1, \ldots, y_m)}_{b_1, \ldots, b_m}(\varrho) \\
        &=  \sum_{b_1=1}^k \cdots \sum_{b_m=1}^k \left( \sum_{a_r=1}^k \delta_{a_{\sigma^{-1}(i_{\tilde{j}_r-1}+1)},\ldots,a_{\sigma^{-1}(i_{\tilde{j}_r})},b_{\tilde{j}_r}} \right) \prod_{\substack{z=1 \\ z \neq \tilde{j}_r}}^m \delta_{a_{\sigma^{-1}(i_{z-1}+1)},\ldots,a_{\sigma^{-1}(i_z)},b_{z}}G^{(y_1, \ldots, y_m)}_{b_1, \ldots, b_m}(\varrho) \\
        &=\sum_{a_r=1}^k \sum_{b_1=1}^k \cdots \sum_{b_m=1}^k \prod_{z=1}^m \delta_{a_{\sigma^{-1}(i_{z-1}+1)},\ldots,a_{\sigma^{-1}(i_z)},b_{z}}G^{(y_1, \ldots, y_m)}_{b_1, \ldots, b_m}(\varrho) \\
        &=  \sum_{a_r=1}^k p^{N^{(n)}(\varrho)}_{a_1,\ldots, a_n|x_1, \ldots,x_r, \ldots, x_n} 
    \end{align} 
    so that the no-signaling constraints are satisfied.

    Above we considered the case when $\tilde{x}_r \in \{y_1, \ldots,y_m\}$. However, if $\tilde{x}_r \notin \{y_1, \ldots,y_m\}$ we can just repeat the whole procedure by defining $p^{N^{(n)}(\varrho)}_{a_1,\ldots, a_n|x_1, \ldots,x_r, \ldots, x_n}$ and $p^{N^{(n)}(\varrho)}_{a_1,\ldots, a_n|x_1, \ldots,\tilde{x}_r, \ldots, x_n}$ through the joint measurement of the measurement settings $\{y_1, \ldots, y_m, y_{m+1}\}$, where now $y_{m+1} = \tilde{x}_r$. Then in \cref{eq:no-signaling-distr-joint-meas} and \cref{eq:thm-7.8-proof-1} if some setting in $\{y_1, \ldots, y_m, y_{m+1}\}$ is not used in the settings $\{x_1, \ldots, x_r, \ldots, x_n\}$ or $\{x_1, \ldots, \tilde{x}_r, \ldots, x_n\}$, we just sum over its outcomes in the joint measurement $G^{(y_1, \ldots, y_{m+1})}$. By the same procedure we see that the no-signaling constraints hold also in this case.

    Thus, the conditional probability distributions $(p^{N^{(n)}(\varrho)}_{\cdot,\ldots,\cdot|x_1, \ldots, x_n})$ satisfy the no-signaling conditions for all $x_1, \ldots, x_n \in [g]$ and $\varrho \in K$. Then according to \cref{prop:extended-NS-tensor} we have that the tensors $N^{(n)}(\varrho) \in (CS^+_{k,g})^{\tmax n}$ defined by the no-signaling probability distributions $p^{N^{(n)}(\varrho)}$ are actually normalized, i.e, $N^{(n)}(\varrho) \in (CS^1_{k,g})^{\tmax n}$ for all $\varrho \in K$.  Thus, if we define a map $N^{(n)}: V(K)^+ \to (CS^+_{k,g})^{\tmax n}$ by $\varrho \mapsto N^{(n)}(\varrho)$ for all $\varrho \in K$, we see that it is a channel, i.e., $N^{(n)}: K \to (CS^1_{k,g})^{\tmax n}$.  As was observed before, if $N^{(n)}: K \to (CS^1_{k,g})^{\tmax n}$ is a channel, so is $\tilde{N}^{(n)} := P_{\mathrm{Sym}_n(CS_{k,g})} \circ N^{(n)}$. In fact, we have that
    \begin{align}
        p^{\tilde{N}^{(n)}(\varrho)}_{a_1, \ldots,a_n|x_1, \ldots,x_n} = \frac{1}{n!} \sum_{\sigma \in \mathfrak{S}_n} p^{N^{(n)}(\varrho)}_{a_{\sigma(1)}, \ldots, a_{\sigma(n)}|x_{\sigma(1)}, \ldots,x_{\sigma(n)}}
    \end{align}
    for all $a_1, \ldots, a_n \in [k]$ and $x_1, \ldots, x_n \in [g]$. Using \cref{eq:prob_distr_marginal} we see that 
    \begin{align}
        \sum_{\substack{t=1 \\ t\neq i}}^n \sum_{a_t=1}^k p^{\tilde{N}^{(n)}(\varrho)}_{a_1,\ldots, a_n|x_1, \ldots, x_n} = M_{a_i|x_i}(\varrho)
    \end{align}
    for all $a_i \in [k]$, $x_i \in [g]$, $i \in [n]$ and $\varrho \in K$.
    
   Finally we have that
    \begin{align}
        \langle \xi_M, \varrho \otimes m^{(x_i)}_{a_i} \rangle &= \langle M(\varrho), m^{(x_i)}_{a_i} \rangle = M_{a_i|x_i}(\varrho) = \sum_{\substack{t=1 \\ t\neq i}}^n \sum_{a_t=1}^k p^{\tilde{N}^{(n)}(\varrho)}_{a_1,\ldots, a_n|x_1, \ldots, x_n} \\
        &= \langle \tilde{N}^{(n)}(\varrho), \one_{CS^1_{k,g}}^{\otimes (i-1)} \otimes m^{(x_i)}_{a_i} \otimes \one_{CS^1_{k,g}}^{\otimes (n-i)} \rangle \\
        &= \langle P_{\mathrm{Sym}_n(CS_{k,g})} (N^{(n)}(\varrho)), \one_{CS^1_{k,g}}^{\otimes (i-1)} \otimes m^{(x_i)}_{a_i} \otimes \one_{CS^1_{k,g}}^{\otimes (n-i)} \rangle \\
        &=  \langle  N^{(n)}(\varrho), P_{\mathrm{Sym}_n(CS^*_{k,g})}(\one_{CS^1_{k,g}}^{\otimes (i-1)} \otimes m^{(x_i)}_{a_i} \otimes \one_{CS^1_{k,g}}^{\otimes (n-i)}) \rangle \\
        &=  \langle  N^{(n)}(\varrho), P_{\mathrm{Sym}_n(CS^*_{k,g})}(m^{(x_i)}_{a_i} \otimes \one_{CS^1_{k,g}}^{\otimes (n-1)}) \rangle \\
        &=\langle N^{(n)}(\varrho), (\gamma_n^\Phi)^*(m_{a_i}^{(x_i)}) \rangle \\
        &= \langle \xi_{N^{(n)}}, \varrho \otimes (\gamma_n^\Phi)^*(m_{a_i}^{(x_i)}) \rangle \\
        &= \langle (\id \otimes \gamma_n^{\Phi})(\xi_{N^{(n)}}), \varrho \otimes m_{a_i}^{(x_i)}\rangle  
    \end{align}
    for all $a_i \in [k]$, $x_i \in [g]$, $i \in [n]$ and $\varrho \in K$. Since the states in $K$ and the vectors $m_{a}^{(x)}$ span the corresponding vector spaces, we have that $\xi_M = (\id \otimes \gamma_n^{\Phi})(\xi_{N^{(n)}})$ which shows that $\xi_M$ is $n$-extendable.
\end{proof}

\subsection{Extendability of no-signaling behaviors}\label{sec:ext-LHV}

Going back to the setting of \cref{sec:bell}, we can also consider the concept of extendability of the general no-signaling statistics $P \in CS^1_{k,g} \tmax CS^1_{l,r}$, where separability corresponds to the corresponding no-signaling distribution having an LHV-model. As $n$-wise compatibility can be seen as a restricted version of compatibility, we can also consider a no-signaling behavior to have a restricted LHV-model.

\begin{defi}
    Let $g,r,k,l \in \nat$ and $m \in [g]$ and $n \in [r]$. A no-signaling probability distribution $p= (p_{\cdot,\cdot|x,y})_{x \in [g], y \in [r]}$ on $[k]\times[l]$ has an $(m,n)$-LHV model if for all $(x_1, \ldots,x_m) \subseteq [g]$ and $(y_1, \ldots, y_n) \subseteq [r]$ the restricted no-signaling probability distribution $p^{(x_1, \ldots, x_m,y_1, \ldots,y_n)} = (p_{\cdot,\cdot|x,y})_{x \in \{x_1, \ldots, x_m\}, y \in \{y_1, \ldots, y_n\}}$ on $[k] \times [l]$ has an LHV model.
\end{defi}

We note that $p= (p_{\cdot,\cdot|x,y})_{x \in [g], y \in [r]}$ has an LHV model if and only if it has an $(g,r)$-LHV model. 
Before connecting $n$-extendability to these restricted LHV models, let us first take a closer look on the set of LHV probability distributions. Let us denote by $\mathcal{P}^{NS}_{(k,g),(l,r)}$ the set of no-signaling probability distributions $p= (p_{\cdot,\cdot|x,y})_{x \in [g], y \in [r]}$ on $[k]\times[l]$; by \cref{prop:NS-tensor} they are exactly the conditional probability distributions related to tensors $P \in CS^1_{k,g} \tmax CS^1_{l,r}$. We denote by $\mathcal{P}^{LHV}_{(k,g),(l,r)}$ the subset of $\mathcal{P}^{NS}_{(k,g),(l,r)}$ that have an LHV-model. Thus, each $p \in \mathcal{P}^{LHV}_{(k,g),(l,r)}$ can be written as
\begin{align}
    p_{a,b|x,y} = \sum_{\lambda \in [\Lambda]} q_\lambda \, p^A_{a|\lambda,x} p^B_{b|\lambda,y}
\end{align}
for all $a \in [k]$, $b \in [l]$, $x \in [g]$ and $y \in [r]$ for some probability distributions $q$ on $[\Lambda]$, $p^A=(p^A_{\cdot|\lambda, x})_{\lambda \in [\Lambda], x \in [g]}$ on $[k]$ and $p^B=(p^B_{\cdot|\lambda, y})_{\lambda \in [\Lambda], y \in [r]}$ on $[l]$ for some $\Lambda \in \nat$. It is clear that $\mathcal{P}^{LHV}_{(k,g),(l,r)}$ is a polytope so that it can be expressed as the convex hull of its finite number of extreme points. The extreme points are then probability distributions $(p^A_{\cdot|x}p^B_{\cdot|y})_{x \in [g], y \in [r]}$ on $[k]\times [l]$ such that $p^A_{a|x}$ is non-zero only for one outcome $a_x \in [k]$ for all $x \in [g]$ and $p^B_{b|y}$ is non-zero only for one outcome $b_y \in [l]$ for all $y \in [r]$. We can thus label the extreme points $p^{\vec{a}, \vec{b}}$ by the outcome vectors $\vec{a}=(a_1, \ldots, a_g) \subset [k]^g$ and $\vec{b}=(b_1, \ldots, b_r) \subset [l]^r$ such that 
\begin{align}
    p^{\vec{a}, \vec{b}}_{a,b|x,y} = \delta_{a, a_x} \delta_{b, b_y}
\end{align}
for all $a \in [k]$, $b \in [l]$, $x \in [g]$ and $y \in [r]$. Now if we can express a probability distribution $p \in \mathcal{P}^{NS}_{(k,g),(l,r)}$ as a mixture of these extreme points, i.e., 
\begin{align}\label{eq:LHV-model-extreme}
    p_{a,b|x,y} = \sum_{\vec{a} \in [k]^g} \sum_{\vec{b} \in [l]^r} q_{\vec{a}, \vec{b}} \,p^{\vec{a}, \vec{b}}_{a,b|x,y}
\end{align}
for some probability distribution $q$ on $[k]^{g}\times [l]^r$, then we know that $p \in \mathcal{P}^{LHV}_{(k,g),(l,r)}$. Also, vice versa, if $p \in \mathcal{P}^{LHV}_{(k,g),(l,r)}$ then it has a convex decomposition as in \cref{eq:LHV-model-extreme}. We call a decomposition, or simply the probability distribution $q$ in \cref{eq:LHV-model-extreme} an \emph{LHV-decomposition} of $p$. We note that such a decomposition is not always unique. 

Similarly for the restricted LHV models, we see that $p= (p_{\cdot,\cdot|x,y})_{x \in [g], y \in [r]}$ has an $(n,m)$-LHV model if and only if an LHV-decomposition exists for all restricted distributions $p^{(x_1, \ldots, x_m,y_1, \ldots,y_n)} = (p_{\cdot,\cdot|x,y})_{x \in \{x_1, \ldots, x_m\}, y \in \{y_1, \ldots, y_n\}}$ on $[k] \times [l]$ for all $(x_1, \ldots,x_m) \subseteq [g]$ and $(y_1, \ldots, y_n) \subseteq [r]$.

In the next result we will take $l=k$ and $r=g$ since we will apply \cref{prop:extended-NS-tensor} where the cones match. As was noted in \cref{remark:NS-extension}, the same result could be obtained also if the cones are different, but we will not show it explicitly in this case. This restriction can also be avoided by simply embedding the smaller cone inside the larger one. In the proof we draw inspiration from \cite{Terhal2003symmetric}, where the authors showed that in a bipartite setting where the parties are both measuring $g$ measurements on a shared quantum state, then if the quantum state has a symmetric $g$-quasiextension (the extension need not be a quantum state but only in the maximum tensor product of $\Pos$ cones), then it can be only used to generate behaviors which have an LHV model.

\begin{thm}
    Let $P \in CS^1_{k,g} \tmax CS^1_{k,g}$ with the corresponding no-signaling probability distribution $p= (p_{\cdot,\cdot|x,y})_{x,y \in [g]}$ on $[k]\times[k]$ and let $n \in [g]$. Then $P$ is $n$-extendable if and only if $p$ has an $(g,n)$-LHV model such that for all settings
    $(y_1, \ldots,y_n)\subseteq [g]$ there exists some $(g,n)$-LHV decompositions $q^{(y_1, \ldots, y_n)}$ satisfying the following no-signaling constraints
    \begin{align}\label{eq:LHV-decomp-ns}
        \sum_{\substack{z=1 \\ z\neq i}}^g\sum_{a_z=1}^k \sum_{b_j=1}^k q^{(y_1, \ldots,y_j, \ldots , y_n)}_{a_1, \ldots,a_i, \ldots, a_g, b_1, \ldots,b_j, \ldots,b_n} = \sum_{\substack{z=1 \\ z\neq i}}^g\sum_{a_z=1}^k \sum_{b_j=1}^k q^{(y_1, \ldots,\tilde{y}_j, \ldots , y_n)}_{a_1, \ldots,a_i, \ldots, a_g, b_1, \ldots,b_j, \ldots,b_n}
    \end{align}
    for all $b_1, \ldots,b_n \in [k]$, $y_j,\tilde{y}_j \in [g]$ and $i \in [g]$, $j \in [n]$. 
    If $n = g$, then $P \in  CS^1_{k,g} \tmin CS^1_{k,g}$.
\end{thm}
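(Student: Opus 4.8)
The plan is to follow the structure of the proof of \Cref{thm:extendability-compatibility} very closely, replacing multimeters on a general state space $K$ by no-signaling behaviors. Concretely, I will identify $P \in CS^1_{k,g} \tmax CS^1_{k,g}$ with a map $\Phi: (CS^\ast_{k,g})^+ \to CS^+_{k,g}$ (as in \Cref{cor:Phi-NS}) and its $n$-extensions with maps $N^{(n)}: (CS^\ast_{k,g})^+ \to (CS^+_{k,g})^{\tmax n}$ satisfying $(\id \otimes \gamma_n^\Phi)(\xi_{N^{(n)}}) = \xi_\Phi$ for the reduction map $\Phi = \one_{CS^1_{k,g}}$ fixed as in \Cref{sec:ext-comp}. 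The key structural input is \Cref{prop:extended-NS-tensor}, which tells us that normalized elements of $(CS^1_{k,g})^{\tmax n}$ are exactly the no-signaling conditional probability distributions on $[k]^n$, and its two-sided version for $CS^1_{k,g}\tmax(CS^1_{k,g})^{\tmax n}$ alluded to in \Cref{remark:NS-extension}.

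For the forward direction, suppose $P$ is $n$-extendable, witnessed by $N^{(n)}$. First I would check, exactly as in \Cref{thm:extendability-compatibility}, that $N^{(n)}$ and $\tilde N^{(n)} := P_{\mathrm{Sym}_n(CS_{k,g})}\circ N^{(n)}$ are channels, using $(\gamma_n^\Phi)^\ast(\one_{CS^1_{k,g}}) = \one_{CS^1_{k,g}}^{\otimes n}$. Then, writing $q := (\id \otimes \chi)(\xi_{\tilde N^{(n)}})$ as the no-signaling probability distribution on $[k]^g \times [k]^n$ associated (via the two-party analogue of \Cref{prop:extended-NS-tensor}) to $\xi_{\tilde N^{(n)}} \in CS^1_{k,g}\tmax (CS^1_{k,g})^{\tmax n}$, I would show that for each choice $(y_1,\dots,y_n)\subseteq[g]$ the marginal of $q$ onto the first system and the $n$ copies labelled by $y_1,\dots,y_n$ gives an LHV-decomposition $q^{(y_1,\dots,y_n)}$ of $p^{(x_1,\dots,x_g,y_1,\dots,y_n)}$ — this is because the extreme points of $(CS^1_{k,g})^{\tmax n}$ (the deterministic no-signaling distributions $s_{i_1,\dots,i_n}$) become, after pairing with the first $CS^1_{k,g}$, precisely the product-deterministic extreme points $p^{\vec a,\vec b}$ appearing in \cref{eq:LHV-model-extreme}. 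The required no-signaling constraints \cref{eq:LHV-decomp-ns} are then inherited from the no-signaling constraints on $\xi_{\tilde N^{(n)}}$, exactly as \cref{eq:joint-meas-ns} was inherited in the proof of \Cref{thm:extendability-compatibility}.

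For the converse, given the $(g,n)$-LHV decompositions $q^{(y_1,\dots,y_n)}$ satisfying \cref{eq:LHV-decomp-ns}, I would build the extension tensor $\xi_{N^{(n)}}$ by specifying, for each $(x_1,\dots,x_n)\in[g]^n$ (with repetitions allowed), a no-signaling distribution on $[k]\times[k]^n$; this is the analogue of \cref{eq:no-signaling-distr-joint-meas}, obtained by collapsing repeated settings via Kronecker deltas and using the appropriate $q^{(y_1,\dots,y_m)}$ for the distinct settings $y_1,\dots,y_m$. One then verifies: (i) these are valid probability distributions; (ii) they are no-signaling in all $n+1$ coordinates — the calculation mirrors the long permutation argument \cref{eq:thm-7.8-proof-1} in \Cref{thm:extendability-compatibility}, where changing a setting $x_r\to\tilde x_r$ among repeated settings is handled by a relabeling permutation $\pi$ with $\pi\sigma=\tilde\sigma$, and consistency across which $q^{(y_1,\dots,y_m)}$ is used follows from \cref{eq:LHV-decomp-ns}; (iii) applying $\id\otimes\gamma_n^\Phi$ recovers $P$, since the marginal onto the first system and any one copy returns $p_{\cdot,\cdot|x,y}$ and symmetrization via $P_{\mathrm{Sym}_n}$ does not change this. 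Finally, the last sentence ($n=g \Rightarrow P \in CS^1_{k,g}\tmin CS^1_{k,g}$) follows immediately: a $g$-extension is equivalent, by \Cref{monogamy_thm2} applied to the polysimplex $CS^1_{k,g}$ (a product of $g$ simplices), to separability, and separability is equivalent to having an LHV model by \Cref{sep-LHV}; alternatively one just notes that a $(g,g)$-LHV model \emph{is} an LHV model.

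I expect the main obstacle to be exactly the bookkeeping in the no-signaling verification of the converse: one must carefully set up the permutations $\sigma,\tilde\sigma,\pi$ that group equal settings, handle the sub-case $\tilde x_r\notin\{y_1,\dots,y_m\}$ separately (padding the joint decomposition with an extra setting and summing it out), and check that the Kronecker-delta collapse commutes correctly with the marginalizations — essentially reproducing \cref{eq:thm-7.8-proof-1} and its surrounding manipulations in the new notation. The conceptual content is entirely parallel to \Cref{thm:extendability-compatibility}; the difficulty is purely notational, compounded by the fact that we now have an asymmetric role between the ``$A$''-copy $CS^1_{k,g}$ and the $n$ symmetric ``$B$''-copies.
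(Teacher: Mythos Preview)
Your plan has a genuine gap in the forward direction. An $n$-extension $P^{(n)}\in CS^1_{k,g}\tmax (CS^1_{k,g})^{\tmax n}$ corresponds, via \Cref{prop:extended-NS-tensor} with $n+1$ factors, to a no-signaling distribution $\tilde p^{(n)}_{a,b_1,\ldots,b_n\mid x,y_1,\ldots,y_n}$ on $[k]^{n+1}$ with settings in $[g]^{n+1}$ --- \emph{not} to a distribution on $[k]^g\times[k]^n$. The first factor $CS^1_{k,g}$ records one outcome $a\in[k]$ per choice of setting $x\in[g]$; it does not give you a joint distribution over outcomes $(a_1,\ldots,a_g)\in[k]^g$ for all of Alice's settings simultaneously. (Relatedly, your claim that the extreme points of $(CS^1_{k,g})^{\tmax n}$ are the deterministic product points $s_{i_1,\ldots,i_n}$ is false for $n\ge 2$: the maximal tensor product has additional PR-box-type vertices.) But a $(g,n)$-LHV decomposition $q^{(y_1,\ldots,y_n)}$ \emph{is} a distribution on $[k]^g\times[k]^n$, so you cannot simply read it off as a ``marginal'' of the extension.

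The missing ingredient is a genuine construction, which the paper supplies (following \cite{Terhal2003symmetric}): for fixed $(y_1,\ldots,y_n)$ one sets $r_{b_1,\ldots,b_n}=\sum_a \tilde p^{(n)}_{a,b_1,\ldots,b_n\mid 1,y_1,\ldots,y_n}$ and defines
\[
q^{(y_1,\ldots,y_n)}_{a_1,\ldots,a_g,b_1,\ldots,b_n}
=\frac{1}{(r_{b_1,\ldots,b_n})^{\,g-1}}\prod_{x=1}^g \tilde p^{(n)}_{a_x,b_1,\ldots,b_n\mid x,y_1,\ldots,y_n}\, ,
\]
i.e.\ one makes Alice's outcomes for different settings conditionally independent given $(b_1,\ldots,b_n)$. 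This is the step that has no analogue in \Cref{thm:extendability-compatibility}, because there the joint measurements $G^{(x_1,\ldots,x_n)}$ live on exactly $n$ settings, matching the $n$ copies of the extension; here the $(g,n)$-LHV decomposition must be joint over all $g$ of Alice's settings even though the extension carries only one Alice copy. Your converse is also over-engineered: the paper simply sets $p'^{(n)}_{a,b_1,\ldots,b_n\mid x,y_1,\ldots,y_n}=\sum_{i\neq x}\sum_{a_i} q^{(y_1,\ldots,y_n)}_{a_1,\ldots,a,\ldots,a_g,b_1,\ldots,b_n}$ (marginalize Alice to the single setting $x$) and checks no-signaling directly from \cref{eq:LHV-decomp-ns}; the Kronecker-delta collapse from \cref{eq:no-signaling-distr-joint-meas} is not needed on the Bob side.
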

\begin{proof}
    Let first $P$ be $n$-extendable. Thus, there exists $P^{(n)} \in CS^1_{k,g} \tmax (CS^1_{k,g})^{\tmax n}$ s.t. $(\id \otimes \gamma^\Phi_n)(P^{(n)}) = P$ for $\Phi = \one_{CS^1_{k,g}}$. By \cref{prop:extended-NS-tensor} we have that for all $a, b_i \in [k]$ and $x,y_i \in [g]$ 
    \begin{align}
        p_{a,b_i|x,y_i} &= \langle P, m^{(x)}_a \otimes m^{(y_i)}_{b_i} \rangle \\
        &= \langle (\id \otimes \gamma^\Phi_n)(P^{(n)}), m^{(x)}_a \otimes m^{(y_i)}_{b_i} \rangle \\
        &= \langle P^{(n)}, m^{(x)}_a \otimes P_{\mathrm{Sym}_n(CS^*_{k,g})}(m^{(y_i)}_{b_i} \otimes \one_{CS^1_{k,g}}^{\otimes n}) \rangle \\
        &= \sum_{\substack{j=1 \\ j \neq i}}^{n} \sum_{b_j=1}^k  \frac{1}{n!} \sum_{\sigma \in \mathfrak{S}_n} p^{(n)}_{a,\alpha_\sigma(b_1, \ldots, b_{n})|x,\alpha_\sigma(1, \ldots,1,y_i,1, \ldots, 1)} \\
        &=  \sum_{\substack{j=1 \\ j \neq i}}^{n} \sum_{b_j=1}^k  \tilde{p}^{(n)}_{a,b_1, \ldots, b_{n}|x,1, \ldots,1,y_i,1, \ldots, 1}\, ,
    \end{align}
    where $\tilde{p}^{(n)} = (\tilde{p}^{(n)}_{\cdot, \ldots, \cdot|z_1, \ldots,z_{n+1}})_{z_1, \ldots,z_{n+1} \in [g]}$ is the no-signaling probability distribution on $[k]^{n+1}$ related to the tensor $(\id \otimes \gamma^\Phi_n)(P^{(n)}) \in (CS^1_{k,g})^{\tmax (n+1)}$.

    Let us now fix $\{y_1, \ldots, y_n\} \subset [g]$. We set $r^{(y_1, \ldots,y_n)}_{b_1, \ldots, b_n} = \sum_{a=1}^k \tilde{p}^{(n)}_{a,b_1, \ldots, b_n|1,y_1, \ldots,y_n}$ for all $b_i \in [k]$ for all $i \in [n]$ and define a probability distribution $q^{(y_1, \ldots,y_n)}$ on $[k]^{gn}$ by setting
    \begin{align}
        q^{(y_1, \ldots,y_n)}_{a_1, \ldots, a_g, b_1, \ldots, b_n} = \frac{1}{\left(r^{(y_1, \ldots,y_n)}_{b_1, \ldots, b_n}\right)^{g-1}} \prod_{x=1}^g \tilde{p}^{(n)}_{a_x,b_1, \ldots, b_{n}|x,y_1, \ldots,y_n}
    \end{align}
    for all $a_1, \ldots,a_g,b_1, \ldots,b_n \in [k]$. Clearly, since $\tilde{p}^{(n)}$ is also a no-signaling distribution, and in  \cref{eq:LHV-decomp-ns} we are summing over all $a_i$'s except for one, we see that the no-signaling constraints of \cref{eq:LHV-decomp-ns} are satisfied for $q^{(y_1, \ldots,y_n)}$.

    Now for all $a, b \in [k]$, $x \in [g]$ and $i \in [n]$ we have that
    \begin{align}
        \sum_{z=1}^{g} & \sum_{a_z=1}^k  \sum_{j =1 }^n \sum_{b_j=1}^k q^{(y_1, \ldots,y_n)}_{a_1, \ldots, a_g, b_1, \ldots, b_n} p^{(a_1, \ldots,a_g), (b_1, \ldots,b_n)}_{a,b|x,y_i} \label{eq:g-n-LHV}\\
        &= \sum_{z=1}^{g}  \sum_{a_z=1}^k  \sum_{j =1 }^n \sum_{b_j=1}^k q^{(y_1, \ldots,y_n)}_{a_1, \ldots, a_g, b_1, \ldots, b_n} \delta_{a,a_x} \delta_{b,b_i} \\
        &= \sum_{\substack{z=1 \\ z\neq x}}^{g} \sum_{a_z=1}^k  \sum_{\substack{j =1 \\ j\neq i}}^n \sum_{b_j=1}^k  q^{(y_1, \ldots,y_n)}_{a_1, \ldots,a_{x-1},a,a_{x+1}, \ldots, a_g, b_1,\ldots, b_{i-1}, b, b_{i+1}, \ldots, b_n} \\
        &= \sum_{\substack{j =1 \\ j\neq i}}^n \sum_{b_j=1}^k \tilde{p}^{(n)}_{a,b_1,\ldots, b_{i-1}, b, b_{i+1}, \ldots, b_n|x,y_1, \ldots,y_n} \\
        &= p_{a,b|x,y_i} \, .
    \end{align}
    Hence, the restricted no-signaling distribution $p^{(y_1, \ldots,y_n)} = (p_{\cdot,\cdot|x,y})_{x \in [g], y \in \{y_1, \ldots, y_n\}}$ on $[k] \times [k]$ for all $\{y_1, \ldots, y_n\} \subset [g]$ has a LHV-model which means that $p$ has a $(g,n)$-LHV model.

    On the other hand, let now $p$ have a $(g,n)$-LHV model s.t. for all settings
    $(y_1, \ldots,y_n)\subseteq [g]$ there exists some $(g,n)$-LHV decompositions $q^{(y_1, \ldots, y_n)}$ satisfying the no-signaling constraints in \cref{eq:LHV-decomp-ns}. We can define a conditional probability distribution $p'^{(n)} = (p'^{(n)}_{\cdot, \ldots,\cdot|x,y_1, \dots,y_{n}})_{x,y_1, \ldots,y_{n} \in [g]}$ on $[k]^{n+1}$ by setting 
    \begin{equation}
        p'^{(n)}_{a,b_1, \ldots,b_n|x,y_1, \dots,y_{n}} = \sum_{\substack{i=1 \\ i \neq x}}^g \sum_{a_i=1}^k q^{(y_1, \ldots,y_{n})}_{a_1, \ldots,a_{x-1}, a, a_{x+1}, \ldots, a_g, b_1, \ldots, b_n} \, .
    \end{equation}
    Clearly 
    \begin{align}
        \sum_{a=1}^k p'^{(n)}_{a,b_1, \ldots,b_n|x,y_1, \dots,y_{n}} = \sum_{a=1}^k p'^{(n)}_{a,b_1, \ldots,b_n|\tilde{x},y_1, \dots,y_{n}}
    \end{align}
    for all $b_1, \ldots, b_n \in [k]$, $x,\tilde{x},y_1, \ldots,y_n \in [g]$. Also, since the no-signaling constraints in \cref{eq:LHV-decomp-ns} are satisfied, we also have that 
    \begin{align}\label{eq:ns-ext-eq0}
        \sum_{b_j=1}^k p'^{(n)}_{a,b_1, \ldots,b_n|x,y_1, \dots,y_j, \ldots, y_{n}} = \sum_{b_j=1}^k p'^{(n)}_{a,b_1, \ldots,b_n|x,y_1,\ldots,\tilde{y}_j, \ldots,y_{n}}
    \end{align}
    for all $a, b_1, \ldots,b_n \in [k]$, $x,y_1, \ldots,y_n,\tilde{y}_j \in [g]$ for all $j \in [n]$. Hence, $p'^{(n)}$ is a no-signaling probability distribution and thus by \cref{prop:extended-NS-tensor} it corresponds to a tensor $P'^{(n)} \in (CS^1_{k,g})^{\tmax(n+1)}$.
    We see that now for all $a,b \in [k]$ and $x,y_i \in [g]$:
    \begin{align}
        \langle P, m^{(x)}_a \otimes m^{(y_i)}_{b} \rangle &= p_{a,b|x,y_i} = \sum_{z=1}^{g} \sum_{a_z=1}^k  \sum_{j =1 }^n \sum_{b_j=1}^k q^{(y_1, \ldots,y_n)}_{a_1, \ldots, a_g, b_1, \ldots, b_n} p^{(a_1, \ldots,a_g), (b_1, \ldots,b_n)}_{a,b|x,y_i} \\
        &= \sum_{\substack{z=1 \\ z\neq x}}^{g} \sum_{a_z=1}^k  \sum_{\substack{j =1 \\ j\neq i}}^n \sum_{b_j=1}^k  q^{(y_1, \ldots,y_n)}_{a_1, \ldots,a_{x-1},a,a_{x+1}, \ldots, a_g, b_1,\ldots, b_{i-1}, b, b_{i+1}, \ldots, b_n} \\
        &= \sum_{\substack{z=1 \\ z\neq x}}^{g} \sum_{a_z=1}^k  \sum_{\substack{j =1 \\ j\neq i}}^n \sum_{b_j=1}^k  q^{(1, \ldots,1,y_i,1, \ldots,1)}_{a_1, \ldots,a_{x-1},a,a_{x+1}, \ldots, a_g, b_1,\ldots, b_{i-1}, b, b_{i+1}, \ldots, b_n} \\
        &= \sum_{\substack{j =1 \\ j\neq i}}^n \sum_{b_j=1}^k p'^{(n)}_{a,b_1, \ldots,b_n|x,1, \dots,1, y_{i}, 1, \ldots,1} \label{eq:ns-ext-eq1} \\
        &= \sum_{\substack{j=1 \\ j \neq i}}^{n} \sum_{b_j=1}^k  \frac{1}{n!} \sum_{\sigma \in \mathfrak{S}_n} p'^{(n)}_{a,\alpha_\sigma(b_1, \ldots, b_{n})|x,\alpha_\sigma(1, \ldots,1,y_i,1, \ldots, 1)} \label{eq:ns-ext-eq2} \\
        &= \langle P'^{(n)}, m^{(x)}_a \otimes P_{\mathrm{Sym}_n(CS^*_{k,g})}( m^{(y_i)}_b \otimes \one_{CS^1_{k,g}}^{\otimes (n-1)} \rangle \\
        &= \langle (\id \otimes \gamma^\Phi_n)(P'^{(n)}), m^{(x)}_a \otimes m^{(y_i)}_b \rangle,
    \end{align}
    where the equality between \cref{eq:ns-ext-eq1} and \cref{eq:ns-ext-eq2} follows from the no-signaling constraints in \cref{eq:ns-ext-eq0}. Since the vectors $m^{(x)}_a$ span $CS^*_{k,g}$ we have that $P = (\id \otimes \gamma^\Phi_n)(P'^{(n)})$ so that $P$ is $n$-extendable.
\end{proof}

\section{Robustness results} \label{sec:robustness}

In the previous sections, we have seen that whether a positive linear map between two cones factorizes through a third cone as a composition of positive maps depends heavily on the structure of the three cones and on the types of tensor product one uses in constructing the cones. In this section we would like to understand what type of limitations on factorization one can derive directly from the structure of the state space of column stochastic matrices. 

As a first case study, let us suppose that the identity channel $\mathrm{id}: CS^1_{l,r} \to  CS^1_{l,r}$ can be factorized through $CS^1_{k,g}$. That means that there are channels $\Psi: CS^1_{l,r} \to CS^1_{k,g}$ and $\Phi: CS^1_{k,g} \to CS^1_{l,r}$ such that $\mathrm{id}= \Phi \circ \Psi$. Operationally this means that the action of a particular simulation could always be reversed. Namely, if $M: {K} \to CS^1_{l,r}$ is any multimeter on a state space ${K}$, then from the factorization of $\mathrm{id}$ we have that $M = \Phi \circ \Psi \circ M$ so that the simulation of $M$ by the channel $\Psi$ can be reversed by applying the channel $\Phi$. Naturally, there are cases when such factorization cannot be feasible:  if $g=1$ so that $\Psi \circ M$ is a single measurement meaning that $M = \Phi \circ \Psi \circ M$ implies the compatibility of all of the measurements in $M$ even when $M$ would contain incompatible measurements. On the other hand, in some cases this kind of reversibility is natural, for example when $k\geq l$ and $g\geq r$ so that $CS^1_{l,r}$ can be trivially embedded in $CS^1_{k,g}$. We can use \Cref{prop:sim-irr-polysimplices} to show that actually this is the only case when the identity channel can be factorized like that.

\begin{prop}
Let $l,r \in \nat$. The identity channel $\mathrm{id}: CS^1_{l,r} \to CS^1_{l,r}$ can be factorized through $CS^1_{k,g}$ for some $k,g \in \mathds{N}$ if and only if $k \geq l$ and $g \geq r$.
\end{prop}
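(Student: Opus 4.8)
The plan is to prove the two implications separately: the forward (``if'') direction by exhibiting an explicit pair of channels realizing $CS^1_{l,r}$ as a retract of $CS^1_{k,g}$, and the converse by recognizing a factorization of $\mathrm{id}$ as a classical simulation and then invoking the classification of the simulation-irreducible measurements on $CS^1$-spaces.

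For the ``if'' direction, assume $k\ge l$ and $g\ge r$. I would define $\Psi\colon CS^1_{l,r}\to CS^1_{k,g}$ by padding each of the first $r$ columns of an $l\times r$ column-stochastic matrix $X$ with $k-l$ extra zero rows and filling each of the remaining $g-r$ columns with a copy of the first column of $X$, again padded with zeros; and $\Phi\colon CS^1_{k,g}\to CS^1_{l,r}$ by keeping the first $r$ columns, discarding the rest, and merging the outcomes $l,l+1,\dots,k$ into the single outcome $l$. Both maps are linear, positive and normalization-preserving, hence channels, and since the extra rows vanish on the image of $\Psi$ one reads off $\Phi\circ\Psi=\mathrm{id}$ immediately. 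This step is routine.

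For the ``only if'' direction, suppose $\mathrm{id}=\Phi\circ\Psi$ with channels $\Psi\colon CS^1_{l,r}\to CS^1_{k,g}$ and $\Phi\colon CS^1_{k,g}\to CS^1_{l,r}$; we may assume $l\ge 2$, as the case $l=1$ is degenerate. Viewing $\mathrm{id}$ as the multimeter $\{m^{(1)},\dots,m^{(r)}\}$ on $CS^1_{l,r}$ (with $l$ outcomes each), \Cref{cor:simulation-factorization} tells us that $\mathrm{id}=\Phi\circ\Psi$ is exactly the assertion that this multimeter is classically simulable by the multimeter $\Psi$, which has $g$ measurements with $k$ outcomes each. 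In particular each $m^{(y)}$, $y\in[r]$, is classically simulable by $\Psi$; since $m^{(y)}$ is extremal simulation irreducible by \Cref{prop:sim-irr-polysimplices}, \Cref{def:sim-irr} provides for each $y\in[r]$ an index $j_y\in[g]$ such that the measurement $\Psi_{\cdot|j_y}$ is postprocessing equivalent to $m^{(y)}$. From this I would extract the two bounds. For $g\ge r$ I would show $y\mapsto j_y$ is injective: if $j_y=j_{y'}$ with $y\ne y'$ then $m^{(y)}$ and $m^{(y')}$ would be postprocessing equivalent, but evaluating any putative postprocessing $m^{(y)}_a=\sum_b\nu_{a|b}m^{(y')}_b$ on the extreme points $s_{i_1,\dots,i_r}$ gives $\delta_{a,i_y}=\nu_{a|i_{y'}}$, which is impossible for $l\ge 2$ since the left-hand side depends on $i_y$ while the right-hand side does not. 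For $k\ge l$ I would fix any $y$ and note that postprocessing equivalence forces every effect of $m^{(y)}$ into the linear span of the (at most $k$) effects of $\Psi_{\cdot|j_y}$; by \Cref{prop:sim-irr-extr}, $m^{(y)}$ consists of $l$ linearly independent effects, so that span has dimension at least $l$, giving $k\ge l$.

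The main point that needs care is the injectivity step behind $g\ge r$: it rests on the column-reading measurements $m^{(y)}$ lying in pairwise distinct postprocessing equivalence classes, which must be verified by the short evaluation on extreme points above, together with a check that the degenerate cases ($l=1$, or $r=1$ where there are no two distinct settings) do not spoil the statement. All remaining ingredients are structural facts about $CS^1$-spaces already available, so no further machinery is required.
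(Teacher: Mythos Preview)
Your proposal is correct and follows essentially the same route as the paper: the explicit padding/projection for the ``if'' direction, and for the converse viewing $\mathrm{id}$ as the multimeter $\{m^{(y)}\}_{y\in[r]}$, invoking \Cref{cor:simulation-factorization} and \Cref{prop:sim-irr-polysimplices} to obtain for each $y$ a postprocessing-equivalent $\Psi_{\cdot|j_y}$, and then extracting $g\ge r$ from pairwise inequivalence of the $m^{(y)}$ and $k\ge l$ from a dimension count. The only cosmetic differences are that you spell out the inequivalence argument on extreme points (the paper simply asserts it) and use a linear-span dimension argument for $k\ge l$ where the paper invokes the minimally-sufficient-representative characterization; both are equivalent here.
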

\begin{proof}
If $k \geq l$ and $g \geq r$, we can trivially consider a column stochastic matrix of size $l \times r$ as a column stochastic matrix of size $k \times g$ by adding zero rows and adding extra columns with element 1 on some row (this would correspond to adding trivial measurement $\one_K$ to your multimeter).

We first note that the identity channel $\mathrm{id}: CS^1_{l,r} \to CS^1_{l,r}$ can be seen as a multimeter on $CS^1_{l,r}$. Let us write this multimeter explicitly in the basis of $CS_{l,r}$. In fact, one can confirm that actually
\begin{equation}
    \mathrm{id}(X) = \one_{CS^1_{l,r}}(X) s_{l, \ldots, l} + \sum_{y =1}^r \sum_{b=1}^{l-1} m^{(y)}_b(X) e^{(y)}_b
\end{equation}
for all $X \in CS_{l,r}$. Thus, $\mathrm{id}$ corresponds to a multimeter consisting of the measurements $\{m^{(y)}\}_{y\in [r]}$ on $CS^1_{l,r}$.

If now the identity channel $\mathrm{id}: CS^1_{l,r} \to CS^1_{l,r}$ is factorizable through $CS^1_{k,g}$ then there exists channels $\Psi: CS^1_{l,r} \to CS^1_{k,g}$ and $\Phi: CS^1_{k,g} \to CS^1_{l,r}$ such that $\mathrm{id}= \Phi \circ \Psi$. We can now interpret $\Psi = \{\Psi_{\cdot|x}\}_{x \in [g]}$ as a multimeter on $CS^1_{l,r}$ which thus consists of $g$ measurements with $k$ outcomes. Thus, by  \Cref{cor:simulation-factorization} the factorization $\mathrm{id}= \Phi \circ \Psi$ can be interpreted as classically simulating the multimeter $\mathrm{id}$ by the multimeter $\Psi$. However, since by \Cref{prop:sim-irr-polysimplices} the multimeter $\mathrm{id}$ consists only of simulation irreducible measurements  $\{m^{(y)}\}_{y\in [r]}$, it means that for all $y \in [r]$, there exists $x_y \in [g]$ such that $\Psi_{\cdot | x_y}$ is post-processing equivalent with $m^{(y)}$. Also, because none of the measurements $m^{(1)}, \ldots, m^{(r)}$ are post-processing equivalent, we must have that $x_y \neq x_{y'}$ for all $y,y' \in [r]$ such that $y \neq y'$. This can only hold if $g \geq r$. Furthermore, since $m^{(y)}$ is the unique minimally sufficient representative of its equivalence class $[[m^{(y)}]]$, to which $\Psi_{\cdot | x_y}$ also belongs, we must have that $m^{(y)}$ can be obtained from $\Psi_{\cdot | x_y}$ by joining all of its pairwise linearly dependent effects as was explained in \Cref{prop:sim-irr-extr}. This means that for the outcomes we must have that $k \geq l$. 
\end{proof}

Thus, the only time an identity channel on a state space of column stochastic matrices can be factorized through a state space of column stochastic matrices of different size is exactly when column stochastic matrix is a submatrix of a larger column stochastic matrix . This means that no pathologies appear in the classical simulability of a multimeter $M$, since otherwise we could decrease the size of the column stochastic matrices via factorization of the identity channel.

\subsection{Compatibility}

Recall that a multimeter $M_{a|x}$ acting on a GPT state space $K$ can be viewed as a channel $M: K \to CS^1_{k,g}$, where $k$ denotes the number of outcomes for each of the $g$ measurement settings. The state space $CS^1_{k,g}$ is a polysimplex.

Let us prove the following well-known result using the factorization characterization of compatibility. 

\begin{prop}
    Let $M_{a|x}$ be an arbitrary multimeter, $P_{a|x} = p_{a|x} \one_K$ a trivial multimeter, and $q_x$ a probability measure on $g$. Then, the noisy multimeter
    \begin{equation}
        N_{a|x} = q_x M_{a|x} + (1-q_x)p_{a|x} \one_K
    \end{equation}
    is compatible. In particular,
    \begin{equation}
        \tilde M_{a|x} = \frac 1 g M_{a|x} + (1-\frac1 g) \frac{\one_K}{k}
    \end{equation}
    is compatible.
\end{prop}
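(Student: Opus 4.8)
The plan is to use the factorization characterization of compatibility from \Cref{cor:sep-com}: to show the noisy multimeter $N=\{N_{\cdot|x}\}_{x\in[g]}$ is compatible it suffices to produce a finite number of outcomes $\Lambda$, a single measurement $G\colon K\to S_\Lambda$, and a classical post-processing channel $\Phi\colon S_\Lambda\to CS^1_{k,g}$ with $N=\Phi\circ G$. The construction is the familiar ``measure-and-prepare'' joint measurement, and the only place the hypotheses enter is that $(q_x)_{x\in[g]}$ is a probability distribution, so $\sum_{x=1}^g q_x=1$.

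First I would set $\Lambda:=g\cdot k$ and label the vertices of $S_\Lambda$ by pairs $(x',a')\in[g]\times[k]$. Because $\sum_{x'}q_{x'}=1$, the operators $C_{(x',a')}:=q_{x'}M_{a'|x'}$ form a POVM on $K$: they are effects and $\sum_{x',a'}C_{(x',a')}=\sum_{x'}q_{x'}\sum_{a'}M_{a'|x'}=\sum_{x'}q_{x'}\one_K=\one_K$. Let $G$ be the associated measurement, $G(\varrho)=\sum_{x',a'}C_{(x',a')}(\varrho)\,\delta_{(x',a')}$. For the post-processing I would set, for each target setting $x\in[g]$ and observed pair $(x',a')$,
\begin{equation*}
    p_{a|x,(x',a')}:=\delta_{x,x'}\delta_{a,a'}+(1-\delta_{x,x'})\,p_{a|x},
\end{equation*}
which for fixed $x,(x',a')$ is a probability distribution over $a\in[k]$; hence the matrices with entries $(a,x)\mapsto p_{a|x,(x',a')}$ lie in $CS^1_{k,g}$, and extending convex-linearly defines a channel $\Phi\colon S_\Lambda\to CS^1_{k,g}$. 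A short computation then gives, for all $a\in[k]$, $x\in[g]$,
\begin{equation*}
    \sum_{x',a'}p_{a|x,(x',a')}\,C_{(x',a')}=q_xM_{a|x}+p_{a|x}\!\sum_{x'\neq x}q_{x'}\!\sum_{a'}M_{a'|x'}=q_xM_{a|x}+(1-q_x)p_{a|x}\one_K=N_{a|x},
\end{equation*}
that is, $N=\Phi\circ G$, so \Cref{cor:sep-com} yields that $N$ is compatible.

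The ``in particular'' part is then immediate by specialization: take $q_x=1/g$ for all $x\in[g]$ (a probability distribution on $[g]$) and $p_{a|x}=1/k$ for all $a,x$, so that $N_{a|x}=\frac1g M_{a|x}+(1-\frac1g)\frac{\one_K}{k}=\tilde M_{a|x}$. I do not expect a genuine obstacle here; the argument is essentially bookkeeping, and the only substantive point is that the normalization $\sum_x q_x=1$ is exactly what makes $C$ a POVM (and hence $G$ a legitimate measurement), which in turn is why the noise weights $1-q_x$ can be re-absorbed by the post-processing. One could equivalently argue directly from the definition of compatibility, with joint measurement $C$ and conditional distributions $p_{\cdot|x,(x',a')}$, but routing through the factorization picture keeps the proof in the spirit of this section.
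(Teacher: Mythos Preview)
Your proof is correct. The joint measurement $C_{(x',a')}=q_{x'}M_{a'|x'}$ and the post-processing $p_{a|x,(x',a')}=\delta_{x,x'}\delta_{a,a'}+(1-\delta_{x,x'})p_{a|x}$ do the job, and the verification is accurate.

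The paper's proof takes a slightly different, more structural route: for each $z\in[g]$ it defines the multimeter $M^{(z)}$ whose $z$-th measurement is $M_{\cdot|z}$ and whose other measurements are the trivial ones $p_{\cdot|x}\one_K$; it then observes that each $M^{(z)}$ factors through $S_k$ (via the marginal $\operatorname{marg}_z$ followed by a trivial re-embedding), hence is compatible, and concludes by the convex decomposition $N=\sum_z q_z M^{(z)}$. Your construction is precisely what one gets by unpacking this convexity step into a single joint measurement: randomizing over $z$ with weights $q_z$ and then measuring $M_{\cdot|z}$ yields exactly your $C$, and your post-processing is the composite of the individual $\operatorname{triv}_z$ maps. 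So the two arguments are the same construction viewed from different angles---the paper emphasizes the factorization picture and convexity of compatibility, while you write down the joint object directly. Your version is a bit more self-contained; the paper's version makes the role of each setting more visible.
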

\begin{proof}
    For any $z \in [g]$, consider the following multimeter:
    \begin{equation}
        M^{(z)}_{a|x} =
        \begin{cases}
        M_{a|x} &\qquad \text{ if }x=z\\
        p_{a|x}\one_K&\qquad \text{ if }x \neq z.
        \end{cases}
    \end{equation}
    The intuition from quantum mechanics is that this multimeter should be compatible, since the effects of different measurements commute. Indeed, we have 
    \begin{equation}
    	\left[K \xlongrightarrow{M^{(z)}} CS^1_{k,g}\right] = \left[ K \xlongrightarrow{M} CS^1_{k,g}  \xlongrightarrow{\operatorname{marg}_z} S_k \xlongrightarrow{\operatorname{triv}_z} CS^1_{k,g}\right].
    \end{equation}
    where $\operatorname{marg}_z$ is the channel that discards all the indices but $z$: 
    \begin{equation}
	   \operatorname{marg}_z[(q_{a|x})_{a \in [k], x \in [g]}] = q_{\cdot|z}
    \end{equation}
    and $\operatorname{triv}_z$ is the channel that embeds the $z$-th copy of the simplex $S_k$ into the polysimplex $CS^1_{k,g}$ in the following way:
    \begin{equation}
        \operatorname{triv}_z[(q_a)] = (r_{a|x}) \quad \text{ with } r_{a|x} = \begin{cases}
        q_{a} &\qquad \text{ if }x=z\\
        p_{a|x}\one_K &\qquad \text{ if }x \neq z.
        \end{cases}
    \end{equation}
    Note that the multimeter $M^{(z)}$ is compatible, since it factors through the $z$-th copy of $S_k$. Finally, we have
    \begin{equation}
        N = \sum_{z \in [g]} q_z M^{(z)}
    \end{equation}
    proving the claim via convexity.
\end{proof}

Let us consider now noisy multimeters as GPT channels between some GPT with state space $K$ and some noisy version of the polysimplex, seen as a GPT, that depends on the amount and the type of noise considered. 

We have 
\begin{equation}
	K \xlongrightarrow{M} CS^1_{k,g; \, \text{noisy}}.
\end{equation}
Let us consider two examples. Firstly, let us consider adding white noise to $k$-outcome measurements: The noisy effects are
\begin{equation}
	\tilde A_i := t A_i + (1-t)\frac{\one_K}{k},
\end{equation}
where $t \in [0,1]$ is the \emph{noise parameter}. The value $t=1$ corresponds to having no noise (the original POVM), while the value $t=0$ corresponds to the uniform trivial measurement having effects $\one_K / k$. Note that the range of the map 
\begin{equation}
	K \xlongrightarrow{A} S_k,
\end{equation}
when restricted to states in $K$, is necessarily \emph{smaller} than the full probability simplex on $k$ vertices. Indeed, if the outcome probabilities for the original measurement are $p$, then the probabilities corresponding to the noisy measurement are 
\begin{equation}
	\tilde p_i = t p_i + \frac{1-t}{k}.
\end{equation}
Hence, the noisy measurement has range 
\begin{equation}
	K \xlongrightarrow{\tilde A} S_{k; t},
\end{equation}
where $S_{k; t}$ is the state space of the GPT having $\Rnum^k$ as a vector space, an order unit identical to the one of $S_k$, i.e., for $v \in \mathds R^k$
\begin{equation}
	\one_{S_{k; t}}(v) = \sum_{i=1}^k v_i,
\end{equation}
but a \emph{thinner} cone given by
\begin{equation}
	(\Rnum_+^k)_t := \{ v\in \Rnum^k \, : \, \forall i \in [k], \, v_i \geq (1-t) \bar v, \text{ where } \bar v:=k^{-1}\sum_{j=1}^k v_j\}.
\end{equation}
Geometrically, the set $S_{k; t}$ is obtained by scaling the usual simplex $S_k$ by a factor of $t$, around its ``central'' point $(1/k, \ldots, 1/k)$. 

As a second example, let us consider a multimeter $M$ with two $2$-outcome measurements, to which we apply uniform noise, with parameters $t_1$ and $t_2$ respectively: 
\begin{equation}
	\tilde M_{a|x} = t_x M_{a|x} + (1-t_x) \frac{\one_K}{2}.
\end{equation}
The situation is similar to the one before: the noisy multimeter has range 
\begin{equation}
	K \xlongrightarrow{\tilde M} CS^1_{2,2; (t_1,t_2)},
\end{equation}
where the noisy polysimplex $CS^1_{2,2; (t_1,t_2)}$ has vector space 
\begin{equation}
	\Rnum^{2+2} \cap E,
\end{equation}
where 
\begin{equation}
	E = \{ (v_{a|x}) \in \Rnum^4 \, : \, \sum_a v_{a|1} = \sum_{a'} v_{a'|2} \},
\end{equation}
order unit
\begin{equation}
	\one_{CS^1_{2,2; (t_1,t_2)}}(v) = \sum_a v_{a|1} = \sum_{a'} v_{a'|2},
\end{equation}
and cone
\begin{equation}
	\left[(\Rnum_+^2)_{t_1} \times \Rnum_+^2)_{t_2} \right] \cap E,
\end{equation}
where the thin cones $(\Rnum_+^2)_{t_1}$ and $(\Rnum_+^2)_{t_2}$ have been defined above. We display the noisy (poly)simplices in \Cref{fig:noisy-polysimplices}.

\begin{figure}[htb]
    \centering
    \includegraphics[width=0.3\linewidth]{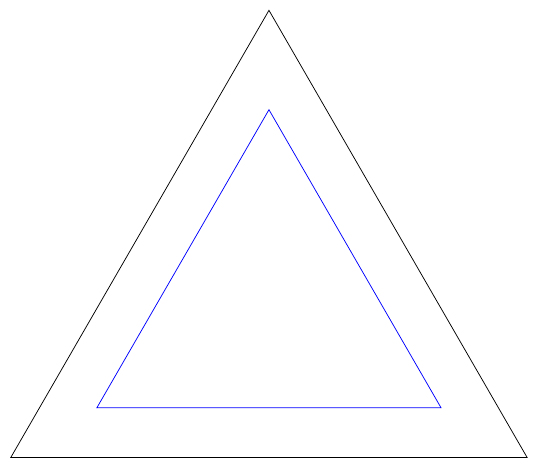}\qquad\qquad\qquad
    \includegraphics[width=0.3\linewidth]{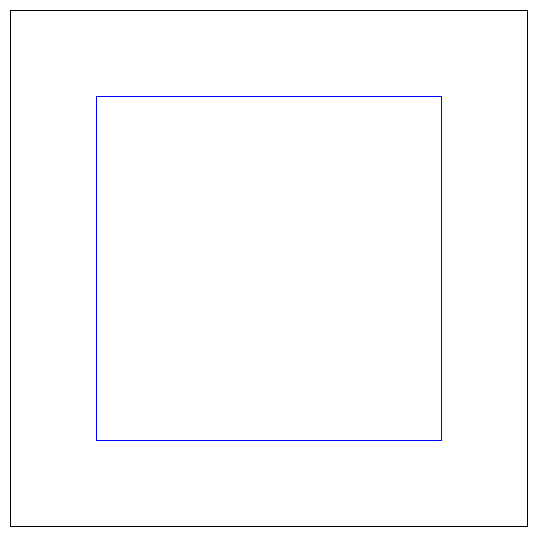}
    
    \caption{State spaces of the probability simplex GPT $S_3$ (left) and the polysimplex $CS^1_{2,2}$, along with their noisy versions in blue, for noise parameters $t=2/3$, respectively $(t_1,t_2) = (2/3,2/3)$.}
    \label{fig:noisy-polysimplices}
\end{figure}

We shall be interested in what follows in the following question: is it possible to factorize the identity map between a noisy version of some (poly)simplex and its noiseless version through a different (poly)simplex: 
\begin{equation}\label{eq:factorization-general}
    CS^1_{k,g; \, \text{noisy}} \xlongrightarrow{\operatorname{id}} CS^1_{k,g} = CS^1_{k,g; \, \text{noisy}} \xlongrightarrow{\varphi} CS^1_{l,g} \xlongrightarrow{\psi} CS^1_{k,g}?
\end{equation}
Above, the maps $\varphi, \psi$ are GPT channels, i.e.~positive and unit-preserving linear maps.

This type of factorization implies that a noisy multimeter with outcome counts $\underline k$ can be simulated by a multimeter with outcome counts $\underline l$, and that \emph{independently of the GPT} $G$ that the multimeter acts on. Importantly, the factorization above depends only on the respective polysimplices, no other physical GPTs being involved. 

In practice, we shall tackle the existence of such factorizations from a geometric perspective, as follows: First, choose coordinates for the vector space space of the GPT $CS^1_{k,g}$ such that
\begin{equation}
	V(CS^1_{k,g}) \ni v = (s, v_1, \ldots, v_n) \quad \text{ and } \quad \one_{CS^1_{k,g}}(v) = s,
\end{equation}
where $n:=g(k-1)$. 
Proceed similarly for the GPT $CS^1_{l,g}$ and its associated vector space $\Rnum^m$, where $m=g(l-1)$. Note that $m \geq n$ is a necessary condition for the existence of such a factorization, for obvious rank reasons. We have the following result. 

\begin{prop}\label{prop:robustness-inclusion}
     A factorization as in \cref{eq:factorization-general} exists if and only if there exists $T$, an affine image of $CS^1_{l,g}$ in $\Rnum^m$, such that:
\begin{itemize}
    \item $CS^1_{k,g; \, \text{noisy}} \oplus 0_{m-n} \subseteq T$;
    \item there exists a projection $\Pi : \Rnum^m \to \Rnum^n$, not necessarily orthogonal, such that $\Pi(T) \subseteq CS^1_{k,g}$.
\end{itemize}
\end{prop}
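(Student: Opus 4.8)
The plan is to translate the factorization question into a purely geometric statement about affine maps between the state spaces, using the fact that a GPT channel between state spaces $\mathcal D_1 \to \mathcal D_2$ is exactly an affine map sending $\mathcal D_1$ into $\mathcal D_2$ (this is just the content of \Cref{sec:maps-between-GPTs}: channels are positive normalization-preserving linear maps, equivalently affine maps between the bases of the cones). So $\varphi$ in \cref{eq:factorization-general} is an affine map $CS^1_{k,g} \to CS^1_{l,g}$ and $\psi$ is an affine map $CS^1_{l,g} \to CS^1_{k,g}$, with $\psi \circ \varphi$ agreeing with the identity on $CS^1_{k,g;\,\text{noisy}} \subseteq CS^1_{k,g}$. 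I would work throughout in the coordinates fixed just before the statement, where $V(CS^1_{k,g}) = \Rnum^{1+n}$ with $\one_{CS^1_{k,g}}(s,v_1,\dots,v_n)=s$, and likewise $V(CS^1_{l,g})=\Rnum^{1+m}$, $m=g(l-1)\geq n=g(k-1)$; restricting to the normalization hyperplane $s=1$ turns channels into affine maps of the state spaces living in $\Rnum^n$ resp.\ $\Rnum^m$.

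The forward direction: given the factorization $\operatorname{id}_{|\text{noisy}} = \psi \circ \varphi$, set $T := \varphi(CS^1_{k,g})$, which is an affine image of $CS^1_{l,g}$'s... wait — more precisely $T := \varphi(CS^1_{l,g})$? No: $\varphi$ has domain $CS^1_{k,g}$. The object one wants is the affine image of the \emph{intermediate} space $CS^1_{l,g}$ sitting in $\Rnum^m$; the natural candidate is $T:=CS^1_{l,g}$ itself embedded via the chosen coordinates (an affine image of itself under the identity), and then $\varphi$ is an affine map of $\Rnum^n$ into $\Rnum^m$ whose restriction to the slice $s=1$ lands in $T$. I would instead take $T$ to be the image under an affine embedding that realizes $CS^1_{l,g}$ in $\Rnum^m$ compatibly with $\varphi$: concretely, extend the affine map $\varphi$ (on the slice) to an affine map $\Rnum^n\to\Rnum^m$, let $T$ be $CS^1_{l,g}$ in its coordinate slice, and observe (i) since $\varphi$ is the identity on $CS^1_{k,g;\,\text{noisy}}$ up to the later application of $\psi$, one gets $CS^1_{k,g;\,\text{noisy}}\oplus 0_{m-n}\subseteq T$ after identifying the first $n$ coordinates appropriately, and (ii) $\psi$ provides the (generally non-orthogonal) linear projection $\Pi:\Rnum^m\to\Rnum^n$ with $\Pi(T)=\psi(CS^1_{l,g})\subseteq CS^1_{k,g}$. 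The reverse direction is the easy bookkeeping: given $T$, the inclusion $CS^1_{k,g;\,\text{noisy}}\oplus0_{m-n}\subseteq T$ together with affine equivalence $T\cong CS^1_{l,g}$ produces the channel $\varphi$ (it maps the noisy states into $T\cong CS^1_{l,g}$, and one extends affinely to all of $CS^1_{k,g}$ — here one needs that $CS^1_{k,g}$ is contained in the affine span so the extension is unique, and that the extension still lands in $T$, which may require enlarging $T$ or choosing the embedding carefully), and the projection $\Pi$ with $\Pi(T)\subseteq CS^1_{k,g}$ gives $\psi$; composing recovers the identity on the noisy subset.

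The main obstacle I anticipate is the bookkeeping around \emph{extending} the affine map defined on the noisy (poly)simplex to the full noiseless $CS^1_{k,g}$ while keeping the image inside $T$, and dually ensuring the projection $\Pi$ can be chosen linear (not merely affine) and still maps all of $T$ into $CS^1_{k,g}$ rather than just the relevant subset — the non-orthogonality of $\Pi$ is exactly the degree of freedom needed here. One has to be careful that ``affine image of $CS^1_{l,g}$'' is the right notion of the intermediate object: it must be affinely \emph{isomorphic} to $CS^1_{l,g}$ (so that it genuinely carries the GPT structure of the intermediate system), not merely an affine image, since a degenerate affine image would correspond to a non-faithful channel and break the correspondence. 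I would handle this by phrasing $T$ as the image of an \emph{injective} affine map on $CS^1_{l,g}$, using the rank condition $m\geq n$ to guarantee enough room, and by absorbing the normalization constraint $\one$ into the requirement that $\Pi$ and the embedding respect the functional $s$. Modulo these care points, the equivalence is a direct unwinding of ``channel $=$ affine map between state spaces'' applied twice to the two arrows of the factorization.
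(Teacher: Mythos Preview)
Your attempt circles the right objects but misidentifies $T$. You try to make $T$ be (an isomorphic copy of) $CS^1_{l,g}$ itself and then push the noisy set into it via $\varphi$; this is why you cannot get the first bullet to work --- $\varphi$ is not the standard inclusion $\iota$, only $\psi\circ\varphi$ is the identity on the noisy set, so there is no reason for $\varphi$ to land in $\Rnum^n\oplus 0_{m-n}$. Your closing claim that $T$ must be affinely \emph{isomorphic} to $CS^1_{l,g}$ is also wrong: the statement says ``affine image'', i.e.\ the image under some affine map, and degeneracy is not only allowed but forced.

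The paper's choice is the opposite of yours: it sets
\[
T := \iota\bigl(\tilde\psi(CS^1_{l,g})\bigr),
\]
where $\iota:\Rnum^n\hookrightarrow\Rnum^m$ is the standard inclusion $x\mapsto(x,0)$ and $\tilde\psi$ the linear extension of $\psi$. Thus $T$ is the image of $CS^1_{l,g}$ under the affine map $\iota\circ\tilde\psi$; it sits entirely inside $\Rnum^n\oplus 0_{m-n}$ and is at most $n$-dimensional, certainly not isomorphic to $CS^1_{l,g}$. With this definition both bullets are one-liners: for the second, take $\Pi=\pi$ the standard coordinate projection, so $\Pi(T)=\tilde\psi(CS^1_{l,g})\subseteq CS^1_{k,g}$ because $\psi$ is a channel; for the first, any noisy $x$ satisfies $\iota(x)=\iota(\tilde\psi(\tilde\varphi(x)))\in T$ since $\tilde\varphi(x)\in CS^1_{l,g}$ and $\psi\circ\varphi=\operatorname{id}$ on the noisy set. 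The obstacles you worry about (extending an affine map off the noisy set while staying inside $T$, forcing injectivity of the embedding, exploiting non-orthogonality of $\Pi$) are artifacts of the wrong choice of $T$ and do not arise once $T$ is built from $\psi$ rather than from $\varphi$.
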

\begin{proof}
    Consider the linear extensions of the affine maps $\varphi$ and $\psi$ from \cref{eq:factorization-general}. Let $\tilde \varphi: \Rnum^n \to \Rnum^m$ and $\tilde \psi: \Rnum^m \to \Rnum^n$ denote their unique linear extensions to the corresponding vector spaces. Let $\iota: \Rnum^n \to \Rnum^m$ be the standard inclusion $x \mapsto (x, 0)$ and $\pi: \Rnum^m \to \Rnum^n$ be the standard projection $(x, y) \mapsto x$. Note $\pi \circ \iota = \operatorname{id}_{\Rnum^n}$. 

    Define the affine map $A: \Rnum^m \to \Rnum^m$ by $A = \iota \circ \tilde \psi $.
    Let $T := A(CS^1_{l, g}) = \iota (\tilde\psi(CS^1_{l, g}))$. 
    Since $\iota$ is injective and $\tilde \psi$ maps the $m$-dimensional affine space $\operatorname{aff}(CS^1_{l, g})$ to the $n$-dimensional affine space $\operatorname{aff}(CS^1_{k, g})$, $T$ is an affine image of $CS^1_{l, g}$. It is obvious that $\Pi:=\pi$ verifies the second point in the statement.
\end{proof}

\subsection{Two noisy binary measurements simulated by a three-outcome measurement}

It is a standard result that any pair of dichotomic measurements are compatible, provided that they have noise parameters $t_1=t_2=t \leq 1/2$. Moreover, the two measurements can be post-processed from a joint $4$-outcome measurement. We shall now investigate the same question, with the difference that we ask about pairs of dichotomic measurements that can be post-processed from a $3$-outcome measurement. 

\begin{prop}
    For all $t \leq 1/3$, the identity map 
    \begin{equation}
        \operatorname{id} : CS^1_{2,2; (t,t)} \to CS^1_{2,2}
    \end{equation}
    factorizes through $S_3$. In particular, any pair of noisy binary measurements with noise parameter $t \leq 1/3$ admit a joint measurement with $3$ outcomes.
\end{prop}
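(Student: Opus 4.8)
The plan is to reduce the claim to an elementary planar geometry problem and then exhibit an explicit triangle. As recalled around \Cref{thm:CHSH-is-linear-isom}, $CS^1_{2,2}$ is affinely isomorphic to the unit square via $\begin{bmatrix} a & b \\ 1-a & 1-b\end{bmatrix} \leftrightarrow (a,b) \in [0,1]^2$, and under this identification the noisy polysimplex $CS^1_{2,2;(t,t)}$ corresponds to the concentric subsquare $Q_t := \left[\tfrac{1-t}{2}, \tfrac{1+t}{2}\right]^2$ of side length $t$, since $\tilde p_{1|x} = t\, p_{1|x} + \tfrac{1-t}{2}$. Recall also (\Cref{cor:sep-com}) that factorizing a multimeter through $S_\Lambda$ is exactly compatibility with a $\Lambda$-outcome joint measurement; hence the ``in particular'' clause follows immediately from the factorization of $\operatorname{id}$: any noisy multimeter $\tilde M : K \to CS^1_{2,2;(t,t)}$ becomes $\tilde M = \psi \circ (\varphi \circ \tilde M)$ with $\varphi \circ \tilde M : K \to S_3$ a single three-outcome measurement.

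First I would make the reduction precise. The vector spaces underlying $S_3$ and $CS^1_{2,2}$ are both three-dimensional, so any factorization $\operatorname{id} = \psi \circ \varphi$ with channels $\varphi : CS^1_{2,2;(t,t)} \to S_3$, $\psi : S_3 \to CS^1_{2,2}$ forces $\varphi$ to be invertible and $\psi = \varphi^{-1}$; conversely, a factorization exists if and only if there is an affine triangle $\Delta$ with
\begin{equation*}
    CS^1_{2,2;(t,t)} \subseteq \Delta \subseteq CS^1_{2,2}.
\end{equation*}
Indeed, given such a $\Delta$, let $\psi : S_3 \xrightarrow{\sim} \Delta$ be an affine identification and $\varphi := \psi^{-1}$; since $Q_t \subseteq \Delta$ one has $\varphi(Q_t) \subseteq S_3$, and positivity (with respect to the thin cone of the noisy polysimplex) and unitality of both maps follow from the fact that they carry one state space into another and that these state spaces linearly span the ambient spaces.

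Next I would write down the triangle. In $(a,b)$-coordinates take $\Delta$ with vertices $(0,0),\ (1,0),\ \left(\tfrac12,1\right)$, that is $\Delta = \{(a,b) : b \geq 0,\ b \leq 2a,\ b \leq 2-2a\}$; its vertices lie in $[0,1]^2$, so $\Delta \subseteq CS^1_{2,2}$. As $Q_t$ is a polytope, $Q_t \subseteq \Delta$ may be checked on its four corners: the constraint $b \geq 0$ always holds, while among the corners the constraint $b \leq 2a$ is tightest at $\left(\tfrac{1-t}{2}, \tfrac{1+t}{2}\right)$, giving $\tfrac{1+t}{2} \leq 1-t$, and $b \leq 2-2a$ is tightest at $\left(\tfrac{1+t}{2}, \tfrac{1+t}{2}\right)$, giving again $\tfrac{1+t}{2} \leq 1-t$; both are equivalent to $t \leq \tfrac13$. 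So for $t \leq \tfrac13$ we get $Q_t \subseteq \Delta \subseteq [0,1]^2$ and the required factorization. Concretely $\psi : S_3 \to CS^1_{2,2}$ sends the three vertices of $S_3$ to $\begin{bmatrix}0&0\\1&1\end{bmatrix}$, $\begin{bmatrix}1&0\\0&1\end{bmatrix}$, $\begin{bmatrix}1/2&1\\1/2&0\end{bmatrix}$.

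The remaining work — verifying that these two affine maps are genuine GPT channels (positivity with respect to the appropriate cones, unitality, and that $\varphi$ really lands in $S_3$ rather than merely in its affine hull) — is routine and follows from base-to-base containment and positive homogeneity. The only real content is the geometric estimate above, and I do not expect a genuine obstacle. The main care point is the reduction step, namely that the dimension count forces $\psi$ injective so that ``factorization through $S_3$'' is equivalent to ``$\Delta$ sandwiched between $Q_t$ and the square''; the converse geometric fact that no triangle fits once $t > \tfrac13$ would show $\tfrac13$ is sharp but is not needed here, so I would mention it only as a remark.
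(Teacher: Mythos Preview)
Your proof is correct and is essentially the paper's argument: the triangle with vertices $(0,0),(1,0),(1/2,1)$ that you exhibit is precisely the affine image of $S_3$ drawn in the paper's figure, and the paper's explicit three-outcome measurement $(A,B,C)$ is just the dual (effect-side) description of your map $\psi$ sending $\delta_A\mapsto(1/2,1)$, $\delta_B\mapsto(1,0)$, $\delta_C\mapsto(0,0)$. The only difference is presentational---you verify the inclusion via the corner check $\tfrac{1+t}{2}\le 1-t$, while the paper verifies positivity by writing each effect as a conic combination of $X_i,Y_j$.
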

\begin{proof}
The construction is depicted in \Cref{fig:two-dichotomic-measurements}, left panel. More precisely, noisy measurements 
\begin{align}
    \tilde X_i &= \frac 1 3 X_i + \frac 1 3 \one_K \qquad i=1,2\\
    \tilde Y_j &= \frac 1 3 Y_j + \frac 1 3 \one_K \qquad j=1,2
\end{align}
can be post-processed from the $3$-outcome measurement $(A,B,C)$ with 
\begin{align}
    A &= \tilde Y_1 = \frac 1 3 Y_1 + \frac 1 3 \one_K\\
    B &= \tilde X_1 - \frac 1 2 \tilde Y_1 = \frac 1 3 X_1 + \frac 1 6 Y_2\\
    C &= \one_K - \tilde X_1 - \frac 1 2 \tilde Y_1 = \frac 1 3 X_2 + \frac 1 6 Y_2.  
\end{align}

\end{proof}

\begin{conjecture}
The value of $t$ in the result above is optimal. 
\end{conjecture}

Note that if we allow measurements with four outcomes as the middle polysimplex, then one can achieve $t=1/2$; this is standard result about compatibility of two noisy dichotomic measurements, see \Cref{fig:two-dichotomic-measurements}, right panel.

\begin{figure}
    \centering
    \includegraphics[width=0.3\linewidth]{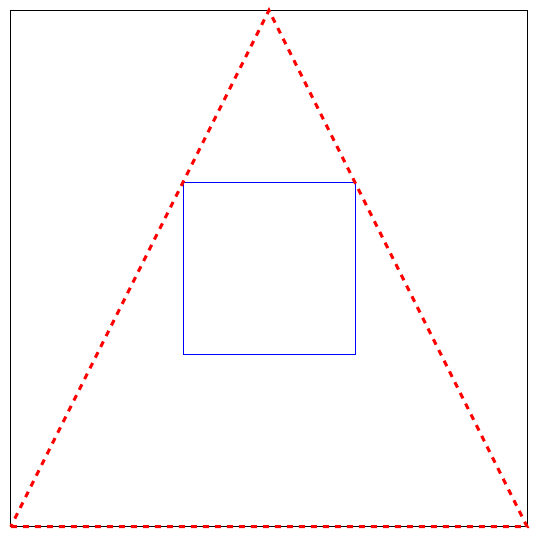}\qquad\qquad
    \includegraphics[width=0.3\linewidth]{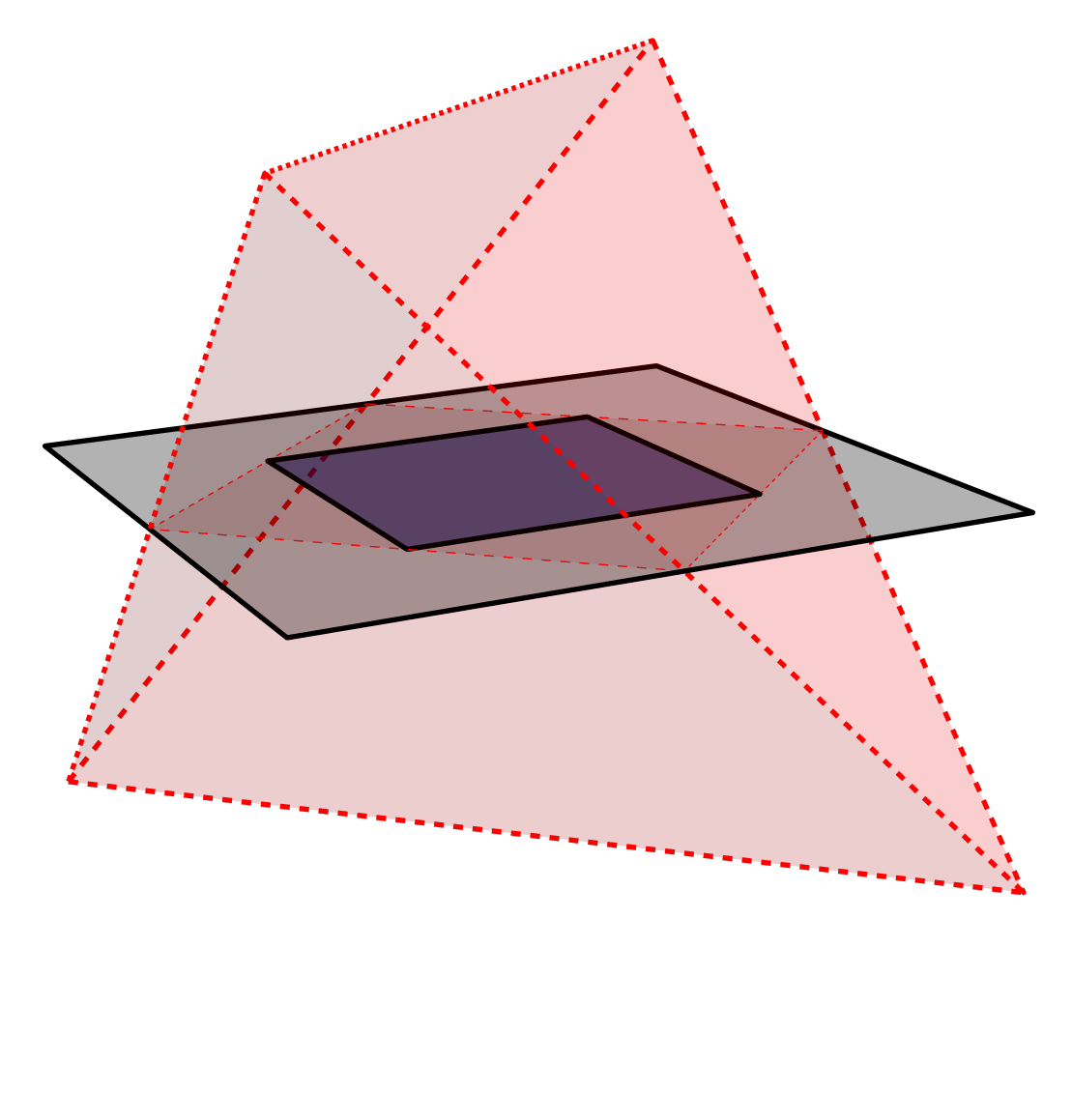}
    \caption{Any pair of two noisy dichotomic measurements with noise parameter $t=1/3$ can be post-processed from a $3$-outcome measurement (left). Any pair of two noisy dichotomic measurements with noise parameter $t=1/2$ can be post-processed from a $4$-outcome measurement (right). The state space of $CS^1_{2,2}$ is the outer square (black), its noisy version is the inner square (blue), and the affine image of the probability simplex $S_3$ (resp.~$S_4$) is the triangle (resp.~tetrahedron), in red. The inclusions of the state spaces correspond to those from \cref{prop:robustness-inclusion}.}
    \label{fig:two-dichotomic-measurements}
\end{figure}

\begin{figure}[!t]
    \centering
    \includegraphics[width=0.4\linewidth]{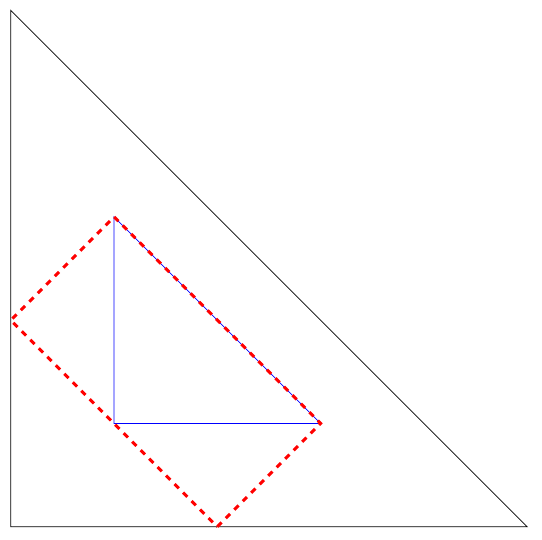}
    \caption{Any noisy measurement with noise parameter $t=2/5$ can be post-processed from a pair of dichotomic measurements. The state space of $S_3$ is the outer triangle (black), its noisy version is the inner triangle (blue), and the affine image of the polysimplex $CS^1_{2,2}$ is the red rectangle. The inclusions of the state spaces correspond to those from \cref{prop:robustness-inclusion}; note that here the state space dimensions are equal.}
    \label{fig:ternary-measurements}
\end{figure}

\subsection{A noisy three-outcome measurement simulated by two dichotomic measurements}

We ask now the reverse question from the previous section: when can one noisy $3$-outcome measurement can be simulated by two dichotomic measurements? 

One can think that this is always possible: given a three outcome measurement $(A,B,C)$, simply define 
\begin{align}
    X_1 = A \qquad &X_2 = B+C\\
    Y_1 = B \qquad &Y_2 = A+C.
\end{align}
However, the original three outcome measurement cannot be post-processed from $X,Y$ because of the ambiguity in the case where the two measurements yield the first result. 

\begin{prop}
    For all $t \leq 2/5$, the identity map 
    \begin{equation}
        \operatorname{id} : S_{3; t} \to S_3
    \end{equation}
    factorizes through $CS^1_{2,2}$. In particular, any noisy ternary measurement with noise parameter $t \leq 2/5$ can be post-processed from a pair of dichotomic measurements.
\end{prop}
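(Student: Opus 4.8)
The plan is to apply the geometric criterion of \cref{prop:robustness-inclusion}. Since the state spaces $S_3$ and $CS^1_{2,2}$ both have affine dimension $2$, that criterion (with the projection $\Pi$ taken to be the identity of $\Rnum^2$) says that the identity map $S_{3;t}\to S_3$ factorizes through $CS^1_{2,2}$ as a composition of channels exactly when there is an affine image $T$ of the state space of $CS^1_{2,2}$ with $S_{3;t}\subseteq T\subseteq S_3$; the channels are then recovered as $\varphi = (T\to\text{square})|_{S_{3;t}}$ and $\psi = (\text{square}\to T)\hookrightarrow S_3$, both positive and unital because affine maps sending a base into a base send the generated cone into the generated cone. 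So I would first fix coordinates: model $S_3$ as the triangle $\mathrm{conv}\{(0,0),(1,0),(0,1)\}\subseteq\Rnum^2$, so that the noisy simplex $S_{3;t}$ (the $t$-scaling of $S_3$ about its centre $(1/3,1/3)$) is the triangle with vertices $\tfrac{1-t}{3}(1,1)$, $(\tfrac{1+2t}{3},\tfrac{1-t}{3})$ and $(\tfrac{1-t}{3},\tfrac{1+2t}{3})$. Since the state space of $CS^1_{2,2}$ is affinely the square $[0,1]^2$ (its matrices being parametrized by $(a,b)\in[0,1]^2$, as already exploited in \cref{thm:CHSH-is-linear-isom}), any parallelogram is an admissible $T$.

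The key step is to produce such a parallelogram. I would use, in the rotated coordinates $x+y$ and $x-y$, the rectangle
\[
    T \ :=\ \Bigl\{(x,y) \ :\ 1 - 2\delta \le x+y \le 1,\ \ |x-y| \le 2w \Bigr\},\qquad \delta = \tfrac{1+2t}{6},\ \ w = \tfrac{t}{2},
\]
which is a parallelogram. The two required inclusions are then checked on vertices. For $S_{3;t}\subseteq T$: the vertex $\tfrac{1-t}{3}(1,1)$ has $x+y = \tfrac{2-2t}{3} = 1-2\delta$ and $x-y=0$; the other two vertices have $x+y = \tfrac{2+t}{3}\in[1-2\delta,1]$ (using $0\le t\le 1$) and $|x-y| = t = 2w$. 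For $T\subseteq S_3$: clearly $x+y\le 1$ on $T$, and the minimum of $x$ (and, by symmetry, of $y$) over $T$ equals $\tfrac12-\delta-w = \tfrac{2-5t}{6}$, which is $\ge 0$ precisely when $t\le \tfrac25$. Hence for all $t\le 2/5$ one has $S_{3;t}\subseteq T\subseteq S_3$ and \cref{prop:robustness-inclusion} gives the factorization $\operatorname{id}:S_{3;t}\to CS^1_{2,2}\to S_3$.

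For the measurement statement: a noisy ternary measurement on a state space $K$ with noise parameter $t$ is exactly a channel $\tilde M:K\to S_{3;t}$, and composing with the factorization yields $\tilde M = \psi\circ(\varphi\circ\tilde M)$, where $\varphi\circ\tilde M:K\to CS^1_{2,2}$ is a multimeter consisting of two dichotomic measurements and $\psi:CS^1_{2,2}\to S_3$ is, by \cref{thm:channel-polysimplices-characterization}, a classical post-processing channel; by \cref{cor:simulation-factorization} this means precisely that $\tilde M$ is classically simulable by a pair of dichotomic measurements. I expect the main obstacle to be purely the two-dimensional geometry of the middle step — guessing the right family of parallelograms squeezed between a triangle and its central $t$-scaling and extracting the optimal sandwiching parameter; with the rotated-rectangle ansatz above in hand, everything reduces to the single inequality $\delta+w\le\tfrac12$, which is what produces the bound $t = 2/5$.
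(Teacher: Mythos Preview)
Your argument is correct and is essentially the paper's approach: both exhibit a parallelogram sandwiched between the scaled triangle $S_{3;t}$ and the full triangle $S_3$, which is exactly what the figure referenced in the paper depicts. The paper differs only in presentation, giving the construction via explicit formulas for the two dichotomic effects $X_1=\tfrac52\tilde A+\tfrac52\tilde B-\one_K$ and $Y_1=-\tfrac54\tilde A+\tfrac54\tilde B+\tfrac12\one_K$ (for $t=2/5$, then smaller $t$ follow by inclusion), which corresponds to a different but equally valid sandwiching parallelogram; your rotated-coordinate rectangle and the resulting inequality $\delta+w\le\tfrac12$ make the bound $t\le 2/5$ more transparent. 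One cosmetic point: \cref{prop:robustness-inclusion} is stated in the paper for a middle polysimplex with the \emph{same} number of settings $g$, whereas here $S_3=CS^1_{3,1}$ and $CS^1_{2,2}$ have different $g$; the proof of that proposition uses only the affine dimensions, so your invocation is fine, but you may want to remark on this.
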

\begin{proof}
The construction is depicted in \Cref{fig:ternary-measurements}. More precisely, the noisy measurement
\begin{align}
    \tilde A = \frac 2 5 A + \frac 1 5 \one_K\\
    \tilde B = \frac 2 5 B + \frac 1 5 \one_K\\
    \tilde C = \frac 2 5 C + \frac 1 5 \one_K
\end{align}
can be post-processed from the two dichotomic measurements $(X_1, \one_K - X_1)$, $(Y_1, \one_K - Y_1)$ with 
\begin{align}
    X_1 &= \frac 5 2 \tilde A + \frac 5 2 \tilde B - \one_K\\ 
    Y_1 &= -\frac 5 4 \tilde A + \frac 5 4 \tilde B + \frac 1 2 \one_K.
\end{align}
One can easily check that $X_1$ and $Y_1$ above define valid measurements.
\end{proof}

\begin{conjecture}
The value of $t$ in the result above is optimal. 
\end{conjecture}

\section*{Acknowledgments}
TA acknowledges support from the Deutsche Forschungsgemeinschaft (DFG, German Research Foundation, project numbers 447948357 and 440958198), the Sino-German Center for Research Promotion (Project M-0294), the German Ministry of Education and Research (Project QuKuK, BMBF Grant No. 16KIS1618K), the DAAD, and the Alexander von Humboldt Foundation. TA and LL are supported by the Business Finland project BEQAH (Between
Quantum Algorithms and Hardware). AB was supported by the ANR project PraQPV, grant number ANR-24-CE47-3023. IN was supported by the ANR project \href{https://esquisses.math.cnrs.fr/}{ESQuisses}, grant number ANR-20-CE47-0014-01. MP acknowledges support from the Niedersächsisches Ministerium für Wissenschaft und Kultur. 

\bibliographystyle{alpha}
\bibliography{main.bbl}

\appendix

\addtocontents{toc}{\protect\setcounter{tocdepth}{1}}

\section{Cones, tensor products, and positive maps} 
\subsection{Cones and ordered vector spaces} \label{appendix:cones}

In \cite{plavala2023general} the notion and several basic results of cones were reviewed.
More information on this topic can be found in \cite{Bruyn2020TensorPO, book_narici} and more about tensor products is available in \cite{book_raymond, book_pisier}.
A central part of the theory of cones, included in the above references, is the equivalence of the existence of a cone in a vector space and the vector space having an order relation.
Therefore, it is reasonable to associate positivity with cones. However, this is what makes the definition of tensor products of cones non trivial and in fact ambiguous as we will discuss shortly.

For the upcoming definitions let $V, \, V_A$ and $V_B$ be finite dimensional real vector spaces. We start with the definition of a cone.

\begin{defi}[Cones of vector spaces]
    A subset $\cone \subset V$ of a vector space is called a \emph{cone} if for all $v \in \cone$ it holds that $\lambda v \in \cone$ for any $\lambda \in \Rnum_+$.
    
    A cone $\cone$ is: 
    \begin{itemize}
        \item convex if $\cone$ is a convex set
        \item closed if $\cone$ is closed in the Euclidean topology on $V$
        \item pointed if $\cone \cap -\cone = \{0\}$
        \item generating if $\operatorname{span}(\cone) = V$, i.e., if $\cone-\cone=\{w-v \,|\, w \in \cone, \, v \in \cone\} =V$
    \end{itemize}
    If a cone $\cone$ is convex, closed, pointed, and generating, it is called a proper cone.
\end{defi}

Throughout this whole work, to avoid pathologies all cones will be proper.
Furthermore $\cone$, $\acone$ and $\bcone$ will denote the corresponding cones of $V$, $V_A$ and $V_B$. An \emph{ordered vector space} is a tuple $(V, V^+)$ of a vector space and a proper cone.

The dual cone is a subset of the dual space $V^*$ of $V$ and defined by collecting all linear functionals that map elements of the cone to a positive number. In this way, the notion of positivity is preserved.

\begin{defi}[Dual cone]
\label{dual_cone}
The dual cone is a subset of all positive functionals on a vector space, i.e., $(V^*)^+ \coloneqq (V^+)^\ast = \{\epsilon \in V^* \, | \, \epsilon(v) \geq 0 \mspace{5mu} \forall v \in \cone\}$.
\end{defi}

An example of a cone is the set of $d \times d$ positive semi-definite complex matrices $\Pos$ (for $d<\infty$). That is, for $A \in \Pos$ it holds that $\langle x, \, A x \rangle \geq 0$ for all $x \in   \Cnum^d$. The underlying vector space for $\Pos$ is the real vector space of $d \times d$ self-adjoint (or Hermitian) complex matrices $\mathcal M(\Cnum)_d^{\mathrm{sa}}$. 

\begin{prop}
\label{pos_self_dual}
The set of positive semi-definite matrices is self dual up to isomorphism, i.e.,\ $\Pos \cong \Pos^*$.
\end{prop}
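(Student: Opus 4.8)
The plan is to exhibit one explicit linear isomorphism of the ambient vector spaces that carries $\Pos$ onto its dual cone $\Pos^*$. First I would equip the real vector space $V := \mathcal{M}(\Cnum)_d^{\mathrm{sa}}$ with the Hilbert--Schmidt form $\langle A, B \rangle_{\mathrm{HS}} := \Tr[AB]$, and check that this is a genuine inner product on $V$: for self-adjoint $A,B$ it is real-valued and symmetric bilinear, and it is positive definite since $\Tr[A^2] = \sum_i \lambda_i^2 \geq 0$ where $\lambda_i \in \Rnum$ are the eigenvalues of $A$, with equality only if $A = 0$. Consequently the map $\Theta : V \to V^*$, $\Theta(A) := \Tr[A\,\cdot\,]$, is a linear isomorphism: injectivity follows from nondegeneracy of the form, and surjectivity from $\dim V = \dim V^* < \infty$.

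The heart of the matter is then to show $\Theta(\Pos) = \Pos^*$. For the inclusion $\Theta(\Pos) \subseteq \Pos^*$: if $A, B \in \Pos$ then $\Tr[AB] = \Tr[A^{1/2} B A^{1/2}] \geq 0$ because $A^{1/2} B A^{1/2}$ is positive semi-definite, so $\Theta(A)$ is a positive functional on $\Pos$. For the reverse inclusion: suppose $\Theta(A) \in \Pos^*$, i.e.\ $\Tr[AB] \geq 0$ for all $B \in \Pos$. Evaluating on the rank-one positive semi-definite matrix $B = xx^*$ for an arbitrary $x \in \Cnum^d$ gives $0 \leq \Tr[A\, xx^*] = x^* A x$, hence $A \in \Pos$. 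Therefore $\Theta$ restricts to a bijection between the two cones, and since $\Theta$ is a linear isomorphism of the ambient ordered vector spaces, this yields $\Pos \cong \Pos^*$.

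This argument is essentially routine, so there is no serious obstacle; the two points that deserve a line of care are the positive definiteness of the Hilbert--Schmidt form on $V$ (which is what makes $\Theta$ a well-defined isomorphism rather than merely a linear map) and the observation that the rank-one positive semi-definite matrices already suffice to detect membership in $\Pos$ — the latter is precisely what powers the backward inclusion and is the closest thing here to a nontrivial step. I would also remark that this is exactly the self-duality identification used implicitly elsewhere in the paper, e.g.\ when embedding $D(\Cnum^d)$ into both $V(D(\Cnum^d))^+$ and $A(D(\Cnum^d))^+$.
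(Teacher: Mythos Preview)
Your proof is correct and is exactly the standard self-duality argument via the Hilbert--Schmidt inner product. The paper states this proposition without proof (it is treated as a well-known fact from the references on cones), so there is nothing to compare against; your write-up would serve perfectly well as the missing justification.
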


Up to some normalization, the set $\Pos$ can be thought of as the set of quantum states, that is, for every $A \in \Pos$ we have $A = \lambda \varrho$ for some density matrix $\varrho \in D(\Cnum^d)$. Density matrices are consequently the base of the cone $\Pos$ since any element of the cone is a nonnegative multiple of a density matrix.

\subsection{Tensor products of cones} \label{appendix:tensor-cones}

When considering the tensor product of ordered vector spaces, we need to preserve the notion of positivity.
Put differently, the result of a tensor product between cones should again give a cone. In general, there are infinitely many ways of constructing the tensor product of two cones, but there is a minimal and a maximal way to do this:

\begin{defi}[Minimal tensor product]
    Let $\acone \text{ and } \bcone$ be two cones.
    Their \emph{minimal tensor product} is then defined as $\acone \tmin \bcone = \operatorname{conv } \{x_A \otimes x_B \, | \, x_A \in \acone, \, x_B \in \bcone \}$.
\end{defi}
It can be verified that the minimal tensor product indeed gives a cone. Inspired by quantum mechanics, we call elements of the minimal tensor product of cones \emph{separable}, since the normalized elements of $\Posa \tmin \Posb$ are the separable states.

\begin{defi}[Maximal tensor product]
\label{tmax}
    The \emph{maximal tensor product} of two cones is the set
    $\acone \tmax \bcone = (\aconed \tmin \bconed)^*$.
\end{defi}
It can again be verified that the above definition gives a cone. Normalized elements of the cone $\Posa \tmax \Posb$ are entanglement witnesses. These definitions can immediately be generalized to any finite number of cones. We will recall next some basic but useful results about cones and their tensor products:
\begin{lem}
\label{cone_facts}
Let $(V,V^+)$, $(V_A,V_A^+)$, $(V_B,V_B^+)$ be ordered vector spaces. Then,
\begin{enumerate}
    \item $\cone \cong (V^{**})^+ \label{cone_isom_double}$
    \item Let $v \in V$ and $\epsilon(v) \geq 0 \mspace{5mu} \forall \epsilon \in (V^*)^+$. Then $v \in \cone \label{cone_isom_double_cor}$.
    \item The sets $\acone \tmin \bcone$ and $\acone \tmax \bcone$ are proper cones.
    \item $\acone \tmin \bcone \subseteq \acone \tmax \bcone \label{tmin_subset_tmax}$. Equality holds if and only if at least one of the cones is simplicial, i.e., isomorphic to $\mathds R_+^n$ for some $n \in \mathds N$ \cite{aubrun2021entangleability}.
    \item $(\acone \tmax \bcone)^* = \aconed \tmin \bconed$
\end{enumerate}
\end{lem}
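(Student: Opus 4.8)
The plan is to derive all five items from the finite-dimensional bipolar theorem together with a single compactness argument, deducing the statements about $\tmax$ from those about $\tmin$ by dualizing. For item~(1), I would use that the canonical evaluation map $V \to V^{\ast\ast}$ is a linear isomorphism in finite dimensions, and that under this identification the bidual cone $(\dualcone)^\ast$ of a closed convex cone equals the cone itself; since every cone in the paper is proper, hence closed and convex, this yields $\cone \cong (V^{\ast\ast})^+$. Item~(2) is then exactly the inclusion $(\dualcone)^\ast \subseteq \cone$ unpacked: the hypothesis $\epsilon(v) \geq 0$ for all $\epsilon \in \dualcone$ says that the image of $v$ in $V^{\ast\ast}$ lies in $(\dualcone)^\ast$, which equals $\cone$ by item~(1); the one nontrivial ingredient is the Hahn--Banach separation underlying the bipolar theorem, which would separate $v$ from the closed convex cone $\cone$ by a positive functional if $v$ were not in $\cone$.

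Granting that $\aconed \tmin \bconed$ is a proper cone (item~(3) below), item~(5) is immediate: by the definition $\acone \tmax \bcone = (\aconed \tmin \bconed)^\ast$ and item~(1) applied to $\aconed \tmin \bconed$, one gets $(\acone \tmax \bcone)^\ast = ((\aconed \tmin \bconed)^\ast)^\ast = \aconed \tmin \bconed$. For the $\tmax$ half of item~(3), I would note that $\acone \tmax \bcone$ is a dual cone, hence automatically closed and convex; it is pointed because its predual $\aconed \tmin \bconed$ is generating, and generating because its predual is pointed --- using the standard facts that a dual cone is pointed iff its predual is generating and generating iff its predual is pointed. Thus the $\tmax$ part of item~(3) reduces to the $\tmin$ part applied to $\aconed$ and $\bconed$.

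The substantive part of item~(3) is therefore that $\acone \tmin \bcone$ is proper. Convexity is built into the definition as a convex hull. It is generating because $\acone$ and $\bcone$, being generating, each contain a basis of their space, say $e_1, \dots, e_n \in \acone$ and $f_1, \dots, f_m \in \bcone$, so the products $e_i \otimes f_j \in \acone \tmin \bcone$ form a basis of $V_A \otimes V_B$. It is pointed because, picking $\phi_A$ in the nonempty interior of $\aconed$ and $\phi_B$ in the interior of $\bconed$ (which exist since $\aconed, \bconed$ are proper), $\phi_A \otimes \phi_B$ is strictly positive on every nonzero product $x_A \otimes x_B$, and hence, by convex combination, strictly positive on $\acone \tmin \bcone \setminus \{0\}$. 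Finally, closedness follows from compactness: writing $K_A$, $K_B$ for the bases of $\acone$, $\bcone$ cut out by $\phi_A$, $\phi_B$, a Carath\'eodory bound shows that $\operatorname{conv}\{x_A \otimes x_B : x_A \in K_A,\, x_B \in K_B\}$ is the continuous image of a compact parameter set --- a probability simplex times finitely many copies of $K_A \times K_B$ --- hence compact; since $\acone \tmin \bcone = \Rnum_+ \cdot \operatorname{conv}\{x_A \otimes x_B : x_A \in K_A,\, x_B \in K_B\}$ is the cone over a compact set not containing the origin, it is closed.

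For item~(4), given $x_A \in \acone$, $x_B \in \bcone$ and any generator $\phi_A \otimes \phi_B$ of $\aconed \tmin \bconed$ with $\phi_A \in \aconed$, $\phi_B \in \bconed$, one has $(\phi_A \otimes \phi_B)(x_A \otimes x_B) = \phi_A(x_A)\,\phi_B(x_B) \geq 0$, so $x_A \otimes x_B \in (\aconed \tmin \bconed)^\ast = \acone \tmax \bcone$; convexity of the right-hand side then gives $\acone \tmin \bcone \subseteq \acone \tmax \bcone$. The characterization of the equality case --- equality holds iff one of the cones is simplicial --- is precisely the main result of \cite{aubrun2021entangleability}, which I would cite directly. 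I expect the only real obstacle to be the closedness claim in item~(3): it genuinely requires the compactness/Carath\'eodory argument above rather than a formal manipulation, while everything else is bookkeeping with dual cones and the bipolar theorem.
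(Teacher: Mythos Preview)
The paper does not actually give a proof of this lemma: it is stated in the appendix as a collection of standard facts, with references to \cite{plavala2023general, Bruyn2020TensorPO, book_narici, book_raymond, book_pisier} for background and to \cite{aubrun2021entangleability} for the equality case in item~(4). So there is no paper-proof to compare against line by line.

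Your proposal is correct and is precisely the standard route one would find in those references: bipolar theorem for items~(1)--(2), definition plus biduality for item~(5), reduction of the $\tmax$ half of item~(3) to the $\tmin$ half via the duality pointed $\leftrightarrow$ generating, and for the $\tmin$ half the basis argument for generating, an interior functional for pointedness, and Carath\'eodory plus compactness of the bases for closedness. Your identification of closedness as the only genuinely nontrivial step is accurate; the rest is indeed formal. Citing \cite{aubrun2021entangleability} for the equality characterization in item~(4) matches exactly what the paper does.
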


 We will omit the isomorphism between $V$ and $V^{**}$ when we make use of the fact $V \cong V^{**}$ for real, finite-dimensional vector spaces. A tensor product of cones $V_A$ and $V_B$ is any cone $V_{AB}^+$ such that $\acone \tmin \bcone \subseteq V_{AB}^+\subseteq \acone \tmax \bcone$. Note that $\Posa \tmin \Posb \subsetneq \mathrm{PSD}_{d_A \cdot d_B} \subsetneq \Posa \tmax \Posb$.

\subsection{Positive maps} \label{appendix:positive-maps}
So far, we have focused on the notion of positivity and emphasized the connection to cones of vector spaces.
We can now define what it means for a linear map to be  positive.

\begin{defi}[Positive maps]
\label{positive_maps}
    Let $L: V_A \rightarrow V_B$ be a linear map. Then, $L$ is positive if $L(\acone) \subset \bcone$.
\end{defi}

We can in fact connect tensor products between cones and positive maps. For this, we will use that one can identify a linear map $L: V_A \rightarrow V_B$ with a tensor $\xi_L \in V_A^* \otimes V_B$. As pointed out, e.g., in \cite{book_raymond}, the one-to-one correspondence is given via
\begin{equation}
   \psi_B(L(v_A)) = (v_A \otimes \psi_B)(\xi_L) \text{, where } v_A \in V_A, \, \psi_B \in V_B^*.
\end{equation}

\begin{lem}
\label{tmax_map}
    Let $\acone$ and $\bcone$ be two cones. Then $\xi_L \in \aconed \tmax \bcone$ if and only if $L: V_A \rightarrow V_B$ is positive in the sense of \Cref{positive_maps}.
\end{lem}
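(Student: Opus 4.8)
The plan is to unwind the definition of the maximal tensor product via duality and then reduce the claim to the bidual characterization of $\bcone$ recorded in \Cref{cone_facts}. Concretely, I would first apply \Cref{tmax} with the roles adjusted, together with the isomorphism $(\aconed)^\ast = ((V_A^\ast)^+)^\ast \cong \acone$ from item (1) of \Cref{cone_facts}, to write
\[
\aconed \tmax \bcone = \bigl((\aconed)^\ast \tmin \bconed\bigr)^\ast = \bigl(\acone \tmin \bconed\bigr)^\ast .
\]
Thus $\xi_L \in \aconed \tmax \bcone$ holds if and only if $\langle \xi_L, w\rangle \ge 0$ for every $w \in \acone \tmin \bconed$. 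Since $\acone \tmin \bconed$ is by definition the convex hull of the simple tensors $v_A \otimes \psi_B$ with $v_A \in \acone$ and $\psi_B \in \bconed$, and a linear functional nonnegative on a generating set of a convex cone is nonnegative on the whole cone, this is equivalent to
\[
(v_A \otimes \psi_B)(\xi_L) \ge 0 \qquad \forall\, v_A \in \acone,\ \forall\, \psi_B \in \bconed .
\]

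Next I would invoke the one-to-one correspondence between $L$ and $\xi_L$ recalled just above the lemma, namely $(v_A \otimes \psi_B)(\xi_L) = \psi_B(L(v_A))$. The displayed condition then reads $\psi_B(L(v_A)) \ge 0$ for all $v_A \in \acone$ and all $\psi_B \in \bconed$. Fixing $v_A \in \acone$ and letting $\psi_B$ range over $\bconed = (V_B^+)^\ast$, item (2) of \Cref{cone_facts} says precisely that nonnegativity of $\psi_B(L(v_A))$ for all such $\psi_B$ is equivalent to $L(v_A) \in \bcone$ (the reverse implication being immediate). Hence the condition holds for every $v_A \in \acone$ if and only if $L(\acone) \subseteq \bcone$, i.e.\ $L$ is positive in the sense of \Cref{positive_maps}, which is the desired equivalence.

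I do not expect a genuine obstacle here: the argument is a direct chain of equivalences. The only points requiring a little care are bookkeeping — making sure each object lives in the correct space so that the pairing $(v_A \otimes \psi_B)(\xi_L)$ is meaningful with $\xi_L \in V_A^\ast \otimes V_B$ and $v_A \otimes \psi_B \in V_A \otimes V_B^\ast$ — and the observation that it suffices to test the defining functional on the simple tensors spanning $\acone \tmin \bconed$ because that cone is their convex hull. Both are routine, so the resulting proof will be short.
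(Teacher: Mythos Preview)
Your argument is correct and is the standard proof of this fact: unwind the definition $\aconed \tmax \bcone = (\acone \tmin \bconed)^\ast$, reduce to simple tensors, translate via the correspondence $(v_A \otimes \psi_B)(\xi_L) = \psi_B(L(v_A))$, and invoke the bidual characterization of $\bcone$. The paper states this lemma without proof, so there is nothing to compare against; your write-up would serve perfectly well as the omitted proof.
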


The next proposition shows again that the maximal tensor product is in general strictly bigger, which means that the conditions on its elements are less strong.
Here in the case of positive maps this is demonstrated by not having an \textit{if and only if} statement in contrast to the above lemma.

\begin{prop}
    $L \in \aconed \tmin \bcone \overset{\cancel \Leftarrow}{\Rightarrow} L(\acone) \subset \bcone$
\end{prop}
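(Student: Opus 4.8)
The plan is to establish the forward implication by a direct computation and then exhibit a concrete counterexample witnessing that the converse fails, which is precisely what the crossed-out arrow $\overset{\cancel\Leftarrow}{\Rightarrow}$ asserts.

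First I would prove the implication $L \in \aconed \tmin \bcone \Rightarrow L(\acone) \subseteq \bcone$. By definition of the minimal tensor product, any such $L$ can be written as a finite sum $\xi_L = \sum_i \epsilon_i \otimes y_i$ with $\epsilon_i \in \aconed$ and $y_i \in \bcone$. Under the identification of tensors with maps via $\psi_B(L(v_A)) = (v_A \otimes \psi_B)(\xi_L)$, this means $L(v_A) = \sum_i \epsilon_i(v_A)\, y_i$ for all $v_A \in V_A$. Now if $v_A \in \acone$, then $\epsilon_i(v_A) \geq 0$ for each $i$ since $\epsilon_i \in \aconed = (\acone)^*$, and hence $L(v_A)$ is a nonnegative combination of the elements $y_i \in \bcone$; as $\bcone$ is a convex cone, $L(v_A) \in \bcone$. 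Thus $L$ is positive, which by \Cref{tmax_map} is equivalent to $\xi_L \in \aconed \tmax \bcone$ — so in particular the inclusion $\aconed \tmin \bcone \subseteq \aconed \tmax \bcone$ of \Cref{cone_facts}(4) is recovered, but the point here is just the positivity statement.

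Next I would show the converse is false in general. The cleanest route is to invoke the equality condition in \Cref{cone_facts}(4): $\aconed \tmin \bcone = \aconed \tmax \bcone$ if and only if one of the cones is simplicial. So it suffices to pick cones that are both non-simplicial; the canonical choice is $\acone = \bcone = \mathrm{PSD}_d$ for $d \geq 2$, which is self-dual by \Cref{pos_self_dual}, so $\aconed = \mathrm{PSD}_d$ as well. Then $\mathrm{PSD}_d \tmin \mathrm{PSD}_d \subsetneq \mathrm{PSD}_d \tmax \mathrm{PSD}_d$ strictly (this strictness is noted at the end of \Cref{appendix:tensor-cones}, together with the fact that $\mathrm{PSD}_{d_A d_B}$ sits strictly between them). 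Any $L$ whose associated tensor $\xi_L$ lies in $\mathrm{PSD}_d \tmax \mathrm{PSD}_d \setminus \mathrm{PSD}_d \tmin \mathrm{PSD}_d$ is positive by \Cref{tmax_map} but does not belong to $\aconed \tmin \bcone$; concretely one may take $L$ to be the transpose map on $d \times d$ matrices, whose Choi tensor is the (non-separable, block-positive) swap operator, which is a positive but not completely positive — hence not entanglement-breaking — map. This provides the explicit counterexample.

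The main obstacle is minor and essentially bookkeeping: being careful about which vector space the relevant objects live in (i.e.\ that $\xi_L \in V_A^\ast \otimes V_B$, so the first leg uses the dual cone $\aconed$, not $\acone$), and making sure the chosen counterexample map is genuinely positive while its Choi tensor is genuinely outside the minimal tensor product. Both of these are standard facts about the PSD cone that the paper has already cited, so the argument should be short. If one prefers to avoid quoting the transpose-map example, one can instead argue abstractly: since $\mathrm{PSD}_d$ is not simplicial for $d \geq 2$, \Cref{cone_facts}(4) guarantees the strict inclusion, and strictness of the inclusion together with \Cref{tmax_map} is already enough to conclude that some positive $L$ fails to lie in $\aconed \tmin \bcone$, without ever naming it.
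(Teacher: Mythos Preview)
Your proof is correct and takes essentially the same approach as the paper: the paper's argument is simply the observation that in quantum theory (i.e.\ with $\aconed = \Posa^*$ and $\bcone = \Posb$) separable (entanglement-breaking) maps are positive but not every positive map is entanglement-breaking, with the partial transposition mentioned as the concrete example in the subsequent remark. Your write-up is more explicit about both the forward direction and the counterexample, but the underlying idea is identical.
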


Quantum theory already provides us with a counterexample for the direction from right to left, that is, set $\aconed = \Posa^*$ and $\bcone = \Posb$.
Separable maps can be thought of as being entanglement breaking and not every positive map is entanglement breaking, but every entanglement breaking map is positive.

It is crucial for this work to understand that inseparable operators can be distinguished from separable operators by functionals (``witnesses"), which can map some of the former to negative numbers and always map the latter to positive numbers.
This is formalized by the next lemma.
\begin{lem}
\label{witness_exists}
For $v \in \acone \tmax \bcone \text{, such that } v \notin \acone \tmin \bcone $, there is a witness type functional $ W \in \aconed \tmax \bconed$ with $W(v) < 0$.
\end{lem}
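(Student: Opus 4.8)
The statement is a separating–hyperplane result, so the plan is to produce $W$ by a Hahn–Banach separation argument and then identify the cone in which it lives. First I would record that $\acone \tmin \bcone$ is a proper cone in the finite-dimensional space $V_{AB} = V_A \otimes V_B$ by \Cref{cone_facts}(3); in particular it is a closed convex set containing $0$. Since by hypothesis $v \in \acone \tmax \bcone$ but $v \notin \acone \tmin \bcone$, finite-dimensional Hahn–Banach separation of the point $v$ from the closed convex set $\acone \tmin \bcone$ yields a functional $W \in V_{AB}^\ast$ and a constant $c \in \Rnum$ with
\begin{equation}
    W(v) < c \leq W(x) \qquad \text{for all } x \in \acone \tmin \bcone.
\end{equation}

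Next I would upgrade this to the cone setting. Taking $x = 0$ gives $c \leq 0$, hence $W(v) < 0$. For any fixed $x \in \acone \tmin \bcone$ and any $\lambda > 0$ the element $\lambda x$ again lies in $\acone \tmin \bcone$ (it is a cone), so $\lambda W(x) \geq c$ for all $\lambda > 0$; letting $\lambda \to 0^+$ forces $W(x) \geq 0$. Thus $W$ is nonnegative on all of $\acone \tmin \bcone$, i.e.\ $W \in (\acone \tmin \bcone)^\ast$, and still $W(v) < 0$.

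It remains to recognize $(\acone \tmin \bcone)^\ast$ as $\aconed \tmax \bconed$. By the definition of the maximal tensor product applied to the cones $\aconed$ and $\bconed$, one has $\aconed \tmax \bconed = \big((\aconed)^\ast \tmin (\bconed)^\ast\big)^\ast$, and by reflexivity of proper cones in finite dimensions (\Cref{cone_facts}(1), i.e.\ $(\dualcone)^\ast \cong \cone$) this equals $(\acone \tmin \bcone)^\ast$. Hence $W \in \aconed \tmax \bconed$, which is exactly the asserted witness-type functional.

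\textbf{Main obstacle.} There is no substantial difficulty: the only points needing care are citing the closedness of $\acone \tmin \bcone$ from \Cref{cone_facts} (so that strict separation of a point from the set is available) and correctly dualizing the defining identity of $\tmax$ via reflexivity to identify $(\acone \tmin \bcone)^\ast$ with $\aconed \tmax \bconed$; everything else is the standard cone-separation argument.
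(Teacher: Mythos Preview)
Your approach is exactly what the paper indicates: strict hyperplane separation of the point $v$ from the closed convex cone $\acone \tmin \bcone$ (closedness coming from \Cref{cone_facts}(3)), followed by the identification $(\acone \tmin \bcone)^\ast = \aconed \tmax \bconed$ via the definition of $\tmax$ and reflexivity. One small slip to fix: from $\lambda W(x) \ge c$ for all $\lambda > 0$ you should send $\lambda \to \infty$ (or equivalently divide by $\lambda$ and let $\lambda \to \infty$) to force $W(x) \ge 0$; the limit $\lambda \to 0^+$ only recovers $c \le 0$, which you already had.
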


The proof can be done with the help of the strict hyperplane separation theorem, using that we only consider proper cones which are in particular closed.

Finally, we extend a positive, linear map to a map between tensor products while not breaking the positivity.

\begin{lem}
\label{extend_map}
    Let $\acone, \, \bcone$ as well as $V_C^+$ be cones and $\Phi: \acone \rightarrow \bcone$ be a linear map. Then $\Phi$ induces the maps:
    \begin{itemize}
        \item[(i)] $\Phi \otimes \id: \acone \tmin V_C^+ \rightarrow \bcone \tmin V_C^+$
        \item[(ii)] $\Phi \otimes \id: \acone \tmax V_C^+ \rightarrow \bcone \tmax V_C^+$
    \end{itemize}
\end{lem}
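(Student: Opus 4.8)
The plan is to verify, for each of the two tensor products, that $\Phi \otimes \id$ maps the source cone into the target cone --- this is precisely what it means for $\Phi \otimes \id$ to be ``induced'' between these cones. Here $\Phi \otimes \id : V_A \otimes V_C \to V_B \otimes V_C$ is the usual tensor product of the linear maps $\Phi : V_A \to V_B$ and $\id : V_C \to V_C$, and, following the convention used elsewhere in the paper, writing $\Phi : \acone \to \bcone$ means that $\Phi$ is positive, i.e.\ $\Phi(\acone) \subseteq \bcone$. I would also invoke two elementary facts recalled in \Cref{sec:maps-between-GPTs}: that $\Phi$ is positive if and only if its dual $\Phi^\ast : V_B^\ast \to V_A^\ast$ is positive, and the identity $(\Phi \otimes \id)^\ast = \Phi^\ast \otimes \id$.

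For part (i), I would argue directly from the definition $\acone \tmin V_C^+ = \operatorname{conv}\{x_A \otimes x_C : x_A \in \acone,\ x_C \in V_C^+\}$. Since the generating set is stable under multiplication by nonnegative scalars, every $z \in \acone \tmin V_C^+$ can be written as a finite sum $z = \sum_i x_A^{(i)} \otimes x_C^{(i)}$ with $x_A^{(i)} \in \acone$ and $x_C^{(i)} \in V_C^+$. Then $(\Phi \otimes \id)(z) = \sum_i \Phi(x_A^{(i)}) \otimes x_C^{(i)}$ by linearity, and since $\Phi$ is positive each $\Phi(x_A^{(i)})$ lies in $\bcone$; hence $(\Phi \otimes \id)(z)$ is again a nonnegative combination of product elements from $\bcone$ and $V_C^+$, so it lies in $\bcone \tmin V_C^+$. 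This settles (i).

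For part (ii), I would unfold the definition of the maximal tensor product (\Cref{tmax}), namely $\acone \tmax V_C^+ = (\aconed \tmin (V_C^\ast)^+)^*$ and likewise $\bcone \tmax V_C^+ = (\bconed \tmin (V_C^\ast)^+)^*$. Fix $z \in \acone \tmax V_C^+$. To conclude $(\Phi \otimes \id)(z) \in \bcone \tmax V_C^+$ it suffices to check that $\langle w,\, (\Phi \otimes \id)(z)\rangle \geq 0$ for every $w \in \bconed \tmin (V_C^\ast)^+$. Writing such a $w$ as a finite sum $w = \sum_j \psi_B^{(j)} \otimes \psi_C^{(j)}$ with $\psi_B^{(j)} \in \bconed$ and $\psi_C^{(j)} \in (V_C^\ast)^+$, positivity of $\Phi^\ast$ yields $\Phi^\ast(\psi_B^{(j)}) \in \aconed$, so that $(\Phi^\ast \otimes \id)(w) = \sum_j \Phi^\ast(\psi_B^{(j)}) \otimes \psi_C^{(j)} \in \aconed \tmin (V_C^\ast)^+$. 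Combining the adjoint identity $\langle w,\, (\Phi \otimes \id)(z)\rangle = \langle (\Phi^\ast \otimes \id)(w),\, z\rangle$ with the fact that $z$ lies in the dual cone of $\aconed \tmin (V_C^\ast)^+$ gives nonnegativity of the pairing, as required. I do not expect any genuine difficulty here; the one point that needs care is the duality bookkeeping in (ii) --- that the middle factor is left untouched so $\id^\ast = \id$, that $(\Phi \otimes \id)^\ast = \Phi^\ast \otimes \id$, and that positivity passes from $\Phi$ to $\Phi^\ast$. As an alternative, (ii) could also be deduced from the tensor--map correspondence (\Cref{tmax_map}) together with the basic properties collected in \Cref{cone_facts}, but the direct dual-pairing computation above is the shortest route.
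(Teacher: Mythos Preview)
Your proof is correct and follows the standard route; the paper itself does not actually give a proof of this lemma (it is listed in the appendix as a basic fact, followed only by the one-line remark that ``the induced maps are still positive because a cone gets mapped to a cone''), so your argument simply supplies the details the paper omits. In particular, your use of part~(i) applied to $\Phi^\ast$ on the dual cones to handle part~(ii) is exactly the natural way to fill in the duality step.
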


Note that the induced maps are still positive because a cone gets mapped to a cone.
Creating a tensor product between cones and hence adding an identity to the map $\Phi$ preserves the notion of positivity.

\begin{remark}
At this point one needs to be careful when talking about positive maps.
A map from \Cref{extend_map} might be positive in the sense that it maps a cone to a cone, though this is not the same as complete positivity, which is a term also found in quantum theory.
If we let $\bcone$ and $V_C^+$ be sets of positive operators on a Hilbert space, then the operators in the maximal tensor product are not always positive in the sense that they map all bipartite states to positive numbers.
There might be entangled states that get mapped to something negative. An example is the partial transposition which is a positive, however not completely positive map. This fact is the quintessence of the PPT-criterion for entanglement detection \cite{horodecki_ppt}.
\end{remark}

\end{document}